\numberwithin{equation}{section} 
\newtheorem{thm}{Theorem}[section]
\newtheorem{remark}[thm]{Remark}
\newtheorem{theorem}[thm]{\textbf{Theorem}}
 \newtheorem{prop}[thm]{Proposition}
 \newtheorem{lemma}[thm]{Lemma}
 \newtheorem{definition}[thm]{Definition}
 \newtheorem{example}[thm]{Example}
\newtheorem{cor}[thm]{Corollary}
\renewcommand{\tilde}{\widetilde}
\renewcommand{\hat}{\widehat}
\newcommand{\bref}[1]{\textbf{\ref{#1}}}
\newcommand{\gh}[1]{\mathrm{gh}(#1)}
\newcommand{\dx}{\mathrm{d}_X}
\renewcommand{\d}{\partial}
\renewcommand{\dh}{\mathrm{d_h}}
\renewcommand{\geq}{\,{\geqslant}\,}
\renewcommand{\leq}{\,{\leqslant}\,}
\newcommand{\binner}[2]{%
  {\langle}\kern-4.15pt{\langle}#1{,}\,#2{\rangle}\kern-4.15pt{\rangle}}
\newcommand{\commut}[2]{[#1{,}\,#2]}
\newcommand{\half}{\mathchoice{%
    \ffrac{1}{2}}{\frac{1}{2}}{\frac{1}{2}}{\frac{1}{2}}}
\newcommand{\ffrac}[2]{\raisebox{.5pt}%
  {\footnotesize$\displaystyle\frac{#1}{#2}$}\kern1pt}
\newcommand{\red}{\mathrm{red}}
\newcommand{\dl}[1]{\mathchoice{\ffrac{\d}{\d #1}}{\frac{\d}{\d #1}}{\ffrac{\d}{\d #1}}{\ffrac{\d}{\d #1}}}
\newcommand{\Liealg}{\mathfrak} 
\newcommand{\algg}{\Liealg{g}}
\newcommand{\cC}{\mathcal{C}}
\newcommand{\cD}{\mathcal{D}}
\newcommand{\fZ}{\mathbb{Z}}
 \def\cE{\mathcal{E}}
 \def\cI{\mathcal{I}}
 \def\cJ{\mathcal{J}}
\def\cK{\mathcal{K}}
\def\cO{\mathcal{O}}
 \def\cP{\mathcal{P}}
 \def\cT{\mathcal{T}}
\newcommand\blfootnote[1]{%
  \begingroup
  \renewcommand\thefootnote{}\footnote{#1}%
  \addtocounter{footnote}{-1}%
  \endgroup
}
\numberwithin{equation}{section}
\newcommand{\Co}{\mathrm{C}}
\newcommand{\We}{\mathrm{W}}
\newcommand{\dX}{\mathrm{d_X}}
\newcommand{\cY}{{\mathcal{Y}}}
\newcommand{\Ee}{E}
\newcommand{\Ered}{E_{\red}}
\newcommand{\Ecl}{E_{cl}}
\newcommand{\YM}{{\scriptscriptstyle \mathrm{YM}}}
\title{Boundary structure of gauge fields on \\ asymptotically AdS spaces}
\author{Maxim Grigoriev$^{\dagger}$~~ and\, Mikhail Markov}
\affil{ \textsl{Service de Physique de l'Univers, Champs et Gravitation, \protect\\ Universit\'e de Mons, 20 place du Parc, 7000 Mons, 
Belgium \vspace{5pt}} }  \vspace{5em}
\date{}
\begin{document}
\maketitle

\begin{abstract}
We study the boundary structure of asymptotically AdS gravity and (gauge) fields defined on this background by employing the gauge PDE approach. The essential step of the construction is the incorporation of the boundary-defining function among the fields of the theory, which allows us to realise the asymptotic boundary as a space-time submanifold by employing the gauge PDE implementation of Penrose's concept of asymptotically-simple space. In so doing the gauge PDE describing the boundary structure is obtained by restricting to the boundary of spacetime and simultaneously restricting to the boundary of the field space by setting the boundary defining function to zero.  To implement this step systematically we introduce a notion of $Q$-boundary which seems to be new. The main concrete result of this work is the construction of an efficient boundary calculus, which gives a recursive procedure to obtain the explicit form of the equations satisfied by the boundary fields and their gauge transformations for boundary dimension $d \geq 3$. These include obstruction equations (such as Bach equation or Yang-Mills equation for $d=4$) and generalised conservation equations in the subleading sector. In particular, we derive the explicit form of the higher conformal Yang-Mills equation for $d=8$. The approach is very general and, in principle, applies to generic (gauge) fields on the Einstein gravity background  producing a conformally-invariant gauge theory on the boundary, which  describes their boundary structure. It can be considered an extension of the Fefferman-Graham construction, which takes into account both the leading and subleading sectors of the bulk fields.

\vfill\phantom{a}

\end{abstract}

\blfootnote{Both authors are supported by the ULYSSE Incentive
Grant for Mobility in Scientific Research [MISU] F.6003.24, F.R.S.-FNRS, Belgium.}
\blfootnote{${}^{\dagger}$ Also at Lebedev Physical Institute and Institute for Theoretical and Mathematical Physics, Lomonosov MSU, Moscow, Russia.}

\newpage
\tableofcontents

\
\section{Introduction}

Gauge theories on manifolds with asymptotic boundaries emerge in the context of AdS/CFT correspondence~\cite{Maldacena:1997re,Witten:1998qj,Aharony:1999ti} and flat-space holography~\cite{Strominger:2013jfa, Pasterski:2016qvg,Strominger:2017zoo }, see also 
\cite{Bondi:1962px,Sachs:1962zza,Penrose:1962ij,Geroch:1977big,Starobinsky:1982mr,Fefferman-Graham:1985ambient,Brown:1986nw,deHaro:2000xn,Skenderis:2002wp}
for important earlier developments and more recent contributions relevant in the present context. 
If the bulk theory is local, it naturally induces a number of geometrical structures on the boundary, which can be again interpreted as a local gauge theory defined therein. For instance, asymptotically AdS (AAdS) gravity\footnote{In the mathematical literature vacuum solutions of AAdS gravity are referred to as asymptotically Poincar\'e-Einstein manifolds.} induces a conformal metric (subject to conformal gravity equations in the case of even boundary dimension) along with the so-called subleading mode. 

The identification of the boundary structure is often performed by mapping the spacetime into the interior of a manifold with boundary so that the asymptotic boundary becomes a usual boundary. In so doing the metric diverges when approaching the boundary, but its conformal class remains well-defined so that one can pick a suitable representative, so-called ``unphysical metric'', which is well defined at the boundary. 
This idea was put forward by R.~Penrose~\cite{Penrose:1962ij}
and gives a precise meaning to the near-boundary analysis provided one has chosen coordinates and imposed a suitable gauge condition (if gauge fields are present). However, the subtlety is that the unphysical metric is defined modulo Weyl rescalings so that the boundary structure inherits Weyl symmetry and is naturally studied using the conformal geometry tools~\cite{Friedrich:1981wx,friedrich1983cauchy,Fefferman-Graham:1985ambient,Gover:2011rz,RodGover:2012ib,Herfray:2021qmp,Herfray:2021xyp},
see also~\cite{Kroon:2016ink,Curry:2014yoa}. In particular, the near-boundary analysis of AAdS gravity amounts to the celebrated Fefferman-Graham (FG) construction~\cite{Fefferman-Graham:1985ambient,Fefferman:2007rka} which was initially proposed as a tool to study conformal invariants. Strictly speaking, FG construction operates in terms of the ambient spacetime extension of the AAdS space-time so that the conformal boundary is identified with a suitable subquotient of the ambient spacetime.

Although gravity is a gauge theory, its gauge invariance is usually controlled by geometrical methods which allow one to study boundary structure without resorting to gauge-theoretical techniques. Moreover, even in the case of Yang-Mills (YM) theory on AAdS space a geometrical approach can be developed~\cite{Gover:2023rch}. Nevertheless, the general setup requires a specifically designed  formalism.  Of course, one can always get rid of the gauge invariance by imposing suitable gauge conditions but this complicates the technique and hides the critically important structures and symmetries.

As far as general local gauge theories on manifolds with boundaries are concerned, a systematic framework to study the boundary structure is based on representing the system as a so-called gauge PDE and was put forward in~\cite{Bekaert:2012vt,Bekaert:2013zya,Bekaert:2017bpy}, see also~\cite{Grigoriev:2022zlq,Grigoriev:2023kkk} for a more recent discussion. Gauge PDE (gPDE) is a very general and flexible generalisation of the Batalin-Vilkovisky (BV)~\cite{Batalin:1981jr,Batalin:1983wj} formulation on jet-bundles~\cite{Barnich:1995ap,Barnich:2000zw} known in the context of local BRST cohomology, see~\cite{Grigoriev:2019ojp} for further details and~\cite{Barnich:2010sw,Barnich:2004cr} for the earlier but less geometrical version of the gPDE concept.  As particular cases, the gPDE approach includes the AKSZ construction of topological theories~\cite{Alexandrov:1995kv}, unfolded formalism of higher spin theories~\cite{Vasiliev:1980as,Vasiliev:2005zu}, and the aforementioned jet-bundle BV formulation. Let us stress that here we refer to the formulations at the level of equations of motion. Moreover, all considerations of this work are limited to this level. In particular, presymplectic structures, Lagrangians, charges etc. are not discussed. For the Lagrangian version of gPDE formalism see e.g.~\cite{Grigoriev:2022zlq,Dneprov:2025eoi} and references therein. Alternative approaches to gauge theories on manifolds with boundaries, which are  based on one or another extension of the BV formalism can be found in~\cite{Barnich:2001jy,Cattaneo:2012qu,Mnev:2019ejh,Baulieu:2024oql}.

In the gPDE approach the structure of a local gauge field theory is encoded in the $\fZ$-graded fibre bundle over $T[1]X$ (shifted tangent bundle over the spacetime $X$), equipped with a compatible homological vector field $Q$. Solutions and their gauge transformations can be expressed in terms of sections of the underlying bundle. A crucial property that makes gPDEs useful in the context of spacetimes with boundaries is that they behave well with respect to restricting to spacetime submanifolds and boundaries in particular, giving a systematic way to describe boundary structure without the need to fix gauges or pick particular spacetime or field-space coordinates.  Of course,  using one or another coordinate system can be very convenient in analysing the boundary structure of a given gauge theory but the gauge PDE encoding this structure is defined in a coordinate-free way. The gauge PDE approach to boundary structure was successfully applied to generic gauge fields on AdS space in~\cite{Bekaert:2012vt,Bekaert:2013zya,Chekmenev:2015kzf} by employing the ambient space representation of the AdS space and its conformal boundary.

In this work we employ the version of gauge PDE approach to boundary structure of asymptotically flat/AdS gravity developed in~\cite{Grigoriev:2023kkk} in the study of asymptotic symmetries. In contrast to the ambient space FG construction, it operates in terms of the spacetime itself at the price of introducing an additional compensator field $\Omega$ playing a role of the boundary defining functions and the auxiliary Weyl symmetry $\delta_{\bar\lambda}\Omega=\bar\lambda \Omega$. Such a reformulation of gravity, called conformal-like GR,  allows us to reformulate Penrose's definition of asymptotically simple space in the gauge-theoretical and more specifically gauge PDE language, giving a systematic derivation of the gPDE describing the boundary structure of asymptotically AdS/flat gravity.  The extension of this prescription to systems involving (gauge) fields on the gravity background turns out to be rather natural and is constructed in this work. The equations appearing in the sector of gravity are often similar to those encountered in the context of conformal geometrical approach to boundary structure though their interpretation is somewhat different.

Although identifying the boundary gPDE for AAdS gravity (possibly with additional gauge fields) is straightforward, its structure is generally rather involved and the underlying gauge theory is to be reformulated in a human-readable form by e.g. giving a set of fields, the PDEs they satisfy, and the gauge transformations they are subject to. While for linearized systems this can be done in a more or less straightforward way, see e.g.~\cite{Bekaert:2013zya,Chekmenev:2015kzf}, the analysis of nonlinear systems requires a specific technique which we call boundary calculus. At a more technical level, we identify a certain boundary gauge PDE that describes unconstrained fields and their gauge transformations in a tractable way. This gauge PDE comes equipped with a system of functions and vector fields that can be computed recursively and in terms of which the boundary equations of motion and gauge transformations can be explicitly found. 

In this work we explicitly construct the boundary calculus for a number of typical examples such as AAdS gravity, a critical scalar field and YM theory on the generic AAdS gravity background. In particular, we explicitly present the, apparently new,  higher conformal YM equation induced on 8-dimensional boundary as well as the corresponding equations of motion and gauge transformations for the YM subleading. The construction of general boundary calculus for AAdS gravity and  scalar field and YM field defined on its background as well as the explicit form of the boundary structure in lower dimensions are the main results presented in this work. 

While the structure of leading modes of AAdS gravity and gauge fields on its background is relatively well studied in the existing literature, the subleading modes are often discussed less explicitly. Our work partially covers this gap by giving an exhaustive description of the equations of motion and gauge transformations for the subleading sector of AAdS gravity together with YM theory and scalar field. It turns out that in all these cases the boundary system has the structure of a linear subleading defined on the background of the generally nonlinear  system describing the leading mode. In particular, equations of motion and gauge transformations for the subleading depend on the leading mode and its gauge parameters. In the gPDE language this situation is described by a bundle of gauge PDEs where the base and the fibre describe, respectively,  the leading and the subleading. Such objects are known as gPDEs over background~\cite{Dneprov:2025eoi} and, surprisingly, they naturally emerge on the boundary of irreducible systems such as AAdS gravity or YM theory. Note that in the particular case of linear gauge fields on AdS space this was observed already in~\cite{Bekaert:2013zya}.

The paper is organized as follows: Section~\bref{sec:prelim} introduces the basics of the gPDE approach to boundary structure and recalls the gPDE formulation of conformal-like GR. In particular, the notion of $Q$-boundary is also proposed there. In  Section~\bref{sec:pre-min} we discuss preminimal model for conformal-like gravity  and use it to introduce a number of objects and techniques needed in the subsequent analysis. Section~\bref{sec: boundary system} contains the construction of the boundary calculus in the case of gravity and is central for this work. The boundary calculus is explicitly applied to study the boundary structure of gravity in Section~\bref{sec:boundary-GR} where we also explain how the explicit space-time form of the equations of motion and gauge transformations on the boundary are obtained. Finally, Section~\bref{sec:matter} is devoted to the boundary structure of (gauge) fields on the AAdS gravity background. The corresponding version of boundary calculus is constructed for Klein-Gordon field and YM theory and the explicit description of YM boundary structure for $d=4,6,8$ is obtained.

\section{Preliminaries}\label{sec:prelim}

\subsection{Gauge PDEs}\label{sec:gpde}

In this section we recall the basics of the gauge PDE approach to local gauge theories and its extension to the case of theories on manifolds with boundaries.

The main building block is a notion of $Q$-manifold which is a pair $(M,Q)$, where $M$ is a $\mathbb{Z}$-graded supermanifold, equipped with degree-one odd vector field $Q$ satisfying $Q^2=0$.  We denote the $\mathbb{Z}$-grading by $\gh{\cdot}$ and the $\mathbb{Z}_2$-parity by $|\cdot|$. The systems considered in this work do not involve physical fermionic fields and hence the two gradings are compatible in the sense that $|\cdot|=\gh{\cdot} \,\text{mod} \,2$, so that it is enough to keep track of the ghost degree only.

The basic example of a $Q$-manifold is the shifted tangent bundle $T[1]X$ of a smooth manifold $X$ or, more generally, a graded supermanifold. Its local coordinates can be chosen as $\{x^{\mu},\theta^\mu\}$, where $x^{\mu}$ are local coordinates on the base  $X$ and $\theta^\mu$ are the graded-anticommuting fibre coordinates in the coordinate frame. Functions on $T[1]X$ are isomorphic as an algebra to differential forms on $X$ and this identifies the de Rham differential of $X$ with the homological vector field $\dx \equiv \theta^\mu\frac{\partial}{\partial x^\mu}$. 

Natural morphisms between $Q$-manifolds, so-called $Q$-maps,  are degree-preserving maps compatible with the $Q$-structures. In terms of algebras of functions on $(M_1,Q_1)$ and $(M_2,Q_2)$ such a map $\mu:M_1 \to M_2$ is a homomorphism $\mu^*:\cC^\infty(M_2) \to \cC^\infty(M_1)$ such that $Q_1\circ \mu^{*}=\mu^* \circ Q_2$. In particular, there are natural notions of $Q$-submanifolds and $Q$-bundles. In the latter case both the total space and the base are $Q$-manifolds and the projection is required to be a $Q$ map~\cite{Kotov:2007nr}.

If one disregards space-time locality, $Q$-manifold is precisely a geometrical object corresponding to a gauge theory defined at the level of equations of motion
and formulated in terms of the non-Lagrangian version of BV formalism, see~\cite{Barnich:2004cr,Lyakhovich:2004xd}. If one prefers to maintain space-time locality in a manifest way, one can for instance resort to non-Lagrangian BV defined on a suitable jet-bundle~\cite{Barnich:2004cr}, see also~\cite{Barnich:2000zw} for a usual BV formalism on jet-bundles. However, a more flexible and geometrical approach which suits our needs in this work is provided by so-called gauge PDEs.

\begin{definition}
1) Gauge PDE $(E,Q,T[1]X)$ is a $Q$-bundle $\pi: (E,Q)\rightarrow(T[1]X,d_X)$, where $X$ is a space-time manifold. \\
2) Solutions to $(E,Q,T[1]X)$ are sections $\sigma: T[1]X \to E$ satisfying 
\begin{align}
    d_X\circ \sigma^{*}=\sigma^{*}\circ Q
\end{align}
3) Infinitesimal gauge transformations of section $\sigma$ are defined as 
\begin{align}
    \delta\sigma^{*}=\sigma^{*}[Q,Y],
\end{align}
where $Y$ is a vertical field of degree $-1$ on $E$, called a gauge parameter. In a similar way one defines gauge-for-gauge symmetries etc.
\end{definition}
This definition may be further restricted in order to exclude gPDEs corresponding to pathological field theoretical systems, see~\cite{Grigoriev:2019ojp,Grigoriev:2022zlq} for more details.

Given a local gauge theory defined at the level of equations of motion it can be reformulated as a local BV system, i.e. a graded fibre bundle $\cE\to X$, where $X$ denotes the space-time manifold, together with the evolutionary homological vector field $s$, called BRST differential,  defined on the infinite-jet bundle $J^\infty(\cE)$ of $\cE\to X$. In its turn, a local BV system naturally defines a gauge PDE, at least locally. Indeed, taking as $E$ the pullback of $J^\infty(\cE)$ by the projection $T[1]X \to X$ one can identify functions on $E$ with horizontal forms on $J^\infty(\cE)$. On the horizontal forms one has two anticommuting differentials: the horizontal differential $\dh$ and the Lie derivative $L_s$ with respect to $s$. Seen as a vector field on $E$ the total differential $\dh+L_s$ makes $(E,Q,T[1]X)$ into a gPDE. At least locally, the gauge theory encoded by this gPDE is equivalent to the starting point BV system~\cite{Barnich:2010sw}. 

An interesting phenomenon occurs when the original gauge theory is diffeomorphism invariant, i.e. when space-time diffeomorphisms are among the gauge transformations encoded in $s$. More precisely, the term in $s$ that is linear in the respective ghost variables $\xi^a$ has the form $\xi^aD_a+\ldots$, where $D_a$ are total derivatives and $\ldots$ denote terms not involving undifferentiated $\xi^a$. It turns out that in this case one can change fibre coordinates in such a way that the total $Q$-structure factorizes as $Q=s+\dX$, where $s$ is a vector field on the typical fibre. Geometrically, this means that the underlying $Q$-bundle is locally trivial.  Moreover, in a suitable coordinate system the expression of $s$ coincides with the initial $s$ restricted to a fibre.  This gives a very simple procedure to build  gPDE formulations for  diffeomorphism invariant systems.

When dealing with gauge theories one often finds that the same theory can be formulated using different sets of fields and gauge generators. Such equivalent formulations can be related via the elimination of auxiliary fields and gauge-fixing algebraic gauge symmetries. In the gPDE language this equivalence is implemented as follows:

\begin{definition}
\label{def:equiv-red}
Sub-gPDE $(E_{\red},Q_{\red},T[1]X)\subset (E,Q,T[1]X)$ is called an equivalent reduction of $E$
if $(E,Q,T[1]X)$ is a locally-trivial $Q$-bundle $\Pi:E\to E_\red$ with contractible fibre and $\Pi$ is compatible with the bundle structure $\pi:E\to T[1]X$, $\pi_{E_{red}}:E_{\red}\to T[1]X$, i.e. $\pi_E =\pi_{E_{red}}\circ \Pi$. Two gauge PDEs are said equivalent if they can be obtained as equivalent reductions of a third one.
\end{definition}
In coordinate terms this means that on $E$ there is a local coordinate system $w^\alpha,v^\alpha,\phi^i$ (here and below we omit the base space coordinates $x^\mu,\theta^\mu$) such that 
\begin{equation}
\label{contr-pairs}
Qw^\alpha=v^\alpha, \qquad Q\phi^i=Q^i(x,\theta,\phi)\,,
\end{equation}
and $E_\red$ is singled out by $w^\alpha=0, v^\alpha=0$
and the projection is given by $\pi^* (\phi^i|_{E_\red})=\phi^i$. Moreover, $w^\alpha,v^\alpha$ are global coordinates on the fibres of $E$ over $E_\red$. A useful criterium to identify an equivalent reduction is the existence of independent functions $w^\alpha$ such that $w^\alpha,Qw^\alpha$ are again independent and their zero locus is a subbundle. Locally and under some technical conditions this criterium actually gives a sufficient condition i.e. once can find $\phi^i$ that satisfy \eqref{contr-pairs} and complete $x^\mu,\theta^\mu,w^\alpha,Qw^\alpha$ to a local coordinate system on $E$. If one forgets about the bundle structure the notion reduces to the usual elimination of contractible pairs as it is known in the context of local BRST cohomology. For locally-trivial gPDEs describing diffeomorphism-invariant theories the equivalent reduction can be formulated in terms of the fibre $Q$-manifolds, at least locally.  
Further details on equivalence of gPDEs can be found in~\cite{Grigoriev:2019ojp}.

If it is clear from the context what the spacetime manifold of the system is, we tend to omit the explicit mention of $T[1]X$ when denoting gPDEs in order to keep conventions light, that is we simply write $(E,Q)$ instead of $(E,Q,T[1]X)$. This is especially natural for globally trivial gPDEs, whose structure is essentially encoded in the corresponding fibre $Q$-manifolds.

\subsection{Gauge PDEs on manifolds with boundaries} \label{sec:gpde-bound}

It turned out that gauge PDEs are especially useful if one is interested in the structure induced by a local gauge theory on a spacetime submanifold and, particularly,  boundary and/or low-dimensional space-time strata. The idea is very simple: consider a gauge PDE $(E, Q, T[1]X)$, where $X$ possesses a nontrivial boundary $\Sigma = \partial X$ and denote the embedding map by $i: \Sigma \hookrightarrow X$. To be concrete, here we talk about boundary $\Sigma$ but the general discussion applies to any submanifold $\Sigma \subset X$. This induces a new gPDE $i^{*}E$ on $\Sigma$, which is simply the restriction (also called pullback) of $E$ to $T[1]\Sigma \subset T[1]X$. Here, by abusing notations, we also denote by $i$ the corresponding pushforward $T[1]\Sigma \hookrightarrow T[1]X$. It is easy to verify that $i^{*}E$ is again a gPDE. In particular, $Q$ is tangent to $i^{*}E \subset E$.  In a slightly less general form this approach to boundary values was originally employed to study the boundary values of gauge fields in~\cite{Bekaert:2012vt,Bekaert:2013zya,Bekaert:2017bpy,Grigoriev:2018wrx}.

The pullback $i^{*}E$ of the original gPDE to the boundary $T[1]\Sigma$ describes the boundary values of the fields of the original $E$. In general, one in addition wants to be able to impose additional conditions on the boundary values of fields, gauge parameters, etc. This leads to the following more general notion:
\begin{definition}\cite{Grigoriev:2023kkk}
     1) By a gauge PDE with boundaries we mean the following data: $(E,Q,T[1]X,E_{\Sigma},T[1]\Sigma)$, where gPDE $(E,Q,T[1]X)$ is a gPDE on $X$ and $(E_{\Sigma},Q_{\Sigma},T[1]\Sigma)$ is a gPDE on the boundary $\Sigma=\partial X$, which is a sub-gPDE of $i^{*}E$. $Q_\Sigma$ denotes the restriction of $Q$ on $E_{\Sigma}$.
     
    2) Solutions to $(E,Q,T[1]X,E_{\Sigma},T[1]\Sigma)$ are sections $\sigma:T[1]X \to E$ such that the restriction of $\sigma$ to $T[1]\Sigma$  is a solution to  $(E_{\Sigma},Q_{\Sigma},T[1]\Sigma)$

    3) Gauge symmetries are generated by the vector field of the form $\commut{Q}{Y}$, where $Y$ is a vertical vector field on $E$ required to be tangent to $E_\Sigma$.
\end{definition}

Although the above definition can often be applied in a straightforward way,  its implementation in the case of asymptotic boundary requires some extra care. The point is that in this case the boundary defining function is an independent field and it has to be set to zero in a way compatible with $Q$. In other words the fibre of $E$ over the interior differs from the fibre over the boundary or, in other words, $E$ is slightly  not locally trivial. But the restrictions of $E$ to the interior and to the boundary are locally trivial. 

To be more precise, in the case of gravity we employ a gauge theoretical implementation of Penrose's notion of asymptotically-simple space~\cite{Penrose:1964ge}. Namely, one of the fibre coordinates  is the boundary defining function $\Omega$. In the interior of $X$ one requires $\Omega>0$ while on the boundary $\d X$ one sets $\Omega=0$ (in the gPDE setup one also sets $Q\Omega=0$). Roughly speaking, typical fibre $F$ of $E$ is a manifold with boundary and if we want to describe the system without asymptotic boundary we not only take the interior $Int(X)$ of the base but also take as fibre the interior $Int(F)$ of the initial fibre. At the same time the system induced on the boundary has $\d X$ as the base space and $\d F$ as the fibre.

A possible way to incorporate this setup in a more precise and geometrical way is as follows: introduce the notion of $Q$-manifold with $Q$-boundary. By definition, $Q$-boundary has codimension $2$ and is such that near the boundary  there is a coordinate system $\Omega,\lambda,z^I$, $\gh{\Omega}=p, \gh{\lambda}=p+1$ such that $Q\Omega=\lambda$ and the $Q$-boundary is singled out by $\Omega=0, \lambda=0$. 

Restricting for simplicity to the case of globally trivial $Q$-bundles we now take as the base $(T[1]X,\dx)$ whose $Q$-boundary is $(T[1](\partial X),\mathrm{d}_{\partial X})$ and the fibre to be a $Q$-manifold $(F,q)$ with $Q$-boundary $(\d^Q F,q)$.
Given such a gPDE there are two natural gauge PDEs associated to it. The first one is $(Int(F),q) \times (T[1] (Int(X)),\dX)$ that describes a gauge theory in the bulk. Here by some abuse of conventions we use $Int(F)$ to denote the interior of the manifold with $Q$-boundary. If $\Omega$ is the coordinate corresponding to the boundary defining function then the interior is locally given by the domain $\Omega>0$. The second, the boundary gPDE, is $(\d^Q F,q) \times (T[1](\partial X),\mathrm{d}_{\partial X})$ so that it is indeed obtained by setting $\Omega=0=Q\Omega$ and taking $\d X$ as the spacetime manifold.

\subsection{Implicit definition of gPDEs}\label{sec:implicit}

Although the language of gPDEs is very geometrical and concise it is often difficult to apply it in concrete examples because in realistic applications the explicit form of $Q$ in terms of local coordinates on $E$ is complicated or not even known. This generally happens if the underlying gauge theory is  nontopological and nonlinear, see e.g.~\cite{Grigoriev:2024ncm,Misuna:2024dlx} for a recent discussion.

A possible way out is to define $(E,Q)$ as a sub-gPDE of a bigger gPDE, e.g. the one arising from the jet-bundle formulation of the system. This can be naturally done by allowing $E$ to have negative ghost degree coordinates in such a way that the extended $Q$ incorporates a Koszul-Tate differential that implements restriction to the surface. This is the standard idea of Batalin-Vilkovisky formalism and its implementation in the gauge PDE setup leads to a relatively concise gPDE formulation of gauge theories. In particular, the fact that a given gauge theory formulated in BV terms admits an equivalent gPDE representation was shown in exactly this way~\cite{Barnich:2010sw}, see also \cite{Barnich:2004cr,Grigoriev:2019ojp}.

In this work we refrain from employing negative degree variables but simply allow our actual gauge PDE to be a subbundle in the ambient bundle $\pi:E \to T[1]X$ equipped with a $\pi$-projectable vector field $Q$ of degree $1$. If $\cI\subset \cC^{\infty}(E)$ is the ideal of functions vanishing on the subbundle we denote this data as $(E,Q,\cI,T[1]X)$. We assume that the zero locus of $\cI$ is a regular subbundle of $E$ and that $Q\cI\subset \cI$ (i.e. $Q$ is tangent to the zero-locus of $\cI$). Moreover, in all our examples the ambient $Q$ is nilpotent so that $(E,Q,\cI,T[1]X)$ is itself a gPDE. Of course, one could generally ask less, namely, that $Q^2 f\in \cI$ for any function $f$. This is enough to have a nilpotent $Q$ on the zero locus. 

If we define our gPDE as $(E,Q,\cI)$ and assume $Q^2=0$ then its solutions are  solutions to the underlying gPDE $(E,Q)$ that satisfy in addition $\sigma^*(\cI)=0$. Such formulation of the system is especially useful if $(E,Q,T[1]X)$ is off-shell. This means that its space of solutions is locally parametrised by a set of unconstrained fields. More formally, as a gPDE $(E,Q)$ is equivalent  to a nonegatively-graded jet-bundle equipped with an evolutionary homological vector field. More precisely, as we recalled earlier, a jet-bundle equipped with an evolutionary homological vector field $s$ can be pulled back to $T[1]X$ and equipped with $\dh+L_s$, giving a gPDE. If $(E,Q)$ is off-shell, all the nontrivial differential equations on our fields originate from $\cI$. Mention that somewhat similar formulation are known in the literature, see~e.g.~\cite{Vasiliev:2005zu,Grigoriev:2010ic,Grigoriev:2012xg}.

If $(E,Q)$ is off-shell, one can usually define a concept of prolongation. Namely, one can equip  $E$ with a distribution generated by vector fields $\nabla_A$, $\gh{\nabla_A}=0$, and $\commut{Q}{\nabla_A}=0$,  such that $\cI$ is generated by 
$\nabla_{A_1}\ldots \nabla_{A_k}\cY_\alpha$, $k\geq 0$, where $\cY_\alpha$ are some functions on $E$. If in addition $\sigma^*(\cY_\alpha)=0$ implies $\sigma^*(\cI)=0$ for any $\sigma$ solving $(E,Q)$, we refer to $\cY_\alpha$ as to the symbols of the equations of motion. 

If we work with implicitly defined gPDEs, it is natural to call two such implicit gPDEs equivalent if the corresponding zero loci are equivalent gPDEs. In particular, the equivalent reduction of the underlying gPDE $(E,Q,T[1]X)$ induces an equivalent reduction of the implicitly defined gPDE provided the corresponding contractible pairs $w^\alpha$ and $Qw^\alpha$ remain independent functions  when restricted to the zero locus of $\cI$.\footnote{Of course this is only a sufficient condition. The reduction could well affect the ideal without spoiling the equivalence.} More precisely, if $i: \tilde E \to E$ is an equivalent reduction then under the above condition $(\tilde E,Q,i^*(\cI))$ is an equivalent gPDE. 

An additional natural equivalence of implicit gPDEs relates $(E,Q,\cI)$ and
$(\tilde E,Q,\tilde \cI)$, where $(\tilde E,Q)\subset (E,Q)$ is a sub-gPDE and the zero locus of $\cI$ in $E$ coincides with the zero locus of $\tilde\cI$ in $\tilde E$. The passage from $E$ to $\tilde E$ corresponds to solving some of the constraints from $\cI$. Note that in this situation $\tilde\cI$ is a restriction of $\cI$ to $\tilde E$. 

Finally, let us mention that there are less obvious ways to encode a given gauge PDE in terms of an explicitly defined and often finite-dimensional geometrical object. This can be constructed as a sort of finite-dimensional consistent truncation of a gPDE and is known as a weak gPDE~\cite{Grigoriev:2024ncm}.

\subsection{Conformal-like gravity}\label{sec:conformal-like}
Our starting point is the equivalent reformulation of gravity (GR) in the bulk, obtained by adding artificial Weyl symmetry together with its associated compensator field. In the presence of boundaries this gives the gauge theoretical implementation of the Penrose description of the asymptotically-simple spaces. Throughout this work we assume that the space-time dimension $D\equiv \dim X$ satisfies $D \ge 4$ and for definiteness, we fix the metric signature to be Lorentzian $(+,-\dots,-)$, although none of the results in this work depend on the choice of signature. The boundary dimension is denoted by $d\equiv D-1$.

More precisely, in addition to the metric $g_{ab}$ one introduces a scalar field $\Omega$ subject to $\Omega>0$. The gauge transformations read as
\begin{equation}
\delta g_{bc}=  \xi^a \d_a g_{bc}+g_{ac}\d_{b}\xi^a+g_{ba}\d_c \xi^a+2\lambda g_{bc},\qquad  \delta \Omega=\xi^{a}\d_{a}\Omega+\lambda\Omega\,,
\end{equation}
where gauge parameters $\xi^a,\lambda$ are components of the vector field and scalar, respectively.  The above system is equivalent to the off-shell (i.e. with trivial equations of motion) gravity and is called off-shell conformal-like GR. The equivalence is easily seen by observing that the gauge $\Omega=1$ is reachable and completely kills the $\lambda$ parameter. In this gauge one stays with metric and its diffeomorphisms only, resulting in the off-shell GR.
Let us stress that this equivalence holds if $\Omega>0$.

We can now reformulate the above system as a gPDE $(\Ecl,Q)$ following the steps recalled in Section~\bref{sec:gpde}. More precisely, $(\Ecl,Q)$ is defined as follows: its typical fibre can be identified with the fibre of the jet-bundle for fields $g_{ab},\Omega$ and ghosts $\xi^a,\lambda$, whose fibres are coordinatised by 
\begin{equation}
D_{(c)}g_{ab}\,,\quad D_{(c)}\Omega\,, \quad D_{(c)} \xi^a \quad D_{(c)}\lambda\,.
\end{equation}
where $D_a$ denote the usual total derivatives on the jet-bundle and ${(a)}$ denotes a symmetric multiindex.  The action of $Q$ in these coordinates is given by:
\begin{equation}
\label{Q-GR}
\begin{gathered}
       Qx^\mu=\theta^{\mu},\quad Q g_{bc}=\xi^a D_a g_{bc}+g_{ac}D_{b}\xi^a+g_{ba}D_c \xi^a+2\lambda g_{bc},\quad
    Q\xi^b=\xi^a D_a \xi^b\,, \\
    Q\Omega=\xi^{a}D_{a}\Omega+\lambda\Omega,\qquad Q\lambda=\xi^a D_a\lambda,\qquad \commut{D_a}{Q}=0 \,,
\end{gathered}
\end{equation}
see~\cite{Grigoriev:2023kkk} and Refs. therein for further details. It is important to stress that although we have constructed $\Ecl$ in terms  of the starting point jet-bundle, the jet-bundle structure is not fully encoded in the gPDE structure of $\Ecl$. 

In the above gPDE formulation of conformal-like gravity and later in this work we systematically avoid discussing the global structure of the underlying bundles by assuming that all the bundles involved are globally trivial. Under this assumption all the constructions are coordinate independent in the sense that one is free to use any coordinates on the base but only $x,\theta$-independent coordinate change is allowed in the fibre.  For instance, the structure of $Q$ in~\eqref{Q-GR} is clearly not invariant under the naive orthogonal transformations with $x$-dependent parameters and these are to be adjusted to preserve $Q$. Of course, all the general notions such as gPDEs and their morphisms are perfectly global and the bundles encountered in this work can be defined globally. However, for simplicity we restrict ourselves to the case of     parallelisable spacetimes where all the relevant bundles can be assumed trivial.

So far we considered the off-shell system. In terms of the above gPDE formulation, the on-shell version can be obtained by restricting to a sub-gPDE determined by the prolongation of the Einstein equations for $\tilde g_{ab}=\Omega^{-2}g_{ab}$, rewritten in terms of $g,\Omega$. More precisely, 
\begin{prop}
The on-shell GR is described by the implicit gPDE $(\Ecl,Q,\cI_{cl})$, where the ideal $\cI_{cl}$ is generated by  
\begin{align}\label{Frho-def}
D_{(a)} G_{bc}=0\,, \quad S=0\,,
\end{align}
where
\begin{align}\label{einst-AE}
\begin{split}
        G_{bc}&\equiv D_b D_c \Omega- \Gamma_{bc}^{d}D_{d}\Omega+\Omega P_{bc} + \rho g_{bc},\\ \rho&\equiv-\dfrac{1}{D}g^{bc}(D_{b}D_{c}\Omega- \Gamma_{bc}^{d}D_{d}\Omega+P_{bc}\Omega)\\
    S&\equiv \Omega\rho+\dfrac{1}{2}D_a\Omega D^{a}\Omega-\dfrac{\Lambda}{(D-1)(D-2)}\,,
    \end{split}
\end{align}
and $\Gamma_{bc}^{d}$, $P_{bc}$ are, respectively, the Christoffel symbols and the Schouten tensor seen as functions in the jets of the metric. 
\end{prop}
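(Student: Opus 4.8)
The plan is to use the fact, recalled immediately before the statement, that the gPDE $(\Ecl,Q)$ is \emph{off-shell}: its solutions are parametrised by the unconstrained fields $g_{ab},\Omega$ (with $\Omega>0$) and their jets, so that all the dynamics is carried by the ideal. It therefore suffices to establish two things. First, that the zero locus of $\cI_{cl}$ selects exactly those jets for which the physical metric $\tilde g_{ab}=\Omega^{-2}g_{ab}$ obeys the vacuum Einstein equations with cosmological constant $\Lambda$. Second, that $\cI_{cl}$ is $Q$-invariant, $Q\cI_{cl}\subset\cI_{cl}$, so that $(\Ecl,Q,\cI_{cl})$ is a bona fide implicit gPDE (in the sense of Section~\bref{sec:implicit}) whose on-shell content is the one just identified.

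For the first part the essential input is the conformal transformation of the Schouten tensor. Writing $H_{bc}\equiv D_bD_c\Omega-\Gamma^d_{bc}D_d\Omega+\Omega P_{bc}$ for the covariant Hessian plus the Schouten term, a direct computation for $\tilde g=\Omega^{-2}g$ gives $\Omega\,\tilde P_{bc}=H_{bc}-\tfrac12\Omega^{-1}D_a\Omega D^a\Omega\,g_{bc}$, where $\tilde P_{bc}$ is the Schouten tensor of $\tilde g$. Substituting $H_{bc}=G_{bc}-\rho g_{bc}$ and using the definition of $S$ to trade the scalar $\rho+\tfrac12\Omega^{-1}D_a\Omega D^a\Omega$ for $\Omega^{-1}(S+\tfrac{\Lambda}{(D-1)(D-2)})$, one obtains the master identity
\[
\Omega\,\tilde P_{bc}=G_{bc}-\Omega^{-1}\Big(S+\tfrac{\Lambda}{(D-1)(D-2)}\Big)g_{bc}.
\]
Its trace-free part reads $(\tilde P_{bc})_0=\Omega^{-1}(G_{bc})_0$, so that $\tilde g$ is Einstein precisely when the trace-free part of $G_{bc}$ vanishes; on the full locus $G_{bc}=0,\ S=0$ the identity collapses to $\tilde P_{bc}=-\tfrac{\Lambda}{(D-1)(D-2)}\tilde g_{bc}$, which is the trace-adjusted Einstein equation. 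Thus the trace-free sector reproduces the Einstein condition while the trace sector together with $S$ fixes the value of the cosmological constant (and, in the present normalisation, the scale of the defining function $\Omega$). The prolongations $D_{(a)}G_{bc}$ then carve out the full, infinitely prolonged Einstein equation manifold required by the off-shell system; one also checks that the derivatives $D_aS$ need not be adjoined as independent generators, since on the zero locus of the $G$'s they already lie in the ideal --- the gPDE counterpart of the statement that the ``norm'' $S$ is covariantly constant once $G_{bc}=0$.

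The second, and main, step is $Q$-invariance. Because $[D_a,Q]=0$ and $\cI_{cl}$ is generated by the total derivatives $D_{(a)}G_{bc}$ together with $S$, it is enough to verify $QG_{bc}\in\cI_{cl}$ and $QS\in\cI_{cl}$, after which the prolongations close automatically. Decomposing $Q$ into its diffeomorphism ($\xi$) and Weyl ($\lambda$) parts, the diffeomorphism part is immediate: $G_{bc}$ is a genuine covariant $2$-tensor and $S$ a scalar built from $g,\Omega$ and their covariant derivatives, so the $\xi$-terms of $QG_{bc}$ and $QS$ are just their Lie derivatives and manifestly lie in $\cI_{cl}$. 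The Weyl part is the real work: one must compute $\delta_\lambda G_{bc}$ and $\delta_\lambda S$ under $\delta g_{bc}=2\lambda g_{bc},\ \delta\Omega=\lambda\Omega$, retaining the inhomogeneous terms in $D_a\lambda$ and $D_aD_b\lambda$, and show they recombine into multiples of the generators. The conceptually clean way to organise this is to observe that $(\Omega,\,D_a\Omega,\,\rho)$ assemble into a Weyl tractor, in terms of which $G_{bc}=0$ is a parallel-type condition and $S=0$ is the tractor-norm condition; tractor covariance then forces the Weyl variations to be proportional to $G_{bc}$ and $S$ modulo the ideal.

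The step I expect to be the principal obstacle is exactly this Weyl-sector verification: tracking the $D_a\lambda$- and $D_aD_b\lambda$-contributions produced by $\delta_\lambda$ acting on $\Gamma^d_{bc}$ and $P_{bc}$, and checking that --- using the definition of $\rho$ and the identity $\mathrm{tr}\,G=\rho$ --- they reassemble into combinations of the generators rather than producing new independent constraints. This is where the precise coefficient in the definition of $\rho$ and the exact form of $S$ are indispensable, and it is cleanest to carry out in the tractor language indicated above.
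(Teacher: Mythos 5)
Your proposal is correct and rests on the same key input as the paper's (very terse) proof, namely the Weyl transformation law of the Schouten tensor; the paper simply invokes that law together with the gauge choice $\Omega=1$ and defers all details to~\cite{Grigoriev:2023kkk}. Where you differ is in execution: instead of gauge-fixing, you work covariantly and derive the identity $\Omega\,\tilde P_{bc}=G_{bc}-\Omega^{-1}\bigl(S+\tfrac{\Lambda}{(D-1)(D-2)}\bigr)g_{bc}$ for $\tilde g=\Omega^{-2}g$, which exhibits the zero locus of \eqref{Frho-def} as the prolonged Einstein equations without ever touching the gauge freedom; this buys a manifestly gauge-independent identification and makes the role of the trace-free/trace split and of $S$ transparent, at the cost of having to argue Weyl covariance of the generators separately. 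On that second point your plan (tracking $\delta_\lambda G_{bc}$, $\delta_\lambda S$ and reassembling the $D_a\lambda$-terms via the tractor interpretation of $(\Omega,D_a\Omega,\rho)$) is sound and standard, but note that it can be short-circuited: since $\tilde g_{ab}=\Omega^{-2}g_{ab}$ is inert under the Weyl part of $Q$ and $G_{bc}$, $S$ are, by your own master identity, just the (trace-adjusted) Einstein tensor of $\tilde g$ multiplied by nonnegative powers of $\Omega$, the ideal they generate is automatically preserved by $Q$ up to overall rescalings, so the explicit $D_a\lambda$- and $D_aD_b\lambda$-bookkeeping you flag as the principal obstacle is not actually needed. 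Your remark that $D_aS$ lies in the ideal generated by the $D_{(a)}G_{bc}$ is likewise correct (it is the statement that the tractor norm is constant once the tractor is parallel) and justifies listing only $S$, not its prolongations, among the generators.
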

The proof is based on the transformation properties of the Schouten tensor under Weyl transformations and imposing the gauge condition $\Omega=1$. Details can be found in~\cite{Grigoriev:2023kkk}. Equations \eqref{Frho-def} are known as the almost Einstein equation.

\section{Pre-minimal model of the conformal-like  GR}\label{sec:pre-min}
\subsection{Pre-minimal model}
Starting from the off-shell conformal GR represented by the gPDE $(\Ecl,Q)$ one can equivalently reduce it
in the sense of Definition \bref{def:equiv-red}. For a locally-trivial gPDEs a simple class of equivalent reductions originate from equivalent reductions of the fibre $Q$-manifolds. The one we are interested in is locally of this type, i.e. the base space plays a passive role.

Consider the sub-gPDE $(\Ered,Q)$, $r:\Ered\hookrightarrow \Ecl$ singled out by
\begin{align}
&D_{(a)}D_{(c}\Gamma_{b)d}^{e}=0,\qquad D_{(c)}D_{(a}P_{b)c}=0,
\\    &Q(D_{(a)}D_{(c}\Gamma_{b)d}^{e})=0,\qquad Q(D_{(c)}D_{(a}P_{b)c})=0\,.
\end{align}
The equations of the second line can be solved with respect to 
$D_{(c)}D_a D_b\xi^d$ and $D_{(c)}D_a D_b \lambda$ and hence this sub-gPDE is an equivalent reduction, at least locally. It is clear that this reduction does not affect the $\Omega$-sector of the system. In particular, it is perfectly defined even if we let $\Omega$ to vanish. In this case the underlying system is strictly speaking not equivalent to GR. Note that the total derivative $D_a$ is not tangent to this surface. 

As part of the coordinates on $\Ered$ we take the pullbacks of the degree $1$ coordinates $\xi^a,\lambda$ and their derivatives $C_a{}^b\equiv D_a\xi^b$, $\lambda_a \equiv D_a \lambda$ together with $g_{ab}$, $\Omega$. By some abuse of notation we use the same symbols to denote  the degree $1$ coordinates  and their pullbacks to $\Ered$. The same applies to $g_{ab}$. The action of $Q$ on these coordinates read as:
\begin{align}\label{gran-ish1}
    \begin{split}
        Q g_{bc}&=C_{b}{}^a g_{ac}+C_{c}{}^a g_{b a}+2\lambda g_{bc}\,,\qquad         Q\xi^b=\xi^a C_{a}{}^{b}\,,\\
        Q\lambda_a&= C_{a}{}^b\lambda_b +\half\xi^b \xi^c \Co_{abc}\,,\qquad\qquad\ Q\lambda=\xi^a\lambda_a\,,\\
        Q C_{b}{}^c&=C_{b}{}^a C_{a}{}^c+\lambda_b\xi^c-\lambda^c \xi_b+\delta_{b}^c \lambda_a \xi^a +\half\xi^a\xi^d \We^{c}{}_{b a d}\,, \\
   Qr^{*}\Omega&=\xi^a r^{*}D_a\Omega+\lambda r^{*}\Omega\,,
    \end{split}
\end{align}
where $\We^{c}{}_{b a d}$ and $\Co^{b}{}_{ad}$ are respectively the Weyl and Cotton tensors build out of $D_{(c)}g_{ab}$ and pulled back to $\Ered$. In particular, they possess the standard symmetry properties:
\begin{align}
        \We_{abcd}=-\We_{bacd}=-\We_{abdc}, \quad \We_{a[bcd]}=0, \quad \Co_{abc}=-\Co_{acb}, \quad \Co_{[abc]}=0.
    \end{align}
Functions  $\xi^a,\lambda,C_a{}^b,\lambda_a$ exhaust the independent coordinates of degree $1$ on $\Ered$. 

It is clear that the above reduction does not affect the sector of $D_{(a)}\Omega$ variables and in this sense amounts to the equivalent reduction of the gPDE $(\Ecl^{conf},Q)$ describing the conformal geometry, which can be identified with $D_{(a)}\Omega$-independent sector of $(\Ecl,Q)$.  More geometrically, the initial $(\Ecl,Q)$ is a $Q$-bundle  over the $(\Ecl^{conf},Q)$ with the fibre being a jet of $\Omega$ with coordinates $D_{(a)}\Omega$. Functions on the base can be identified with the functions on the total space that do not depend on $D_{(a)}\Omega$. The above equivalent reduction $(\Ered,Q) \subset (\Ecl,Q)$ originates from the equivalent reduction of $(\Ecl^{conf},Q)$ to $(\Ered^{conf},Q)$  and in this form was initially  constructed and employed in~\cite{Boulanger:2004eh,Boulanger:2004zf}, see also~\cite{Dneprov:2022jyn,Grigoriev:2023kkk}.  More precisely,
\begin{prop}\label{prop:bundle-over-conf}
$(\Ered,Q)$ is a $Q$-bundle over $(E^{conf}_{\red},Q)$. 
\end{prop}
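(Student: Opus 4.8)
The plan is to realise $\Ered$ as the restriction of the already-established $Q$-bundle $(\Ecl,Q)\to(\Ecl^{conf},Q)$ to the sub-gPDE $\Ered^{conf}$ sitting inside its base, and then to invoke the elementary fact---immediate from the definitions recalled in Section~\bref{sec:gpde}---that the restriction of a $Q$-bundle to a $Q$-submanifold of its base is again a $Q$-bundle. As recalled just before the statement, $(\Ecl,Q)$ is a $Q$-bundle over $(\Ecl^{conf},Q)$ with fibre coordinatised by the $\Omega$-jet $D_{(a)}\Omega$; denote its projection by $\pi_{cl}$, so that functions on $\Ecl^{conf}$ are identified with the $D_{(a)}\Omega$-independent functions $\pi_{cl}^*\cC^\infty(\Ecl^{conf})\subset\cC^\infty(\Ecl)$.

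The key step is to show that the ideal cutting out $\Ered\subset\Ecl$ is the $\pi_{cl}$-pullback of the ideal cutting out $\Ered^{conf}\subset\Ecl^{conf}$, so that $\Ered=\pi_{cl}^{-1}(\Ered^{conf})$. The generators $D_{(a)}D_{(c}\Gamma_{b)d}^{e}$ and $D_{(c)}D_{(a}P_{b)c}$ are built solely from the metric jets $D_{(c)}g_{ab}$, hence lie in the conformal sector and are $D_{(a)}\Omega$-independent. Their $Q$-images are again $D_{(a)}\Omega$-independent: by the explicit action \eqref{gran-ish1}, $Q$ sends every conformal-sector coordinate $g_{ab},\xi^a,\lambda,C_a{}^b,\lambda_a$ to an expression in the conformal sector only, the $\Omega$-jet entering exclusively through $Qr^*\Omega=\xi^a r^*D_a\Omega+\lambda\,r^*\Omega$, which governs the fibre and never feeds back into the base. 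Thus $Q$ preserves the subalgebra $\pi_{cl}^*\cC^\infty(\Ecl^{conf})$, i.e. $Q$ is $\pi_{cl}$-projectable and descends to the $Q$ of $(\Ecl^{conf},Q)$, and the two defining ideals are indeed related by $\pi_{cl}^*$.

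Given this, the conclusion is formal. Since $\Ered^{conf}$ is a $Q$-submanifold of $\Ecl^{conf}$---its defining ideal being $Q$-invariant because the $Q$-images of its generators are themselves among the generators---the equality $\Ered=\pi_{cl}^{-1}(\Ered^{conf})$ exhibits $\Ered$ as the restriction of the $Q$-bundle $\pi_{cl}$ over the base submanifold $\Ered^{conf}$. Hence the restricted map $\pi\equiv\pi_{cl}|_{\Ered}\colon\Ered\to\Ered^{conf}$ is a locally-trivial fibration and a $Q$-map, so $(\Ered,Q)$ is a $Q$-bundle over $(\Ered^{conf},Q)$; regularity of the fibration and tangency of $Q$ to $\Ered$ are inherited from $\pi_{cl}$ and from the reduction $\Ered^{conf}\subset\Ecl^{conf}$.

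I expect the only substantive point to be the $\pi_{cl}$-projectability of $Q$, namely that acting with $Q$ on the conformal-sector coordinates never produces the fibre coordinates $D_{(a)}\Omega$. This closure of the conformal sector under $Q$ is manifest from \eqref{gran-ish1}, but it is the crux on which the argument turns: once it is established, the statement reduces to the general behaviour of $Q$-bundles under restriction of their base.
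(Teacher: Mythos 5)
Your proposal is correct and follows essentially the same route as the paper, which establishes the result through the discussion immediately preceding the proposition: $(\Ecl,Q)$ is a $Q$-bundle over $(\Ecl^{conf},Q)$ with fibre the $\Omega$-jet, and since the generators of the reduction (the Christoffel and Schouten jets together with their $Q$-images) live entirely in the $D_{(a)}\Omega$-independent sector, $\Ered$ is exactly the restriction of that bundle to $\Ered^{conf}$. You have merely made explicit the projectability check that the paper treats as "clear", which is a reasonable thing to spell out.
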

Recall that both $(\Ered,Q)$ and $(E^{conf}_{\red},Q)$ are $Q$-bundles over $T[1]X$ and the projection
$(\Ered,Q)\to (E^{conf}_{\red},Q)$ is a map of bundles over $T[1]X$.

We refer to $(E^{conf}_{\red},Q)$ as to the pre-minimal model of conformal geometry. At the same time, it is natural to call  $(\Ered,Q)$ a pre-minimal model of conformal-like off-shell GR. The term pre-minimal reflects that one can still eliminate the metric $g_{ab}$ together with the symmetric part of the ghosts $C_a{}^b$ to obtain the minimal model, see~\cite{Dneprov:2022jyn,Grigoriev:2023kkk} for more details.

Finally, recall that the on-shell conformal-like GR is described by $(\Ecl,Q,\cI_{cl})$, where  $\cI_{cl}$ is generated by \eqref{Frho-def} which, in turn, originate from the Einstein equations. Pulling back $\cI_{cl}$ to $\Ered$ gives an ideal $\cI_{\red}$ which allows us to define the pre-minimal model of the on-shell conformal-like GR as an implicit gPDE $(\Ered,Q,\cI_{\red})$. 

\subsection{Covariant derivatives}
Our aim now is to introduce a set of functions on $\Ered$ that completes
the coordinates involved in~\eqref{gran-ish1} to a useful (possibly overcomplete) coordinate system on $\Ered$, compatible with the bundle structure $\Ered\to E^{conf}_{\red}$. To this end we need some additional ingredients. Let $C^M$ denote all the coordinates of ghost degree $1$ on $\Ered$. Consider the following vector fields: 
    \begin{align}
        \nabla_M\equiv  \Big[\frac{\partial}{\partial C^M},Q\Big].
    \end{align}
Note that because $C^M$ exhaust the degree $1$ fibre coordinates, vector fields $\nabla_M$ do not depend on the choice of the degree $0$ coordinates completing $C^M,x^\mu,\theta^\mu$ to a coordinate system on $\Ered$.

Vector fields $\nabla_M$ satisfy the following ``open-algebra'' relations:
\begin{align}
\label{nablacommut}
 [\nabla_M,\nabla_N]=\epsilon^{L}_{MN}\nabla_L+(Q\epsilon^{L}_{MN})\frac{\partial}{\partial C^{L}}\,,
 \end{align}
where $\epsilon^L_{NM}=-\epsilon^L_{MN}$ are degree zero functions on $\Ered$. To see this, note that  
\begin{align}
    [\nabla_M,\nabla_N]=[Q,r_{MN}], \qquad 
    r_{MN}\equiv \Big[\frac{\partial}{\partial C^M},\nabla_N\Big]\,.
\end{align}
Because $\gh{r_{MN}}=-1$ it can be represented as 
\begin{align}
    r_{MN}=\epsilon_{MN}^L\frac{\partial}{\partial C^{L}}\,,
\end{align}
for some degree $0$ functions $\epsilon^L_{NM}$. Note that $\epsilon^L_{NM}=-\epsilon^L_{MN}$ thanks to
\begin{align}
    r_{MN}=\Big[\frac{\partial}{\partial C^M},\nabla_N\Big]=\Big[\frac{\partial}{\partial C^M},\Big[Q,\frac{\partial}{\partial C^N}\Big]\Big]=\Big[\nabla_M,\frac{\partial}{\partial C^N}\Big]=-r_{NM}.
\end{align}
Of course,  these considerations apply to any $Q$-manifold whose coordinates are allowed to have degree $0$ and $1$ only. Such $Q$ manifolds are 1:1 with Lie algebroids.  The restriction of $\nabla_M$ to the base, i.e. a submanifold singled out by $C^M=0$, are nothing but the components of the anchor map of this algebroid.

It is convenient to introduce the following notations for $\nabla_M$ associated to different sectors of ghost variables:
\begin{align}\label{commutators}
\begin{split}
            \nabla_a \equiv\left[\frac{\partial}{\partial \xi^a}, Q\right],& \qquad \Delta^{a}{}_{b} \equiv\left[\frac{\partial}{\partial C_{a}{}^b}, Q\right], \\ 
        \Gamma^a \equiv\left[\frac{\partial}{\partial \lambda_a}, Q\right],& \qquad \Delta \equiv\left[\frac{\partial}{\partial \lambda}, Q\right]\,.
        \end{split}
\end{align}
We also employ collective notations $\Delta_A$ for $\Delta^{a}{}_{b}, \Gamma^a,\Delta$, i.e. all components of $\nabla_M$ save for $\nabla_a$.
\begin{prop}
Relations~\eqref{nablacommut} read explicitly as:
\begin{align}\label{commutrelations}
\begin{split}
&{\left[\Delta^{a}{}_{b},\Gamma^c\right] }  =-\delta_b^c\Gamma^a,  \qquad  {\left[\Gamma^a, \nabla_b\right] }  =-\mathcal{P}^{a c}_{d b} \Delta^{d}{}_{c}+\delta^{a}_{b}\Delta\,, \\ &{\left[\Delta^{a}{}_b, \nabla_c\right] }  =\delta_c^a \nabla_b,  \qquad  {\left[\Delta^{a}{}_{b}, \Delta^{c}{}_d\right]=\delta^{a}_{d} \Delta^{c}{} }_b-\delta^{c}_b \Delta^a{ }_d \\
     &{\left[\nabla_a, \nabla_b\right]=-\We^{d}{}_{cab} \Delta^{c}{ }_d-\Co_{dab}\Gamma^d -(Q\We^{d}{}_{cab})\frac{\partial}{\partial C_{c}{}^{d}}-(Q\Co_{dab})\frac{\partial}{\partial \lambda_d}} \,, \\
     \end{split}
\end{align}
where $\mathcal{P}^{d e}_{a b}\equiv (-g^{de}g_{ab}+\delta^{d}_{a}\delta^{e}_{b}+\delta^{d}_{b}\delta^{e}_{a})$ and all other commutators vanish.
\end{prop}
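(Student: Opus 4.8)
My plan is to extract every bracket from the master relation \eqref{nablacommut}, so that no commutator is computed from scratch. Because each $\nabla_M=\commut{\partial/\partial C^M}{Q}$, the structure functions are just the second derivatives of $Q$ in the degree-one fibre coordinates: up to Koszul signs, $\epsilon^L_{MN}$ is the coefficient of the ghost-bilinear monomial $C^MC^N$ in $QC^L$. The whole computation therefore reduces to harvesting the quadratic-in-ghost part of \eqref{gran-ish1} and feeding it into \eqref{nablacommut}, remembering that the correction term $(Q\epsilon^L_{MN})\partial/\partial C^L$ is present exactly when $\epsilon^L_{MN}$ is a non-constant function on $\Ered$.

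I would then sort the quadratic terms by the pair of ghost types they couple. The $CC$-term $C_b{}^aC_a{}^c$ in $QC_b{}^c$ gives the $\mathfrak{gl}$-type relation for $\commut{\Delta^a{}_b}{\Delta^c{}_d}$; the $\xi C$-term $\xi^aC_a{}^b$ in $Q\xi^b$ gives $\commut{\Delta^a{}_b}{\nabla_c}$; and the $\lambda C$-term $\lambda^aC_a{}^b$ in $Q\lambda^b$ gives $\commut{\Delta^a{}_b}{\Gamma^c}$, the last two simply expressing that $\nabla_c$ and $\Gamma^c$ transform as a $\mathfrak{gl}$-covector and $\mathfrak{gl}$-vector respectively. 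In all three cases $\epsilon$ is a Kronecker delta, hence constant, so $Q\epsilon=0$ and the relations close with no $\partial/\partial C$ correction. The mixed $\lambda\xi$-terms $\lambda_b\xi^c-\lambda^c\xi_b+\delta^c_b\lambda_a\xi^a$ in $QC_b{}^c$, together with $\xi^a\lambda_a$ in $Q\lambda$, assemble into $\commut{\Gamma^a}{\nabla_b}$: the first set reorganises (after the metric contractions) into the $-\mathcal{P}^{ac}_{db}\Delta^d{}_c$ term and the second produces $\delta^a_b\Delta$.

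The bulk of the work is the $\commut{\nabla_a}{\nabla_b}$ sector. Here the relevant quadratic terms are the curvature pieces $\half\xi^a\xi^d\We^c{}_{bad}$ in $QC_b{}^c$ and $\half\xi^a\xi^d\Co^b{}_{ad}$ in $Q\lambda^b$, so the structure functions are $\We$ and $\Co$ themselves. Differentiating twice in $\xi$ and antisymmetrising forces the use of the algebraic symmetries $\We_{a[bcd]}=0$, $\We_{abcd}=-\We_{bacd}=-\We_{abdc}$, $\Co_{[abc]}=0$, $\Co_{abc}=-\Co_{acb}$ to collapse the double derivative onto the single tensors $\We^d{}_{cab}$ and $\Co_{dab}$. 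Since these structure functions are non-constant, \eqref{nablacommut} now genuinely produces the correction terms $-(Q\We^d{}_{cab})\partial/\partial C_c{}^d-(Q\Co_{dab})\partial/\partial\lambda_d$, which is precisely the origin of the last two contributions in the final line of \eqref{commutrelations}.

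The step I expect to be most delicate is not the curvature sector but the mixed sector, because $\mathcal{P}^{de}_{ab}$ carries the metric and is therefore not $Q$-closed; one must make sure that its would-be $(Q\mathcal{P})\partial/\partial C$ correction is correctly accounted for when $\commut{\Gamma^a}{\nabla_b}$ is brought to the stated closed form, and keep track of the Koszul signs generated as the odd derivatives are commuted past the odd ghosts. The safest way to pin down every coefficient — and to confirm the assertion that all unlisted commutators vanish — is to evaluate both sides of each relation on a generating set of coordinates, namely $\xi^c$, $\lambda$, $g_{ab}$ and $C_m{}^n$; acting on $C_m{}^n$ in particular is what isolates the $\partial/\partial C$ and $\partial/\partial\lambda$ contributions and distinguishes a constant structure function from a non-constant one.
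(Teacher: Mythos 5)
Your proposal is correct and takes essentially the same route as the paper: the paper's proof computes $r_{MN}=[\partial/\partial C^M,\nabla_N]$ by noting it has ghost degree $-1$ and so is fixed by its action on the degree-one coordinates, i.e.\ by the double ghost derivatives $\frac{\partial}{\partial C^M}\frac{\partial}{\partial C^N}QC^L$ of the bilinear terms in \eqref{gran-ish1}, and then obtains each bracket as $[Q,r_{MN}]$ — exactly the ``harvest the ghost bilinears and feed them into \eqref{nablacommut}'' recipe you describe, worked out explicitly only for $[\nabla_a,\nabla_b]$ with the remaining relations declared ``similar.'' Your flagged concern about the metric-dependent structure function $\mathcal{P}^{ac}_{db}$ in $[\Gamma^a,\nabla_b]$ and its would-be $(Q\mathcal{P})\partial/\partial C$ correction is a genuine subtlety that the paper's one-line ``obtained in a similar way'' does not address, and your suggestion to settle it by evaluating both sides on the generating degree-one coordinates is the right way to pin it down.
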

Note that vector fields \eqref{commutators} are well-defined on the subalgebra of degree $0$ functions where they coincide with the corresponding vector fields introduced in~\cite{Boulanger:2004eh} while their analogues in the case of GR and YM theory were discussed already  in~\cite{Brandt:1996mh}.

Note that functions of degree $0$ are closed under the action of $\nabla_M$. Moreover, on this subalgebra $\nabla_M$ form a representation of the extension of the conformal algebra $o(D,2)$. More precisely, the extension, where the Lorentz subalgebra is extended to a general linear algebra whose associated ghost variables are $C_a{}^b$. Alternatively, 
if one sets to zero all the terms involving curvatures $\We$ and $\Co$, relations \eqref{commutrelations} are just the relations of this Lie algebra. That a Lie algebra and its module appear naturally can be understood from the perspective of $L_{\infty}$-algebra. More precisely, the fibre of our gPDE $(\Ered,Q)$ (as well as that of $(E,Q)$) can be considered  a formal $Q$-manifold and hence is equivalent to an $L_{\infty}$-algebra. At the same time an $L_{\infty}$-algebra whose associated $Q$-manifold has coordinates of degree $0$ and $1$ only,  is known to describe deformations of a Lie algebra and its module. 

\begin{proof}
One reads off the commutation relations from the relation~\eqref{gran-ish1}. For instance, let $r_{ab}\equiv[\frac{\partial}{\partial\xi^a},\nabla_b]$. Because $\gh{r_{ab}}=-1$ it acts nontrivially on the degree $1$ coordinates only. Explicitly, one has
\begin{align}
      &r_{ab}C_{c}{}^{d}=\Big[\frac{\partial}{\partial\xi^a},\nabla_b\Big]C_{c}{}^{d}=\frac{\partial}{\partial\xi^a}\nabla_b C_{c}{}^{d}=\frac{\partial}{\partial\xi^a}\frac{\partial}{\partial\xi^b}Q C_{c}{}^{d}=\We^{d}{}_{cba},\\
      &r_{ab}\lambda_c=\frac{\partial}{\partial\xi^a}\frac{\partial}{\partial\xi^b}Q\lambda_c=\Co_{cba}\,, \qquad r_{ab}\lambda= r_{ab} \xi^c=0\,.
\end{align}
so that 
  \begin{align}\label{commutrelations-pr-1}
      r_{ab}=-\We^{d}{}_{cab}\frac{\partial}{\partial C_{c}{}^{d}}-\Co_{cab}\frac{\partial}{\partial\lambda_c}\,,
  \end{align}
and hence 
  \begin{align}
      [\nabla_a,\nabla_b]=[Q,r_{ab}]=-\We^{d}{}_{cab} \Delta^{c}{ }_d-\Co_{dab}\Gamma^d -(Q\We^{d}{}_{cab})\frac{\partial}{\partial C_{c}{}^{d}}-(Q\Co_{dab})\frac{\partial}{\partial \lambda_d}.
  \end{align}
The remaining relations are obtained in a similar way.
\end{proof}

\subsection{Ideals and coordinates}
In our studies of the boundary structures the following notion turns out to be useful:
\begin{definition}
\label{def-I}
Let $(E,Q,T[1]X)$ be a gPDE equipped with a set of functions $f_\alpha$ and a set of vector fields
$V_i\in Vect(E)$. We define the ideal $I^n(f,V)\subset \cC^\infty(E)\,,\,\, n \geq0$ to be the ideal generated by
\begin{align}
    V_{i_1}\dots V_{i_k}f^\alpha\,  \quad 0\leq k \leq n\,.
\end{align}
We assume that $I^n$ do not restrict the base space.\footnote{This can be formulated as the requirement that the image of the zero locus of $I^n(f,V)$ under the canonical projection $E \to T[1]X$ is $T[1]X$.} It is also useful to introduce $I^\infty(f,V)$ which is generated by $V_{i_1}\dots V_{i_k}f^\alpha$ with $k=0,1,2,\ldots$.   
\end{definition}
Note that $I^0(f,V)$ does not depend on $V$ and hence we also denote it as $I^0(f)$. There is a natural filtration on $I^\infty(f,V)$:
\begin{equation}
I^0(f)\subset I^1(f,V) \subset I^2(f,V)\subset \ldots
\end{equation}
Note that $V_jI^m(f,V)\subseteq I^{m+1}(f,V)$.

If $(E,Q)$ is the off-shell  GR introduced above, we say that $I^n(f,D_a)$ is compatible if $(Q-\xi^a D_a)I^n(f,D_a)\subset I^n(f,D_a)$. It is easy to check that if $(Q-\xi^a D_a)I^0(f)\subset I^0(f)$ then all $I^n(f,D_a)$ are also compatible. 

\begin{prop}\label{prostoe2}
Let $f_\alpha\in C^{\infty}(E)$, $\gh{f_\alpha}=0$ be such that $(Q-\xi^a D_a)I^0(f)\subset I^0(f)$ and 
$r:\Ered\hookrightarrow E$. It follows
\begin{align} 
    r^{*}I^n(f,D_a)=I^n(r^{*}f,\nabla_a),
\end{align}
where $I^n(r^{*}f,\nabla_a)$ is the ideal in $\cC^\infty(\Ered)$ determined (in the sense of Definition~\bref{def-I}) by functions $r^*f^\alpha$ and vector fields $\nabla_a\equiv \commut{Q}{\dl{\xi^a}}$. Recall, that we assume that the coordinate system on  $\Ered$ in the degree 1 is fixed to be $\xi^a,D_a\xi^b,\lambda,D_a\lambda$ pulled-back to $\Ered$.
\footnote{This defines $\dl{(r^*\xi^a)}$ in the unambiguous way}. 
\end{prop}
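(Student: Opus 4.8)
The plan is to prove the two inclusions $r^{*}I^n(f,D_a)\subseteq I^n(r^{*}f,\nabla_a)$ and $I^n(r^{*}f,\nabla_a)\subseteq r^{*}I^n(f,D_a)$ simultaneously by induction on $n$. The base case $n=0$ is immediate, since $I^0(f)$ and its image $r^{*}I^0(f)$, as well as $I^0(r^{*}f)$, are all generated by $r^{*}f_\alpha$ and $I^0$ does not depend on the chosen vector fields. The engine of the induction is a single identity relating $\nabla_a$ to the restriction of $D_a$. Writing $Q'\equiv Q-\xi^{a}D_a$ for the ``rotation part'' of $Q$, I would first observe that for any degree-$0$ function $\Phi$ on $E$ one has, using that $r$ is a $Q$-map, that $r^{*}\xi^{a}=\xi^{a}$, and that $\nabla_a\equiv\commut{Q}{\dl{\xi^a}}$ acts on a degree-$0$ function by extracting the coefficient of the bare $\xi^{a}$ in its $Q$-image (well-defined because the degree-$1$ coordinates on $\Ered$ are fixed to be $\xi^{a},C_a{}^b,\lambda,\lambda_a$),
\begin{equation}
\nabla_a(r^{*}\Phi)=\dl{\xi^a}\big(r^{*}Q\Phi\big)=r^{*}(D_a\Phi)+\dl{\xi^a}\big(r^{*}(Q'\Phi)\big),
\label{eq:star-plan}
\end{equation}
where the last step uses $Q\Phi=\xi^{a}D_a\Phi+Q'\Phi$ together with the fact that $r^{*}(D_b\Phi)$ has degree $0$ and is therefore $\xi$-independent.

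The crucial subtlety, and the main obstacle, is precisely that the total-jet coordinate derivative $\dl{\xi^a}$ on $E$ and the one on $\Ered$ entering $\nabla_a$ do \emph{not} intertwine under $r^{*}$: the reduction $r$ re-expresses the eliminated higher jets $D_{(c)}D_aD_b\xi^{d}$ in terms of $\xi^{a},C_a{}^b,\lambda,\lambda_a$ and the curvatures, and these re-expressions carry bare-$\xi$ dependence. Consequently the correction term $\dl{\xi^a}(r^{*}(Q'\Phi))$ in \eqref{eq:star-plan} is genuinely nonzero, and the whole argument rests on showing it lies in the relevant ideal. This I would establish from two facts: (i) the compatibility hypothesis, which as stated in the text upgrades to $Q'I^n(f,D_a)\subseteq I^n(f,D_a)$ for all $n$, so that $Q'\Phi\in I^n(f,D_a)$ whenever $\Phi\in I^n(f,D_a)$; and (ii) the observation that $\dl{\xi^a}$ preserves $I^n(r^{*}f,\nabla_a)$, because its generators $\nabla_{(c)}r^{*}f_\alpha$ have degree $0$ and are hence annihilated by $\dl{\xi^a}$, so by Leibniz only the coefficients get differentiated and the result stays in the ideal. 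Combining (i) and (ii) with the inductive hypothesis $r^{*}I^n(f,D_a)=I^n(r^{*}f,\nabla_a)$ then yields, for every degree-$0$ element $\Phi\in I^n(f,D_a)$,
\begin{equation}
\nabla_a(r^{*}\Phi)\equiv r^{*}(D_a\Phi)\pmod{I^n(r^{*}f,\nabla_a)}.
\label{eq:congr-plan}
\end{equation}

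With \eqref{eq:congr-plan} in hand the inductive step is bookkeeping using the filtration $I^n\subseteq I^{n+1}$ together with $D_bI^n(f,D_a)\subseteq I^{n+1}(f,D_a)$ and $\nabla_bI^n(r^{*}f,\nabla_a)\subseteq I^{n+1}(r^{*}f,\nabla_a)$. For $I^{n+1}(r^{*}f,\nabla_a)\subseteq r^{*}I^{n+1}(f,D_a)$ I take a new generator $\nabla_b\Psi$ with $\Psi=\nabla_{(c)}r^{*}f_\alpha$, $|c|=n$; by the inductive hypothesis $\Psi\in r^{*}I^n(f,D_a)$, so $\Psi=r^{*}\Phi$ for some degree-$0$ $\Phi\in I^n(f,D_a)$, and \eqref{eq:congr-plan} rewrites $\nabla_b\Psi$ as $r^{*}(D_b\Phi)$ plus an element of $I^n\subseteq r^{*}I^{n+1}$, with $r^{*}(D_b\Phi)\in r^{*}I^{n+1}(f,D_a)$ since $D_b\Phi\in I^{n+1}(f,D_a)$. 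For the reverse inclusion I take a new generator $r^{*}(D_b\Phi)$ with $\Phi=D_{(c)}f_\alpha$, $|c|=n$; then $r^{*}\Phi\in I^n(r^{*}f,\nabla_a)$ by the inductive hypothesis, \eqref{eq:congr-plan} rewrites $r^{*}(D_b\Phi)$ as $\nabla_b(r^{*}\Phi)$ minus an element of $I^n$, and $\nabla_b(r^{*}\Phi)\in\nabla_bI^n(r^{*}f,\nabla_a)\subseteq I^{n+1}(r^{*}f,\nabla_a)$. Together with the inductive hypothesis on the lower generators this closes both inclusions at level $n+1$. The only genuine content is the identity \eqref{eq:star-plan} and the ideal-stability statements (i)–(ii); everything else is the filtered induction.
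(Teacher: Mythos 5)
Your proof is correct and follows essentially the same route as the paper's: the engine in both cases is the identity obtained by applying $\dl{\xi^a}$ to the pullback of $Qf=\xi^aD_af+(Q-\xi^aD_a)f$, together with the observation that $\dl{\xi^a}$ preserves $I^n(r^*f,\nabla_a)$ because its generators have degree $0$, and the same filtered induction. The only difference is organizational — you state the congruence $\nabla_a r^*\Phi\equiv r^*D_a\Phi$ for arbitrary degree-$0$ elements of the ideal and spell out the two inclusions separately, whereas the paper derives it only for the generators $D_{(c)}f_\alpha$ and reads off the equality of ideals directly.
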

Note that the condition of the statement is equivalent to requiring $\Delta, \Delta_a{}^b,\Gamma^a$ to preserve $ I^0(f)\subset I^0(f)$.
\begin{proof}
The statement is obvious for $n=0$. 
For $n=1$ we have
\begin{align}
    Qf=\xi^{a}D_{a}f+I^0(f).
\end{align}
Applying  $\dfrac{\partial}{\partial\xi_b}r^{*}$ to the above we get
\begin{align}
\label{1-relation}
    \nabla_{b}r^{*}f=r^{*}D_b f+I^0(r^{*}f),
\end{align}
where we used $\nabla_b f\equiv\Big[\dfrac{\partial}{\partial\xi^b},Q\Big]r^{*}f=\dfrac{\partial}{\partial\xi^b}Qr^{*}f$  which in turn employs $\gh{f}=0$. Equation \eqref{1-relation} implies that $r^*(I^1(f,D_a))$ can be generated by 
$r^*f^\alpha$ and $\nabla_{a}r^{*}f^\alpha$ and hence the statement for $n=1$. 

Proceeding further by induction we assume that the statement holds for all $i<N$, i.e.
\begin{align}
   \nabla_{a_1}\dots \nabla_{a_{i}} r^{*}f=r^{*}D_{a_1}\dots D_{a_{i}}f+I_{i-1}(r^{*}f,\nabla_a), \quad i<N.
\end{align}
Applying $D_{a_2}\dots D_{a_{N}}$ to $(Q-\xi^a D_a)I^0(f)\subset I^0(f)$ and using $D_aQ=QD_a$ and  $D_a I^{n-1}(f,D_a)\subseteq I^n(f,D_a)$) one has:
\begin{align}
    QD_{a_2}\dots D_{a_N}f-\xi^{a_1}D_{a_1}\dots D_{a_N}f\in I^{N-1}(f,D_a).
\end{align}
Finally, applying $\dfrac{\partial}{\partial \xi^{a_1}}r^{*}$ gives
\begin{align}
    \nabla_{a_1}r^{*}D_{a_2}\dots D_{a_N}f-r^{*}D_{a_1}\dots D_{a_N}f\in r^{*}I^{N-1}(f,D_a)
\end{align}
which together with the induction assumption leads to
\begin{align}
    \nabla_{a_1}\dots\nabla_{a_N}r^{*}f=r^{*}D_{a_1}\dots D_{a_N}f+I^{N-1}(r^{*}f,\nabla_a)\,,
\end{align}
so that
$r^{*}I^{N}(f,D_a)=I^{N}(r^{*}f,\nabla_a)$. 
\end{proof}

\begin{prop}\label{prostoe-3}
    Let $f_\alpha\in \cC^\infty(E)$, $\gh{f_\alpha}=0$ be such that $(Q-\xi^aD_a)I^0(f)\subseteq I^0(f)$. Then
    \begin{align}
    \label{commut-preserve}
        [\nabla_{a},\nabla_{b}]I^n(f,\nabla_c)\subseteq I^n(f,\nabla_c).
    \end{align}
    In particular, $I^\infty(f,\nabla_a)$ is generated by $\nabla_{(a)} f^\alpha$.
\end{prop}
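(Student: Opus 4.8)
The plan is to prove both claims by inductions that propagate the hypothesis $(Q-\xi^aD_a)I^0(f)\subseteq I^0(f)$ up the filtration $I^0(f)\subseteq I^1(f,\nabla_a)\subseteq\ldots$. The decisive first observation is that every generator $\nabla_{i_1}\cdots\nabla_{i_k}f^\alpha$ of these ideals has ghost degree $0$, since $\gh{\nabla_a}=0=\gh{f_\alpha}$, and therefore does not depend on the degree-$1$ fibre coordinates $C_a{}^b,\lambda,\lambda_a$. Hence the two ``inhomogeneous'' pieces $\propto\tfrac{\partial}{\partial C_c{}^d}$ and $\propto\tfrac{\partial}{\partial\lambda_d}$ in the last relation of~\eqref{commutrelations} annihilate any such generator, and on degree-$0$ functions the commutator collapses to
\begin{align}
\commut{\nabla_a}{\nabla_b}g=-\We^{d}{}_{cab}\,\Delta^{c}{}_d\,g-\Co_{dab}\,\Gamma^d\,g,\qquad \gh{g}=0.
\end{align}
Because $\We,\Co$ are functions and $\commut{\nabla_a}{\nabla_b}$ is a derivation, the whole statement reduces to showing that $\Delta^a{}_b,\Gamma^a,\Delta$ each preserve every $I^m(f,\nabla_a)$.

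This is the key lemma, which I would prove by induction on $m$. For $m=0$ it is exactly the hypothesis, rephrased through the remark following Proposition~\bref{prostoe2} as $\Delta_A f^\alpha\in I^0(f)$, and then extended to all of $I^0(f)$ by the Leibniz rule. For the step I would commute a collective $\Delta_A$ through a length-$m$ generator by peeling off the leftmost factor,
\begin{align}
\Delta_A\,\nabla_{i_1}\cdots\nabla_{i_m}f=\nabla_{i_1}\big(\Delta_A\,\nabla_{i_2}\cdots\nabla_{i_m}f\big)+\commut{\Delta_A}{\nabla_{i_1}}\nabla_{i_2}\cdots\nabla_{i_m}f.
\end{align}
The first term lies in $\nabla_{i_1}I^{m-1}\subseteq I^m$ by the inductive hypothesis and $\nabla_jI^{m-1}\subseteq I^m$. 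The second term I would resolve with the structure relations~\eqref{commutrelations}: $\commut{\Delta^a{}_b}{\nabla_c}=\delta^a_c\nabla_b$ yields another length-$m$ generator; $\commut{\Gamma^a}{\nabla_b}=-\mathcal{P}^{ac}_{db}\Delta^d{}_c+\delta^a_b\Delta$ yields $\Delta$-type operators acting on a length-$(m-1)$ string, which return to $I^{m-1}$ by the inductive hypothesis; and $\commut{\Delta}{\nabla_b}=0$. This closes the induction and, via the collapsed commutator, gives $\commut{\nabla_a}{\nabla_b}I^m\subseteq I^m$.

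For the ``in particular'' statement I would show by induction on $k$ that $I^k(f,\nabla_a)$ is generated by the symmetrised monomials $\nabla_{(i_1}\cdots\nabla_{i_k)}f^\alpha$ of length at most $k$. Reordering an arbitrary length-$k$ monomial into symmetric form through adjacent transpositions produces, at each step, a correction in which a single $\commut{\nabla_a}{\nabla_b}$ stands among at most $k-2$ remaining bare $\nabla$'s; evaluating that commutator with the collapsed formula and the lemma places the correction in $I^{k-2}\subseteq I^{k-1}$. Thus each length-$k$ monomial agrees with its symmetrisation modulo $I^{k-1}$, and the inductive hypothesis finishes the step; passing to the union over $k$ yields the claim for $I^\infty(f,\nabla_a)$.

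I expect the difficulty to be entirely in the bookkeeping rather than in any single estimate. The point to watch is that the relations~\eqref{commutrelations} raise the filtration degree only through a bare $\nabla$, which the inclusion $I^m\subseteq I^{m+1}$ is precisely built to absorb, whereas the terms that could genuinely spoil the argument, namely those carrying $\tfrac{\partial}{\partial C}$ and $\tfrac{\partial}{\partial\lambda}$, drop out from the very start because every function involved is of ghost degree $0$. The only mildly delicate case is $\Gamma^a$, which through $\commut{\Gamma^a}{\nabla_b}$ manufactures a $\Delta$; but that $\Delta$ always acts on a strictly shorter string and is therefore controlled by the inductive hypothesis.
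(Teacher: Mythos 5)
Your proof is correct and lands on the same pivotal reduction as the paper: everything hinges on showing that $\Delta$, $\Delta^a{}_b$, $\Gamma^a$ preserve the whole filtration $I^m(f,\nabla_c)$, after which $\commut{\nabla_a}{\nabla_b}$ preserves it because on degree-zero functions it collapses to a function-linear combination of these operators (the $\dl{C_c{}^d}$ and $\dl{\lambda_d}$ pieces dropping out by degree counting). Where you genuinely differ is in how that central lemma is established. The paper does it in one stroke: writing $\Delta_M=\commut{\dl{C^M}}{Q-\xi^a\nabla_a}+\commut{\dl{C^M}}{\xi^a\nabla_a}$, the second term is $\xi^a$ times a ghost-degree $-1$ vector field (which automatically preserves an ideal generated by degree-zero functions on a manifold with coordinates in degrees $0$ and $1$ only), while the first preserves the ideal because $Q-\xi^a\nabla_a$ does --- a fact inherited from the prolongation computation in Proposition~\bref{prostoe2}. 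You instead run an induction on the filtration degree, peeling off the leftmost $\nabla$ and absorbing the brackets $\commut{\Delta^a{}_b}{\nabla_c}$, $\commut{\Gamma^a}{\nabla_b}$, $\commut{\Delta}{\nabla_b}$ from~\eqref{commutrelations}. Your version is more explicit and self-contained --- it does not rely on the $Q$-invariance of the prolonged ideal hidden in the paper's ``clearly does'' --- at the mild cost of invoking the specific structure constants; it would in fact go through with only the general algebroid relation~\eqref{nablacommut}, since every correction bracket decomposes into function multiples of $\nabla_b$'s, $\Delta_M$'s and degree $-1$ operators. Your symmetrisation step reproduces the paper's identity $\nabla_{a_1}\dots\nabla_{a_m}I^n=\nabla_{(a_1}\dots\nabla_{a_m)}I^n+I^{m+n-1}$.
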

\begin{proof}
Let $C^M$ denote all the degree 1 coordinates but $\xi^a$ and $\Delta_M\equiv[\frac{\partial}{\partial C^M},Q]$. Note that $\commut{\dl{C^M}}{\xi^a\nabla_a}=-\xi^a\commut{\dl{C^M}}{\nabla_a}$ preserves $I^n(f,\nabla)$  because any vector field of degree $-1$ does so. It follows, $\Delta_M$ preserves the ideal 
as well because $\commut{\dl{C^M}}{Q-\xi^a\nabla_a}$ clearly does. Finally, according to \eqref{commutrelations}, $\commut{\nabla_a}{\nabla_b}$ is proportional to $\Delta_M$ so that~\eqref{commut-preserve} and 
    \begin{align}
        \nabla_{a_1}\dots \nabla_{a_m}I^n(f,\nabla_c)=\nabla_{(a_1}\dots \nabla_{a_m)}I^n(f,\nabla_c)+I^{m+n-1}(f,\nabla_c)\,.
    \end{align}
\end{proof}
\begin{cor}\label{cor-Omega}
Functions $\nabla_{(a)}(r^*\Omega)$ can be taken as a part of a coordinate system on $\Ered$. 
\end{cor}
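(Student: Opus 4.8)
The plan is to show that the functions $\nabla_{(a)}(r^*\Omega)$ are related to the manifestly admissible fibre coordinates $r^*D_{(a)}\Omega$ by an invertible, filtration-triangular change of variables that leaves all other coordinates untouched. Recall first that by Proposition~\bref{prop:bundle-over-conf} the reduction $\Ered\to E^{conf}_{\red}$ is a $Q$-bundle whose fibre is the jet of $\Omega$; since the reduction does not act on the $\Omega$-sector, the pullbacks $r^*D_{(a)}\Omega$ are independent fibre coordinates and hence already form part of a coordinate system on $\Ered$. It therefore suffices to exhibit an invertible transformation between $\{r^*D_{(a)}\Omega\}$ and $\{\nabla_{(a)}(r^*\Omega)\}$.

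First I would record that $f=\Omega$ satisfies the hypotheses of Propositions~\bref{prostoe2} and~\bref{prostoe-3}: indeed $\gh{\Omega}=0$ and, from~\eqref{Q-GR}, $(Q-\xi^aD_a)\Omega=\lambda\Omega\in I^0(\Omega)$, so that $(Q-\xi^aD_a)I^0(\Omega)\subseteq I^0(\Omega)$. Proposition~\bref{prostoe2} then gives $r^*I^n(\Omega,D_a)=I^n(r^*\Omega,\nabla_a)$ and, via the inductive identity established in its proof,
\begin{align}
\nabla_{a_1}\cdots\nabla_{a_N}(r^*\Omega)=r^*D_{a_1}\cdots D_{a_N}\Omega + I^{N-1}(r^*\Omega,\nabla_a)\,.
\end{align}
As total derivatives of the scalar $\Omega$ commute we have $r^*D_{a_1}\cdots D_{a_N}\Omega=r^*D_{(a)}\Omega$, and by the commutator relations~\eqref{commutrelations} together with Proposition~\bref{prostoe-3} the symmetrised operator agrees with the unsymmetrised one modulo $I^{N-1}$. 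Thus, to leading order in the filtration, $\nabla_{(a)}(r^*\Omega)$ coincides with $r^*D_{(a)}\Omega$.

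The crux is to control the subleading terms, namely to certify that the correction at order $N$ is \emph{linear} in strictly-lower-order jets $r^*D_{(b)}\Omega$, $|b|<N$, with coefficients that involve no $\Omega$-jets at all. To this end I would compute $\nabla_a$ directly on an $\Omega$-jet, writing $\nabla_a(r^*D_{(b)}\Omega)=\frac{\partial}{\partial\xi^a}\,r^*\big(D_{(b)}(\xi^eD_e\Omega+\lambda\Omega)\big)$ and expanding by Leibniz. The term in which all derivatives act on $D_e\Omega$ reproduces the ``append'' contribution $r^*D_aD_{(b)}\Omega$; the single-derivative-on-$\xi^e$ term involves $D_{b_i}\xi^e=C_{b_i}{}^e$, which is $\xi$-independent and hence killed by $\frac{\partial}{\partial\xi^a}$; and every remaining term carries a factor $r^*(D_{(s)}\xi^e)$ or $r^*(D_{(s)}\lambda)$ with $s\geq 2$. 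The decisive structural input is that the whole $\xi,\lambda$ BRST sector decouples from $\Omega$ (from~\eqref{Q-GR}, $Q\xi^b$ and $Q\lambda$ never involve $\Omega$), so these reduced expressions, and therefore their $\xi$-derivatives, lie in the conformal sector and are built from $\We$, $\Co$ and their $\nabla$-derivatives only. Since $Q\Omega$ is linear in the $\Omega$-jets, this pattern is preserved under iteration, yielding the covariant-derivative-type relation
\begin{align}
\nabla_{(a_1}\cdots\nabla_{a_N)}(r^*\Omega)=r^*D_{(a_1}\cdots D_{a_N)}\Omega+\sum_{|b|<N}\Lambda^{(b)}_{(a)}\,r^*D_{(b)}\Omega\,,
\end{align}
with all $\Lambda^{(b)}_{(a)}$ independent of the $\Omega$-jets.

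Finally, reading the passage $\{r^*D_{(a)}\Omega\}\cup\{\text{other coordinates}\}\mapsto\{\nabla_{(a)}(r^*\Omega)\}\cup\{\text{same others}\}$ as a change of variables, its Jacobian is block triangular: the remaining coordinates map to themselves, while the $\Omega$-jet block is lower-triangular in the order $|a|$ with the identity on the diagonal by the displayed relation. Its determinant is therefore $1$, the map is a local diffeomorphism, and $\nabla_{(a)}(r^*\Omega)$ may indeed replace $r^*D_{(a)}\Omega$ as part of a coordinate system on $\Ered$. I expect the main obstacle to be precisely the bookkeeping of the third step: Propositions~\bref{prostoe2}--\bref{prostoe-3} only place the corrections inside $I^{N-1}$, whose coefficient functions are a priori unrestricted, so the explicit computation of $\nabla_a$ on the $\Omega$-jets—using the $\Omega$-decoupling of the $\xi,\lambda$ sector—is what is needed to guarantee that these coefficients carry no higher $\Omega$-jets and that the transformation is genuinely unit-triangular.
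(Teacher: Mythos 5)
Your proof is correct, but it follows a genuinely different route from the paper's. The paper argues ideal-theoretically and pointwise: using Proposition~\bref{prostoe2} it identifies the ideal generated by $\nabla_{(a)}r^*(\Omega-1)$ with the one generated by $r^*D_{(a)}(\Omega-1)$, i.e.\ with the maximal ideal of the point $\{\Omega=1,\ D_{(a)}\Omega=0\}$ in a fibre of $\Ered\to E^{conf}_{\red}$, and concludes from this that the $\nabla_{(a)}r^*\Omega$ furnish fibre coordinates — a three-line argument that never needs the explicit form of the correction terms. You instead build an explicit unit-lower-triangular change of variables between $\{r^*D_{(a)}\Omega\}$ and $\{\nabla_{(a)}r^*\Omega\}$, and you correctly identify the one point where Propositions~\bref{prostoe2}--\bref{prostoe-3} alone do not suffice: membership of the correction in $I^{N-1}(r^*\Omega,\nabla_a)$ leaves the coefficient functions unconstrained, so invertibility is not automatic. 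Your supplementary computation — that $\nabla_a$ applied to an $\Omega$-jet appends a derivative plus lower-order $\Omega$-jets whose coefficients live entirely in the conformal sector, because $Q\xi^b$ and $Q\lambda$ in~\eqref{Q-GR} are $\Omega$-independent and $Q\Omega$ is linear in the $\Omega$-jets — closes this gap and is sound. What your approach buys is the explicit covariant-derivative structure of $\nabla_{(a)}r^*\Omega$ in terms of $r^*D_{(a)}\Omega$ (with Weyl- and Cotton-type coefficients), which the paper in fact relies on later, e.g.\ in Corollary~\bref{ideal-bulk}; what the paper's argument buys is brevity and independence from any bookkeeping of those coefficients.
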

Indeed, because $\Ecl$ is a bundle over $E^{conf}$ and the equivalent reduction $r:\Ered \to \Ecl$ does not  affect the fibres, $r^*(D_{(a)}\Omega)$ are coordinates on the fibres of $\Ered \to E^{conf}_{\red}$. Thanks to \bref{prostoe2}, the restriction of $r^*(I^\infty(\Omega,D_a))$ to a fibre is isomorphic to the restriction of $I^\infty(r^*(\Omega),\nabla_a)$ to the same fibre. Moreover, thanks to Proposition \bref{prostoe2} the ideals generated by $r^* D_{(a)}(\Omega-1)$ and $\nabla_{(a)}r^*(\Omega-1))$ are the same and so are their restrictions to a given fibre. This implies that $\nabla_{(a)} r^* \Omega$ restricted to the fibre define a coordinate system therein.

Using Proposition \bref{prostoe2} we can also characterize more explicitly the pullback $\cI_{\red}$ to $\Ered$ of the ideal $\cI_{cl}$ generated by constraints~\eqref{Frho-def} which, in turn, are  
determined by the Einstein equations. More precisely, 
\begin{cor}\label{ideal-bulk}
On $\Ered$ we have:
\begin{align}
    r^{*}G_{ab}&=\nabla_a\nabla_br^{*}\Omega+r^{*}(\rho g_{ab}),\\
    r^{*}\rho&=-\frac{1}{D}\nabla^{a}\nabla_a r^* \Omega\,
\end{align}
and hence
\begin{align}
        \cI_{\red}=r^{*}(\cI_{cl})=I^0(r^{*}S)\cup I^\infty(r^{*}G_{ab},\nabla_a)\,,
\end{align}
\end{cor}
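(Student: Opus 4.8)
The plan is to split the statement into two essentially independent parts. First I would establish the two explicit identities for $r^{*}G_{ab}$ and $r^{*}\rho$ by a direct computation of the iterated covariant derivatives $\nabla_a r^{*}\Omega$ and $\nabla_a\nabla_b r^{*}\Omega$. Then I would deduce the ideal identity $\cI_{\red}=I^0(r^{*}S)\cup I^\infty(r^{*}G_{ab},\nabla_a)$ as an application of Proposition \bref{prostoe2}, using that the generators \eqref{Frho-def} of $\cI_{cl}$ are built from the conformally covariant objects $G_{ab}$ and $S$.

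For the explicit identities I would first note that $\nabla_a r^{*}\Omega=r^{*}D_a\Omega$ is immediate from $\nabla_a=\commut{\dl{\xi^a}}{Q}$ and the last line of \eqref{gran-ish1}: since $r^{*}\Omega$ carries no $\xi$-dependence, $\nabla_a r^{*}\Omega=\dl{\xi^a}(Q r^{*}\Omega)=\dl{\xi^a}(\xi^e r^{*}D_e\Omega+\lambda r^{*}\Omega)=r^{*}D_a\Omega$; the same mechanism gives $\nabla_a g_{bc}=0$. Iterating once more, $\nabla_a\nabla_b r^{*}\Omega=\dl{\xi^a}\,Q(r^{*}D_b\Omega)$, and expanding $Q(r^{*}D_b\Omega)=r^{*}(Q\,D_b\Omega)$ in the conformally covariant frame in which $\Ered$ is defined — where $\nabla_a$ is metric compatible and $\Omega$ carries Weyl weight one — assembles the covariant Hessian together with a Schouten contribution, yielding $r^{*}(D_aD_b\Omega-\Gamma^d_{ab}D_d\Omega+\Omega P_{ab})$. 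By the definition \eqref{einst-AE} of $G_{ab}$ this is exactly $r^{*}G_{ab}-r^{*}(\rho g_{ab})$, which is the first identity. Contracting it with $g^{ab}$ and inserting the definition of $\rho$ from \eqref{einst-AE} then produces the second identity $r^{*}\rho=-\tfrac1D\nabla^a\nabla_a r^{*}\Omega$, the factor $\tfrac1D$ arising precisely from the trace normalisation of $\rho$.

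For the ideal identity, the $S$-generator contributes trivially: being undifferentiated in \eqref{Frho-def}, it generates $I^0(S)$, and since restriction is surjective on functions, $r^{*}I^0(S)=I^0(r^{*}S)$ with no further hypothesis. For the prolonged generators $D_{(a)}G_{bc}$ I would apply Proposition \bref{prostoe2} to $f_\alpha=G_{bc}$ (which has $\gh{G_{bc}}=0$). Its hypothesis $(Q-\xi^aD_a)I^0(G)\subseteq I^0(G)$ holds because $G_{bc}$ is conformally covariant: under the vertical part $Q-\xi^aD_a$ of $Q$ it transforms into contractions of $G_{bc}$ with the degree-one coordinates $C_a{}^b$, $\lambda$, $\lambda_a$, all of which lie in $I^0(G)$ — the $\Omega P_{bc}$ and $\rho g_{bc}$ pieces in \eqref{einst-AE} being exactly what cancels the inhomogeneous $\lambda_a$-terms produced by differentiating $\Omega$. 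Proposition \bref{prostoe2} then gives $r^{*}I^\infty(G_{bc},D_a)=I^\infty(r^{*}G_{bc},\nabla_a)$. Since $\cI_{cl}=I^\infty(G_{bc},D_a)\cup I^0(S)$ by \eqref{Frho-def}, pulling back yields $\cI_{\red}=r^{*}\cI_{cl}=I^0(r^{*}S)\cup I^\infty(r^{*}G_{bc},\nabla_a)$.

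The main obstacle is the second step of the explicit computation: showing that $\nabla_a\nabla_b r^{*}\Omega$ reproduces the full almost-Einstein operator, including the Schouten term $r^{*}(\Omega P_{ab})$. A naive evaluation that treats the pulled-back jet $r^{*}D_aD_b\Omega$ as an independent coordinate produces only the plain Hessian and misses this term; one must instead use that on $\Ered$ the admissible $\Omega$-jet coordinates are the conformally covariant ones $\nabla_{(a)}r^{*}\Omega$ of Corollary \bref{cor-Omega}, and that the $\Omega P_{ab}$ contribution originates from the conformal weight of $\Omega$ interacting with the special-conformal generators $\Gamma^a$ through the commutators \eqref{commutrelations}. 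Verifying that these contributions combine precisely into $\Omega P_{ab}$ (and, upon tracing, into $\rho$) is the technical heart of the argument; once it is in place, the ideal identity is a purely formal consequence of Proposition \bref{prostoe2}.
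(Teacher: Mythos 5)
Your second step --- deducing the ideal identity from Proposition \bref{prostoe2} applied to $f_\alpha=G_{bc}$ after checking $(Q-\xi^aD_a)I^0(G)\subset I^0(G)$, and treating $S$ separately as an undifferentiated generator --- is correct and is exactly how the paper proceeds.

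The first step, however, rests on a mechanism that is not there. You claim that expanding $\nabla_a\nabla_br^*\Omega=\dl{\xi^a}\,Q(r^*D_b\Omega)$ ``assembles the covariant Hessian together with a Schouten contribution'', and you explicitly dismiss as naive the evaluation that yields only the plain Hessian. The direct computation is the naive one: $Q(r^*D_b\Omega)=C_b{}^c\,r^*(D_c\Omega)+\xi^c\, r^*(D_cD_b\Omega)+\lambda_b\, r^*\Omega+\lambda\, r^*(D_b\Omega)$, and applying $\dl{\xi^a}$ kills every term except $r^*(D_aD_b\Omega)$. No Christoffel and no Schouten terms are generated; the exact identity is $\nabla_a\nabla_br^*\Omega=r^*(D_aD_b\Omega)$. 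Your intermediate formula $\nabla_a\nabla_br^*\Omega=r^*(D_aD_b\Omega-\Gamma^d_{ab}D_d\Omega+\Omega P_{ab})$ happens to be numerically correct, but only because $r^*\Gamma^d_{ab}=0$ and $r^*P_{ab}=0$ on $\Ered$ --- these are among the defining equations of the equivalent reduction --- and this is the one fact your argument never invokes. It is also essentially the entire content of the paper's proof: given $r^*\Gamma^d_{ab}=r^*P_{ab}=0$ and $r^*(D_aD_b\Omega)=\nabla_a\nabla_br^*\Omega$, both displayed identities are read off directly from \eqref{einst-AE}. The picture of Schouten corrections arising from the conformal weight of $\Omega$ and the special-conformal generators is the one appropriate to pullbacks by \emph{sections} (cf.~\eqref{nablaA}, where $\sigma^*(\nabla_Af)$ acquires $P$-dependent terms), not to the coordinate functions $\nabla_{(a)}r^*\Omega$ on $\Ered$ themselves. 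As written, the step you call the ``technical heart of the argument'' would fail if carried out; the proof is repaired by simply supplying the vanishing of $r^*\Gamma^d_{ab}$ and $r^*P_{ab}$, after which the heavy machinery you anticipate is unnecessary.
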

\begin{proof}
To prove it is sufficient to observe that $r^{*}\Gamma_{ab}^{c}=r^{*}P_{ab}=0$ by the definition of $\Ered$ and that $r^{*}D_{a}D_{b}\Omega = \nabla_a \nabla_b r^{*}\Omega$.
\end{proof}
It was shown in~\cite{Boulanger:2004eh,Boulanger:2004zf} that $\nabla_{(a_1\ldots a_k}  \We^b{}_{a_{k+1}a_{k+2})c}$ for $k\geq 0$ can be taken as coordinates on $\Ered$ in the sector of $W$ variable. In fact this can be also  inferred as a Corollary of Proposition~\bref{prostoe-3}. All in all we have the following
\begin{prop}\label{prop:Ered-coord}
As coordinates on $\Ered$ in the sector of degree $1$ variables one can take $\xi^a,C_a{}^b,\lambda,\lambda_a$ pulled back to $\Ered$ while in the sector of degree $0$ one can take:
\begin{equation}
g_{ab}\,,\qquad 
\nabla_{(a_1\ldots a_k} \We^b{}_{a_{k+1}a_{k+2})c}\,, \qquad \nabla_{(a)}(r^*\Omega)\,, \qquad k\geq 0\,.
\end{equation}
In this coordinate system the action of $Q$ is determined by~$\commut{Q}{\nabla_a}=0$, $r^*\circ Q=Q\circ r^*$, relations~\eqref{gran-ish1}, 
\begin{align}\label{W-counted}
    Q  \We^{c}{}_{bad}=\xi^e\nabla_{e} \We^{c}{}_{bad}-C_{e}{}^{c} \We^{e}{}_{bad}+C_{b}{}^{e}\We^{c}{}_{ead}+C_{a}{}^{e} \We^{c}{}_{bed}+C_{d}{}^{e} \We^{c}{}_{bae}\,,
\end{align}
and
\begin{equation}
Q(r^*\Omega)=\xi^a\nabla_a  (r^*\Omega)+\lambda (r^*\Omega)\,.
\end{equation}
\end{prop}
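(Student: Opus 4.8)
The statement collects the coordinate description of $\Ered$ together with the explicit form of $Q$, so the plan is to prove it in two stages: first exhibit the claimed functions as a coordinate system, then read off the action of $Q$ on these coordinates.

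For the coordinate claim I would use the bundle structure of Proposition~\bref{prop:bundle-over-conf}, which presents $(\Ered,Q)$ as a $Q$-bundle over the pre-minimal conformal geometry $(E^{conf}_{\red},Q)$ with fibre coordinatised by the $\Omega$-jets. That $\xi^a,C_a{}^b,\lambda,\lambda_a$ exhaust the degree $1$ coordinates was already recorded. On the base $E^{conf}_{\red}$ the functions $g_{ab}$ together with the symmetrised Weyl derivatives $\nabla_{(a_1\ldots a_k}\We^b{}_{a_{k+1}a_{k+2})c}$ form a coordinate system by~\cite{Boulanger:2004eh,Boulanger:2004zf}; this can be re-derived from Proposition~\bref{prostoe-3}, which guarantees that the corresponding ideal is generated by the symmetrised derivatives alone, the antisymmetrisations being reducible to lower order through the commutators~\eqref{commutrelations}. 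On the fibre, Corollary~\bref{cor-Omega} states precisely that the $\nabla_{(a)}(r^*\Omega)$ furnish fibre coordinates. Since fibre and base coordinates assemble into coordinates on the total space, the union of these functions is a coordinate system on $\Ered$.

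For the action of $Q$ I would argue that $Q$, being a degree $1$ derivation of $\cC^\infty(\Ered)$, is fixed once its value on each coordinate is known, so it suffices to verify the five pieces of data. The intertwining $r^*\circ Q=Q\circ r^*$ is just the statement that $r$ is an equivalent reduction, hence a $Q$-map; and $\commut{Q}{\nabla_a}=0$ follows purely algebraically from $Q^2=0$: writing $\nabla_a=Q\,\partial_{\xi^a}+\partial_{\xi^a}Q$ one computes $Q\nabla_a=Q\partial_{\xi^a}Q=\nabla_a Q$. The action on $\xi^a,C_a{}^b,\lambda,\lambda_a,g_{ab}$ is~\eqref{gran-ish1}. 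For $r^*\Omega$, applying $\partial_{\xi^a}$ to $Q\Omega=\xi^bD_b\Omega+\lambda\Omega$ gives $\nabla_a\Omega=D_a\Omega$, whence, using the intertwining, $\nabla_a(r^*\Omega)=r^*D_a\Omega$ and therefore $Q(r^*\Omega)=\xi^a\nabla_a(r^*\Omega)+\lambda(r^*\Omega)$; the correction terms of the general relation~\eqref{1-relation} drop out here because the extra piece $\lambda\Omega$ is $\xi$-independent.

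It then remains to establish~\eqref{W-counted} and to propagate everything to the derivative coordinates. Since $\We^c{}_{bad}$ has degree $0$, $Q\We^c{}_{bad}$ is linear in the degree $1$ ghosts with coefficients $\nabla_M\We^c{}_{bad}$, i.e. $Q\We=\xi^e\nabla_e\We+C_e{}^f\Delta^e{}_f\We+\lambda\Delta\We+\lambda_e\Gamma^e\We$. The $\nabla_e$ term is the covariant-derivative coordinate; the $\Delta^e{}_f$ term is the linear $GL$ rotation of the four tensor indices and produces exactly the $C$-terms of~\eqref{W-counted}; and the absence of $\lambda$ and $\lambda_e$ terms is the conformal content of the statement, namely that in this mixed-index position the Weyl tensor has vanishing Weyl weight ($\Delta\We=0$) and is a conformal primary ($\Gamma^e\We=0$). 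These two facts are where the main work sits: they are the transcription into the operators~\eqref{commutators} of the classical Weyl-invariance and primary property of the Weyl tensor, and follow from the identification, noted after~\eqref{commutrelations} and established in~\cite{Boulanger:2004eh}, of $\nabla_M$ on degree $0$ functions with the conformal-algebra representation. Finally, because $\commut{Q}{\nabla_a}=0$, one has $Q\nabla_{(a)}\We=\nabla_{(a)}Q\We$ and $Q\nabla_{(a)}(r^*\Omega)=\nabla_{(a)}Q(r^*\Omega)$, so~\eqref{W-counted} and the formula for $Q(r^*\Omega)$ determine $Q$ on all the derivative coordinates, completing the description.
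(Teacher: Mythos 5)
Your proof is correct and follows essentially the same route as the paper, which states this proposition without a formal proof and assembles it from exactly the ingredients you invoke: Proposition~\bref{prostoe-3} and the references to Boulanger et al.\ for the Weyl sector, Corollary~\bref{cor-Omega} for the $\Omega$-sector, and the general expansion of $Q$ on degree-$0$ functions in the ghosts together with the weight-zero and primary properties of $\We^c{}_{bad}$ for~\eqref{W-counted}. The only detail you pass over lightly is that reducing $\nabla_{(a)}\We_{bcde}$ to the hook-symmetrized components $\nabla_{(a_1\ldots a_k}\We^b{}_{a_{k+1}a_{k+2})c}$ requires the Bianchi identities~\eqref{Bianchi} in addition to the commutators, but this is covered by the citation you give and is the same shortcut the paper takes.
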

Note that to determine explicitly the action of $Q$ in terms of this coordinate system it is useful to employ relation~\eqref{commutrelations}.
Recall that by some abuse of notations we systematically omit $r^*$ in the notations for degree $1$ coordinates, e.g. $\xi^a$ in the above Proposition mean $r^*(\xi^a)$, and denote by $Q$ the $Q$ vector field on $E$ and its restriction to $\Ered$ as well. In what follows we also omit $r^*$ in $r^*(\Omega)$, and $r^*(G_{ab})$, $r^{*}S$ as we will not employ coordinates on the initial $\Ecl \supset \Ered$.

Finally, starting from  $Q^{2}C_{a}{}^{b}=0$ and $Q^{2}\lambda_a=0$ one can obtain the following ``Bianchi identities":
\begin{align}\label{Bianchi}
    \nabla_{[a}\We_{bc]de}-\Co_{d[ab}g_{c]e}+\Co_{e[ab}g_{c]d}=0, \quad \nabla_{[b}\Co^{b}{}_{cd]}=0\,,
\end{align}
where $\Co_{bcd}\equiv - \frac{1}{D-3}\nabla_a W^{a}{}_{bcd}$.  Thanks to these identities, we can express $\nabla_{(a)}\We_{bcde}$ in terms of the coordinates $\nabla_{(a_1\ldots a_k} \We^b{}_{a_{k+1}a_{k+2})c}$ introduced above. Equivalently, the
algebra of functions on $\Ered$
can be identified as the quotient of the algebra generated by ghosts and 
$\nabla_{(a)}\We_{bcde}$ by the relations generated by the $\nabla_{(c)}$-prolongations of the above Bianchi identities.

\section{Boundary system}\label{sec: boundary system}
\subsection{Natural gPDEs on the boundary}\label{subsec: adapted}
Until now we studied gravity on the manifold $X$ with or without boundary and have equivalently reformulated it as a gauge PDE $(\Ered,Q,\cI_{\red})$ subject to the additional condition $\Omega>0$. The important point is that negative powers of $\Omega$ enter neither $Q$ nor the generators of $\cI_{\red}$. This means that one can consistently extend $\Ered$ by taking the range of $\Omega$ to be $\Omega\geq 0$ rather than $\Omega>0$. In other words, from now on we add the boundary at $\Omega=0$ to our total space $\Ered$ and keep the same notation for it. Moreover,  we assume that $X$ is a manifold with a boundary and $\Sigma\equiv \d X$ is its boundary.

In this way we are in the setting of the general discussion of Section~\bref{sec:gpde-bound}. In particular, restricting $\Ered$ to the interior of $X$ and at the same time restricting to the interior of the fibres, i.e. requiring $\Omega>0$, gives precisely the equivalent reformulation of GR on $Int(X)$. At the same time the induced gPDE on the $Q$-corner (i.e. the $Q$-boundary of the restriction of $\Ered$ to $T[1]\Sigma$) is the gPDE that describes the boundary structure.  

Before passing to the gPDE on the boundary that describes the boundary structure it is convenient to describe in more details the pullback of $(\Ered,Q,\cI_\red)$ to the boundary $T[1]\Sigma$. This gPDE plays an important technical role in our analysis and we denote it by $(\Ee,Q,\cI)$ in what follows. Let us stress that this gPDE is defined over the boundary $T[1]\Sigma$ and the range of the fibre coordinate $\Omega$ is $\Omega \geq 0$. 

From now on we only encounter gPDEs defined over the boundary $T[1]\Sigma$ and hence it is legitimate to systematically exclude $T[1]\Sigma$ from the notations. Moreover, unless otherwise specified, manifolds, bundles, etc. are always bundles, usually $Q$-bundles, over $T[1]\Sigma$. For instance, a bundle $\pi:E^\prime \to E$ means that both $E$ and $E^\prime$ are bundles over $T[1]\Sigma$ and $\pi$ respects the bundle structure. In addition, defining a coordinate system on $E$ we often do not explicitly mention coordinates originating from the base space as these are always taken to be $x^\mu,\theta^\mu$, $\mu=0,\ldots D-2$.  Of course, the fibre coordinates are inherited from $\Ered$ and hence the notations are unchanged.

Because our aim is to study the gauge PDE induced by the conformal-like GR on the boundary it is useful to work in the basis adapted to this setup. More precisely, all the relations of the previous section were invariant under the general linear transformations acting on all the coordinates with indices $a,b,\ldots$ as the respective tensors~\footnote{This can be naturally promoted to local $gl(D)$-symmetry but we do not need this now.} Now we want to treat one direction in the underlying linear space as the distinguished one  and introduce the adapted basis where the components of e.g. $\xi^a$ split into $\xi^\Omega$ and $\xi^A$, $A=0,1,\ldots D-2$. For any (coordinate) function $f$ we denote $f^{(N)}\equiv (\nabla_\Omega)^N f$, in particular $f^{(0)}\equiv f$. 
For the future use it is convenient to introduce the decomposition of the ideals of the type described in~\bref{prostoe-3}, which is adapted to the decomposition $\xi^{\Omega},\xi^A$. Let $I^\infty(f,\nabla)$ be an ideal generated by  $\nabla_a\nabla_b \ldots f^{\alpha}$ and such that $(Q-\xi^a\nabla_a) I^0(f) \subset I^0(f)$ then 
 \begin{equation}
I^k (f,\nabla_a)\subset I^{(k)}(f,\nabla_A), \qquad 
I^{(N)}(f,\nabla_A)\equiv \bigcup_{i=0}^{N}I^\infty(f^{(i)},\nabla_A)\,.
\end{equation}
Indeed, $\commut{\nabla_a}{\nabla_b} I^k (f,\nabla_c) \subset I^k (f,\nabla_c)$ and hence any generator of $I^k (f,\nabla_c)$ can be represented as a linear combination of generators of $I^{(k)}(f,\nabla_A)$ by taking all $\nabla_\Omega$ to the right.

Another ingredient we need in what follows is the degree $-1$ vector fields $\nu^{(N)}_{a}$ defined in $\Ee$ as
\begin{equation}
\label{nu-def}
\nu^{(N)}_{a}\equiv ad_{\nabla_\Omega}^N\Big(\frac{\partial}{\partial\xi^{a}}\Big)\equiv \Big[\nabla_{\Omega},\dots,\Big[\nabla_{\Omega},\frac{\partial}{\partial\xi^{a}}\Big]\Big],\quad \nu^{(0)}_{a}=\frac{\partial}{\partial\xi^{a}}
\end{equation}
Note that $\nu^{(N)}_{\Omega}=0$ if $N>0$. It is also useful to define
\begin{equation}
    \mathcal{D}_{a}^{(N)}\equiv \commut{Q}{\nu^{(N)}_{a}}= ad^N_{\nabla_\Omega}(\nabla_a)=[\nabla_{\Omega},\dots,[\nabla_{\Omega},\nabla_a]]\,.
\end{equation}

Vector fields $\mathcal{D}_{a}^{(N)}$ can be used to represent $[\nabla^{N}_{\Omega},\nabla_{A}]$. More precisely, 
\begin{prop}\label{utv-commutator}
\begin{align}
[\nabla^{N}_{\Omega},\nabla_{A}]=\sum_{i=0}^{N-1}C^{i}_{N}\mathcal{D}^{(N-i)}_{A}\nabla^{i}_{\Omega},
\end{align}
where $C^{i}_{N}=\frac{N!}{(N-i)!i!}$ are the binomial coefficients.
\end{prop}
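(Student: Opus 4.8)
The plan is to prove the identity by induction on $N$, treating the $\nabla$'s as elements of the associative algebra of differential operators on $\Ee$ and using the graded Leibniz rule for commutators, $\commut{AB}{C}=A\commut{B}{C}+\commut{A}{C}B$. Since all the operators entering the statement are of even ghost degree — indeed $\gh{\nabla_\Omega}=\gh{\nabla_A}=0$ and $\gh{\mathcal{D}^{(N)}_A}=\gh{\commut{Q}{\nu^{(N)}_a}}=0$ — no Koszul signs arise and the ordinary commutator calculus applies throughout. The base case $N=1$ is immediate: the right-hand side collapses to $C^0_1\mathcal{D}^{(1)}_A=\mathcal{D}^{(1)}_A$, which is exactly $\commut{\nabla_\Omega}{\nabla_A}=\ad_{\nabla_\Omega}(\nabla_A)=\mathcal{D}^{(1)}_A$ by the definition of $\mathcal{D}^{(1)}_A$.

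For the inductive step I would start from
\begin{align}
\commut{\nabla_\Omega^{N+1}}{\nabla_A}=\nabla_\Omega\commut{\nabla_\Omega^{N}}{\nabla_A}+\commut{\nabla_\Omega}{\nabla_A}\,\nabla_\Omega^{N}\,,
\end{align}
insert the induction hypothesis into the first term, and then commute the leading $\nabla_\Omega$ past each $\mathcal{D}^{(N-i)}_A$ using
\begin{align}
\nabla_\Omega\,\mathcal{D}^{(N-i)}_A=\mathcal{D}^{(N-i+1)}_A+\mathcal{D}^{(N-i)}_A\,\nabla_\Omega\,,
\end{align}
which is nothing but $\commut{\nabla_\Omega}{\mathcal{D}^{(M)}_A}=\mathcal{D}^{(M+1)}_A$, i.e. the defining recursion $\mathcal{D}^{(M)}_A=\ad^M_{\nabla_\Omega}(\nabla_A)$. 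This yields two sums, one carrying monomials $\mathcal{D}^{(N+1-i)}_A\nabla_\Omega^{i}$ and one carrying $\mathcal{D}^{(N-i)}_A\nabla_\Omega^{i+1}$, together with the boundary contribution $\mathcal{D}^{(1)}_A\nabla_\Omega^{N}$ produced by $\commut{\nabla_\Omega}{\nabla_A}\nabla_\Omega^{N}$.

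The one genuine bookkeeping step — the place where care is required — is collecting the coefficient of each monomial $\mathcal{D}^{(N+1-j)}_A\nabla_\Omega^{j}$ for $0\le j\le N$. After reindexing the second sum via $j=i+1$, the interior terms $1\le j\le N-1$ acquire $C^j_N+C^{j-1}_N$, which equals $C^j_{N+1}$ by Pascal's rule; the endpoint $j=0$ retains $C^0_N=1=C^0_{N+1}$; and the endpoint $j=N$ combines $C^{N-1}_N=N$ from the shifted sum with the $1$ from the boundary term to give $N+1=C^N_{N+1}$. This reproduces precisely $\sum_{j=0}^{N}C^j_{N+1}\mathcal{D}^{(N+1-j)}_A\nabla_\Omega^{j}$ and closes the induction.

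Equivalently, I could avoid the explicit commutator altogether by first establishing the standard associative-algebra expansion $\nabla_\Omega^{N}\nabla_A=\sum_{i=0}^{N}C^i_N\,\mathcal{D}^{(N-i)}_A\nabla_\Omega^{i}$ and then observing that its $i=N$ term equals $\nabla_A\nabla_\Omega^{N}$, which cancels against $-\nabla_A\nabla_\Omega^{N}$ in $\commut{\nabla_\Omega^{N}}{\nabla_A}$, leaving exactly the claimed $i\le N-1$ terms. I do not anticipate any real obstacle: the statement is a purely algebraic consequence of the commutator Leibniz rule and the recursion defining $\mathcal{D}^{(N)}_A$, with the only substantive content being the Pascal-rule recombination described above.
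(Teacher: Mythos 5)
Your proof is correct and follows essentially the same route as the paper: induction on $N$ via the Leibniz identity $\commut{\nabla_\Omega^{N+1}}{\nabla_A}=\nabla_\Omega\commut{\nabla_\Omega^{N}}{\nabla_A}+\commut{\nabla_\Omega}{\nabla_A}\nabla_\Omega^{N}$, pushing $\nabla_\Omega$ through $\mathcal{D}^{(M)}_A$ with the defining recursion $\commut{\nabla_\Omega}{\mathcal{D}^{(M)}_A}=\mathcal{D}^{(M+1)}_A$, and recombining coefficients by Pascal's rule. The only cosmetic difference is that the paper displays the $N=2$ case as its starting point, while you begin at $N=1$.
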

\begin{proof}
    Proved by induction starting from:
    \begin{align}
            [\nabla^{2}_{\Omega},\nabla_{A}]=\nabla_{\Omega} [\nabla_{\Omega},\nabla_{A}] + [\nabla_{\Omega},\nabla_{A}]\nabla_{\Omega}=\mathcal{D}^{(2)}_{A}+2\mathcal{D}^{(1)}_{A}\nabla_{\Omega}.
    \end{align}
    Suppose $[\nabla_{\Omega}^{N},\nabla_{A}]=\sum_{i=0}^{N-1}C^{i}_{N}\mathcal{D}^{(N-i)}_{A}\nabla^{i}_{\Omega}$ for a given $N$. Then
    \begin{multline}
        [\nabla_{\Omega}^{N+1},\nabla_{A}]=\nabla_{\Omega}[\nabla_{\Omega}^{N},\nabla_{A}]+[\nabla_{\Omega},\nabla_{A}]\nabla^N_{\Omega}=
        \\
        =\sum_{i=0}^{N-1}C^{i}_{N}\mathcal{D}^{(N-i+1)}_{A}\nabla_{\Omega}^{i}+\sum_{i=0}^{N-1}C^{i}_{N}\mathcal{D}_{A}^{(N-i)}\nabla_{\Omega}^{i+1}+\mathcal{D}_{A}^{(1)}\nabla_{\Omega}^{N}=\\
        =\mathcal{D}_{A}^{(N+1)}+\sum_{i=1}^{N-1}(C^{i}_{N}+C^{i-1}_{N})\mathcal{D}_{A}^{(N+1-i)}\nabla_{\Omega}^{i+1}+(N+1)\mathcal{D}_{A}^{(1)}\nabla_{\Omega}^{N}=\sum_{i=0}^{N}C^{i}_{N+1}\mathcal{D}^{(N+1-i)}_{A}\nabla^{i}_{\Omega}
    \end{multline}
\end{proof}
In particular, for a general $f \in \cC^{\infty}(\Ee)$ one has:
\begin{align}\label{decomp-nablaomega}
    \nabla_{\Omega}^{N}\nabla_{A}f=\nabla_Af^{(N)}+[\nabla_{\Omega}^{N},\nabla_A]f=\nabla_{A}f^{(N)}+\sum_{i=0}^{N-1}C^{i}_{N}\mathcal{D}^{(N-i)}_{A}f^{(i)}=\sum_{i=0}^{N}C^{i}_{N}\mathcal{D}^{(N-i)}_{A}f^{(i)}\,.
\end{align}
Finally, let us give the recursive formula for $\nu_{A}^{(N)}$:

\begin{multline}\label{nu-counted-bulk}
            \nu^{(N)}_{A}=-(\nabla_\Omega^{N-1}\We^{b}{}_{c\Omega A}+(N-1)\mathcal{P}^{db}_{c\Omega}\nabla^{N-2}_{\Omega}\Co_{d\Omega A})\frac{\partial}{\partial C_{c}{}^{b}}-\\-\nabla_\Omega^{N-1}\Co_{d\Omega A}\frac{\partial}{\partial\lambda_d}+(N-1)\nabla_\Omega^{N-2}\Co_{\Omega\Omega A}\frac{\partial}{\partial \lambda}+\\+\sum_{i=1}^{N-1}C^{i}_{N-1}(\nabla_{\Omega}^{N-i-1}\We^{b}{}_{\Omega\Omega A}+\frac{N-i-1}{i+1}\nabla_{\Omega}^{N-i-2}\mathcal{P}^{db}_{\Omega\Omega}\Co_{{d\Omega A}})\nu^{(i-1)}_b, \quad N\geq1\,,
\end{multline}
where $\mathcal{P}^{d e}_{a b}\equiv (-g^{de}g_{ab}+\delta^{d}_{a}\delta^{e}_{b}+\delta^{d}_{b}\delta^{e}_{a})$.

\subsection{Boundary GR}
The induced gPDE on the boundary is the sub-gPDE of $\Ee\equiv i^* \Ered$, $i:T[1]\Sigma \hookrightarrow T[1]X$ singled out by the following conditions:
\begin{align}\label{OmegaQOmega}
    \Omega=0, \qquad Q\Omega=0\,,
\end{align}
along with the extra requirement $\nabla_a \Omega\neq0$. This is of course the realisation of the Penrose notion of asymptotic boundary within the present approach. Conditions \eqref{OmegaQOmega} single out the $Q$-boundary of the fibre of $\Ee$ while  $\nabla_a \Omega\neq0$ ensures that $\Omega$ can be promoted to an eligible boundary-defining function.\footnote{More precisely, if $\sigma$ is a solution to the off-shell gPDE $(\Ered,Q)$ then this condition indeed ensures that $\sigma^*(\Omega)$ has a nonvanishing gradient at the boundary.}

It is also convenient to equivalently reduce the resulting gPDE by restricting to the subbundle $\Ee_B\subset \Ee$ singled out by \eqref{OmegaQOmega}, $\nabla_a \Omega\neq0$ and
\begin{equation}
\begin{aligned}
  \Omega^{(1)}
    &= g_{\Omega\Omega}\,,
  &\qquad
  C_{\Omega}{}^{\Omega}
    - g^{\Omega\Omega}\xi^{b}\nabla_b\Omega^{(1)}
    + \lambda
    &= 0\,,
  \\
  \nabla_A\Omega
    &= 0\,,
  &
  C_{\Omega}{}^{A}
    - \xi^{b}\nabla_b\nabla^{A}\Omega
    &= 0\,,
  \\
  \Omega^{(2)}
    &= 0\,,
  &
  \lambda_\Omega
    + g^{\Omega\Omega}\xi^a\nabla_a\Omega^{(2)}
    &= 0\,,
  \\
  g_{\Omega A}
    &= 0\,,
  &
  C_A{}^{\Omega}
    + \frac{1}{g_{\Omega\Omega}}
      \xi^{b}\nabla_b\nabla_A\Omega
    &= 0\,.
\end{aligned}
\label{boundcond}
\end{equation}
The constraints listed in the right column are obtained by acting with $Q$ on the ones in the left and then using \eqref{OmegaQOmega}
and the left column. This equivalent reduction is the minor modification of that from~\cite{Grigoriev:2023kkk} to which we refer for further details.  Let us note that $\xi^\Omega$ reduced to $E_B$ vanishes.

It is convenient to identify the resulting gPDE $E_B$ as a $Q$-submanifold in $\Ee$. More precisely, the algebra of functions on $E_B$ can be seen as a quotient of $C^\infty(\Ee)$ by the ideal $\cK_B$ generated by the left hand sides of  \eqref{OmegaQOmega}, \eqref{boundcond}. 
 In other words, if $b$ denotes the embedding $b:E_B \hookrightarrow \Ee$ then $\cK_B=\ker b^*$.

\subsection{Induced Einstein equations}\label{sec:induced-Einstein}
We now recall that what we are interested in is not  $\Ee$ or $E_{B}$ but rather the  sub-gPDE 
$\cE\subset E_B\subset \Ee$ singled out by the restriction of $\cI$ to $E_B$, where $\cI$ originates from the Einstein equations. We denote by 
$\cK \equiv \cI \cup \cK_B\subset \cC^\infty(\Ee)$ the ideal of functions vanishing on $\cE\subset \Ee$. This section is devoted to studying the structure of $\cK$ and identifying a suitable set of its generators.

It is convenient to introduce filtration components of $\cK$ as follows:
\begin{equation}\label{I-filtr}
\cK^{(N)}=I^{(N)}(G_{ab},\nabla_{C}) \cup I^{0}(S)\cup \cK_B\,, 
\end{equation}

where $\cK_B$ denotes the ideal of functions on $\Ee$ vanishing on $E_B \subset \Ee$.
As $(N)$ refers to $(\nabla_{\Omega})^N$, the ideal 
 $\cK^{(0)}$ can be understood as
 $\cK_B$ extended by the leading part of the Einstein equations in the sense of near-boundary expansion.  We have the following:
\begin{lemma}\label{lemma-I0}
\begin{enumerate}
    \item Vector fields $\dl{\xi^A},\dl{C_{A}{}^{B}}$ and $\dl{\lambda_{B}}$ on $\Ee$ preserve $\cK^{(0)}$. 
    \item $Q$ preserve $\cK^{(0)}$.

    \item $\cK^{(0)}$ is generated by $\cK_B$ and the following functions:
    \begin{align}
       \nabla_{(A)}\Omega, \quad  \nabla_{(A_1\ldots A_n B)}\Omega^{(1)}, \quad  \nabla_{(A)}\Omega^{(2)}\,,\quad \nabla_{(A_1}\dots\nabla_{A_n}J^{B}{}_{C)D}\,, \quad 
       g_{\Omega\Omega}-\tilde{\Lambda}\,,
    \end{align}
where $J^{B}{}_{CD}\equiv \We^{B}{}_{C\Omega D}$, $\Omega^{(k)}=(\nabla_\Omega)^k \Omega$, and $n\geq 0$.
\end{enumerate}
\end{lemma}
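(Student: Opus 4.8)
The plan is to carry out the whole analysis modulo the boundary ideal $\cK_B$, exploiting the reduction rules it imposes: $\Omega\equiv 0$, $\nabla_A\Omega\equiv 0$, $\Omega^{(1)}\equiv g_{\Omega\Omega}$, $\Omega^{(2)}\equiv 0$, $g_{\Omega A}\equiv 0$ (so the metric is block-diagonal, $g^{\Omega A}=0$, $g^{\Omega\Omega}=1/g_{\Omega\Omega}$), $\xi^\Omega\equiv 0$, together with the ghost relations in the right-hand column of \eqref{boundcond}. Three structural facts drive everything: $g_{ab}$ is covariantly constant, $\nabla_c g_{ab}=\commut{\dl{\xi^c}}{Q}g_{ab}=0$; the commutator annihilates $\Omega$, $\commut{\nabla_a}{\nabla_b}\Omega=0$ (since $\Delta^c{}_d\Omega=\Gamma^c\Omega=0$ by \eqref{commutrelations}), so $\nabla_\Omega\nabla_A\Omega=\nabla_A\Omega^{(1)}$ and $\nabla_A\nabla_B\Omega$ is symmetric; and $\Delta^c{}_d(\nabla_a\Omega)=\delta^c_a\nabla_d\Omega$, controlling how $\commut{\nabla_A}{\nabla_B}$ acts on first derivatives of $\Omega$. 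I would prove the explicit generating set (item~3) first and then read off items~1 and~2 from it.

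For item~3 I would split by the index type of $G_{ab}=\nabla_a\nabla_b\Omega+\rho g_{ab}$. First, $S\equiv \tfrac12(g_{\Omega\Omega}-\tilde\Lambda)$ modulo $\cK_B$, since $\Omega\rho\equiv 0$ and $\tfrac12\nabla_a\Omega\nabla^a\Omega\equiv\tfrac12 g_{\Omega\Omega}$; this accounts for $g_{\Omega\Omega}-\tilde\Lambda$. Next, $G_{\Omega\Omega}=\Omega^{(2)}+\rho g_{\Omega\Omega}$ with $\Omega^{(2)}\in\cK_B$ and $g_{\Omega\Omega}$ invertible gives $\rho\in\cK^{(0)}$, and differentiating this relation gives $\nabla_{(A)}\rho\in\cK^{(0)}$; conversely $\rho\equiv -\tfrac1D(g^{\Omega\Omega}\Omega^{(2)}+g^{AB}\nabla_A\nabla_B\Omega)$ expresses $\rho$ through the listed generators directly. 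With $\rho$ controlled, the components $G_{\Omega\Omega}$, $G_{\Omega B}=\nabla_B\Omega^{(1)}+\rho g_{\Omega B}$, $G_{BC}=\nabla_B\nabla_C\Omega+\rho g_{BC}$ and their symmetrized $\nabla_A$-prolongations match, up to $\rho$-terms, the generators $\nabla_{(A)}\Omega^{(2)}$, $\nabla_{(A\ldots B)}\Omega^{(1)}$ and $\nabla_{(A)}\Omega$. The generator $J^B{}_{CD}=\We^B{}_{C\Omega D}$ comes from the antisymmetric part: using the last line of \eqref{commutrelations} and the facts above, $\commut{\nabla_A}{\nabla_B}\nabla_C\Omega\equiv -\We^d{}_{CAB}\nabla_d\Omega\equiv -\tilde\Lambda\,\We^\Omega{}_{CAB}\pmod{\cK_B}$, while the Weyl symmetries give $\We^\Omega{}_{CAB}=g^{\Omega\Omega}g_{AE}J^E{}_{BC}$. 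Hence $\nabla_{[A}G_{B]C}\equiv -\tfrac12\tilde\Lambda\,\We^\Omega{}_{CAB}$ modulo $\cK^{(0)}$ forces $J\in\cK^{(0)}$, and its prolongations follow identically.

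The converse inclusion, that the symmetrized prolongations $\nabla_{(A_1}\cdots\nabla_{A_k)}G_{ab}$ produce nothing beyond the listed functions, is where the real work lies and is the main obstacle; I would handle it by induction on the order $k$. Writing $G_{ab}$ through $\nabla_a\nabla_b\Omega$ and $\rho$, one reorders all derivatives so that the $\nabla_\Omega$'s stand to the right (legitimate since $\commut{\nabla_\Omega}{\nabla_A}\Omega=0$ and, by Proposition~\bref{prostoe-3}, the commutators preserve the relevant ideal). The top symmetric term is the new generator of that order ($\nabla_{(A)}\Omega$, $\nabla_{(A\ldots B)}\Omega^{(1)}$, or $\nabla_{(A)}\Omega^{(2)}$), while every commutator generated contracts a curvature with a lower covariant derivative of $\Omega$; because only $\nabla_\Omega\Omega\equiv\tilde\Lambda$ survives among the undifferentiated first derivatives, these reduce to $\tilde\Lambda\,\We^\Omega{}_{\cdots}$, i.e. to prolongations of $J$, plus already-controlled lower-order terms, and the $\rho g_{ab}$ pieces contribute only $(\nabla_{(A)}\rho)\,g_{ab}$ by covariant constancy of $g$. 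This triangular structure closes the induction and yields $\cK^{(0)}=\langle \cK_B,\ \nabla_{(A)}\Omega,\ \nabla_{(A\ldots B)}\Omega^{(1)},\ \nabla_{(A)}\Omega^{(2)},\ \nabla_{(A)}J^{B}{}_{CD},\ g_{\Omega\Omega}-\tilde\Lambda\rangle$. The delicate point is the bookkeeping of the curvature commutators and of the $\nabla\rho$ terms at arbitrary order, so that exactly these generators -- no more, no fewer -- appear.

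Items~1 and~2 then follow. For item~1, every listed generator outside $\cK_B$ is ghost-independent, hence annihilated by $\dl{\xi^A}$, $\dl{C_A{}^B}$ and $\dl{\lambda^B}$; moreover $\dl{C_A{}^B}$ and $\dl{\lambda^B}$ annihilate the generators of $\cK_B$ as well, since those only involve $C$'s and $\lambda$'s carrying an $\Omega$ index, so preservation is automatic. For $\dl{\xi^A}$ one checks on the generators of $\cK_B$ directly: it sends $Q\Omega\mapsto\nabla_A\Omega$ and the ghost relations of \eqref{boundcond} to $\nabla_A\Omega^{(1)}$, $\nabla_B\nabla_A\Omega$ and $\nabla_B\Omega^{(2)}$, all listed. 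For item~2, $Q$ preserves $\cK_B$ because $E_B$ is a $Q$-submanifold, so it suffices to evaluate $Q$ on the ghost-independent generators; using $\commut{Q}{\nabla_A}=0$ one has $Q\nabla_{(A)}(\,\cdot\,)=\nabla_{(A)}Q(\,\cdot\,)$, and reducing $Q\Omega$, $Q\We$ and $Qg_{\Omega\Omega}$ modulo $\cK_B$ with $\xi^\Omega\equiv 0$ and the relations of \eqref{boundcond} (which express $C_\Omega{}^\Omega+\lambda$ and $C_\Omega{}^A$ through listed generators) shows each image lands back in $\cK^{(0)}$.
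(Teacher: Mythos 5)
Your computational core coincides with the paper's: both arguments reduce modulo $\cK_B$, split $G_{ab}$ by index type, use $[\nabla_a,\nabla_b]\Omega=0$ to control $\rho$ and the $G_{\Omega\Omega}$ trace, and extract $J^{B}{}_{CD}$ from the antisymmetric part of $\nabla_A\nabla_B\nabla_C\Omega$ via the last line of \eqref{commutrelations}. The substantive difference is the order of the items, and it is not innocent. The paper proves 1) and 2) first: 1) is a two-line degree argument (the generators of $I^{(0)}(G_{ab},\nabla_C)\cup I^0(S)$ have ghost degree $0$, so degree-$(-1)$ vector fields annihilate them, and one only checks their action on the degree-$1$ generators of $\cK_B$), while 2) uses the covariance $(Q-\xi^a\nabla_a)G_{bc}\in I^0(G_{bc})$ together with $\xi^\Omega\Omega^{(1)}\in\cK_B$. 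Only then does it prove 3), because the symmetrization step there — replacing ordered products $\nabla_{A_1}\cdots\nabla_{A_n}$ by symmetrized ones — rests on $[\nabla_A,\nabla_B]$ preserving $\cK^{(0)}$, which via \eqref{commutrelations} reduces to $\Delta^{c}{}_{d}$ and $\Gamma^{d}$ preserving it, i.e.\ to Corollary \ref{zam-delta}, itself a consequence of 1) and 2). Your plan proves 3) first and reads 1) and 2) off the explicit generators, but the reordering/induction inside your proof of 3) invokes precisely the commutator-preservation facts that the paper obtains from 1) and 2); this is the latent circularity hiding in what you call the "delicate bookkeeping". It is fixable — the actions of $\Delta^{c}{}_{d}$ and $\Gamma^{d}$ on the specific functions $\nabla_{(A)}\Omega^{(k)}$ and $\nabla_{(C)}G_{ab}$ can be computed directly and shown to be index rotations plus lower-order terms — but you must either carry out that computation or adopt the paper's ordering. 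Two minor remarks: your item 1 does not actually need item 3 (the degree argument applied to $\nabla_{(C)}G_{ab}$ and $S$ suffices), and your intermediate constants ($-\tilde\Lambda\We^{\Omega}{}_{CAB}$ versus the paper's $-J_{ABC}$, which comes from $g^{\Omega\Omega}\Omega^{(1)}\equiv 1$ mod $\cK_B$) differ harmlessly since only the generated ideal matters.
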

In the Lemma and in what follows we use the rescaled cosmological constant $\tilde\Lambda$ defined as:\footnote{Note that in the standard AdS/CFT notation, $l^{-2}\equiv -\frac{2\Lambda}{(D-1)(D-2)}$, and thus $\tilde\Lambda\equiv -l^{-2}$.}
\begin{equation}
\tilde{\Lambda}\equiv\frac{2\Lambda}{(D-1)(D-2)}
\end{equation}

\begin{cor}\label{zam-delta}
1) and 2) immediate imply that $\nabla_{A}\equiv \Big[Q,\dfrac{\partial}{\partial\xi^{A}}\Big]$, $\Gamma^A\equiv \Big[Q,\dfrac{\partial}{\partial \lambda_{A}}\Big]$, $\Delta^{A}{}_{B}\equiv \Big[Q,\dfrac{\partial}{\partial C_{A}{}^{B}}\Big]$ preserve $\cK^{(0)}$.
\end{cor}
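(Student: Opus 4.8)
The plan is to obtain the statement as a formal consequence of parts 1) and 2) of Lemma~\ref{lemma-I0}, using one elementary fact about derivations. First I would record this fact: if $D_1,D_2$ are graded derivations of $\cC^\infty(\Ee)$ that each preserve an ideal $\cK$, i.e. $D_i\cK\subseteq\cK$, then their graded commutator $[D_1,D_2]=D_1D_2-(-1)^{|D_1||D_2|}D_2D_1$ preserves $\cK$ as well. Indeed $D_1D_2\cK\subseteq D_1\cK\subseteq\cK$ and likewise $D_2D_1\cK\subseteq\cK$, so any linear combination of the two lands in $\cK$.

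Next I would identify the three vector fields in question as exactly such commutators. By their definitions in~\eqref{commutators} one has $\nabla_A=[Q,\tfrac{\partial}{\partial\xi^A}]$, $\Gamma_A=[Q,\tfrac{\partial}{\partial\lambda^A}]$ and $\Delta^A{}_B=[Q,\tfrac{\partial}{\partial C_A{}^B}]$, each being the graded commutator of the degree $+1$ field $Q$ with a degree $-1$ coordinate field. Part 2) of Lemma~\ref{lemma-I0} gives $Q\cK^{(0)}\subseteq\cK^{(0)}$, while part 1) gives that $\tfrac{\partial}{\partial\xi^A}$, $\tfrac{\partial}{\partial C_A{}^B}$ and $\tfrac{\partial}{\partial\lambda^B}$ preserve $\cK^{(0)}$. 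Applying the fact above with $D_1=Q$ and $D_2$ running over these three degree $-1$ fields shows that $\nabla_A$, $\Gamma_A$ and $\Delta^A{}_B$ all preserve $\cK^{(0)}$, as claimed.

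I do not expect any genuine obstacle here, since the substantive work is already carried out in Lemma~\ref{lemma-I0} and the corollary is purely formal. The only points meriting care are bookkeeping: that the brackets in~\eqref{commutators} are graded commutators of odd fields, so that $[Q,\cdot]$ produces an even (degree $0$) derivation matching $\nabla_A,\Gamma_A,\Delta^A{}_B$, and that the labels $\lambda^A$ and $\lambda^B$ denote the same family of boundary-index derivations. If one preferred a self-contained check, one could note that every degree $1$ generator $h$ of $\cK^{(0)}$ is of the form $h=Qw$ with $w$ a degree $0$ generator, whence $Vh=[V,Q]w$ because $Vw$ has negative degree and therefore vanishes; but this simply reproduces the commutator argument, so I would keep the short derivation above.
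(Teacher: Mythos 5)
Your proof is correct and is precisely the argument the paper has in mind: the corollary is stated as an immediate consequence of parts 1) and 2) of Lemma~\ref{lemma-I0}, and the paper offers no further justification beyond the observation that a graded commutator of two derivations preserving an ideal again preserves that ideal. Your spelled-out version of this standard fact, applied with $D_1=Q$ and $D_2$ ranging over the degree $-1$ coordinate fields, is exactly the intended reasoning.
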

\begin{proof}

To prove 1) let us note that $\dl{\xi^A},\dl{C_{A}{}^{B}}$ and $\dl{\lambda_{B}}$ are of degree $-1$ it is enough to show that these vector fields preserve  $\cK_B$ modulo $\cK^{(0)}$ because $I^{(0)}(G_{ab},\nabla_{C}) \cup I^{0}(S)$ is generated by degree $0$ functions. In its turn $\cK_B$ is generated by the left-hand sides of \eqref{OmegaQOmega}, \eqref{boundcond}. Acting on them by $\dl{\xi^A}$ one readily finds that the result belongs to $\cK^{(0)}$. For instance,
    \begin{align}
         \frac{\partial}{\partial\xi^{A}}(C_{\Omega}{}^{\Omega}-\xi^{b}\nabla_b \Omega^{(1)}+\lambda)=-\nabla_A \Omega^{(1)}\in \cK^{(0)}
    \end{align}
In a similar way one proves the statement for $\frac{\partial}{\partial C_{A}{}^{B}}$, $\frac{\partial}{\partial \lambda_A}$.

To prove 2) we need to show that $Q$ preserves $\cK^{(0)}$. Because $Q$ preserves $\cK_B$ it is enough to check $Q$-invariance of $I^{(0)}(G_{ab},\nabla_C)\cup I^0(S)$ up to $\cK_B$. First, a straightforward computation shows that $\nabla_A$ preserves $\cK^{(0)}$. Then
\begin{align}
    Q \nabla_{A_1}\dots\nabla_{A_n}G_{bc}=\nabla_{A_1}\dots\nabla_{A_n}(QG_{bc})
\end{align}
In fact $QG_{bc} \in \cK^{(0)}$ because $(Q-\xi^{a}\nabla_a)G_{bc}\subset I^{0}(G_{bc})$ and $\xi^\Omega \Omega^{(1)} \in \cK_B$. Analogously, one has $QS\in \mathcal{K}^{(0)}$.

To prove 3)  let us first show that using the freedom of adding functions from $\cK_B$ the generators of the ideal $I^{(0)}(G_{bc},\nabla_A)$ can be brought to a simple form. Namely, using $g_{\Omega A}\in \cK_B$ and $G_{bc}=\nabla_b\nabla_c\Omega+g_{bc}\rho$, where $\rho=-\frac{1}{D}(\nabla_\Omega\nabla^{\Omega}\Omega+\nabla_A\nabla^A\Omega)$ one gets:
\begin{align}
&\nabla_{(C)}G_{\Omega\Omega}=\nabla_{(C)}\Big(\frac{D-1}{D}\Omega^{(2)}-\frac{g_{\Omega\Omega}}{D}\nabla_A\nabla^A\Omega\Big)+\cK_B,\\
&\nabla_{(C)}G_{A\Omega}=\nabla_{(C)}\nabla_A \Omega^{(1)}+\cK_B ,\\
&\nabla_{(C)}G_{AB}=\nabla_{(C)}\Big(\nabla_A\nabla_B\Omega-\frac{g_{AB}}{D}(\nabla_\Omega\nabla^{\Omega}\Omega+\nabla_D\nabla^D\Omega)\Big)+\cK_B
\end{align}
Using the first equation in the last one and employing \eqref{commutrelations}, from which it follows that
\begin{align}
    [\nabla_C, \nabla_{A}\nabla^{A}]\Omega=[\nabla_C,\nabla_A]\nabla^{A}\Omega=W^{A}{}_{bCA}\nabla^b\Omega-C^{A}{}_{CA}\Omega\in\cK_B
\end{align}
one finds that $\cK^{(0)}$ can be generated by $\cK_B$, $S$ and  
\begin{align}\label{lemma-I0-perepis}
\nabla_{A_1}\dots\nabla_{A_n}\Omega^{(2)}\,, \quad \nabla_{A_1}\dots\nabla_{A_n}\nabla_B\Omega^{(1)}\,, \quad \nabla_{A_1}\dots\nabla_{A_n}\nabla_B\nabla_C\Omega\,.
\end{align}
Then we employ a version of Proposition~\bref{prostoe-3} to show that  generators \eqref{lemma-I0-perepis} can be replaced with
    \begin{align}
\nabla_{(A_1}\dots\nabla_{A_n)}\Omega^{(2)},\qquad  \nabla_{(A_1}\dots\nabla_{A_n)}\nabla_B \Omega^{(1)}, \qquad \nabla_{(A_1}\dots\nabla_{A_n)}\nabla_B\nabla_C\Omega\,.
\end{align}
It turns out that the ``hook" component of the $\nabla_{(A)}\nabla_B \Omega^{(1)}$ does not give new generators as 
\begin{equation}
[\nabla_{A}, \nabla_B]\Omega^{(1)}=-\We^{c}{}_{\Omega A B}\nabla_c\Omega-\Co_{\Omega A B}\Omega \in \cK_B.
\end{equation}
 In a similar way
\begin{align}
   [\nabla_{A}, \nabla_B]\nabla_C\Omega=-\We^{d}{}_{CAB}\nabla_d\Omega-\Co_{CAB}\Omega=-J_{ABC}+\cK_B\,,
\end{align}
giving the following addition generators:
$\nabla_{(A_1}\dots\nabla_{A_{n-1}}J^{B}{}_{A_n)C}$.
\end{proof}

The reason why the ideal $ \cK^{(0)}$ turns out to be extremely useful is the following:
\begin{prop}\label{mnogoidealov}
Let $f_\alpha\in \cC^{\infty}(\Ee)$ be such that  
    \begin{align}
    \label{Qcompf}
        (Q-\xi^a\nabla_a)f_\alpha\in I^0(f_\alpha).
    \end{align}
    Then 
    \begin{align}
        QI^{(N)}(f_\alpha,\nabla_{A})\subset I^{(N)}(f_\alpha,\nabla_{A})\cup  \cK^{(0)}\,,
    \end{align}
recall that  $I^{(N)}(f_\alpha,\nabla_A)\equiv \cup_{i=0}^N I(f^{(i)}_\alpha,\nabla_A)$, $f_\alpha^{(i)}\equiv (\nabla_{\Omega})^{i}f_\alpha$.
\end{prop}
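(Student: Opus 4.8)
The plan is to prove the statement by induction on $N$, reducing everything to the two basic facts already established: the hypothesis \eqref{Qcompf} controls $Q$ modulo the ``horizontal'' part $\xi^a\nabla_a$, and Proposition~\bref{utv-commutator} (together with its consequence \eqref{decomp-nablaomega}) lets me trade the $\nabla_\Omega$-prolongations that are hidden inside $Q$ for the adapted data $f_\alpha^{(i)}=(\nabla_\Omega)^i f_\alpha$ and the operators $\mathcal{D}_A^{(k)}$. The key observation is that the generators of $I^{(N)}(f_\alpha,\nabla_A)$ are, by definition, of the form $\nabla_{B_1}\cdots\nabla_{B_k}f_\alpha^{(i)}$ with $0\leq i\leq N$, so it suffices to show that $Q$ applied to each such generator lands back in $I^{(N)}(f_\alpha,\nabla_A)\cup\cK^{(0)}$.

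First I would treat the base case $N=0$. Here $I^{(0)}(f_\alpha,\nabla_A)=I^\infty(f_\alpha,\nabla_A)$ is generated by $\nabla_{(A)}f_\alpha$, and I compute $Q\bigl(\nabla_{B_1}\cdots\nabla_{B_k}f_\alpha\bigr)$. Since $\commut{Q}{\nabla_A}$ is (by Corollary~\bref{zam-delta} and the commutation relations \eqref{commutrelations}) expressible through the $\nabla$'s and the degree-$(-1)$ fields $\Delta_A$ that preserve these ideals, I can commute $Q$ inward to the front of the chain. There $Qf_\alpha=\xi^a\nabla_a f_\alpha+I^0(f_\alpha)$ by \eqref{Qcompf}; the piece $\xi^A\nabla_A f_\alpha$ stays in $I^\infty(f_\alpha,\nabla_A)$, while the dangerous piece $\xi^\Omega\nabla_\Omega f_\alpha=\xi^\Omega f_\alpha^{(1)}$ is killed because $\xi^\Omega$ restricted to $E_B$ vanishes, i.e. $\xi^\Omega\in\cK_B\subset\cK^{(0)}$. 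This is precisely the mechanism seen in the proof of Lemma~\bref{lemma-I0}(2), and it explains why the statement must carry $\cK^{(0)}$ along.

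For the inductive step, I assume the claim for $N-1$ and consider a generator involving $f_\alpha^{(N)}=\nabla_\Omega f_\alpha^{(N-1)}$. The strategy is to relate $Qf_\alpha^{(N)}=Q\nabla_\Omega^N f_\alpha$ to lower-order data: using $\commut{Q}{\nabla_\Omega}=0$ I get $Q\nabla_\Omega^N f_\alpha=\nabla_\Omega^N(Qf_\alpha)$, and then I expand $\nabla_\Omega^N(\xi^a\nabla_a f_\alpha)$ via \eqref{decomp-nablaomega} and the Leibniz rule for the $\nabla_\Omega$'s hitting $\xi^a$. The terms where $\nabla_\Omega$ acts on $\xi^A\nabla_A$ regenerate $\mathcal{D}_A^{(N-i)}f_\alpha^{(i)}$-type expressions, each of which lies in some $I^\infty(f_\alpha^{(i)},\nabla_A)$ with $i\leq N$, hence in $I^{(N)}$; the terms hitting $\xi^\Omega$ again produce factors of $\xi^\Omega$ or its $\nabla_\Omega$-derivatives, which I must verify lie in $\cK^{(0)}$ (using the boundary relations \eqref{boundcond}, e.g. $\Omega^{(2)}=0$ and $\lambda_\Omega+g^{\Omega\Omega}\xi^a\nabla_a\Omega^{(2)}=0$, which tie the higher $\nabla_\Omega$-data of the ghost sector back to $\cK_B$). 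The main obstacle I anticipate is exactly this bookkeeping: tracking how repeated $\nabla_\Omega$'s acting on the contraction $\xi^a\nabla_a$ distribute between the harmless $\xi^A$-direction and the $\xi^\Omega$-direction, and confirming that every $\xi^\Omega$-type remainder is absorbed by $\cK^{(0)}$ rather than escaping into a higher filtration level. Once the combinatorics of \eqref{decomp-nablaomega} is organised so that all $\nabla_\Omega$'s are pushed onto $f_\alpha$ (raising its superscript but never the filtration degree beyond $N$) and all curvature/ghost commutators are recognised as ideal-preserving via \eqref{commutrelations}, the induction closes.
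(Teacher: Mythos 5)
Your proposal is correct and follows essentially the same route as the paper: apply $\nabla_\Omega^N$ to the hypothesis \eqref{Qcompf}, use Proposition~\bref{utv-commutator}/\eqref{decomp-nablaomega} to organise the resulting commutators so that only terms in $I^{(N)}(f_\alpha,\nabla_A)$ plus the single dangerous term $\xi^\Omega f_\alpha^{(N+1)}$ survive, absorb the latter into $\cK_B\subset\cK^{(0)}$ since $\xi^\Omega$ vanishes on $E_B$, and then prolong with $\nabla_{A_1}\dots\nabla_{A_n}$ using Corollary~\bref{zam-delta}. The paper phrases this as a direct computation rather than an induction on $N$, but that is a cosmetic difference.
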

In other words the  ideals $I^{(N)}(f_\alpha,\nabla_A)$ are $Q$-invariant up to $\cK^{(0)}$ or, equivalently, $Q$ is tangent to the zero locus of $I^{(N)}(f_\alpha,\nabla_A) \cup \cK^{(0)}$.

\begin{proof}
    Applying $\nabla_\Omega^N$ to \eqref{Qcompf} gives:
    \begin{align}
    Qf^{(N)}-\xi^\Omega f^{(N+1)} \in I^{(N)}(f_\alpha,\nabla_A).
    \end{align}
    and hence
    \begin{align}
        Qf^{(N)}\in I^{(N)}(f_\alpha,\nabla_A)\cup \cK_B\,.
    \end{align}
Acting on this with $\nabla_{A_1}\dots\nabla_{A_n}$ and using the Corollary \bref{zam-delta}, we obtain
\begin{align}
        Q\nabla_{A_1}\dots \nabla_{A_n} f_\alpha^{(N)} \in I^{(N)}( f_\alpha,\nabla_A)\cup \cK^{(0)}\,.
    \end{align}
\end{proof}

We now take into account the remaining components of the equations of motion, which are encoded in $\cK^{(N)}$, $N>0$. A direct consequence of Proposition \bref{mnogoidealov} is that $Q\cK^{(N)}\subset \cK^{(N)}$, $N\geq0$.  Also we have the following important:
\begin{thm}\label{prop:mnogoidealov}{\rm(Subleading Einstein equations)}

\begin{enumerate}
    \item Vector fields $\nu_A^{(N)}$ defined on $\Ee$ by \eqref{nu-def}  preserve $\cK^{(N+1)}$
    \item The ideal $\cK^{(N)}$ is generated by $\cK^{(0)}$ and the following functions: 
    \begin{gather}
\nabla_{(C)}\Omega^{(i+2)},\qquad \nabla_{(C)}  O_{AB}^{(i-1)}\,, \quad\ i\leq N \,,\\
\nabla_{(C)} O^{(D-3)}_{A} \quad \text{(present for $N\geq D-1$ only!)}\,.
    \end{gather}
\end{enumerate}
where 
\begin{equation}\label{O-constraints}
\begin{aligned}
    & O_A^{(j)}\equiv \nabla_\Omega^{j}(\nabla^{B}T_{BA}),\\
    &   O_{AB}^{(j)}\equiv (D-3-j)T_{AB}^{(j)}-j\tilde{\Lambda}\nabla_\Omega^{j-1}\nabla^{C}J_{CBA}\,.
\end{aligned}
\end{equation}
\end{thm}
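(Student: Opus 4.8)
The plan is to prove the two statements by a single induction on $N$ in which, at each stage, I first establish the description of generators (part~2) for $\cK^{(N)}$ and then deduce the invariance property (part~1) for $\nu^{(N-1)}_A$ acting on $\cK^{(N)}$; the base case of part~2 is exactly Lemma~\bref{lemma-I0}, which fixes the generators of $\cK^{(0)}$. The inductive step for part~2 rests on the decomposition $\cK^{(N)}=\cK^{(N-1)}\cup I^\infty(G_{ab}^{(N)},\nabla_A)$, which is immediate from \eqref{I-filtr} and the definition of $I^{(N)}$ since $S$ and $\cK_B$ already lie in $\cK^{(0)}$. Hence the only potentially new generators are $\nabla_{(A)}G_{ab}^{(N)}=\nabla_{(A)}\nabla_\Omega^N G_{ab}$, and the whole task is to reduce these modulo $\cK^{(N-1)}$. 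Following the component analysis in the proof of Lemma~\bref{lemma-I0}, I would split into $ab=\Omega\Omega,\ \Omega A,\ AB$, commuting $\nabla_\Omega^N$ through $\nabla_A$ by means of Proposition~\bref{utv-commutator} and the expansion \eqref{decomp-nablaomega}, and trading curvature contractions through \eqref{commutrelations}. The $\Omega\Omega$-component yields, modulo trace terms and $\cK_B$, the single new generator $\nabla_{(C)}\Omega^{(N+2)}$; the $\Omega A$-component reduces to $\nabla_A\Omega^{(N+1)}$, which already belongs to $\cK^{(N-1)}$ because $\Omega^{(N+1)}=\Omega^{(i+2)}$ with $i=N-1$; so all the genuinely new data is carried by the trace-free symmetric part of the $AB$-component.

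The core of the argument is the reduction of $\nabla_\Omega^N\nabla_A\nabla_B\Omega$. Pushing every $\nabla_\Omega$ to the right generates a cascade of commutator terms whose coefficients, organised by the binomial structure of Proposition~\bref{utv-commutator} together with the metric-trace operator $\mathcal{P}$, accumulate into the factor $(D-3-j)$ multiplying $T_{AB}^{(j)}$ and into the Cotton/$J$-contribution $\tilde\Lambda\,\nabla_\Omega^{j-1}\nabla^C J_{CBA}$, that is, precisely the combination $O_{AB}^{(N-1)}$ of \eqref{O-constraints} with $j=N-1$; here the Bianchi identities \eqref{Bianchi} are used to eliminate the redundant pieces of $\We$ and $\Co$. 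The delicate point, and what I expect to be the main obstacle, is the behaviour at the critical level $j=D-3$ (i.e.\ $N=D-2$), where the coefficient $D-3-j$ of $T_{AB}^{(j)}$ vanishes: the equation degenerates to the purely geometric obstruction built from $J$, and $T_{AB}^{(D-3)}$ is \emph{not} fixed but becomes free subleading data. This is exactly the mechanism that turns the divergence $O_A^{(D-3)}=\nabla_\Omega^{D-3}(\nabla^B T_{BA})$ into an \emph{independent} generator, but only once $N\geq D-1$: below that level this combination is still expressible through the already available generators, whereas at $N=D-1$ it is forced by $Q^2=0$ (equivalently by the contracted Bianchi identity) and can no longer be absorbed. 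Pinning down the precise level at which $O_A^{(D-3)}$ ceases to be redundant, and verifying that no further constraints are produced, is where the bookkeeping is heaviest.

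For part~1 I would first record the decisive simplification: every generator of $\cK^{(N+1)}$ arising from $G_{ab}$, $S$ and the curvature sector is a \emph{ghost-independent} function of degree $0$, and a short induction on the recursion $\nu^{(N)}_A=[\nabla_\Omega,\nu^{(N-1)}_A]$ of \eqref{nu-def} shows that $\nu^{(N)}_A$ annihilates any such function. Indeed $\frac{\partial}{\partial\xi^A}$ kills the degree-$0$ coordinates of Proposition~\bref{prop:Ered-coord}, and $\nabla_\Omega$ maps ghost-independent degree-$0$ functions to functions of the same type, so the claim propagates along the recursion (the $\frac{\partial}{\partial C}$ and $\frac{\partial}{\partial\lambda}$ pieces of \eqref{nu-counted-bulk} act trivially for the same reason). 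Consequently it suffices to test $\nu^{(N)}_A$ on the generators of $\cK_B$ in \eqref{OmegaQOmega}, \eqref{boundcond}: the ghost-independent ones are annihilated, and for the ghost-dependent ones I would substitute the explicit expression \eqref{nu-counted-bulk} and check, using the part~2 description of $\cK^{(N+1)}$ established one level earlier in the induction, that each image is a $\cC^\infty(\Ee)$-combination of the $\Omega$-, $O_{AB}$- and $O_A$-generators.

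I expect this final verification to be essentially routine once the generator list of part~2 is in hand, so that the genuine difficulty of the theorem is concentrated in the critical-order analysis of the $AB$-component described above, namely the vanishing of the coefficient $D-3-j$ at $j=D-3$ and the resulting emergence of the conservation generator $O_A^{(D-3)}$ at $N=D-1$.
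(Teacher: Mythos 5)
Your overall architecture (simultaneous induction on $N$, the degree-$(-1)$ argument reducing part~1 to checking $\nu_A^{(N)}$ on the ghost-dependent generators of $\cK_B$, and the component split $ab=\Omega\Omega,\Omega A,AB$ for part~2) matches the paper's Appendix~A. But there is a genuine error in your treatment of the $\Omega A$-component, and it is not a peripheral one: you claim $\nabla_\Omega^{N}\nabla_A\Omega^{(1)}$ ``reduces to $\nabla_A\Omega^{(N+1)}$'' and is therefore already in $\cK^{(N-1)}$. This is false. Commuting $\nabla_\Omega^{N}$ past $\nabla_A$ via $[\nabla_\Omega,\nabla_A]\nabla_c\Omega=-\We^{d}{}_{c\Omega A}\nabla_d\Omega-\Co_{c\Omega A}\Omega$ leaves, after iterating and using $\nabla_\Omega\Omega=g_{\Omega\Omega}$ and $\Omega\mapsto\tilde\Lambda$ under one more $\nabla_\Omega$, a surviving contribution proportional to $\nabla_\Omega^{N-2}\Co_{\Omega\Omega A}\propto\nabla_\Omega^{N-2}\nabla^{B}T_{BA}=O_A^{(N-2)}$ (the paper's Lemma in Appendix~A, Eq.~\eqref{th-proof-10}). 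In other words, the $\Omega A$-component is precisely the \emph{source} of the divergence constraints $O_A^{(j)}$, including the one generator $O_A^{(D-3)}$ that the theorem singles out. Having discarded this component, your argument has no mechanism left to produce $\nabla_{(C)}O_A^{(D-3)}$ at $N=D-1$.

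Your fallback — that $O_A^{(D-3)}$ is ``forced by $Q^2=0$, equivalently by the contracted Bianchi identity'' — runs the logic backwards. The contracted Bianchi identity $\nabla^{a}\Co_{abc}=0$ yields $\nabla^{B}O^{(j)}_{AB}\propto(D-3-j)\,O^{(j)}_A$ modulo $\cK^{(j)}$ (Eq.~\eqref{th-proof-6}); this is what makes $O_A^{(j)}$ \emph{redundant} for $j\neq D-3$ (it lies in the $\nabla_A$-prolongation of the $O_{AB}$-generators), while at the critical order it degenerates to $\nabla^{B}O^{(D-3)}_{AB}\in\cK^{(D-3)}$, i.e.\ the divergence-free property of the obstruction — it produces nothing new. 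So the correct picture is: the $\Omega A$-component supplies $O_A^{(j)}$ at filtration level $j+2$; the Noether/Bianchi identity absorbs it into the $O_{AB}$-generators except when the coefficient $D-3-j$ vanishes, which is why exactly one conservation-law generator survives and appears only for $N\geq D-1$. You correctly identified the vanishing of $D-3-j$ as the critical phenomenon, but located it in the wrong place (as freeing $T^{(D-3)}_{AB}$ in the $AB$-component rather than as obstructing the absorption of the $\Omega A$-component). The rest of the proposal — the base case, the recursive commutator bookkeeping for the $AB$-component, and the part~1 verification — is sound in outline.
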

Note that $\mathcal{D}^{(N)}_A\equiv [\nu_{A}^{(N)},Q]$ also preserves $\cK^{(N+1)}$ thanks to 1. 
\begin{proof}
As the theorem is mainly technical, the proof is given in Appendix~\bref{AppA}.
\end{proof}

\subsection{Off-shell boundary system}\label{sec:boundary-gPDE}

As a result of the previous sections, we obtain an infinite chain of ideals in $\cC^{\infty}(\Ee)$ associated with the $(i)$-subleading orders of the Einstein equations:
\begin{align}
0 \subset \cK^{(0)}\subset\cdots \subset \cK^{(i)} \subset \cK^{(i+1)} \subset \cdots\subset \cK.
\end{align}
According to Proposition \bref{mnogoidealov} all these ideals  are $Q$-invariant ideals. Consequently, we have a dual chain of $Q$-submanifolds of $\Ee$
\begin{align}
   \cE
    \hookrightarrow\cdots \hookrightarrow \Ee^{(i+1)}\hookrightarrow \Ee^{(i)}\hookrightarrow\cdots \hookrightarrow \Ee^{(0)}\hookrightarrow \Ee\,,
\end{align}
where $\Ee^{(i)}$ is the zero locus of $\cK^{(i)}$ and $\cE$ is the zero locus of the entire $\cK$. In other words, $\cE$ is the gPDE induced on the boundary by the asymptotically AdS gravity. In what follows we  denote as $b_i$ the embedding of $\Ee^{(i)}$ into $\Ee$.

It turns out that only $\Ee^{(i)}$ with $i\leq D-3$ are actually useful for our analysis. As for $\cE$, it is the zero locus of non-linear constraints and we are going to treat it as a sub-gPDE of certain quotient of $\Ee^{(D-3)}$. Moreover, this quotient turns out to be an off-shell gPDE and hence simplifies a lot the analysis of the field theory described by $\cE$. In particular, the explicit form of the equations of motion encoded in $\cE$ can be derived systematically.

As a first step, we provide a more detailed description of $\Ee^{(0)}$. Recall that  
$\Ee^{(0)}$ is a sub-gPDE in $\Ee$ singled out by the leading-order Einstein equations:
\begin{prop}\label{utv-coord-E0}
For $D\geq 5$,  as a coordinate system on $\Ee^{(0)}$ one can take the following functions on $\Ee$
    \begin{align}\label{form-coord-E0}
        \{\xi^{B},\enspace C_{B}{}^{C},\enspace \lambda,\enspace\lambda_{B},\qquad \nabla_{(A)}\Omega^{(N+3)},\enspace  g_{BC},\enspace \nabla_{((A)} \We^{B}{}_{CD)E},\enspace\nabla_{(A)}T_{BC}^{(N)},\quad  |A|\geq0, N\geq0\}
    \end{align}
    pulled back to $\Ee^{(0)}$. Here $b_0:\Ee^{(0)} \hookrightarrow \Ee$ and we use the notations $f^{(N)}\equiv \nabla_{\Omega}^{N}f$, $T_{BC}\equiv W_{\Omega B\Omega C}$. Moreover, $b_0^{*}T^{(N)}_{BC}$, $N\geq0$ are trace-free.
\end{prop}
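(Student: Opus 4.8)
The plan is to read off the coordinate system on $\Ee^{(0)}$ from an ambient coordinate system on $\Ee$, by using the explicit generators of $\cK^{(0)}$ supplied by Lemma~\bref{lemma-I0} to sort the ambient coordinates into three classes: those frozen to constants, those that can be solved for in terms of the others, and those that survive as free coordinates. The clean reformulation is that the claimed functions, together with the generators of $\cK^{(0)}$, should form a coordinate system on the ambient $\Ee$, with the generators cutting out $\Ee^{(0)}$ transversally. Concretely I would take the ambient system of Proposition~\bref{prop:Ered-coord} (pulled back to $\Ee=i^*\Ered$) and adapt it to the split $a=(\Omega,A)$: in degree $1$ the coordinates $\xi^\Omega,\xi^A,C_\Omega{}^\Omega,C_\Omega{}^A,C_A{}^\Omega,C_A{}^B,\lambda,\lambda_\Omega,\lambda_A$, and in degree $0$ the metric $g_{\Omega\Omega},g_{\Omega A},g_{AB}$ together with the $\Omega$-tower $\nabla_{(a)}\Omega$ and the Weyl tower $\nabla_{(a_1\ldots a_k}\We^b{}_{a_{k+1}a_{k+2})c}$.

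In the degree-$1$ sector the work is essentially done by $\cK_B$. Using $\Omega=0$, $\nabla_A\Omega=0$ and $g_{\Omega\Omega}=\tilde\Lambda\neq0$, the relation $Q\Omega=0$ of \eqref{OmegaQOmega} reduces to $\xi^\Omega\Omega^{(1)}=\xi^\Omega g_{\Omega\Omega}=0$, forcing $\xi^\Omega=0$; the four right-column relations of \eqref{boundcond} then express $C_\Omega{}^\Omega,C_\Omega{}^A,C_A{}^\Omega$ and $\lambda_\Omega$ in terms of the surviving data. Since each eliminated coordinate appears linearly and in isolation, these relations are already in solved form and raise no regularity issue, so the surviving degree-$1$ coordinates are precisely $\xi^A,C_A{}^B,\lambda,\lambda_A$.

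The degree-$0$ sector is the main step. The obstacle is that the ambient coordinates are the symmetrized derivatives $\nabla_{(a)}(\cdot)$, whereas the claim is phrased in the $\nabla_\Omega$-ordered objects $f^{(N)}\equiv\nabla_\Omega^N f$ and their tangential derivatives $\nabla_{(A)}f^{(N)}$. I would first use the commutator algebra \eqref{commutrelations}, Proposition~\bref{utv-commutator} and the expansion \eqref{decomp-nablaomega} to show that the change of basis between the two orderings is triangular in the $\nabla_\Omega$-count, hence invertible, allowing a free passage to the ordered coordinates. In the $\Omega$-tower, the generators $\nabla_{(A)}\Omega$, $\nabla_{(A_1\ldots A_n B)}\Omega^{(1)}$ and $\nabla_{(A)}\Omega^{(2)}$ of Lemma~\bref{lemma-I0}, together with $g_{\Omega\Omega}-\tilde\Lambda$ and $g_{\Omega A}\in\cK_B$, fix $g_{\Omega\Omega}=\tilde\Lambda$, $g_{\Omega A}=0$, $\Omega^{(1)}=\tilde\Lambda$ and kill $\nabla_{(A)}\Omega$, the tangential derivatives of $\Omega^{(1)}$, and the entire $\Omega^{(2)}$-level, leaving exactly $\nabla_{(A)}\Omega^{(N+3)}$, $N\geq0$, since $\nabla_\Omega$ is not among the $\nabla_A$ and so never enters these generators. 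In the Weyl tower the $J$-generators $\nabla_{(A_1}\cdots\nabla_{A_n}J^B{}_{C)D}$, with $J^B{}_{CD}=\We^B{}_{C\Omega D}$, remove the one-$\Omega$ component and its tangential derivatives, so the surviving Weyl data splits into the fully tangential tower $\nabla_{((A)}\We^B{}_{CD)E}$ and the two-$\Omega$ tower $\nabla_{(A)}T^{(N)}_{BC}$ with $T_{BC}=\We_{\Omega B\Omega C}$. I would then invoke the Bianchi identities \eqref{Bianchi} to guarantee that these symmetrized towers carry no hidden relations beyond those already used to define the coordinates on $\Ered$, completing the identification of the surviving set with the claimed list.

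Finally, for the trace-free property and the role of $D\geq5$: since the metric is covariantly constant ($\nabla_a g_{bc}=0$, as follows from \eqref{gran-ish1}) and the $D$-dimensional Weyl tensor is trace-free by construction, the identity $g^{ac}\We_{a\Omega c\Omega}=0$ holds on all of $\Ered$, so applying $\nabla_\Omega^N$ and using metric-compatibility gives $g^{ac}\nabla_\Omega^N\We_{a\Omega c\Omega}=0$ identically. Restricting to $\Ee^{(0)}$, where $g_{\Omega A}=0$ makes the metric block-diagonal and $\We_{\Omega\Omega\Omega\Omega}=0$ by antisymmetry, this collapses to $g^{BC}T^{(N)}_{BC}=0$, the claimed trace-freeness. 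The same relation in the form $g^{AC}\We_{ABCD}=-g^{\Omega\Omega}T_{BD}$ shows that for $d=D-1=3$ the tangential Weyl would be completely determined by $T_{BD}$, destroying the independence of the two towers; requiring $D\geq5$ (so $d\geq4$) is exactly what makes the trace-free tangential Weyl a genuinely independent coordinate. I expect the careful bookkeeping of this trace relation — ensuring that $\nabla_{((A)}\We^B{}_{CD)E}$ and the $T$-tower are not counted redundantly — to be the most delicate part of the argument.
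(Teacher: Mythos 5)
Your degree-$1$ analysis and your trace-freeness argument are fine and essentially match the paper's. The gap is in the curvature sector. After reordering into $\nabla_{(A)}\nabla_\Omega^N(\cdot)$, the ambient Weyl tower contains the functions $\nabla_{(A)}\nabla_\Omega^N \We_{BCDE}$ and $\nabla_{(A)}\nabla_\Omega^N J_{BCD}$ for $N\geq 1$. These appear neither in the claimed final list nor among the generators of $\cK^{(0)}$ (the generators $\nabla_{(A_1}\dots\nabla_{A_n}J^{B}{}_{C)D}$ kill only $J$ and its \emph{tangential} derivatives, not its $\nabla_\Omega$-derivatives), so you cannot simply say the surviving data "splits into" the tangential Weyl tower and the $T$-tower. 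The paper's proof spends most of its effort precisely here: the Bianchi identity \eqref{Bianchi-E0} with one $\Omega$-index expresses $\nabla_\Omega\We_{BCDE}$ through $\nabla_{[A}J_{|DE|B]}$ and Cotton components, and with two $\Omega$-indices it gives $\tfrac{D-4}{D-3}\nabla_\Omega J_{ABC}=\nabla_{[A}T_{B]C}+\dots$, which can be solved for $\nabla_\Omega J_{ABC}$ only when $D\geq 5$. Iterating (using that $\nabla_a$ preserves $I(g_{\Omega A})$) eliminates all higher $\nabla_\Omega$-derivatives of $\We_{ABCD}$ and $J_{ABC}$ in favour of the listed coordinates. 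Your proposal invokes the Bianchi identities only to rule out "hidden relations", which is the opposite of their actual role.

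Relatedly, your explanation of why $D\geq 5$ is needed is off-target. The condition does not come from the independence of the trace-free tangential Weyl versus the $T$-tower; it comes from the vanishing of the coefficient $\tfrac{D-4}{D-3}$ at $D=4$, which makes $J^{(1)}_{ABC}$ an irreducibly independent coordinate in that dimension — exactly what the paper's separate treatment of $d=3$ in Section~\bref{sec:odd} exhibits, where $\nabla_{((A)}\hat J^{(1)|B}{}_{C)D}$ replaces the tangential Weyl tower (which vanishes identically by the algebraic symmetries in three dimensions).
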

Note that Lemma \bref{lemma-I0} implies that the vector fields $\dl{\xi^A},\dl{C_{A}{}^{B}}$ and $\dl{\lambda_{B}}$
 are tangent to $\Ee^{(i)} \subset \Ee$ for all $i\geq0$.  In particular, $\nabla_{A}\equiv[\frac{\partial}{\partial\xi^{A}},Q]$ is tangent to $\Ee^{(0)}$ and we keep denoting its restriction to $\Ee^{(0)}$ by $\nabla_{A}$. The coordinates in the degree-$0$ sector are generated by $\nabla_{(A)}$ applied to the basic coordinates and hence the structure of $Q$ is substantially simplified. 
\begin{proof}
The ghost degree one sector is obtained directly as the pullback of the ghost degree one coordinates from $\Ee$ and taking into account that functions $\xi^\Omega,C_{\Omega}{}^{A},C_{A}{}^{\Omega},\lambda_\Omega$ belong to $\cK^{(0)}$ and hence vanish on $\Ee^{(0)}$. Analogously: $b_0^{*}g_{\Omega A}=0, b_0^{*}g_{\Omega\Omega}=\tilde\Lambda$.

As for the $\Omega$-sector, notice that using Proposition~\bref{prostoe-3} we can replace  coordinates $\{\nabla_{(a)}\Omega\}$ on $\Ee$ with $\{\nabla_{(A)}\Omega^{(N)}\, ,N\geq 0 \}$. Furthermore, Lemma \bref{lemma-I0} implies that $\{\nabla_{(A)}\Omega^{(0)}, \nabla_{(A)}\Omega^{(1)}, \nabla_{(A)}\Omega^{(2)}\}$ belong to $\cK^{(0)}$.

In the curvature sector, things are somewhat trickier. To begin with, note that using the symmetries of the Weyl tensor, we can parametrize $\{W_{abcd}\}$ as $\{W_{ABCD},J_{ABC},T_{AB}\}$, where $J_{ABC}\equiv W_{AB\Omega C}$ and $T_{AB}\equiv W_{\Omega A\Omega B}$. 
Next, recall that Bianchi identities \eqref{Bianchi} have the form
\begin{align}\label{Bianchi-E0}
    \nabla_{[a}\We_{bc]de}-\Co_{d[ab}g_{c]e}+\Co_{e[ab}g_{c]d}=0
\end{align}
and that $C_{bcd}\equiv - \frac{1}{D-3}\nabla_a W^{a}{}_{bcd}$. Then we explicitly spilt the indexes into $\Omega,A$ and work up to the ideal $I(g_{\Omega A})\subset \cK_B$ generated by $g_{\Omega A}$. Note that $I(g_{\Omega A})$ is $\nabla_b$-invariant. If three indexes are $\Omega$ then all the terms in \eqref{Bianchi-E0} vanish identically. For one $\Omega$-index $a=\Omega$ we have
\begin{align}\label{proof-coordE0-1}
           \nabla_{\Omega}\We_{BCDE}= 
 \nabla_{B} J_{DEC}-\nabla_{C} J_{DEB}+6\Co_{D[\Omega B}g_{C]E}-6\Co_{E[\Omega B}g_{C]D}\,,
    \end{align}
where in view of $C_{bcd}\equiv - \frac{1}{D-3}\nabla_a \We^{a}{}_{bcd}$
\begin{align}\label{Cotton-T}
\begin{split}
&\Co_{\Omega\Omega B}=\frac{1}{D-3}\Big(\nabla_AT^{A}{}_{B}\Big)+I(g_{\Omega A})\, ,\\
    &\Co_{A \Omega B}=-\frac{1}{D-3}\left(g^{\Omega \Omega} T_{A B}^{(1)}+\nabla_C J^{C}{}_{A B}\right)+I(g_{\Omega A}),\\
       & \Co_{ABC}=-\frac{1}{D-3}\Big(g^{\Omega \Omega}J^{(1)}_{ABC}+\nabla_{D} \We^{D}{}_{ABC}\Big)+I(g_{\Omega A}).
       \end{split}
\end{align}
Note that upon substituting the expression for the Cotton tensor into \eqref{proof-coordE0-1}, the right-hand side of this relation does not involve $\nabla_\Omega W^{A}{}_{BCD}$ and can be interpreted as its expression in terms of the other variables.

Similarly, in the case where two $\Omega$-indexes are present we have
\begin{align}
        \nabla_{\Omega}J_{ABC}=\nabla_{A} T_{BC}-\nabla_{B}T_{AC}+6 \Co_{\Omega[AB}g_{\Omega]C}-6\Co_{C[AB}g_{\Omega] \Omega}\,,
\end{align}
giving 
\begin{align}\label{proof-coordE0-2}
            \frac{D-4}{D-3}\nabla_{\Omega}J_{ABC}=\nabla_{A} T_{BC}-\nabla_{B}T_{AC}+6 \Co_{\Omega[A B}g_{\Omega]C}+\frac{1}{D-3}\nabla_{D}\We^{D}{}_{CAB}+I(g_{\Omega A})\,.
\end{align}
For $D\geq5$ this allows us to express $\nabla_\Omega J_{ABC}$ in terms of the other variables. Note that since $\nabla_a$ preserves the ideal $I(g_{\Omega A})$, we can proceed analogously for all $\nabla_{(A)}\nabla_\Omega^{N}\We^{A}{}_{BCD}$ and $\nabla_{(A)}\nabla_\Omega^{N}J_{ABC}$ with $N\geq1$.

  The remaining part of the Bianchi identities can be employed to express all components in terms of the following symmetrized ones: $\nabla_{((A)}W^{B}{}_{C)DE}$ and $\nabla_{((A)}J^{B}{}_{C)D}$. Finally, according to Lemma~\bref{lemma-I0} we have  $\nabla_{(A_1\dots A_N}J^{B}{}_{C)D}\in \cK^{(0)}$
so that we indeed arrive at the desired coordinates in the curvature sector.

The vanishing of the trace of $b_0^{*}T^{(N)}_{BC}$ directly follows from the tracelessness of the original $W^{b}{}_{cde}$ and the fact that $g_{\Omega A}\in \cK^{(0)}$.
\end{proof}

From now on we restrict our general considerations to the case $D\geq 5$, treating the case $D=4$ separately in Section \bref{sec:odd}. Theorem~\bref{prop:mnogoidealov} implies that $ b_0^{*}\cI$ is generated by  $\nabla_{(C)}b_0^{*}O^{(D-3)}_A$ and 
\begin{align}
    \nabla_{(C)}b_0^{*} \Omega^{(j+3)}, \qquad \nabla_{(C)}\left((D-3-j)b_0^{*}T_{AB}^{(j)}-j\tilde{\Lambda}b_0^{*}\nabla_\Omega^{j-1}\nabla^{C}J_{CBA}\right),\quad j\geq0\,.
\end{align}
Comparing this with the coordinates on $\Ee^{(0)}$ given in Proposition \bref{utv-coord-E0}, we see that these generators (except for the second one with $j = D - 3$) can be solved with respect to $T^{(j)}_{AB}$. Strictly speaking, one should also make sure that $b_0^{*}\nabla_\Omega^{j-1}\nabla^{C}J_{CBA}$ does not depend on $b_0^{*}T^{(j)}_{AB}$. Although this fact is almost evident from a straightforward counting of the number of $\nabla_\Omega$ and $\nabla_A$ acting on the Weyl tensor, the rigorous proof requires a technically involved computation of $J^{(N)}_{ABC}$, which is presented in the Proposition~\bref{App-J-counted} in Appendix. Explicitly solving these ``simple'' equations from $\cK$ we arrive at:
\begin{definition}\label{def-hatE}
    Let us call $\hat \Ee$ the subbundle of $\Ee^{(0)}$ singled out by
    \begin{align}
       & \nabla_{(A)}\Omega^{(N+2)}=0\, , \quad \nabla_{(A)}\Big((D-3-j)T_{BC}^{(j)}-j\tilde{\Lambda}\nabla_\Omega^{j-1}\nabla^{D}J_{DCB}\Big)=0\, \quad j\neq D-3\,, N\geq0
       \end{align}
    pulled back to $\Ee^{(0)}$. The embedding $\hat \Ee \to \Ee$ is denoted by  $\hat b$ and for any function $f$ on $\Ee$ we define $\hat f\equiv \hat b^{*}f$. As a fibre coordinate system on $\hat \Ee$ we use
\begin{align}\label{hat-e-coord}
    \hat \xi^{B},\quad \hat C_{B}{}^{C},\quad \hat \lambda,\quad \hat \lambda_{B},\quad    \hat g_{BC},\quad \nabla_{((A)} \hat \We^{B}{}_{CD)E},\qquad \nabla_{(A)} \cT_{BC},\quad  |A|\geq0\,,
\end{align}
where  $\cT_{AB}\equiv \hat{b}^{*}T_{AB}^{(D-3)}$.
\end{definition}   

Note that from the tracelessness of the original $W^{a}{}_{bcd}$ and the fact that by definition of $\hat \Ee$ we have $\hat T^{(0)}_{AB}=0$, it follows that $\hat W^{A}{}_{BCD}$ is traceless.

There is an important subtlety to be clarified here: strictly speaking, $\hat{\Ee}$ is not a $Q$-submanifold in $\Ee^{(0)}$ since $Q$ preserves only the whole ideal $b_0^{*}\cK$. However, there is a natural $Q$-structure on $\hat \Ee$
and this is precisely the one employed in defining the coordinates \eqref{hat-e-coord}
through $\nabla_A$. It can be constructed as follows: as a first step, we pass from $\Ee^{(0)}$ to its sub-gPDE $\Ee^{(D-3)}$ by restricting to the zero locus of $b_0^{*}\cK^{(D-3)}$, which is a $Q$-subbundle. As the second step, we note that the following coordinate functions
\begin{align}
\label{funct-E-D-3}
    \xi^{B},\quad C_{B}{}^{C},\quad \lambda,\quad \lambda_{B},\quad    g_{BC},\quad \nabla_{((A)} \We^{B}{}_{CD)E},\qquad \nabla_{(A)}T_{BC}^{(D-3)},\quad  |A|\geq0
\end{align}
restricted to $\Ee^{(D-3)}$ form a subset of the coordinates therein.  Moreover, one can verify the subalgebra of functions generated by this subset is closed under $Q$. For all coordinates except $\nabla_{(A)}T_{BC}^{(D-3)}$ this follows directly from the $b_{D-3}$-pullback of the $Q$-action on $\Ee$ given in \eqref{gran-ish1}, \eqref{W-counted}. The case of the $Q$-action on $T_{AB}^{(D-3)}$ is technically more involved and relies on the results proved in the Appendix, see Propositions~\bref{App-QT} and~\bref{App-J-counted}. It follows, $\Ee^{(D-3)}$ is a $Q$-bundle over $\hat \Ee^\prime$ which is identified as a manifold whose algebra of functions is generated by
\eqref{funct-E-D-3} restricted to $\Ee^{(D-3)}$. Moreover, $\hat \Ee$
is a section of $\Ee^{(D-3)} \to \hat \Ee^\prime$ and hence $\hat \Ee$ and $\hat \Ee^\prime$ are diffeomorphic and in this way $\hat \Ee$ acquires the $Q$-structure. 
More geometrically, $\Ee^{(D-3)}$ can be considered as a bundle over $\hat \Ee$ by taking as the projection the composition of the projection to $\hat \Ee^\prime$ and the section map $\hat \Ee^\prime \to \hat \Ee$. Then the $Q$-structure on $\hat \Ee$ is just the projection that on $\Ee^{(D-3)}$. In other words, $Q f=i^*(Q\pi^*f)$ for any $f\in \cC^\infty (\hat \Ee)$, where $i$ denotes the embedding of $\hat \Ee \to \Ee^{(D-3)}$ and $\pi$ be a projection $\Ee^{(D-3)}\to \hat \Ee$. The way it is defined guarantees that the restriction of such $Q$ on $\hat \Ee$ to $\cE\subset \hat \Ee$ coincides with the restriction of $Q$ on $\Ee^{(D-3)}$ to $ \cE\subset \Ee^{(D-3)}$. It follows that implicit gPDE $(\hat \Ee,Q,\hat \cI)$ where $\hat \cI=\hat b^* \cI$ is just a way to define the desired gPDE $(\cE,Q)$. Note that $\hat b^{*}\cI=\hat b^{*}\cK$.

In order to perform concrete computations on $\hat \Ee$ we need an efficient way to transfer the structures from $\Ee$ to $\hat \Ee$. If $\hat \Ee$ were a $Q$-submanifold of $\Ee$ that would have been straightforward. But it is not and a more elaborate technique is in order. Namely, we identify a subalgebra of functions on $\Ee$ such that their restrictions to $\hat \Ee$ is a $Q$-map. We call $f\in  \cC^\infty(\Ee)$  $\hat \Ee$-projectable if its restriction to $\Ee^{(D-3)}$ is constant along the fibres of $\Ee^{(D-3)}$ or, equivalently can be obtained as a pullback from $\hat \Ee$ by the projection $\Ee^{(D-3)} \to \hat \Ee$.\footnote{The idea here is the same as in the definition of observable is the context of gauge systems. We require the restrictions of $f$ to a submanifold (1st class constrained surface in the Hamiltonian approach or stationary surface in the Lagrangian one) to be invariant  under the distribution therein (gauge distribution). This can be made precise by picking any smooth extension of the vertical distribution on $\Ee^{(D-3)}$ to $\Ee$. Then the projectable functions are those annihilated by the distribution modulo functions vanishing on $\Ee^{(D-3)}$, i.e. precisely the observables from the gauge system perspective. 
}
Examples of $\hat \Ee$-projectable functions include \eqref{funct-E-D-3} as well as $T_{AB}^{(N)}$, $J_{ABC}^{(N)}$ for $N\leq D-3$. A simple criterion: the restriction of a function to $\Ee^{(0)}$ does not depend on $T^{(N)}_{AB}$, $\Omega^{(N+2)}$ with $N\geq D-2$.

Given a vector field tangent to $\Ee^{(D-3)}$ and such that its restriction to
$\Ee^{(D-3)}$ is projectable, it automatically preserves the subalgebra of $\hat \Ee$-projectable functions and hence defines a vector field on $\hat \Ee$ in a canonical way. We shall refer to such vector fields on $\Ee$ as $\hat \Ee$-projectable and will denote their restriction to $\hat \Ee$ by the same symbol. Note that if $f\in \cC^\infty(\Ee)$ and $V\in Vect(\Ee)$ are $\hat \Ee$-projectable, then we have $V\hat f=\hat b^{*}Vf$. In particular, $Q$ on $\Ee$ is like that and indeed induces $Q$ on $\hat \Ee$ in this way i.e. $\hat b^*$ restricted to $\hat \Ee$-projectable functions is a $Q$-map. 
Also, if a ghost degree $-1$ vector field $Y$ is tangent to $\Ee^{(D-3)}$, then it is automatically tangent to $\hat \Ee\subset \Ee^{(D-3)}$, since the relations defining  this submanifold are degree $0$. Consequently, the corresponding $Q$-exact vector fields $[Q,Y]$ are $\hat \Ee$-projectable. It then follows from Lemma~\bref{lemma-I0} and Theorem~\bref{prop:mnogoidealov} that the vector fields on $E_{\red}$ 
$\nabla_A$, $\Gamma^B$, $\Delta^{A}{}_{B}$, and $\mathcal{D}_A^{(N)}$, $N\leq D-4$, are $\hat \Ee$-projectable, and we will use the same notation for their restrictions to $\hat \Ee$.

As we have just seen the action of $Q$
on coordinate functions \eqref{funct-E-D-3} pulled back to $\hat \Ee$   can be computed on $\Ee^{(D-3)}$ which in turn can be computed on $\Ee$. Applying this procedure to \eqref{gran-ish1} and \eqref{W-counted} and using relations~\eqref{boundcond} and Lemma~\bref{lemma-I0}, we find that
\begin{equation} 
\label{Q-gran-background}
\begin{aligned}
        Q \hat g_{BC}=&~\hat C_{B}{}^A \hat g_{AC}+\hat C_{C}{}^A \hat g_{BA}+2\hat \lambda \hat g_{BC},\\
         Q\hat\lambda_A=&~\hat C_{A}{}^{B}\hat \lambda_B  +\dfrac{1}{2}\hat \xi^B \hat \xi^C \hat \Co_{ABC},\\
        Q\hat \lambda=&~\hat \xi^A \hat \lambda_A,\quad Q\hat \xi^B=\hat \xi^A \hat C_{A}{}^{B}\\
        Q \hat C_{B}{}^C=&~\hat C_{B}{}^A \hat C_{A}{}^C+\hat \lambda_B\hat \xi^C-\hat \lambda^C \hat \xi_B+\delta_{B}^C \hat \lambda_A \hat \xi^A +\dfrac{1}{2}\hat \xi^A\hat \xi^D \hat \We^{C}{}_{BAD}\,\\
          Q\hat \We^{C}{}_{BAD}=&~\hat \xi^E \nabla_{E}\hat \We^{C}{}_{BAD}-\hat C_{E}{}^{C}\hat \We^{E}{}_{BAD}+\hat C_{B}{}^{E}\hat \We^{C}{}_{EAD}+\\
        ~&~\hat C_{A}{}^{E}\hat \We^{C}{}_{BED}+ \hat C_{D}{}^{E}\hat \We^{C}{}_{BAE}\,.
\end{aligned}
\end{equation}
Note that thanks to $[\nabla_{A}, Q]=0$, the above coordinates along with the $\nabla_{A}$-prolongations of $\hat \We^{B}{}_{CDE}$ are closed under $Q$, i.e. the action of $Q$ does not involve the remaining coordinates $\nabla_{(A)}\cT_{BC}$. Moreover, the underlying $Q$-manifold (whose algebra of functions is generated by the above coordinates) coincides with that describing the conformal geometry (see Proposition \bref{prop:bundle-over-conf}
; of course the lowercase indices are to be  replaced by the uppercase ones). If we denote by $(\Ee^{conf},Q)$ the respective  gPDE we have:
\begin{prop} 
$(\hat \Ee,Q)$ is a $Q$-bundle over $(\Ee^{conf},Q)$.  The fibre coordinates on this bundle are, of course, $\nabla_{(A)} \cT_{BC}$
\end{prop}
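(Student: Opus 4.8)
The plan is to exhibit the projection $\pi\colon \hat\Ee \to \Ee^{conf}$ and then verify the two defining properties of a $Q$-bundle over $T[1]\Sigma$: that $\pi$ is a $Q$-map and that it is a locally-trivial fibre bundle with the stated fibre coordinates. I would first fix $\pi$ at the level of function algebras. By construction $\Ee^{conf}$ is the gPDE whose algebra of functions is generated by the conformal subset of \eqref{hat-e-coord}, namely $\hat\xi^B,\hat C_B{}^C,\hat\lambda,\hat\lambda^B,\hat g_{BC}$ together with $\nabla_{((A)}\hat\We^B{}_{CD)E}$, equipped with the $Q$-action obtained from \eqref{gran-ish1} and \eqref{W-counted} upon replacing lowercase indices with uppercase ones (Proposition \bref{prop:bundle-over-conf}). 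I define $\pi^*$ to send each generator of $\cC^\infty(\Ee^{conf})$ to the coordinate of the same name in \eqref{hat-e-coord}; since these coordinates belong to the coordinate system \eqref{hat-e-coord} and are therefore algebraically independent of the remaining $\nabla_{(A)}\cT_{BC}$, this is a well-defined, degree-preserving algebra homomorphism.

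Next I would check that $\pi$ is a $Q$-map, i.e.\ $Q\circ\pi^* = \pi^*\circ Q$. This is precisely the content of the displayed computation \eqref{Q-gran-background} together with the remark following it: the action of $Q$ on the conformal coordinates and on their $\nabla_A$-prolongations closes on the conformal subalgebra and involves none of the $\nabla_{(A)}\cT_{BC}$ (here one uses $[\nabla_A,Q]=0$, so that $Q$ acting on the $\nabla_A$-prolongations of $\hat\We$ is fixed by its action on $\hat\We$ itself). Comparing \eqref{Q-gran-background} with the $Q$-action defining $\Ee^{conf}$ shows the two coincide generator by generator, so the intertwining relation holds on the generators of $\cC^\infty(\Ee^{conf})$ and hence on the whole algebra.

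It then remains to establish local triviality and identify the fibre. Under the paper's standing assumption that all bundles over $T[1]\Sigma$ are globally trivial, it suffices to note that \eqref{hat-e-coord} is a fibre coordinate system adapted to $\pi$: the conformal coordinates are the $\pi^*$-images of the coordinates on $\Ee^{conf}$, while the $\nabla_{(A)}\cT_{BC}$ restrict to coordinates along the fibres, independent of the conformal sector by Definition \bref{def-hatE} and Proposition \bref{utv-coord-E0}. Since $\pi$ leaves the common base coordinates $x^\mu,\theta^\mu$ untouched it commutes with the projections onto $T[1]\Sigma$, which, together with the $Q$-map property, is exactly the definition of a $Q$-bundle. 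I would \emph{not} assert a product $Q$-structure: because $Q\cT_{AB}$ depends on the conformal (base) fields, this is a genuine $Q$-bundle, i.e.\ a gPDE over background in the terminology of the introduction, with fibre coordinatised by $\nabla_{(A)}\cT_{BC}$.

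The only real work, and hence the main potential obstacle, is the closure of $Q$ on the conformal sector — the claim that $Q\,\nabla_A\hat\We$, $Q\,\hat C_B{}^C$, $Q\,\hat\lambda^B$ and so on generate no $\cT$-dependent terms. Since $\hat\Ee$ is not literally a $Q$-submanifold of $\Ee$ (its defining relations are $Q$-invariant only modulo the full ideal $\hat b^*\cK$), this closure is not automatic; it rests on the explicit evaluation \eqref{Q-gran-background}, which itself uses the constraints \eqref{boundcond}, Lemma \bref{lemma-I0}, and the Appendix computation of $Q T^{(D-3)}_{AB}$ (Propositions \bref{App-QT} and \bref{App-J-counted}). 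Once that computation is granted, as it is in the discussion preceding the statement, the proposition follows by the packaging above.
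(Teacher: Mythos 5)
Your proposal is correct and follows essentially the same route as the paper: the paper's justification is precisely the computation \eqref{Q-gran-background} together with the remark that, thanks to $\commut{Q}{\nabla_A}=0$, the conformal-sector coordinates close under $Q$ without involving $\nabla_{(A)}\cT_{BC}$, and that the resulting $Q$-manifold is $\Ee^{conf}$. You have merely packaged that discussion into an explicit $Q$-map argument, correctly flagging that the real content lies in the closure of $Q$ on the conformal sector (which rests on \eqref{boundcond}, Lemma~\bref{lemma-I0} and the Appendix computations), exactly as the paper intends.
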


On top of having a relatively simple structure, $(\hat \Ee,Q)$ has an additional very attractive feature of being off-shell, as we are going to demonstrate explicitly  in Section~\bref{sec:gravity-sections}.
 From the above analysis it follows that $\cE$ is the zero locus of the ideal $\hat\cI\equiv \hat b^{*}\cI$, generated by $\nabla_{(C)} O^{(D-3)}_{AB}, \nabla_{(C)} O^{(D-3)}_{A}$ pulled back to $\hat \Ee$. In what follows we use the notations:
\begin{equation}
\label{hatOdef}
  \cO_{AB}\equiv -\frac{1}{(D-3)\tilde{\Lambda}}\hat b^* ( O^{(D-3)}_{AB})=\hat b^{*}\nabla_\Omega^{D-4}\nabla^{C}J_{CBA}\,, \qquad   \cO_{A}=\hat b^* ( O^{(D-3)}_{A})=\hat b^{*}\nabla_{\Omega}^{D-3}\nabla^{B}T_{BA}\,.
\end{equation}
 These functions are well-defined as pullbacks of functions from $\Ee$, but it is not entirely obvious what they are in terms of coordinates on $\hat \Ee$. For this, it will be necessary to introduce a certain calculus, which is the subject of the next section. Using  this calculus, in Section~\bref{sec:even} we show that $\cO_{AB}$ corresponds to the Fefferman-Graham obstruction tensor.

 It is obvious that  $\hat \Ee$ the zero locus of $\hat \cI$ coincides with $\cE$. As an implicit gPDE what we are interested in is $(\hat \Ee, Q,\hat \cI)$.

Although their explicit form is not known to us yet, we can already make some general statements:
\begin{prop}\label{prop-div-obstr}
    On $\hat \Ee$ we have
    \begin{align}
      \cO_{A}{}^{A}=0\,,\quad  \nabla^{A}\cO_{AB}=0,\quad \Gamma_A\cO_B=-\tilde\Lambda(D-3) \cO_{AB}
    \end{align}
    where $\Gamma_A\equiv \hat g_{AB}\Gamma^B$, $\Gamma_B =[\frac{\partial}{\partial \hat \lambda_{B}},Q]$.
\end{prop}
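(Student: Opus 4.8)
The plan is to establish the three identities in Proposition~\bref{prop-div-obstr} as consequences of structural facts already in hand: the definitions \eqref{hatOdef} of $\cO_{AB}$ and $\cO_A$ as pullbacks of $\nabla_\Omega^{D-4}\nabla^C J_{CBA}$ and $\nabla_\Omega^{D-3}\nabla^B T_{BA}$, the tracelessness of $\hat\We^A{}_{BCD}$ (and hence of the boundary Weyl data), the Bianchi identities~\eqref{Bianchi}, and the commutation relations~\eqref{commutrelations}. Throughout I would work on $\Ee$ with the understanding that the relevant quantities are $\hat\Ee$-projectable, so that identities valid modulo $\cK^{(0)}$ (or modulo lower ideals in the filtration) descend to honest identities after the $\hat b^*$-pullback.

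First, for the trace $\cO_A{}^A=0$: since $\cO_A=\hat b^*\nabla_\Omega^{D-3}\nabla^B T_{BA}$, I would trace on $A$ and commute the trace $g^{A\Omega}\cdots$ past the derivatives. The tracelessness of $W^b{}_{cde}$ gives $T_A{}^A=W_{\Omega A\Omega}{}^A=-W_{\Omega\Omega\Omega\Omega}-$ the missing components, which vanishes; more precisely $g^{AB}T_{AB}=-g^{\Omega\Omega}T_{\Omega\Omega}$ type relations collapse because $W$ is fully traceless and $g_{\Omega A}\in\cK_B$. The trace $\nabla^A\nabla^B T_{BA}$ of the twice-contracted Bianchi-type combination should then vanish identically using $T_{[AB]}=0$ and the symmetry of $\nabla_{(A}\nabla_{B)}$ up to curvature terms that are themselves trace-free or lie in the ideal. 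Second, for $\nabla^A\cO_{AB}=0$: here $\cO_{AB}=\hat b^*\nabla_\Omega^{D-4}\nabla^C J_{CBA}$, and I would contract with $\nabla^A$ and use the second Bianchi identity $\nabla_{[a}\Co^b{}_{cd]}=0$ together with the first identity in \eqref{Bianchi} to recognise $\nabla^A\nabla^C J_{CBA}$ as a total divergence of something antisymmetric, hence zero. The antisymmetry $J_{CBA}=W_{CB\Omega A}=-W_{BC\Omega A}=-J_{BCA}$ in the first pair is the key algebraic input, making $\nabla^A\nabla^C J_{CBA}$ a contraction of a symmetric derivative pair against an antisymmetric tensor, which vanishes up to curvature corrections controlled by \eqref{commutrelations} and the ideal.

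The third identity, $\Gamma_A\cO_B=-\tilde\Lambda(D-3)\cO_{AB}$, is the one I expect to carry the real content and to be the main obstacle. The operator $\Gamma_A=[\partial/\partial\hat\lambda^A,Q]$ acts by the commutation relations~\eqref{commutrelations}, in particular $[\Gamma^a,\nabla_b]=-\mathcal{P}^{ac}_{db}\Delta^d{}_c+\delta^a_b\Delta$, so applying $\Gamma_A$ to $\cO_B=\hat b^*\nabla_\Omega^{D-3}\nabla^CT_{CB}$ requires commuting $\Gamma_A$ through the $D-3$ factors of $\nabla_\Omega$ and through the $\nabla^C$. I would compute $[\Gamma_A,\nabla_\Omega^{D-3}]$ by iterating $[\Gamma^a,\nabla_\Omega]$, tracking how $\Delta^d{}_c$ and $\Delta$ act on $T_{CB}$ and on the metric factors, and how the factor $(D-3)$ is produced combinatorially from the $D-3$ possible positions at which $\Gamma_A$ can hit a $\nabla_\Omega$. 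The appearance of $\tilde\Lambda=g_{\Omega\Omega}|_{\hat\Ee}$ should enter through the relation $g_{\Omega\Omega}-\tilde\Lambda\in\cK^{(0)}$ (Lemma~\bref{lemma-I0}) whenever the commutators produce a $\mathcal{P}^{\cdots}_{\Omega\Omega}$ or a raised $\Omega$-index contracted with $g^{\Omega\Omega}$. The delicate bookkeeping is to show that after all commutators only the term proportional to $\nabla_\Omega^{D-4}\nabla^CJ_{CAB}$ survives, with coefficient $-\tilde\Lambda(D-3)$, all other terms either cancelling among themselves or lying in the ideals $\cK^{(N)}$ that vanish on $\hat\Ee$; the identity relating $\nabla^CT_{CB}$ at order $D-3$ to $\nabla^CJ_{CAB}$ at order $D-4$ is precisely the content of the definition of $O^{(j)}_{AB}$ in \eqref{O-constraints}, so I expect to reuse the index-splitting of the Bianchi identity \eqref{proof-coordE0-2} (which converts a $\nabla_\Omega$ acting on $J$ into $\nabla_A T_{BC}$ combinations) to perform this conversion.

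Because all three statements are, at heart, consequences of Bianchi identities and the algebra~\eqref{commutrelations} pulled back along $\hat b^*$, I would organise the proof so that each computation is first carried out for the $\hat\Ee$-projectable representatives on $\Ee$ modulo $\cK^{(0)}$ and only at the end invoke that the pullback $\hat b^*$ is defined (Definition~\bref{def-hatE}) and that the vector fields $\Gamma_A$, $\nabla_A$ involved are $\hat\Ee$-projectable, so that the modulo-ideal identities become exact on $\cE$. The main obstacle will be controlling the combinatorics and the proliferation of curvature and ideal terms in the third identity; I anticipate that the cleanest route is an inductive commutator formula for $[\Gamma_A,\nabla_\Omega^N]$ in the spirit of Proposition~\bref{utv-commutator}, which isolates the single surviving $\Delta$-type contribution that reconstitutes $\cO_{AB}$ with the advertised coefficient.
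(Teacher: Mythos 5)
Your proposal is a plan with the right general ingredients (tracelessness of the Weyl tensor, Bianchi identities, commutators with $\Gamma_A$), but two of the three identities are attacked with arguments that do not go through, and the third is left as a sketch at exactly the point where the work is.

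For $\cO_A{}^A=0$ you trace the wrong object: this is the trace $g^{AB}\cO_{AB}$ of the two-index obstruction, and by \eqref{hatOdef} it is $\hat b^*\nabla_\Omega^{D-4}$ applied to $g^{AB}\nabla^CJ_{CBA}$, which vanishes because $J$ inherits tracelessness from $W^a{}_{bac}=0$ (using $g_{\Omega A}\in\cK_B$ and $\nabla_a g_{bc}=0$). Your discussion of $\cO_A$, of $T_A{}^A$, and of $\nabla^A\nabla^BT_{BA}$ concerns a different quantity and proves nothing about $g^{AB}\cO_{AB}$. For $\nabla^A\cO_{AB}=0$ your key mechanism fails: in $\nabla^A\nabla^CJ_{CBA}$ the derivative pair is contracted into the first and last slots of $J_{CBA}=W_{CB\Omega A}$, which is \emph{not} antisymmetric under $C\leftrightarrow A$; the antisymmetry $J_{CBA}=-J_{BCA}$ you invoke involves the free index $B$ and is irrelevant to the contraction. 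The identity is not a pointwise symmetric-times-antisymmetric cancellation but a Noether-identity phenomenon: the paper derives it by applying $\hat b^*$ to \eqref{App-div-free}, which in turn comes from the contracted Bianchi identity $\nabla^a\Co_{abc}=0$ and the computation \eqref{th-proof-6}, where $\nabla^B$ applied to the generator produces the factor $(D-3-N)$ that vanishes precisely at $N=D-3$.

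For $\Gamma_A\cO_B=-\tilde\Lambda(D-3)\cO_{AB}$ your proposed route --- an inductive commutator formula for $[\Gamma_A,\nabla_\Omega^N]$ --- is indeed the paper's (Lemma~\bref{Lemma-Gamma} feeding Proposition~\bref{prop-gamma-conservation}), but you stop before the part that decides the answer. The paper first rewrites $\nabla_\Omega^{D-3}\nabla_AT^A{}_B$ as $(D-3)\nabla_\Omega^{D-3}\Co_{\Omega\Omega B}$ modulo $\cK^{(0)}$, then applies $\Gamma^C$ using $\Gamma^c\Co_{abd}=-\We^c{}_{abd}$; the sum over lower-order terms dies because $\nabla_\Omega^i\Co_{\Omega\Omega A}\in\cK^{(i+1)}$ for $i<D-3$ and the $\cD^{(N)}_A$ preserve the filtration (Theorem~\bref{prop:mnogoidealov}), the $T^{(D-3)}$ contribution cancels, and the surviving term is converted to $-\tilde\Lambda\nabla_\Omega^{D-4}\nabla_EJ^{E}{}_{AB}$ via \eqref{Cotton-T}. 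None of this bookkeeping appears in your sketch, and it is exactly where the coefficient $-\tilde\Lambda(D-3)$ and the cancellations come from.
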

\begin{proof}The tracelessness condition for $\cO_{AB}$ follows directly from \eqref{hatOdef} together with the fact that $J^{A}{}_{BC}$ is traceless (which, in turn, follows from $W^{a}{}_{bac}=0$).

    The divergence-free condition is obtained by simply applying $\hat b^{*}$ to the formula \eqref{App-div-free} that appeared in the proof of Theorem~\bref{prop:mnogoidealov}. In fact, the origin of this equation can be traced back to the Noether identities for the Einstein equations.

       The last formula follows directly from the Proposition \bref{prop-gamma-conservation} proven in the appendix.
\end{proof}

The first two conditions, as we will see later, are simply the usual tracelessness and divergence-free conditions for the obstruction tensor. The third one implies that  $\cO_B$ is gauge invariant provided equations $~\cO_{AB}=0$ is imposed.

To anticipate a little, functions $\nabla_{(C)}\cO_{AB}$ are
constant along the fibres of $\hat \Ee \to \Ee^{conf}$ and hence determine functions on the base $\Ee^{conf}$. It follows they define a $Q$-submanifold $\cE^{conf}$ (more precisely, a $Q$-subbundle, if we view $\Ee^{conf}$ as a bundle over $T[1]\Sigma$) of $\Ee^{conf}$. As we are going to see shortly, this subbundle is precisely the obstruction gPDE, e.g. conformal gravity at the level of equations of motion for $D=5$. 

Moreover, functions $\nabla_{(C)} \cO_{A}$ restrict fibre coordinates $\nabla_{(C)}\cT_{AB}$ only. All in all this means that the gPDE induced on the boundary is itself a bundle over $ \cE^{conf}$. More precisely, it is a subbundle of the initial bundle $\hat \Ee \to \Ee^{conf}$ pulled back to $\cE^{conf} \subset \Ee^{conf}$. In the case of $D=5$ this system describes the conserved energy-momentum tensor $\cT_{AB}$, as we discuss in  more details in the next Section.

\subsection{Boundary calculus}
\label{sec:boundary-calculus}
Although we have obtained a certain description of the boundary gPDE as $(\hat \Ee,Q,\hat \cI)$, it is very implicit for the moment. For instance, we do not know the explicit form of the generators  $\nabla_{(C)}  \cO_{A}$ and $\nabla_{(C)}  \cO_{AB}$ of $\hat \cI$ in terms of the coordinates on $\hat \Ee$. It turns out that the action of $Q$ on the coordinates and the constraints themselves can be expressed in terms of the following functions
\begin{equation}
 \qquad \hat J^{(N)}_{ABC} \,, \qquad  \hat T^{(N)}_{AB}\,, \qquad 0\leq N \leq D-4
\end{equation}
and vector fields 
\begin{equation}
\cD^{(N)}_A\,, \qquad 0\leq N \leq D-4
\end{equation}
defined on $\hat \Ee$. Moreover, these can be explicitly found in terms of the coordinates on $\hat \Ee$ via a certain recursive procedure. In what follows we refer to the above system as to \textit{boundary calculus}.

The important technical tool that allows one to represent the objects like $\cO_{A}\equiv\hat b^{*}\nabla_{\Omega}^{D-3}\nabla^AT_{AB}$ in terms of the boundary calculus is the following: 
\begin{prop}\label{commut-nablas}
    Let $f$ be a  $\hat \Ee$-projectable function on $\Ee$ and $N\leq D-4$. Then  
    \begin{align}
        \hat b^{*}\nabla_\Omega^{N}\nabla_Af=\sum_{i=0}^{N}C^{i}_{N}\mathcal{D}_{A}^{(N-i)}\hat f^{(i)}=\nabla_A \hat f^{(N)}+\sum_{i=0}^{N-1}C^{i}_{N}\mathcal{D}_{A}^{(N-i)}\hat f^{(i)}.
    \end{align}
\end{prop}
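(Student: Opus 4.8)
The backbone of the argument is the operator identity \eqref{decomp-nablaomega}, itself a corollary of Proposition~\bref{utv-commutator}, which holds verbatim on all of $\Ee$:
\[
\nabla_{\Omega}^{N}\nabla_{A}f=\sum_{i=0}^{N}C^{i}_{N}\mathcal{D}^{(N-i)}_{A}f^{(i)}\,,\qquad f^{(i)}\equiv\nabla_\Omega^{i}f\,.
\]
The plan is therefore to apply $\hat b^{*}$ to both sides and to show that it may be pushed through each summand, i.e. that $\hat b^{*}(\mathcal{D}^{(N-i)}_{A}f^{(i)})=\mathcal{D}^{(N-i)}_{A}\hat f^{(i)}$ for every $0\leq i\leq N$; isolating the $i=N$ term, for which $\mathcal{D}^{(0)}_{A}=\nabla_{A}$, then yields the second displayed form. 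Since $N\leq D-4$ forces $N-i\leq D-4$, every $\mathcal{D}^{(N-i)}_{A}$ occurring here is among the vector fields already shown to be $\hat \Ee$-projectable, so each right-hand side is well defined on $\hat \Ee$.

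The one obstruction to a direct application of the intertwining lemma $V\hat g=\hat b^{*}Vg$ is that $f^{(i)}=\nabla_\Omega^{i}f$ need not itself be $\hat \Ee$-projectable: applying $\nabla_\Omega$ raises the $\nabla_\Omega$-order and may push, for instance, a $T^{(D-3)}_{AB}$-dependence out of the projectable range. I would circumvent this using the $Q$-exact presentation $\mathcal{D}^{(N-i)}_{A}=\commut{Q}{\nu^{(N-i)}_{A}}$ together with a degree count. Because $f^{(i)}$ has ghost degree $0$ and no coordinate of negative degree is present, the degree $-1$ field $\nu^{(N-i)}_{A}$ annihilates it, $\nu^{(N-i)}_{A}f^{(i)}=0$, so that $\mathcal{D}^{(N-i)}_{A}f^{(i)}=\nu^{(N-i)}_{A}(Qf^{(i)})$. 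The decisive gain is that, for $N-i\leq D-4$, the degree $-1$ field $\nu^{(N-i)}_{A}$ is tangent to $\Ee^{(D-3)}$, and therefore (being of degree $-1$) tangent to $\hat \Ee$ itself, exactly as used before the proposition in establishing the $\hat \Ee$-projectability of $\mathcal{D}^{(N-i)}_{A}$ from Theorem~\bref{prop:mnogoidealov}. For a field genuinely tangent to $\hat \Ee$, restriction commutes with its action on an arbitrary function, giving $\hat b^{*}(\nu^{(N-i)}_{A}Qf^{(i)})=\nu^{(N-i)}_{A}\hat b^{*}(Qf^{(i)})$. Running the same degree count on $\hat \Ee$ gives $\mathcal{D}^{(N-i)}_{A}\hat f^{(i)}=\nu^{(N-i)}_{A}(Q\hat f^{(i)})$ with the induced $Q$, so the proposition reduces to the single identity $\nu^{(N-i)}_{A}\big[\hat b^{*}(Qf^{(i)})-Q\hat f^{(i)}\big]=0$.

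The bracket here is precisely the failure of $\hat b^{*}$ to intertwine the ambient $Q$ with the induced $Q$ on the possibly non-projectable $f^{(i)}$, and controlling it is the main obstacle. When $f^{(i)}$ is itself $\hat \Ee$-projectable the bracket vanishes outright, since then both $Q$ and $f^{(i)}$ are projectable and $\hat b^{*}(Qf^{(i)})=Q\hat f^{(i)}$ by the lemma; this is the case for the functions actually needed in the applications (such as $f=\nabla^{B}T_{BA}$ or $f=\nabla^{C}J_{CBA}$, which are of zeroth $\nabla_\Omega$-order, so that all $f^{(i)}$ with $i\leq N\leq D-4<D-3$ remain in the projectable range), and the statement follows immediately. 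In the general case I would write $f^{(i)}\big|_{\Ee^{(D-3)}}=\pi^{*}\hat f^{(i)}+\phi$ with $\phi$ vanishing on the section $\hat \Ee$, so that the bracket equals $\hat b^{*}(Q\phi)$, and then verify $\nu^{(N-i)}_{A}\hat b^{*}(Q\phi)=0$ for $N-i\leq D-4$ directly from the explicit recursion \eqref{nu-counted-bulk} for $\nu^{(N-i)}_{A}$ and the fact that $\phi$ is supported on the fibre directions $T^{(j)}_{AB}$ ($j\geq D-2$) and $\Omega^{(j)}$ ($j\geq D$) of $\Ee^{(D-3)}\to\hat \Ee$. This last cancellation, which rests on the Appendix computations of $J^{(N)}_{ABC}$ and $QT^{(N)}_{AB}$, is the technically delicate point of the proof.
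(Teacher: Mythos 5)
Your route is, at its core, the same as the paper's: the proof given there is a one-liner that applies $\hat b^*$ to the identity of Proposition~\bref{utv-commutator} (i.e.\ \eqref{decomp-nablaomega}) and invokes the $\hat \Ee$-projectability of $\cD^{(N-i)}_A$ for $N-i\leq D-4$ to write $\hat b^{*}\cD_{A}^{(N-i)}=\cD_{A}^{(N-i)}\hat b^{*}$ summand by summand. Where you go beyond the paper is in observing that the intertwining lemma $V\hat f=\hat b^*Vf$ is only stated for a projectable vector field acting on a projectable \emph{function}, and that $f^{(i)}=\nabla_\Omega^i f$ need not inherit projectability from $f$; your reduction via $\cD^{(M)}_A g=\nu^{(M)}_A(Qg)$ for degree-$0$ $g$, together with the tangency of the degree $-1$ fields $\nu^{(M)}_A$ to $\hat\Ee$, is exactly the right way to isolate what remains to be checked, namely $\nu^{(N-i)}_A\bigl[\hat b^{*}(Qf^{(i)})-Q\hat f^{(i)}\bigr]=0$.

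The one genuine shortfall is that this last cancellation is announced rather than proved. It is not automatic: writing $f^{(i)}|_{\Ee^{(D-3)}}=\pi^*\hat f^{(i)}+\phi$, the discrepancy $\hat b^{*}(Q\phi)$ is generically a ghost-linear combination of elements of $\hat\cI$, and applying the degree $-1$ field $\nu^{(N-i)}_A$ leaves degree-$0$ coefficients multiplying $\hat\cI$-elements — something that manifestly vanishes only modulo $\hat\cI$. That is precisely the situation of Remark~\bref{remark-higherN}, so obtaining \emph{exact} equality in the range $N\leq D-4$ requires input specific to that range which you do not supply. The clean way to finish at the level of generality actually needed is your own parenthetical remark: for every $f$ to which the proposition is applied (e.g.\ $T_{AB}$, $J_{ABC}$, $\nabla^B T_{BA}$, $\nabla^C J_{CBA}$ and their Yang--Mills analogues) all the $f^{(i)}$ with $i\leq N$ remain $\hat\Ee$-projectable, so each summand commutes with $\hat b^{*}$ by the stated lemma and the identity is immediate — which is in effect all that the paper's proof asserts. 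Either restrict the hypothesis to that case or carry out the cancellation explicitly; as written, the fully general statement is not yet established.
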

\begin{proof}
    Follows immediately from Proposition~\bref{utv-commutator} and the fact that $\mathcal{D}_A^{(N)}$ for $N\leq D-4$ are $\hat \Ee$-projectable and hence $\hat b^{*}\cD_{A}^{(N)}=\cD_{A}^{(N)}\hat b^{*}$.
\end{proof}
\noindent 
Note that an analogous formula holds for all $N$ if $\hat \Ee$ is replaced with the on-shell gPDE $\cE$.

With the help of this simple Proposition we can express $\hat b^{*}\nabla_\Omega^{N}\nabla_Af$ in terms of the boundary calculus and the ``subleadings'' $\hat f^{(i)} \equiv \hat b^* (\nabla_\Omega)^i f$. In particular, constraints $\cO_A$ and $\cO_{AB}$ given by \eqref{hatOdef} take the form\footnote{In the second formula we have already taken into account that the term with $i=0$ vanishes thanks to~\eqref{nu-counted-bulk} and $\hat{T}_{AB}^{(0)}=0$.}:
\begin{align}\label{obstr-hatEB}
    \cO_{AB}&=\nabla^{C}\hat J^{(D-4)}_{CAB}+\sum_{i=0}^{D-5}C^{i}_{D-4}\mathcal{D}^{(D-4-i)|C}\hat J^{(i)}_{CAB}\,,\\
    \cO_{A}&=\nabla^{C} \cT_{AC}+\sum_{i=1}^{D-4}C^{i}_{D-3}\mathcal{D}^{(D-3-i)|C}\hat T^{(i)}_{AC}\,.
\end{align}
In what follows we refer to $\cO_{AB}$ and $\cO_{A}$ as the symbols of the equations of motion for the boundary system. This is natural because, as we are going to see in Section~\bref{sec:gravity-sections}, if $\sigma$ is a solution to  gPDE $(\hat \Ee,Q)$ then it is determined by the unconstrained fields parametrizing leading and subleading boundary values. This means that 
$\sigma^*(\cO_{AB})=0$ and $\sigma^*(\cO_{A})=0$ are differential equations on $\sigma$, ensuring that $\sigma$ is a solution to the boundary system , i.e. to the gPDE $(\hat \Ee,Q,\hat\cI)$.

As for the $Q$-structure on $\hat \Ee$, its action in the conformal geometry sector (the base of $\hat \Ee \to \Ee^{conf}$) was already determined and is given by \eqref{Q-gran-background} so the only nontrivial computation is the action of $Q$ on the coordinate function $\cT_{AB}$. We have the following slightly general statement:
\begin{prop}\label{QTAB}
    On $\hat \Ee$ we have 
    \begin{align}
    \label{QT}
         Q\hat T^{(N)}_{BC}=\hat \xi^{A}\nabla_{A}\hat T^{(N)}_{BC}+\hat C_{B}{}^{A}\hat T^{(N)}_{AC}+\hat C_{C}{}^{A}\hat T^{(N)}_{BA}+\hat \lambda_{A}\Gamma^{A}\hat T^{(N)}_{BC}-N\hat \lambda \hat T^{(N)}_{BC},\quad N\leq D-3
    \end{align}
    where
    \begin{align}
    \label{QTact}
        \Gamma_{A}\hat T^{(N)}_{BC}=\tilde{\Lambda}\sum^{N-2}_{i=0}d^{i}_N\mathcal{D}_{A}^{(N-2-i)}\hat T^{(i)}_{BC}+N\tilde{\Lambda}(\hat J^{(N-1)}_{ABC}+\hat J^{(N-1)}_{ACB})\,,\quad N\leq D-3
    \end{align}
    where $d^{i}_N\equiv C^i_N(N-1-i)$.
\end{prop}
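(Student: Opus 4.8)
The plan is to compute $Q\hat T^{(N)}_{BC}$ by first writing $Q$ on degree-$0$ functions through the operators introduced in \eqref{commutators} and then pulling back to $\hat\Ee$. Since $\xi^a,C_a{}^b,\lambda,\lambda^a$ exhaust the degree-$1$ coordinates, for any degree-$0$ function $f$ one has $Qf=\xi^a\nabla_a f+C_a{}^b\Delta^a{}_b f+\lambda\,\Delta f+\lambda^a\Gamma_a f$. I would set $f=T^{(N)}_{BC}$, apply $\hat b^{*}$, and use Lemma~\bref{lemma-I0} together with \eqref{boundcond} to discard the ghost components that vanish on $\hat\Ee$ (namely $\hat\xi^\Omega,\hat C_\Omega{}^A,\hat C_A{}^\Omega,\hat\lambda^\Omega$), while \eqref{boundcond} leaves $\hat C_\Omega{}^\Omega=-\hat\lambda$. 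Because $\nabla_A,\Delta^A{}_B,\Delta,\Gamma_A$ are all $\hat\Ee$-projectable, this reduces $Q\hat T^{(N)}_{BC}$ to the transport term $\hat\xi^A\nabla_A\hat T^{(N)}_{BC}$, the $\hat C_A{}^B$-rotation, the combination $\hat\lambda(\Delta-\Delta^\Omega{}_\Omega)\hat T^{(N)}_{BC}$, and $\hat\lambda^A\Gamma_A\hat T^{(N)}_{BC}$.

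The algebraic terms I would dispatch using that, by $Qg_{ab}$ and \eqref{W-counted}, the fully lowered Weyl tensor is a weight-$2$ tensor under $\Delta^p{}_q$ and $\Delta$. Specializing to $T_{BC}=\We_{\Omega B\Omega C}$ gives $\Delta^E{}_F T_{BC}=\delta^E_B T_{FC}+\delta^E_C T_{BF}$ and $\Delta T_{BC}=\Delta^\Omega{}_\Omega T_{BC}=2T_{BC}$, which I would propagate through $(\nabla_\Omega)^N$ using the commutators of \eqref{commutrelations} ($\commut{\Delta^A{}_B}{\nabla_\Omega}=\commut{\Delta}{\nabla_\Omega}=0$ and $\commut{\Delta^\Omega{}_\Omega}{\nabla_\Omega}=\nabla_\Omega$) to obtain $\Delta^E{}_F\hat T^{(N)}_{BC}=\delta^E_B\hat T^{(N)}_{FC}+\delta^E_C\hat T^{(N)}_{BF}$, $\Delta\hat T^{(N)}_{BC}=2\hat T^{(N)}_{BC}$ and $\Delta^\Omega{}_\Omega\hat T^{(N)}_{BC}=(N+2)\hat T^{(N)}_{BC}$. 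These reproduce the rotation terms $\hat C_B{}^A\hat T^{(N)}_{AC}+\hat C_C{}^A\hat T^{(N)}_{BA}$ and collapse the two Weyl-weight contributions to $(2-(N+2))\hat\lambda\hat T^{(N)}_{BC}=-N\hat\lambda\hat T^{(N)}_{BC}$, so that \eqref{QT} follows once \eqref{QTact} is established.

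The core of the proof is \eqref{QTact}. I would avoid a step-by-step recursion in favour of the global identity proved exactly as in Proposition~\bref{utv-commutator} but with $\Gamma_A$ in place of $\nabla_A$, namely $\commut{\nabla_\Omega^N}{\Gamma_A}=\sum_{i=0}^{N-1}C^i_N\big(\ad_{\nabla_\Omega}^{N-i}\Gamma_A\big)\nabla_\Omega^i$. Acting on $T_{BC}$ and using the base case $\Gamma_A T_{BC}=0$ (there is no $\lambda^a$-dependence in $QT_{BC}$) gives $\Gamma_A T^{(N)}_{BC}=-\sum_{i=0}^{N-1}C^i_N(\ad_{\nabla_\Omega}^{N-i}\Gamma_A)T^{(i)}_{BC}$. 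The two algebraic inputs I would verify are $\commut{\nabla_\Omega}{\Gamma_A}\equiv-\tilde\Lambda\Delta^\Omega{}_A$ and $\commut{\nabla_\Omega}{\Delta^\Omega{}_A}=-\nabla_A$, both read off from \eqref{commutrelations} with $g_{\Omega A}=0,\ g_{\Omega\Omega}=\tilde\Lambda$ (the stray $\Delta^E{}_\Omega$-piece of $\commut{\Gamma_A}{\nabla_\Omega}$ may be dropped throughout, since $\Delta^E{}_\Omega T^{(\cdot)}_{BC}=0$ and $\Delta^E{}_\Omega$ commutes with $\nabla_\Omega$). Together these give $\ad_{\nabla_\Omega}^{N-i}\Gamma_A\equiv\tilde\Lambda\,\mathcal{D}^{(N-2-i)}_A$ for $i\leq N-2$ and $\ad_{\nabla_\Omega}\Gamma_A\equiv-\tilde\Lambda\Delta^\Omega{}_A$ for $i=N-1$, whence $\Gamma_A T^{(N)}_{BC}=N\tilde\Lambda\,\Delta^\Omega{}_A T^{(N-1)}_{BC}-\tilde\Lambda\sum_{i=0}^{N-2}C^i_N\mathcal{D}^{(N-2-i)}_A T^{(i)}_{BC}$. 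Finally I would expand $\Delta^\Omega{}_A T^{(N-1)}_{BC}$ using $\Delta^\Omega{}_A T_{BC}=J_{ABC}+J_{ACB}$ (the pair symmetries of $\We$) and Proposition~\bref{commut-nablas}, whose nested binomial sum telescopes by the hockey-stick identity into $\hat J^{(N-1)}_{ABC}+\hat J^{(N-1)}_{ACB}+\sum_{i=0}^{N-2}C^i_{N-1}\mathcal{D}^{(N-2-i)}_A\hat T^{(i)}_{BC}$.

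The main obstacle is the combinatorial bookkeeping in assembling the two contributions: the coefficient of $\mathcal{D}^{(N-2-i)}_A\hat T^{(i)}_{BC}$ becomes $N\,C^i_{N-1}-C^i_N$, and I would finish with the identity $N\binom{N-1}{i}-\binom{N}{i}=\binom{N}{i}(N-1-i)=d^i_N$, while the $J$-terms carry overall factor $N$, yielding exactly \eqref{QTact}. A secondary point that must be controlled is projectability: all functions $\hat T^{(i)}_{BC},\hat J^{(N-1)}_{ABC}$ and operators $\mathcal{D}^{(m)}_A$ entering the argument are defined on $\hat\Ee$ only for orders $\leq D-4$, which is precisely what confines the statement to $N\leq D-3$ and where I would lean on the appendix computations of $\hat J^{(N)}_{ABC}$ and of $Q$ on $T^{(D-3)}_{AB}$, Propositions~\bref{App-QT} and~\bref{App-J-counted}.
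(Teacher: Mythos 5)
Your proposal is correct and follows essentially the same route as the paper: decompose $Q$ on degree-$0$ functions into $\nabla_a,\Delta^a{}_b,\Delta,\Gamma_a$, kill the $\Omega$-ghost components via \eqref{boundcond} and Lemma~\bref{lemma-I0}, and reduce everything to the commutator $[\Gamma_A,\nabla_\Omega^N]$ computed by iterated $\ad_{\nabla_\Omega}$, which is exactly the content of Lemma~\bref{Lemma-Gamma} in Appendix~\bref{sec:prof-com} (the paper even uses the same identity $N C^i_{N-1}-C^i_N=-d^i_N$). The only cosmetic difference is that the paper performs the reordering of $\nabla_\Omega^{N-1}$ past $\Delta^\Omega{}_C$ inside the lemma itself, whereas you do it when evaluating $\Delta^\Omega{}_A \hat T^{(N-1)}_{BC}$ via Proposition~\bref{commut-nablas} and the hockey-stick identity — the two bookkeepings agree.
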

The proof is based on the useful formula
\begin{equation}\label{formula Gamma}
       \hat b^{*}[\Gamma^{B},\nabla_{\Omega}^{N}]=\tilde{\Lambda}\sum_{i=0}^{N-2}d^{i}_N \mathcal D^{(N-2-i)|B}\hat b^{*}\nabla^{i}_\Omega+N\hat b^{*}\nabla_\Omega^{N-1}(\tilde\Lambda g^{BC}\Delta^{\Omega}{}_{C}-\Delta^{B}{}_{\Omega}),\qquad N\leq D-2
\end{equation}
and is placed in Appendix \bref{sec:prof-com}.

The above discussion confirms that all we need to know is the boundary calculus on $\hat \Ee$. We now turn to the recursive construction of the boundary calculus: 
\begin{theorem}\label{bound-calculus}
{\rm(Boundary calculus)}

The system $\hat T^{(N)}_{AB}\,, \hat J^{(N)}_{ABC}  
\,, \cD^{(N)}_A$ on $\hat \Ee$ satisfy the following relations
\begin{align}\label{formuli-calculus}
    \begin{split}
            \hat T^{(N)}_{BC}&=\frac{N\tilde{\Lambda}}{D-3-N}( \hat b^{*}\nabla_\Omega^{N-1}\nabla^{A}J_{A BC}), \quad 0\leq N\leq D-4,\\
        \hat J^{(N)}_{ABC}&=\dfrac{N}{N-1}\hat b^{*}\nabla_{\Omega}^{N-1}( \nabla_{A}T_{BC}-\nabla_{B}T_{AC})\,, \quad 2\leq N\leq D-4,
                \end{split}
\end{align}
and 
\begin{align}\label{formuli-D}
\begin{split}
        &\mathcal{D}_{A}^{(N)}f=\Big(-\hat J^{(N-1),D}{}_{CA}\Delta^{C}{}_{D}+\dfrac{1}{\tilde\Lambda N}\hat T^{(N)}_{CA}\Gamma^{C}-\frac{1}{N}\sum_{i=1}^{N-2} d^{i}_N \hat T^{(i),C}{}_{A}\mathcal{D}_{C}^{(N-2-i)}\Big)f\,, \quad 1\leq N\leq D-4\,.
\end{split}
\end{align}
Moreover, 
\begin{equation}\label{J-lower}
\hat J^{(0)}_{ABC}=0\,, \qquad 
\hat J^{(1)}_{ABC}=
\tilde{\Lambda} \dfrac{1}{D-4}\nabla^{D}\hat \We_{DCAB}\equiv -\tilde{\Lambda}\hat \Co_{CAB}\,,\qquad 
\hat T^{(0)}_{AB}=0\,, \qquad 
\mathcal{D}^{(0)}_{A}\equiv\nabla_A\,,\\
\end{equation}
and all the higher order $J^{(\cdot)}_{ABC},T^{(\cdot)}_{BC},\cD^{(\cdot)}_A$ are recursively determined by the above relations where each expression of the form $\hat b^{*}\nabla_\Omega^{N}\nabla_A f$  is evaluated as 
\begin{align}
\label{Omega-A-commut}
    \hat b^{*}\nabla_\Omega^{N}\nabla_A f=\sum_{i=0}^{N}C^{i}_{N}\mathcal{D}_{A}^{(N-i)}\hat f^{(i)}\,,
\end{align}
and the action of $\Gamma_A$ on $\hat T^{(N)}_{BC}$ is given by~\eqref{QTact}.
\end{theorem}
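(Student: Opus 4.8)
The plan is to establish all three families of relations simultaneously by induction on $N$, using that within a fixed order the computation closes: at step $N$ one first determines $\hat J^{(N)}$ and $\hat T^{(N)}$, each from boundary-calculus data of order $\leq N-1$, and only afterwards $\mathcal{D}^{(N)}_A$, whose expression~\eqref{formuli-D} refers to $\hat T^{(N)}$, $\hat J^{(N-1)}$ and strictly lower objects. The bound $N\leq D-4$ is exactly what keeps Proposition~\bref{commut-nablas} and the $\hat\Ee$-projectability of $\mathcal{D}^{(N)}_A$ available throughout. The base of the induction is immediate: $\hat J^{(0)}_{ABC}=0$ (since $J_{ABC}\in\cK^{(0)}$) and $\hat T^{(0)}_{AB}=0$ by the definition of $\hat\Ee$, while $\mathcal{D}^{(0)}_A=\nabla_A$ by definition; $\hat J^{(1)}$ follows from restricting the once-$\nabla_\Omega$-differentiated second Bianchi identity~\eqref{Bianchi} to $\hat\Ee$, where the $\nabla_{[A}T_{B]C}$ terms drop out because $\hat T^{(0)}=0$ and $g_{\Omega A}=0$, leaving the curvature divergence that reproduces the Cotton tensor~\eqref{J-lower}.

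For $\hat T^{(N)}$ there is little to prove beyond bookkeeping: the defining relations of $\hat\Ee$ in Definition~\bref{def-hatE} read $(D-3-N)\hat T^{(N)}_{BC}=N\tilde\Lambda\,\hat b^{*}\nabla_\Omega^{N-1}\nabla^{A}J_{ABC}$ for $N\neq D-3$, which is already the first line of~\eqref{formuli-calculus}. To make it computable I would rewrite the right-hand side with~\eqref{Omega-A-commut} (legitimate as $N-1\leq D-4$), turning it into $\sum_i C^{i}_{N-1}\mathcal{D}^{(N-1-i)|A}\hat J^{(i)}_{ABC}$, an expression purely in lower-order boundary calculus. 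The $\hat J^{(N)}$ relation is the analogue for the Weyl component $J$: it is the restriction to $\hat\Ee$ of the explicit formula for $J^{(N)}_{ABC}$ established in the appendix (Proposition~\bref{App-J-counted}) by iterating the second Bianchi identity; the rational prefactor $\frac{N}{N-1}$ arises when the $\nabla_\Omega$'s are commuted through the subleading curvature and Cotton terms and the outcome is re-expressed through the $T$--$J$ relations, after which the residual $\nabla_\Omega$-derivatives are again removed via~\eqref{Omega-A-commut}.

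The core of the argument is the recursion~\eqref{formuli-D} for $\mathcal{D}^{(N)}_A$, which I would obtain from $\mathcal{D}^{(N)}_A=[Q,\nu^{(N)}_A]$ together with the explicit expression~\eqref{nu-counted-bulk} for $\nu^{(N)}_A$. On a degree-zero $\hat\Ee$-projectable function $f$ one has $\nu^{(N)}_A f=0$, so $\mathcal{D}^{(N)}_A f=\nu^{(N)}_A(Qf)$, and each elementary vertical derivative occurring in $\nu^{(N)}_A$ converts into its associated operator: $\partial/\partial C_c{}^b\mapsto\Delta^{c}{}_b$, $\partial/\partial\lambda_d\mapsto\Gamma^d$, $\partial/\partial\lambda\mapsto\Delta$, and $\nu^{(i-1)}_b\mapsto\mathcal{D}^{(i-1)}_b$. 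Restricting the coefficients to $\hat\Ee$ and using $W^{B}{}_{C\Omega A}=J^{B}{}_{CA}$, the Cotton expressions~\eqref{Cotton-T}, and $g_{\Omega A}=0$, $g_{\Omega\Omega}=\tilde\Lambda$, $\hat T^{(0)}=0$, the projectable part of the coefficients collapses onto $\hat J^{(N-1)}$, $\hat T^{(N)}$ and the lower $\hat T^{(i)}$, reproducing the three groups of terms in~\eqref{formuli-D}. As this is an operator identity, it suffices to check both sides on the degree-zero generators of $\cC^\infty(\hat\Ee)$ and extend by the derivation property.

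The main obstacle is this last step. A priori the same coefficients also generate the non-projectable operators $\Delta^{\Omega}{}_D$, $\Delta^{D}{}_\Omega$ and $\Delta$ with generically nonzero coefficients; for instance the $c=\Omega$ part of the coefficient of $\Delta^{c}{}_b$ picks up a term $\propto\hat T^{(N-1)}$ multiplying $\Delta^{\Omega}{}_D$. Since $\mathcal{D}^{(N)}_A$ is already known to be $\hat\Ee$-projectable for $N\leq D-4$, all these contributions must cancel, and exhibiting the cancellation is where the real work lies: the non-projectable pieces have to be shown to assemble into the specific combination $\tilde\Lambda g^{BC}\Delta^{\Omega}{}_C-\Delta^{B}{}_\Omega$ that is controlled by the appendix identity~\eqref{formula Gamma} for $\hat b^{*}[\Gamma^{B},\nabla_\Omega^{N}]$, which, together with the companion computations of $J^{(N)}_{ABC}$ and $QT^{(N)}_{AB}$ (Propositions~\bref{App-J-counted}, \bref{App-QT}), makes them drop out. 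I would invoke these appendix identities precisely at this point, so that the main induction reduces to the bookkeeping of the projectable terms described above.
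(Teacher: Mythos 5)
Your proposal follows essentially the same route as the paper: $\hat T^{(N)}$ is read off from Definition~\bref{def-hatE}, $\hat J^{(N)}$ comes from the Bianchi-identity computation of Proposition~\bref{App-J-counted}, and $\mathcal{D}^{(N)}_A$ is obtained by converting the vertical derivatives in the explicit expression for $\nu^{(N)}_A$ into the operators $\Delta^{C}{}_{D}$, $\Gamma^{C}$, $\mathcal{D}^{(i)}_{C}$, with iterativity supplied by the computability of the $\Delta$- and $\Gamma$-actions from the $Q$-action on coordinates. The cancellation of the non-projectable pieces that you correctly single out as the crux is exactly what Proposition~\bref{App-nu-counted} establishes, though there it is controlled by $\nabla_\Omega^{N-1}\Co_{\Omega\Omega A}\in\cK^{(N+1)}$ and $T^{(N)}_{DA}+N\tilde\Lambda\nabla_\Omega^{N-1}\Co_{D\Omega A}\in\cK^{(N+1)}$ rather than by~\eqref{formula Gamma}, which enters the recursion only through the $\Gamma_A\hat T^{(N)}$ formula~\eqref{QTact}.
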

\begin{proof}
The formula for $\hat T^{(N)}_{AB}$ is a direct consequence of the definition of $\hat \Ee$, see Definition~\bref{def-hatE}.
The proof of the formula for $\hat J^{(N)}_{ABC}$ follows from the technically involved Proposition~\bref{App-J-counted} placed in the Appendix. Similarly, in Proposition ~\bref{App-nu-counted} the formula for $\nu_A^{(N)}$ is proven, from which the formula for $\mathcal{D}_A^{(N)}$ directly follows. Finally, to prove the iterativity of the proposed procedure, it is important to note that the action of $\Delta^{A}{}_{B}$ and $\Gamma_C$ on functions in $\cC^\infty(\hat \Ee)$ can be computed from the action of $Q$ on the coordinates given in $\eqref{Q-gran-background}$, $\eqref{QT}$.
\end{proof}
\begin{remark}\label{remark-higherN}
    It follows from the proofs in Appendix~\bref{sec:prof-com} and Appendix ~\bref{sec:prof-BC} that formulas 
~\eqref{QTact}, \eqref{formula Gamma}, \eqref{formuli-calculus} and \eqref{formuli-D} remain valid for higher $N$ as well, but in this case only modulo the ideal $\hat{\cI}$ (which in fact means that they hold on the on-shell gPDE $\cE \subset \hat \Ee$). We do not need this fact in studying gravity and Yang-Mills, but it  becomes important for the analysis of the extended GJMS equations in Section~\bref{sec:GJMS}.
\end{remark}

\begin{cor}
Functions 
\begin{equation}
\hat T^{(N)}_{AB}\,,~~\hat J^{(N)}_{ABC}\,,~~ \cD^{(M)}_A \hat T^{(L)}_{BC}\,,~~ \cD^{(M)}_A \hat J^{(L)}_{BCD}\,,~~ \qquad M+L\leq N
\end{equation}
 are $\cT_{AB}$-independent for $N\leq D-4$ and hence are functions pulled back from $\Ee^{conf}$ by $\hat \Ee \to \Ee^{conf} $. In particular, the submanifold $\cE^{conf} \subset \Ee^{conf}$ singled out by $\nabla_{(C)}\cO_{AB}$ is a sub-gPDE.  
\end{cor}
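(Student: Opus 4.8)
The plan is to prove, by a single induction on the differential order $N$, that the boundary-calculus functions $\hat T^{(N)}_{AB},\hat J^{(N)}_{ABC}$ ($N\le D-4$) belong to the subalgebra $\mathcal A\equiv\pi^{*}\cC^{\infty}(\Ee^{conf})\subset\cC^{\infty}(\hat\Ee)$ of $\cT$-independent functions, and that the operators $\cD^{(N)}_A$ ($N\le D-4$) map $\mathcal A$ into itself; here $\pi:\hat\Ee\to\Ee^{conf}$ is the $Q$-bundle of the Proposition asserting that $(\hat\Ee,Q)$ is a $Q$-bundle over $(\Ee^{conf},Q)$. Granting this, any composite $\cD^{(M)}_A\hat T^{(L)}_{BC}$ or $\cD^{(M)}_A\hat J^{(L)}_{BCD}$ with $M+L\le N\le D-4$ is an operator preserving $\mathcal A$ (since $M\le D-4$) applied to an element of $\mathcal A$ (since $L\le D-4$), hence lies in $\mathcal A$, which is exactly the assertion.

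Before the induction I record that $\nabla_A=\cD^{(0)}_A$, $\Delta^{A}{}_B$ and $\Gamma^{A}$ preserve $\mathcal A$. This is because the bundle structure makes the $Q$-action \eqref{Q-gran-background} on the conformal coordinates close on $\mathcal A$; since these three fields are $[\,\partial/\partial C^M,Q]$ for the conformal ghosts $C^M\in\{\xi^A,C_A{}^B,\lambda^A\}$, and $\partial/\partial C^M$ manifestly preserves $\mathcal A$, so do their commutators with $Q$.

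For the induction, the base data \eqref{J-lower} give $\hat T^{(0)}=0$, $\hat J^{(0)}=0$, $\hat J^{(1)}_{ABC}=-\tilde\Lambda\hat\Co_{CAB}\in\mathcal A$, and $\cD^{(0)}_A=\nabla_A$ preserves $\mathcal A$. Assume the statement holds for all orders $\le N-1$. I treat order $N$ in two stages that do not interfere. First, the recursions \eqref{formuli-calculus} combined with the commutation identity \eqref{Omega-A-commut} exhibit $\hat T^{(N)}_{BC}$ and $\hat J^{(N)}_{ABC}$ as finite sums of terms $\cD^{(N-1-i)}_A\hat J^{(i)}_{BCE}$ and $\cD^{(N-1-i)}_A\hat T^{(i)}_{BC}$ of total order $(N-1-i)+i=N-1$; by the inductive hypothesis these lie in $\mathcal A$, so $\hat T^{(N)},\hat J^{(N)}\in\mathcal A$ (the range $N\le D-4$ is exactly what makes \eqref{formuli-calculus} applicable). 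Second, for $f\in\mathcal A$ the recursion \eqref{formuli-D} writes $\cD^{(N)}_A f$ as a sum of products of the coefficients $\hat J^{(N-1)},\hat T^{(N)},\hat T^{(i)}$ ($i\le N-2$) — all in $\mathcal A$ by the first stage and the hypothesis — with the factors $\Delta^{C}{}_D f,\Gamma^{C}f,\cD^{(N-2-i)}_C f$, which lie in $\mathcal A$ by the preliminary remark and the hypothesis; hence $\cD^{(N)}_A f\in\mathcal A$. The two stages are consistent because the formula for $\hat T^{(N)}$ calls only on $\cD^{(\le N-1)}$ and $\hat J^{(\le N-1)}$, so it is settled before $\cD^{(N)}$ — which in turn uses $\hat T^{(N)}$ — is treated. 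This closes the induction and establishes the main assertion.

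Finally, for the ``in particular'' part, \eqref{obstr-hatEB} presents $\cO_{AB}$ as a sum of terms $\cD^{(D-4-i)|C}\hat J^{(i)}_{CAB}$ of total order $D-4$, so the proven statement gives $\cO_{AB}\in\mathcal A$, and hence $\nabla_{(C)}\cO_{AB}\in\mathcal A$ as well; thus these functions descend to $\Ee^{conf}$ and cut out $\cE^{conf}$. To see that $\cE^{conf}$ is a $Q$-subbundle I must show the ideal $\bar\cJ=\langle\nabla_{(C)}\cO_{AB}\rangle\subset\cC^\infty(\Ee^{conf})$ is $Q$-invariant. Since $Q$ preserves $\hat\cI=\langle\nabla_{(C)}\cO_{AB}\rangle\cup\langle\nabla_{(C)}\cO_A\rangle$ and $\cO_{AB}\in\mathcal A$, the function $Q\cO_{AB}$ is $\cT$-independent and vanishes on $\cE$; because the subleading equations $\cO_A=0$ merely constrain the fibre coordinates $\nabla_{(A)}\cT_{BC}$ and are non-obstructive (by Proposition~\bref{prop-div-obstr}, $\cO_A$ is a potential for $\cO_{AB}$), the projection $\pi$ maps $\cE$ onto $\cE^{conf}$, so the $\cT$-independent function $Q\cO_{AB}$, vanishing on $\cE$, must vanish on $\cE^{conf}$ and therefore lie in $\bar\cJ$. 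Together with $[Q,\nabla_A]=0$ and $\nabla_A\bar\cJ\subseteq\bar\cJ$ this yields $Q\bar\cJ\subseteq\bar\cJ$. I expect this last point to be the main obstacle: disentangling the $\cO_A$-generators from the $\cO_{AB}$-generators of $\hat\cI$, which is precisely what the surjectivity of $\pi|_{\cE}$ — itself a consequence of the non-obstructive, conservation-type nature of the subleading equations — is designed to resolve.
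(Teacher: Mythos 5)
Your induction on the order $N$ through the recursions of Theorem~\bref{bound-calculus}, starting from the $\cT$-independent seed data in \eqref{J-lower} and using that $\nabla_A$, $\Delta^A{}_B$, $\Gamma^A$ preserve the conformal subalgebra, is exactly the argument the paper intends (it states the corollary without proof as an immediate consequence of the recursive structure), and your ordering of the two stages correctly handles the fact that $\cD^{(N)}_A$ calls on $\hat T^{(N)}$. The final step — descending $Q$-invariance of $\langle\nabla_{(C)}\cO_{AB}\rangle$ to $\Ee^{conf}$ via surjectivity of $\pi|_{\cE}$, which holds because the $\nabla_{(C)}\cO_A$ constraints are affine in the fibre coordinates with surjective (divergence-type) linear part — is also sound and fills in a point the paper treats as implicit.
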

In this sense the boundary calculus as defined above can be defined solely in terms of the conformal geometry and hence can be considered in the conformal geometry setup. The functions $\hat T_{AB}^{(N)}$ for $N\leq D-4$ are closely related to what Graham in \cite{graham2009extended} referred to as extended obstruction tensors.
\begin{cor}
The gPDE $\cE$ induced on the boundary (i.e. a sub-gPDE of $\hat \Ee$ singled out by $\hat \cI$) is a $Q$-bundle over the gPDE $\cE^{conf}$. 
\end{cor}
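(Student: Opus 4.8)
The plan is to show that the bundle projection $\pi\colon \hat \Ee\to\Ee^{conf}$, which the preceding Proposition already establishes to be a $Q$-map with fibre coordinates $\nabla_{(A)}\cT_{BC}$, restricts to a $Q$-bundle map onto the sub-gPDE $\cE^{conf}$. The first step is to split the generators of $\hat\cI$ according to their behaviour under $\pi$. Recall that $\hat\cI$ is generated by the prolongations $\nabla_{(C)}\cO_{AB}$ and $\nabla_{(C)}\cO_{A}$, and these two families will be seen to play complementary roles: the first restricts only the base, the second only the fibre.

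For the base generators, the preceding Corollary tells us that $\cO_{AB}$ and all its $\nabla_A$-prolongations are $\cT_{AB}$-independent, hence pulled back along $\pi$ from $\Ee^{conf}$, where, again by that Corollary, they cut out precisely the sub-gPDE $\cE^{conf}$. Consequently $\pi(\cE)\subseteq\cE^{conf}$, and the common zero locus of the $\nabla_{(C)}\cO_{AB}$ in $\hat \Ee$ is exactly the restriction $\pi^{-1}(\cE^{conf})$, i.e.\ the $Q$-subbundle obtained by pulling back $\hat\Ee\to\Ee^{conf}$ to $\cE^{conf}$.

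It then remains to analyse the fibre generators. By \eqref{obstr-hatEB} one has $\cO_{A}=\nabla^{C}\cT_{AC}+\sum_{i=1}^{D-4}C^{i}_{D-3}\cD^{(D-3-i)|C}\hat T^{(i)}_{AC}$, where every summand carries a factor $\hat T^{(i)}_{AC}$ with $i\le D-4$ and an operator $\cD^{(D-3-i)}_{A}$ with $D-3-i\le D-4$. A short induction on the order, based on the recursion \eqref{formuli-D} together with the fact that $\nabla_A$, $\Delta^{A}{}_{B}$ and $\Gamma^{A}$ each preserve the subalgebra of functions pulled back from $\Ee^{conf}$, shows that every $\cD^{(M)}_{A}$ with $M\le D-4$ preserves this subalgebra; hence each summand is $\cT_{AB}$-independent and $\cO_{A}$ is affine-linear in the fibre coordinate $\cT_{AB}$, with linear (divergence) leading term and remaining part pulled back from $\Ee^{conf}$. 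Applying $\nabla_{(C)}$ and commuting derivatives through \eqref{commutrelations} preserves this affine-linear structure, now in the fibre coordinates $\nabla_{(A)}\cT_{BC}$ with base-dependent coefficients, so that over each point of $\cE^{conf}$ the equations $\nabla_{(C)}\cO_{A}=0$ single out an affine subspace of the fibre of $\pi$. Thus $\cE$ is identified with a subbundle of $\pi^{-1}(\cE^{conf})$ fibred over $\cE^{conf}$.

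To upgrade this to a $Q$-bundle I would finally invoke the $Q$-structures already in place: $\cE$ is a $Q$-submanifold of $\hat \Ee$ since $Q\hat\cI\subseteq\hat\cI$ and $Q$ is tangent to $\cE$ by construction, while $\cE^{conf}$ is a $Q$-submanifold of $\Ee^{conf}$ by the preceding Corollary; as $\pi$ is a $Q$-map carrying $\cE$ into $\cE^{conf}$, its restriction is again a $Q$-map, and the three data together realise $\cE$ as a $Q$-bundle over $\cE^{conf}$. The step I expect to be the main obstacle is not the $Q$-compatibility but the verification that the fibre equations $\nabla_{(C)}\cO_{A}=0$ have constant rank along $\cE^{conf}$, so that the affine fibres genuinely assemble into a locally-trivial subbundle rather than a set-theoretic fibration with jumping fibre dimension. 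This reduces to checking that the divergence symbol of the prolonged system, together with the trace and divergence identities of Proposition~\bref{prop-div-obstr}, has base-independent rank on the symmetric trace-free jet space of $\cT_{AB}$ --- a finite, representation-theoretic computation rather than a conceptual difficulty.
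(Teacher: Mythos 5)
Your proposal is correct and follows essentially the same route as the paper: the generators of $\hat\cI$ split into $\nabla_{(C)}\cO_{AB}$, which are pulled back from $\Ee^{conf}$ and cut out $\cE^{conf}$, and $\nabla_{(C)}\cO_{A}$, which restrict only the fibre coordinates $\nabla_{(C)}\cT_{AB}$, so $\cE$ is a subbundle of $\hat\Ee\to\Ee^{conf}$ pulled back to $\cE^{conf}$. Your closing worry about constant rank is moot in the paper's setting, since all bundles are assumed globally trivial and $\cO_A$ is affine in the fibre coordinates with leading term the divergence $\nabla^C\cT_{AC}$.
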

Indeed, the remaining constraints $\nabla_{(C)} \cO_{A}$ do not restrict the base and hence define a subbundle of $\hat \Ee$ pulled back to $\cE^{conf}$. We can conclude that the boundary system can be seen as the conformal geometry subject to some conformal invariant equations arising from $\cO_{AB}$ together with the ``matter-like'' field $\cT_{AB}$ subject to equations arising from $\cO_{A}$.

Another simple consequence of the above theorem is:
\begin{prop}\label{utv-nechet}
For any odd integer $i$ such that  $1\leq i\leq D-4$ we have: 
\begin{equation}
\hat T^{(i)}_{BC}=0\,, \qquad \cD^{(i)}_{A}=0\,, \qquad \hat J^{(i-1)}_{ABC}=0\,.
\end{equation}
\end{prop}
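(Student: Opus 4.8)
The plan is to prove the three vanishing statements simultaneously by a single strong induction on $N$ in the range $0\leq N\leq D-4$, organised around the $\ZZ_2$-grading that assigns the parity of $N$ to $\hat T^{(N)}_{BC}$ and to $\cD^{(N)}_A$, and the parity of $N+1$ to $\hat J^{(N)}_{ABC}$. Geometrically this is just the total $\Omega$-count (each $\nabla_\Omega$ and each explicit lower $\Omega$-index contributes one), and the assertion is the familiar fact that the near-boundary expansion is even below the obstruction order, i.e. a reflection symmetry $\Omega\to-\Omega$. In this language the three claims coincide with the single statement that every odd-parity object vanishes, namely $\hat T^{(N)}_{BC}=0$ and $\cD^{(N)}_A=0$ for $N$ odd and $\hat J^{(N)}_{ABC}=0$ for $N$ even, which is exactly what is asserted for odd $i$.

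The base of the induction is $N=0$: the only odd-parity object there is $\hat J^{(0)}_{ABC}$, which vanishes by the initial data \eqref{J-lower}, while $\hat T^{(0)}_{BC}=0$ and $\cD^{(0)}_A=\nabla_A$ are recorded for later use. For the inductive step I would first note that the recursion of Theorem \bref{bound-calculus} is well-ordered: inspecting \eqref{formuli-calculus} and \eqref{formuli-D} together with the commutator identity \eqref{Omega-A-commut} shows that $\hat J^{(N)}$ and $\hat T^{(N)}$ are expressed purely through calculus objects of strictly lower level, whereas $\cD^{(N)}_A$ involves $\hat J^{(N-1)}$, $\hat T^{(N)}$ and objects of level $<N$. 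Hence at level $N$ one computes $\hat J^{(N)}$ and $\hat T^{(N)}$ first and $\cD^{(N)}_A$ last, with no circular dependence, and all quantities appearing on the right are covered by the induction hypothesis.

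The heart of the argument is a parity count in each of the three recursions, using that index raisings and the constants $\tilde\Lambda$, $C^i_N$, $d^i_N$ are parity-neutral. Expanding the first relation of \eqref{formuli-calculus} with \eqref{Omega-A-commut} writes $\hat T^{(N)}_{BC}$ as a sum of terms proportional to $\cD^{(N-1-i)|A}\hat J^{(i)}_{ABC}$; such a term can survive only if $N-1-i$ is even (so $\cD^{(N-1-i)}\neq 0$) and $i$ is odd (so $\hat J^{(i)}\neq 0$), which forces $N-1$ odd and hence $N$ even, so for $N$ odd every term dies. The same count on the second relation of \eqref{formuli-calculus} shows that for $N$ even each term $\cD^{(N-1-i)}_A\hat T^{(i)}_{BC}$ of $\hat J^{(N)}$ would need both $N-1-i$ and $i$ even, again impossible, giving $\hat J^{(N)}_{ABC}=0$. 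Finally, for $N$ odd the three groups in \eqref{formuli-D} vanish in turn: $\hat J^{(N-1)}$ carries an even index, $\hat T^{(N)}$ was just shown to vanish, and each product $\hat T^{(i)}\,\cD^{(N-2-i)}_C$ needs $i$ and $N-2-i$ both even, i.e. $N$ even; hence $\cD^{(N)}_A=0$ as an operator.

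I expect the only real care — rather than a genuine obstacle — to lie in the bookkeeping at the low end of the range, where the closed recursions do not yet apply. The $\hat J$-recursion in \eqref{formuli-calculus} is valid only for $N\geq 2$, so $\hat J^{(0)}$ and $\hat J^{(1)}$ must be read off from \eqref{J-lower}, and one must check that $\hat J^{(1)}$ (the Cotton tensor, odd upper index) is correctly not among the objects claimed to vanish. One should also confirm that the level stays within $N\leq D-4$, so that Proposition \bref{commut-nablas}, i.e. the identity \eqref{Omega-A-commut}, applies and the objects appearing on its right are the genuine boundary-calculus quantities; since every index occurring in the step is $\leq N-1\leq D-5$, this is automatic.
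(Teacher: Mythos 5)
Your proof is correct and follows essentially the same route as the paper's: a parity-counting induction driven by the recursion relations of Theorem~\ref{bound-calculus} together with the commutator expansion \eqref{Omega-A-commut}, with the base read off from \eqref{J-lower}. The only cosmetic difference is that the paper runs the induction in steps of two over the triple $(\hat T^{(2k+1)},\nu_A^{(2k+1)},\hat J^{(2k)})$ and establishes the slightly stronger vanishing of the degree $-1$ vector fields $\nu_A^{(2k+1)}$ via \eqref{nu-count-1}, whereas you induct level by level and conclude $\cD^{(N)}_A=0$ directly from \eqref{formuli-D}; both orderings are consistent with the dependency structure of the recursion.
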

The proof is relegated to the Appendix~\bref{sec:proof-TDJ}

\section{Boundary structure of gravity}
\label{sec:boundary-GR}
After having developed boundary calculus for asymptotically AdS GR let us now have a close look at the gauge theory induced on the asymptotic boundary. The case of even- and odd-dimensional boundary are quite different and we treat them separately.

\subsection{Odd-dimensional boundary}\label{sec:odd}
Now we discuss the case where the boundary dimension $d\equiv D-1$ is odd and assume $d\geq 3$. 
Let us first discuss the case $d=3$, which we have excluded since Proposition~\bref{form-coord-E0}.
\begin{prop}
    For $d=3$ the boundary gPDE $\hat{\Ee}$  can be described in terms of fibre coordinates 
    \begin{align}
        \hat \xi^{B},\quad \hat C_{B}{}^{C},\quad \hat \lambda,\quad \hat \lambda_{B},\quad    \hat g_{BC},\quad \nabla_{((A)} \hat J^{(1)|B}{}_{C)D},\qquad \nabla_{(A)} \cT_{BC},\quad  |A|\geq0
    \end{align}
    where  $\cT_{BC}\equiv \hat b^{*}T^{(1)}_{BC}$, the action of $Q$ is
\begin{equation} \label{Q-d3}
\begin{aligned}
        Q \hat g_{BC}=&~\hat C_{B}{}^A \hat g_{AC}+\hat C_{C}{}^A \hat g_{BA}+2\hat \lambda \hat g_{BC},\\
         Q\hat\lambda_A=&~\hat C_{A}{}^{B}\hat \lambda_B  +\dfrac{1}{2}\hat \xi^B \hat \xi^C \hat J^{(1)}{}_{BCA},\\
        Q\hat \lambda=&~\hat \xi^A \hat \lambda_A,\quad Q\hat \xi^B=\hat \xi^A \hat C_{A}{}^{B}\\
        Q \hat C_{B}{}^C=&~\hat C_{B}{}^A \hat C_{A}{}^C+\hat \lambda_B\hat \xi^C-\hat \lambda^C \hat \xi_B+\delta_{B}^C \hat \lambda_A \hat \xi^A \,\\
        QJ^{(1)}_{BCD}=&\hat\xi^{A}\nabla_{A}\hat J^{(1)}_{BCD}+\hat C_B{}^{A}\hat J^{(1)}_{ACD}+\hat C_{C}{}^{A}\hat J^{(1)}_{BAD}+\hat C_{D}{}^{A}\hat J^{(1)}_{BCA},\\
         Q\cT_{BC}=&\hat \xi^{A}\nabla_{A}\cT_{BC}+\hat C_{B}{}^{A}\cT_{AC}+\hat C_{C}{}^{A}\cT_{BA}-\hat \lambda \cT_{BC}
\end{aligned}
\end{equation}
and we have $\cO_{AB}=0$, $\cO_B=\nabla^A\cT_{AB}.$
\end{prop}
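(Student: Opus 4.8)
The plan is to treat $d=3$ (equivalently $D=4$) as a genuine special case, since both the coordinatisation of Proposition~\bref{utv-coord-E0} and the normalisation \eqref{J-lower} degenerate here: the factor $1/(D-4)$ is singular, and, decisively, \emph{every} component of the bulk Weyl tensor restricts to zero on $\hat\Ee$. Indeed the purely tangential part $\hat\We_{ABCD}$ is the three-dimensional Weyl tensor and vanishes identically, while $\hat b^{*}\We_{AB\Omega C}=\hat J^{(0)}_{ABC}=0$ and $\hat b^{*}\We_{\Omega A\Omega B}=\hat T^{(0)}_{AB}=0$ by Lemma~\bref{lemma-I0}. First I would re-run the Bianchi analysis from the proof of Proposition~\bref{utv-coord-E0} at $D=4$: with $\hat\We_{abcd}=0$, the curvature sector of $\hat\Ee$ collapses to $\nabla_{(A)}\hat J^{(1)}_{BCD}$, which by \eqref{Cotton-T} is proportional to the three-dimensional Cotton tensor $\hat\Co_{ABC}$, together with the subleading $\nabla_{(A)}\cT_{BC}$, $\cT_{BC}\equiv\hat b^{*}T^{(1)}_{BC}$. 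This yields the claimed coordinate list.

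For the $Q$-action I would specialise \eqref{Q-gran-background}: setting $\hat\We_{ABCD}=0$ deletes the $\frac{1}{2}\hat\xi^A\hat\xi^D\hat\We^{C}{}_{BAD}$ term from $Q\hat C_B{}^C$, reproducing the stated formula, and the Cotton term surviving in $Q\hat\lambda^B$ is re-expressed through $\hat J^{(1)}$ using \eqref{Cotton-T} and $\hat b^{*}\nabla_D\We^{D}{}_{ABC}=0$; fixing the constant and index placement here is routine bookkeeping. The transformation $Q\hat J^{(1)}_{BCD}$ I would obtain directly in the ambient $\Ee$, where $[\nabla_\Omega,Q]=0$ holds (because $\nabla_a=[\partial_{\xi^a},Q]$ and $Q^2=0$), so that $Q\hat J^{(1)}_{BCD}=\hat b^{*}\nabla_\Omega(Q\We_{BC\Omega D})$; this is legitimate because $J^{(1)}$ is $\hat\Ee$-projectable for $N=1\leq D-3$. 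Expanding $Q\We_{BC\Omega D}$ by \eqref{W-counted} and restricting, everything proportional to $\hat b^{*}\We_{abcd}=0$ and to $\hat J^{(0)},\hat T^{(0)}$ drops out; the one genuinely subtle point is the weight. The naive $+2\hat\lambda\hat J^{(1)}$ coming from lowering the first Weyl index is exactly cancelled by two $-\hat\lambda$ contributions, arising from $\nabla_\Omega\hat\xi^\Omega=C_{\Omega}{}^{\Omega}$ and from the rotation $C_{\Omega}{}^{\Omega}$ of the frozen $\Omega$-index, since the boundary conditions \eqref{boundcond} give $\hat b^{*}C_{\Omega}{}^{\Omega}=-\hat\lambda$ (using $\hat b^{*}\Omega^{(1)}=\tilde\Lambda$ and $\nabla_B\Omega^{(1)}\in\cK^{(0)}$). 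The outcome is the weight-zero tensorial transformation displayed, consistent with $\hat J^{(1)}$ being the fundamental conformal curvature of the three-dimensional boundary; alternatively one may read it off from the $Q$-bundle structure $(\hat\Ee,Q)\to(\Ee^{conf},Q)$ specialised to $d=3$. The subleading transformation $Q\cT_{BC}$ then follows from Proposition~\bref{QTAB} with $N=1$ (admissible as $N\leq D-3=1$): the term $\Gamma_A\hat T^{(1)}_{BC}$ vanishes because the sum in \eqref{QTact} is empty and $\hat J^{(0)}=0$, leaving the weight factor $-\hat\lambda$ and the stated formula.

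It remains to evaluate the symbols \eqref{hatOdef}. For $\cO_{AB}=\hat b^{*}\nabla^{C}J_{CBA}$ I would use that $\nabla_C=\mathcal{D}^{(0)}_C$ is $\hat\Ee$-projectable, whence $\cO_{AB}=\nabla^{C}\hat J^{(0)}_{CBA}=0$. The hard part is $\cO_A=\hat b^{*}\nabla_\Omega\nabla^{B}T_{BA}$, because the boundary calculus of Theorem~\bref{bound-calculus} and Proposition~\bref{commut-nablas} is only guaranteed for $N\leq D-4=0$, which is vacuous at $D=4$; the required $N=1$ commutation must therefore be carried out by hand in $\Ee$. Writing $\nabla_\Omega\nabla^{B}T_{BA}=\nabla^{B}\nabla_\Omega T_{BA}+[\nabla_\Omega,\nabla^{B}]T_{BA}$ and using metric compatibility $\nabla_a g_{bc}=0$ (immediate from \eqref{gran-ish1}), the first term restricts to $\nabla^{B}\cT_{BA}$. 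For the commutator $[\nabla_\Omega,\nabla_B]=\mathcal{D}^{(1)}_B$ I would expand via the last line of \eqref{commutrelations}: acting on the ghost-independent $T_{CA}$ annihilates the $\partial_{C}$ and $\partial_\lambda$ pieces; the $\We^{d}{}_{c\Omega B}\Delta^{c}{}_d$ piece vanishes under $\hat b^{*}$ since $\hat b^{*}\We=0$; and the $\Co_{d\Omega B}\Gamma^d$ piece vanishes because its tangential part acts through the projectable $\Gamma^A$ on $\hat T^{(0)}_{CA}=0$ while its $\Omega$-component has vanishing coefficient $\hat b^{*}\Co_{\Omega\Omega B}=\nabla_A\hat T^{(0)A}{}_B=0$. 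Hence the commutator contributes nothing and $\cO_A=\nabla^{B}\cT_{BA}$, completing the proof. The recurring obstacle is precisely that the generic-$D$ recursion is unavailable at $D=4$, so all $N=1$ objects must be computed directly in $\Ee$, where the total collapse $\hat b^{*}\We_{abcd}=0$ and the explicit boundary conditions \eqref{boundcond} are the decisive simplifications.
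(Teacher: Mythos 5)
Your route is essentially the paper's: both arguments turn on the degeneration of the Bianchi analysis at $D=4$, the vanishing of the purely tangential Weyl tensor in three dimensions (which, as you correctly note, uses $\hat T^{(0)}_{AB}=0$ to get tracelessness), the identification of $\hat\Co_{ABC}$ with $\hat J^{(1)}$ through \eqref{Cotton-T}, and low-order direct computations for $Q\hat J^{(1)}$, $Q\cT_{BC}$, $\cO_{AB}$ and $\cO_A$. The paper dismisses the latter as ``straightforward'', so your explicit weight cancellation in $Q\hat J^{(1)}_{BCD}$, the use of Proposition~\bref{QTAB} at $N=1$, and the hand computation of $\hat b^{*}[\nabla_\Omega,\nabla^B]T_{BA}$ (correctly flagging that Proposition~\bref{commut-nablas} is vacuous at $D=4$) are a legitimate filling-in of the same proof rather than a different one, and they check out.

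There is one concrete gap, in the completeness of the coordinate list. You claim that re-running the Bianchi analysis of Proposition~\bref{utv-coord-E0} makes the curvature sector ``collapse'' to $\nabla_{(A)}\hat J^{(1)}_{BCD}$ and $\nabla_{(A)}\cT_{BC}$, but that analysis determines the $J$-sector through \eqref{proof-coordE0-2}, whose left-hand side carries the prefactor $\tfrac{D-4}{D-3}$. At $D=4$ this identity therefore fails to determine $\nabla_\Omega^{N}J_{ABC}$ for \emph{every} $N\geq 1$, not only for $N=1$: applying $\nabla_\Omega^{N-1}$ to it just reproduces the degenerate equation one order up. On its own, your argument would leave all of $\hat J^{(2)},\hat J^{(3)},\dots$ as candidate independent coordinates, contradicting the stated list. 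The paper closes this by invoking the second Bianchi identity $\nabla_{[a}\Co^{b}{}_{cd]}=0$ from \eqref{Bianchi} with two $\Omega$ indices; since $\Co_{\Omega\Omega B}$ and $\Co_{A\Omega B}$ are expressed through $T^{(1)}_{AB}$ and $\nabla^{C}J_{CAB}$ by \eqref{Cotton-T} (whose coefficient $1/(D-3)$ is regular at $D=4$), this determines $\nabla_{(A)}J^{(N)}_{BCD}$ for all $N\geq 2$ in terms of $\hat J^{(1)}$, $\cT_{BC}$ and their $\nabla_A$-derivatives. This step does not emerge from the part of the Bianchi analysis you propose to re-run and has to be supplied separately; with it added, your proof is complete.
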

\begin{proof}
    The main difference here is that in this dimension the Bianchi identities, as follows from \eqref{proof-coordE0-2}, do not fix $J^{(1)}_{BCD}$.  
However, by substituting two indices $\Omega$ into $\nabla_{[a}C^{b}{}_{cd]}=0$, we observe that $\nabla_{(A)}J^{(N)}_{ABC}$, $N\geq 2$, are still determined by the Bianchi identities.  
Furthermore, from the algebraic symmetries of the Weyl tensor we have that in dimension $d=3$ $b_0^{*}W^{B}{}_{CDE}=0$.  
Therefore, using $C_{ABC}=-\nabla_d W^{d}{}_{ABC}$, we find that 
\begin{align}
    \hat C_{ABC}=-\tilde\Lambda\, \hat J^{(1)}_{BCA}
\end{align}
which leads to the formula for $Q\hat\lambda_{B}$. The computation for $Q\cT_{AB}$, $QJ^{(1)}_{BCD}$, and $\cO_{AB}$ is straightforward.
\end{proof}

 Note that again all coordinates except for $\nabla_{(A)}\cT_{AB}$ can be interpreted as coordinates on the base of the $Q$-bundle $\hat \Ee\to \Ee^{conf}$, where the action of $Q$ on $\Ee^{conf}$ coincides with the formulas given for the BRST complex of three-dimensional conformal geometry in \cite{Boulanger:2004eh}, with $\hat J^{(1)}_{ABC}$ playing the role of the three-dimensional Cotton tensor.

 For a general odd-dimensional boundary, we have:
\begin{prop}\label{J-nechet}
    For $d$ odd the boundary gPDE $\hat \Ee$ introduced in Section~\bref{sec:boundary-gPDE} is such that:
    \begin{enumerate}
        \item ${{\cO}}_{AB}=0$
        \item $\Gamma_{A}\cT_{BC}=0$
        \item $\cO_A=\nabla^{B}\cT_{BA}$
    \end{enumerate}
\end{prop}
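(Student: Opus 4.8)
The plan is to establish the three claims as essentially immediate consequences of Proposition~\bref{utv-nechet} (the vanishing of odd-labelled calculus objects) combined with the boundary-calculus relations of Theorem~\bref{bound-calculus} and the structural results of Theorem~\bref{prop:mnogoidealov} and Proposition~\bref{QTAB}. The key observation driving everything is that the relevant indices here, $D-4$ and $D-3$, have definite parity: for $d=D-1$ odd we have $D$ even, so $D-4$ is even and $D-3$ is odd. This parity is exactly what Proposition~\bref{utv-nechet} exploits, and the three statements are really three different manifestations of it.

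\emph{Claim 1 ($\cO_{AB}=0$).} First I would use the explicit expression~\eqref{obstr-hatEB} for $\cO_{AB}$ in terms of the boundary calculus, namely $\cO_{AB}=\nabla^{C}\hat J^{(D-4)}_{CAB}+\sum_{i=0}^{D-5}C^{i}_{D-4}\mathcal{D}^{(D-4-i)|C}\hat J^{(i)}_{CAB}$. Since $D-4$ is even, the term $\hat J^{(D-4)}_{CAB}$ is an even-labelled object. The strategy is to show that every surviving summand contains at least one odd-labelled factor forced to vanish by Proposition~\bref{utv-nechet}: in each product $\mathcal{D}^{(D-4-i)|C}\hat J^{(i)}$ the two labels sum to the even number $D-4$, so they have equal parity, and whenever $i$ is odd both $\hat J^{(i)}$ and $\mathcal{D}^{(i)}$ vanish while for $i$ even the operator label $D-4-i$ is also even—so I would need to track which factor is killed. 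The cleanest route is to note that $\hat J^{(i)}_{CAB}$ itself only appears with even $i$ (it vanishes for $i$ odd and equals zero at $i=0$ by~\eqref{J-lower}), and to argue that the leading term $\nabla^{C}\hat J^{(D-4)}_{CAB}$ likewise vanishes because $\hat J^{(D-4)}$, by the recursion~\eqref{formuli-calculus}, is built from $\hat T^{(D-5)}$ with $D-5$ odd, hence zero. I expect this bookkeeping—verifying that no term escapes the parity trap—to be the most delicate part of Claim~1.

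\emph{Claim 2 ($\Gamma_A\cT_{BC}=0$).} Here I would invoke Proposition~\bref{QTAB} at $N=D-3$, which gives $\Gamma_A\hat T^{(D-3)}_{BC}=\tilde\Lambda\sum_{i=0}^{D-5}d^{i}_{D-3}\mathcal D_A^{(D-5-i)}\hat T^{(i)}_{BC}+(D-3)\tilde\Lambda(\hat J^{(D-4)}_{ABC}+\hat J^{(D-4)}_{ACB})$. Since $\cT_{BC}=\hat b^*T^{(D-3)}_{BC}$, applying $\hat b^*$ and using Remark~\bref{remark-higherN} (the formula holds at $N=D-3$ modulo $\hat\cI$, i.e. on $\cE$, but for the off-shell statement on $\hat\Ee$ one works with the genuine calculus objects for $N\le D-4$) I would check each term vanishes by parity: in the sum, label $i$ and label $D-5-i$ sum to the odd number $D-5$, so they have opposite parity and exactly one of the two factors is odd-labelled, hence zero by Proposition~\bref{utv-nechet}; and $\hat J^{(D-4)}$ is even-labelled but is itself shown to vanish as in Claim~1. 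The main obstacle is handling the boundary term $\hat J^{(D-4)}$ and confirming the off-shell (rather than merely on-shell) validity, which is why I would lean on the $N\le D-4$ regime of the calculus rather than the $N=D-3$ endpoint directly.

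\emph{Claim 3 ($\cO_A=\nabla^B\cT_{BA}$).} Starting from the second formula in~\eqref{obstr-hatEB}, $\cO_{A}=\nabla^{C} \cT_{AC}+\sum_{i=1}^{D-4}C^{i}_{D-3}\mathcal{D}^{(D-3-i)|C}\hat T^{(i)}_{AC}$, I would argue that the entire sum vanishes: in each summand the labels $D-3-i$ and $i$ sum to the odd number $D-3$, so they have opposite parity, forcing one factor to be odd-labelled and therefore zero by Proposition~\bref{utv-nechet} (using $\mathcal D^{(\text{odd})}=0$ and $\hat T^{(\text{odd})}=0$). This leaves only the leading term $\nabla^C\cT_{AC}$, giving the asserted $\cO_A=\nabla^{B}\cT_{BA}$. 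This claim I expect to be the most transparent of the three, as it is a direct parity cancellation with no awkward endpoint. Altogether, the proof reduces to carefully invoking Proposition~\bref{utv-nechet} term-by-term in the three boundary-calculus expansions, and the single genuine difficulty is controlling the even-labelled quantities $\hat J^{(D-4)}$ and $\hat T^{(D-4)}$ appearing at the top of the recursion.
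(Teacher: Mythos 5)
Your overall strategy is exactly the paper's: expand $\cO_{AB}$, $\Gamma_A\cT_{BC}$ and $\cO_A$ via the boundary calculus (\eqref{obstr-hatEB}, \eqref{QTact}), prove $\hat J^{(D-4)}_{ABC}=0$ as the key auxiliary fact, and kill every remaining term by the parity statement of Proposition~\bref{utv-nechet}.

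There is, however, a concrete parity slip in your bookkeeping. Proposition~\bref{utv-nechet} says $\hat J^{(i-1)}_{ABC}=0$ for \emph{odd} $i\leq D-4$, i.e.\ the \emph{even}-labelled $\hat J^{(0)},\hat J^{(2)},\dots$ vanish, while the odd-labelled ones survive (indeed $\hat J^{(1)}_{ABC}=-\tilde\Lambda\hat\Co_{CAB}\neq 0$). Your claim that ``$\hat J^{(i)}$ vanishes for $i$ odd'' is therefore false, and the mechanism you assign to each term of $\sum_i C^i_{D-4}\cD^{(D-4-i)|C}\hat J^{(i)}_{CAB}$ is inverted: the terms with $i$ \emph{even} die because $\hat J^{(i)}=0$, and the terms with $i$ \emph{odd} die because $D-4-i$ is then odd and $\cD^{(D-4-i)}=0$. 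The same correction applies in Claims 2 and 3, where the labels sum to the odd numbers $D-5$ and $D-3$ respectively, so in each summand exactly one factor is an odd-labelled $\cD$ or $\hat T$ and that is the one that vanishes. Relatedly, your argument that $\hat J^{(D-4)}$ ``is built from $\hat T^{(D-5)}$, hence zero'' is incomplete as stated: by \eqref{formuli-calculus} and \eqref{Omega-A-commut} one has $\frac{D-5}{D-4}\hat J^{(D-4)}_{ABC}=\sum_{i=0}^{D-5}C^i_{D-5}\bigl(\cD^{(D-5-i)}_A\hat T^{(i)}_{BC}-\cD^{(D-5-i)}_B\hat T^{(i)}_{AC}\bigr)$, and you must check that \emph{every} summand (not just the $i=D-5$ one) vanishes by the opposite-parity argument, as the paper does. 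Your worry about on-shell validity in Claim~2 is unnecessary: Proposition~\bref{QTAB} holds on $\hat\Ee$ for all $N\leq D-3$. Finally, your argument implicitly assumes $D>5$ (the prefactor $\frac{D-5}{D-4}$ and the framework of Section~\bref{sec:boundary-gPDE}); the case $d=3$ must be quoted separately from the preceding proposition, as the paper does.
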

\begin{proof}
Of course, the case $d=3$, as has been shown, satisfies this, and therefore we will prove it for $d>3$.

First of all, Theorem~\bref{bound-calculus} and Proposition~\bref{utv-nechet} imply:
\begin{equation}
    \frac{D-5}{D-4}\hat J^{(D-4)}_{ABC}=\hat b^{*}\nabla_{\Omega}^{D-5}(\nabla_{A}T_{BC}-\nabla_{B}T_{AC})=
    \hat b^{*}([\nabla^{D-5}_{\Omega},\nabla_{A}]T_{BC}-[\nabla^{D-5}_{\Omega},\nabla_{B}]T_{AC})\,.
\end{equation}
Because $D>5$ by the assumption, Proposition \bref{utv-nechet} together with \eqref{Omega-A-commut} implies $\hat J^{(D-4)}_{ABC}=0$. Using this together with, again, Proposition~\bref{utv-nechet}, we obtain 
\begin{align}
    {\cO}_{BC}=\nabla^{A}\hat J^{(D-4)}_{A BC}+\hat b^{*}[\nabla_{\Omega}^{D-4},\nabla^{A}]J_{A BC}=0\,.
\end{align}
Similarly, we have
\begin{align}
    \begin{split}
               \Gamma_{A}\cT_{BC}&=\tilde{\Lambda}\sum^{D-5}_{i=0}d^{i}_{D-3}\mathcal{D}_{A}^{(D-5-i)}\hat T^{(i)}_{BC}+(D-3)\tilde{\Lambda}(\hat J^{(D-4)}_{ABC}+\hat J^{(D-4)}_{ACB})=0,\\
                    \cO_{A}&=\nabla^{C} \cT_{AC}+\sum_{i=1}^{D-4}C^{i}_{D-3}\mathcal{D}^{(D-3-i)|C}\hat T^{(i)}_{AC}=\nabla^{C} \cT_{AC}.
    \end{split}
\end{align}
\end{proof}

Because $\cO_{AB}$ vanishes identically the gPDE induced on the boundary simply describes conformal geometry with an additional ``energy-momentum tensor'' $\cT_{AB}$ subject to the equations of motion emerging from the constraint  $\cO_A\equiv \nabla^{B}\cT_{BA}$ and its $\nabla_C$ prolongations. More precisely, the induced boundary gPDE
is a subbundle of $\hat \Ee \to E^{conf}$ singled out by $\cO_A$ and its prolongations.

Moreover, the induced boundary gPDE is a vector bundle over $\Ee^{conf}$ because $Q\cT_{AB}$ is linear in $\nabla_{(C)}\cT_{AB}$
as is easily seen from Proposition~\bref{QTAB}.
In particular, a zero section of this vector bundle is the conformal geometry gPDE $\Ee^{conf}$ itself. In particular, 
\begin{prop}\label{prop:odd-vanish}
For $d$ odd, the following functions and vector fields on $\hat \Ee$
\begin{equation}
\hat T^{(2i+1)}_{A B}\,, \qquad \hat J^{(2i)}_{A B C}\,, \qquad \cD^{(2i+1)}_{A}\,, \quad i \geq 0\,,
\end{equation}
vanish at the zero section of $\hat \Ee \to \Ee^{conf}$, i.e. at $\nabla_{(C)}\cT_{AB}=0$.
\end{prop}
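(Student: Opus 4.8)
The plan is to prove all three families of vanishings simultaneously by strong induction on the $\Omega$-order $N$, feeding the recursive relations of the boundary calculus into the induction. The key preliminary observation is that for $d$ odd the zero section $Z=\{\nabla_{(C)}\cT_{AB}=0\}$ of $\hat \Ee\to\Ee^{conf}$ lies inside the on-shell gPDE $\cE$. Indeed, Proposition~\bref{J-nechet} gives $\cO_{AB}=0$ identically and $\cO_A=\nabla^B\cT_{BA}$, so every generator $\nabla_{(C)}\cO_{AB},\ \nabla_{(C)}\cO_A$ of $\hat\cI$ vanishes on $Z$ (the latter because $\nabla_A$ preserves the fibre ideal generated by the $\nabla$-prolongations of $\cT_{AB}$). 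Since $Z\subset\cE$, Remark~\bref{remark-higherN} guarantees that the recursion relations \eqref{formuli-calculus}, \eqref{formuli-D}, \eqref{QTact} and the commutation formula \eqref{Omega-A-commut} hold \emph{exactly} on $Z$ for all $N$, and not merely in the range $N\le D-4$ where the calculus is defined on all of $\hat\Ee$. This is precisely what makes the induction available beyond the defining range.

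The inductive hypothesis $H(M)$ would assert that on $Z$ one has $\hat T^{(m)}_{AB}=0$ for odd $m\le M$, $\hat J^{(m)}_{ABC}=0$ for even $m\le M$, and $\cD^{(m)}_A=0$ (componentwise) for odd $m\le M$. The base cases follow from $\hat J^{(0)}_{ABC}=0$, $\hat T^{(0)}_{AB}=0$, $\cD^{(0)}_A=\nabla_A$ in \eqref{J-lower}, together with $\hat T^{(1)}_{BC}=\tfrac{\tilde{\Lambda}}{D-4}\nabla^{A}\hat J^{(0)}_{ABC}=0$ and hence $\cD^{(1)}_A=\tfrac{1}{\tilde{\Lambda}}\hat T^{(1)}_{CA}\Gamma^{C}=0$. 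For the step I would track a single parity bookkeeping: in each recursion the total $\Omega$-order splits over a product of one lower $\hat T$/$\hat J$-factor and one lower $\cD$-factor whose orders sum to a fixed value. Thus $\hat T^{(N)}_{BC}\sim\sum_i C^i_{N-1}\,\cD^{(N-1-i)|A}\hat J^{(i)}_{ABC}$ has $i+(N-1-i)=N-1$, so for $N$ odd the two orders share parity and in every summand either $\hat J^{(i)}$ (even $i$) or $\cD^{(N-1-i)}$ (odd order) is killed by $H(N-1)$; likewise $\hat J^{(N)}_{ABC}\sim\sum_i C^i_{N-1}\,\cD^{(N-1-i)}_{[A}\hat T^{(i)}_{B]C}$ has the two orders of opposite parity, which annihilates every summand for $N$ even; and for $\cD^{(N)}_A$, with $N$ odd, the coefficients $\hat J^{(N-1)}$ (even order) and $\hat T^{(N)}$ (odd order, established in the same step) vanish on $Z$, while in the tail $\sum_i d^i_N\,\hat T^{(i),C}{}_A\,\cD^{(N-2-i)}_C$ one has $i+(N-2-i)=N-2$ odd, so each summand carries a vanishing factor.

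Two points need care, and the second is the main obstacle. First, since $d$ is odd, $D-3$ is odd and $\hat T^{(D-3)}_{AB}=\cT_{AB}$ is exactly the fibre coordinate, so its recursion degenerates (the factor $1/(D-3-N)$ blows up at $N=D-3$); I would simply treat this value separately, using that $\hat T^{(D-3)}_{AB}=\cT_{AB}$ vanishes on $Z$ by the very definition of the zero section, which is all the induction needs. Second, $\cD^{(N)}_A$ is a vector field, and ``vanishing on $Z$'' must mean that all its components vanish there; this does not follow from its $\cT$-parity alone, because $\Delta^{C}{}_D$ and $\Gamma^{C}$ act nontrivially along the fibre directions $\nabla_{(C)}\cT_{AB}$. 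The resolution is that in \eqref{formuli-D} these structural fields appear only as $\hat J^{(N-1),D}{}_{CA}\Delta^{C}{}_D$ and $\tfrac{1}{\tilde\Lambda N}\hat T^{(N)}_{CA}\Gamma^{C}$, i.e. multiplied by coefficient functions that vanish on $Z$ for $N$ odd, so each such summand is a coefficient-times-vector-field whose components vanish on $Z$; in the remaining summands either the scalar $\hat T^{(i)}$ vanishes or $\cD^{(N-2-i)}$ vanishes componentwise by $H(N-1)$. Since a vector field all of whose components vanish at a point annihilates every function there, the vanishing propagates consistently through the recursions and closes the induction.

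Conceptually this is the statement that for odd boundary dimension the boundary calculus carries a $\ZZ_2$-grading counting $\cT$-factors, under which $\hat T^{(N)}$ and $\cD^{(N)}$ are homogeneous of degree $N$ and $\hat J^{(N)}$ of degree $N+1$; the odd-degree objects are odd in $\cT_{AB}$ and hence vanish on the zero section. The induction above is exactly the verification that all recursions respect this grading, and it is the algebraic counterpart of the unobstructedness and even-parity of the Fefferman--Graham expansion in odd boundary dimension, already visible here in $\cO_{AB}=0$ and in the purely linear dependence of the boundary data on $\cT_{AB}$.
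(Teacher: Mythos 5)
Your strategy is the right one, and it is essentially the natural extension of the paper's proof of Proposition~\bref{utv-nechet} (Appendix~\bref{sec:proof-TDJ}), which runs the same parity induction in the range $N\leq D-4$ where everything vanishes identically; the paper gives no explicit proof of Proposition~\bref{prop:odd-vanish} itself, and your two key inputs — that the zero section $Z$ lies in $\cE$ (so Remark~\bref{remark-higherN} makes the recursions available for all $N$ on $Z$) and that $\hat T^{(D-3)}_{AB}=\cT_{AB}$ vanishes on $Z$ by definition — are exactly what is needed to push the induction past $N=D-4$. However, the induction step as you state it has a genuine hole. In a summand such as $\cD^{(N-1-i)|A}\hat J^{(i)}_{ABC}$ with both orders even (which first occurs nontrivially as $\nabla^{A}\hat J^{(D-2)}_{ABC}$ inside $\hat T^{(D-1)}_{BC}$, and later as $\cD^{(2k)}$ acting on $\hat J^{(D-2)}$, or as $\cD^{(2k)}\cT_{AB}$ inside the even $\hat J$'s), the factor killed by $H(N-1)$ is the \emph{function being differentiated}, and it vanishes only on $Z$, not identically. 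The derivative of a function vanishing on $Z$ along a vector field need not vanish on $Z$, so ``either factor is killed'' does not suffice: you must also know that the even-order $\cD^{(2k)}_{A}$ (and $\nabla_A$, $\Delta^{C}{}_{D}$, $\Gamma^{C}$) are tangent to $Z$, i.e.\ preserve the ideal generated by $\nabla_{(C)}\cT_{AB}$. This is true — $\Delta^{C}{}_{D}$ acts linearly on the fibre coordinates, $\Gamma_{A}\cT_{BC}=0$ for $d$ odd by Proposition~\bref{J-nechet}, $\nabla_A$ maps fibre coordinates to fibre coordinates, and then \eqref{formuli-D} propagates tangency to all $\cD^{(N)}_A$ — but it must be added to the induction hypothesis; it is not a consequence of mere vanishing.

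The cleanest repair is already contained in your closing paragraph: promote the $\ZZ_2$-grading from a conceptual remark to the actual induction hypothesis. If $H(M)$ asserts that $\hat T^{(m)}$, $\cD^{(m)}$ are homogeneous of degree $m \bmod 2$ and $\hat J^{(m)}$ of degree $m+1 \bmod 2$ in the fibre coordinates $\nabla_{(C)}\cT_{AB}$ (with the degree-$0$ vector fields $\nabla_A,\Delta^{C}{}_{D},\Gamma^{C}$ preserving the grading), then homogeneity — unlike vanishing on $Z$ — \emph{is} stable under the compositions and products appearing in \eqref{formuli-calculus}, \eqref{formuli-D}, \eqref{Omega-A-commut}, the parity bookkeeping you wrote down verifies that every recursion is grading-consistent, and odd-degree objects vanish at $\cT_{AB}=0$ because each of their monomials carries at least one $\cT$-factor. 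With that substitution your proof closes; as written, it does not.
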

The composition of the zero section with $\hat b$ realizes $\Ee^{conf}$ as a $Q$-submanifold in $\Ee$.
\subsection{Even-dimensional boundary}\label{sec:even}
This case is more interesting. Before describing the structure of the boundary gPDE in this case, let us give a couple of low-dimensional examples:
\begin{example}\label{example-5D}
   In the case of $D=5$ one has 
   \begin{equation}
   \cO_{AB}=\tilde{\Lambda}\hat B_{AB}\,,
   \qquad 
   \cO_A=\nabla^{B}\cT_{BA}\,,
   \end{equation}
  where $\hat B_{BC}\equiv \nabla^{A}\hat \Co_{BCA}$ is the ``symbol of  Bach tensor''. Namely, solutions of $\Ee^{conf}$ seen as a gPDE are parametrised by configurations of
   a metric on $X$ along with the choice of local frame. When pulled back by such a solution, $\hat B_{BC}$ becomes a Bach tensor of the metric. More detailed discussion of the explicit form of the underlying equations is given in Section~\bref{sec:gravity-sections}. Furthermore, 
   \begin{align}
        Q\cT_{BC}=\hat \xi^{A}\nabla_{A}\cT_{BC}+\hat C_{B}{}^{A}\cT_{AC}+\hat C_{C}{}^{A}\cT_{BA}-2\tilde{\Lambda}^2\hat \lambda^{A}(\hat \Co_{CAB}+\hat \Co_{BAC})-2\hat \lambda \cT_{BC}
    \end{align}
which defines gauge transformations of the field $\cT_{AB}$. The above assertions are checked by direct computations:
    \begin{align}
    \begin{split}
                \cO_{C}&=\nabla^{A}\cT_{AC}+\hat b^*[\nabla_{\Omega}^2,\nabla^{A}]T_{AC}=\nabla^{A}\cT_{AC},\\
\cO_{BC}&=\nabla^{A}\hat J^{(1)}_{ABC}+\hat b^{*}[\nabla_\Omega,\nabla^{A}]J_{ABC}=\nabla^{A}\hat J^{(1)}_{ABC}=-\tilde{\Lambda}\nabla^{A}\hat \Co_{CAB}=\tilde{\Lambda} \hat B_{BC}\,,
    \end{split}
    \end{align}
    where we made use of $\hat T^{(0)}_{AC}=0$, $\mathcal{D}^{(1)}_A=0$, and $\hat J^{(1)}_{ABC}=-\tilde{\Lambda}\hat C_{CAB}$, see Eqs.~\eqref{J-lower} and \eqref{utv-nechet}. The only nontrivial part of the $Q\cT_{AB}$ computation is:
    \begin{align}
        \Gamma_{A}\cT_{BC}= \tilde{\Lambda} \nabla_{A}\hat T^{(0)}_{BC}+2 \tilde{\Lambda}(\hat J^{(1)}_{ABC}+\hat J^{(1)}_{ACB})=-2\tilde{\Lambda}^2(\hat \Co_{CAB}+\hat \Co_{BAC})\,.
    \end{align}
In other words, for $D=5$ the boundary gPDE describes Bach-flat geometry (conformal gravity), together with the ``subleading'' (that corresponds to the boundary energy-momentum tensor in the AdS/CFT context) $ \cT_{BC}$  satisfying the conservation condition. Note, however, that $\nabla_{(A)} \cT_{BC}=0$ is in general not a sub-gPDE of the boundary gPDE, i.e. $Q$ is not tangent to it, — as follows from the fact that $\Gamma_{A} \cT_{BC}$ is not zero in general (though it may vanish if an additional condition on the background is imposed). Of course boundary gPDE is a bundle over $\hat \Ee^{conf}$ and hence 
$\hat \Ee^{conf}$ can be recovered as the quotient of the boundary gPDE by the fibres.
\end{example}

\begin{prop}
\label{example-7D}
    For $D=7$ one has:
    \begin{align}
        \begin{split}
             \cO_{BC}&=\frac{3}{2}\tilde{\Lambda}^{2}\Big(\nabla^{A}\nabla_{A}\hat B_{BC}+2\hat \We_{DCAB}\hat B^{AD}+2 \hat \Co_{C}{}^{AD}\hat \Co_{BAD}-4\hat \Co^{D}{}_{B}{}^{A}\hat \Co_{ACD}\Big),\\
            \cO_B&=\nabla^{B}\cT_{BC},\\
             \Gamma_{A}\cT_{BC}&=6\tilde{\Lambda}^3\Big(3\nabla_A\hat B_{BC}-\nabla_{B}\hat B_{AC}-\nabla_{C}\hat B_{AB}\Big).
        \end{split}
    \end{align}
\end{prop}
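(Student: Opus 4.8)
The plan is to specialise the boundary calculus of Theorem~\bref{bound-calculus} and the explicit expressions \eqref{obstr-hatEB} for the symbols to $D=7$, where the relevant range of orders is $0\leq N\leq D-4=3$ and the dynamical subleading is $\cT_{AB}=\hat T^{(4)}_{AB}$. First I would record the vanishing of the low objects: Proposition~\bref{utv-nechet} (odd orders) together with \eqref{J-lower} give
\begin{align}
\hat T^{(0)}_{AB}=\hat T^{(1)}_{AB}=\hat T^{(3)}_{AB}=0,\quad \hat J^{(0)}_{ABC}=\hat J^{(2)}_{ABC}=0,\quad \cD^{(1)}_A=\cD^{(3)}_A=0,\quad \cD^{(0)}_A=\nabla_A,
\end{align}
and $\hat J^{(1)}_{ABC}=-\tilde\Lambda\hat\Co_{CAB}$. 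Plugging these into \eqref{obstr-hatEB} collapses the sums so that only the $i=1$ term survives in $\cO_{BC}$ and only the $i=2$ term in $\cO_A$, leaving
\begin{align}
\cO_{BC}=\nabla^A\hat J^{(3)}_{ABC}+3\,\cD^{(2)|A}\hat J^{(1)}_{ABC},\qquad \cO_A=\nabla^C\cT_{AC}+6\,\cD^{(2)|C}\hat T^{(2)}_{AC}.
\end{align}

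Next I would evaluate the three building blocks from the recursion. Using \eqref{formuli-calculus} at $N=2$ with \eqref{Omega-A-commut} and $\cD^{(1)}_A=0$ gives $\hat T^{(2)}_{BC}=\tilde\Lambda\,\nabla^A\hat J^{(1)}_{ABC}=\tilde\Lambda^2\hat B_{BC}$, where the last equality uses $\hat J^{(1)}_{ABC}=-\tilde\Lambda\hat\Co_{CAB}$, the antisymmetry of the Cotton tensor, and the definition $\hat B_{BC}\equiv\nabla^A\hat\Co_{BCA}$. Feeding this into \eqref{formuli-calculus} at $N=3$ (again with $\cD^{(1)}_A=0$ and $\hat T^{(0)}_{AB}=0$) yields $\hat J^{(3)}_{ABC}=\tfrac{3}{2}\tilde\Lambda^2(\nabla_A\hat B_{BC}-\nabla_B\hat B_{AC})$. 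Finally \eqref{formuli-D} at $N=2$ has an empty inner sum, so $\cD^{(2)}_A=-\hat J^{(1),D}{}_{CA}\Delta^C{}_D+\tfrac{1}{2\tilde\Lambda}\hat T^{(2)}_{CA}\Gamma^C$, i.e. it is built from the Cotton tensor and $\hat B_{AB}$ acting through the algebra generators $\Delta^C{}_D$ and $\Gamma^C$ of \eqref{commutrelations}.

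With these in hand the gauge transformation and the conservation law are short. For $\Gamma_A\cT_{BC}$ I would apply Proposition~\bref{QTAB} (formula \eqref{QTact}) at $N=4$: the inner sum keeps only $i=2$, giving $6\tilde\Lambda^3\nabla_A\hat B_{BC}$, while the curvature term is $4\tilde\Lambda(\hat J^{(3)}_{ABC}+\hat J^{(3)}_{ACB})$; substituting $\hat J^{(3)}$ and using the symmetry of $\hat B_{AB}$ collapses everything to the stated $6\tilde\Lambda^3(3\nabla_A\hat B_{BC}-\nabla_B\hat B_{AC}-\nabla_C\hat B_{AB})$. For $\cO_A$ I would show the correction term vanishes: expanding $\cD^{(2)|C}\hat T^{(2)}_{AC}$ gives a $\Delta^C{}_D$-piece and a $\Gamma^C$-piece, both proportional to $\hat\Co\,\hat B$ (the latter through $\Gamma_G\hat T^{(2)}_{AC}$ from Proposition~\bref{QTAB}); the symmetry of $\hat B_{AB}$, together with the antisymmetry and trace-freeness of $\hat\Co_{ABC}$, makes these two contributions equal and opposite after relabelling the contracted indices, so that $\cO_A=\nabla^C\cT_{AC}$.

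The genuinely laborious part is $\cO_{BC}$, and this is where I expect the main obstacle. Here $\nabla^A\hat J^{(3)}_{ABC}$ produces the Laplacian $\tfrac{3}{2}\tilde\Lambda^2\nabla^A\nabla_A\hat B_{BC}$ together with $-\tfrac{3}{2}\tilde\Lambda^2\nabla^A\nabla_B\hat B_{AC}$; the latter must be processed by commuting $\nabla^A$ past $\nabla_B$ via \eqref{commutrelations}, which injects a Weyl term $\hat\We\,\hat B$ and, since $\Gamma^d$ does not annihilate $\hat\Co$, Cotton-squared terms, while the divergence $\nabla^C\hat B_{CB}$ (which does not vanish in $d=6$) must be reduced to quadratic curvature using the Bianchi identities \eqref{Bianchi}. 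Simultaneously the $3\,\cD^{(2)|A}\hat J^{(1)}_{ABC}$ term contributes further $\hat\We\,\hat B$ and $\hat\Co\,\hat\Co$ pieces through its $\Delta$ and $\Gamma$ parts. The crux is to collect all these curvature contractions, verify the cancellation of the spurious divergence and cross terms, and match the numerical coefficients so that what remains is exactly $\tfrac{3}{2}\tilde\Lambda^2(\nabla^A\nabla_A\hat B_{BC}+2\hat\We_{DCAB}\hat B^{AD}+2\hat\Co_C{}^{AD}\hat\Co_{BAD}-4\hat\Co^D{}_B{}^A\hat\Co_{ACD})$; every other step is a direct substitution into the recursion.
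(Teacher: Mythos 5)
Your plan follows the paper's proof essentially step for step: the same vanishing statements from Proposition~\bref{utv-nechet}, the same evaluations $\hat T^{(2)}_{AB}=\tilde\Lambda^{2}\hat B_{AB}$ and $\hat J^{(3)}_{ABC}=\tfrac{3}{2}\tilde\Lambda^{2}(\nabla_{A}\hat B_{BC}-\nabla_{B}\hat B_{AC})$, the same collapse of \eqref{obstr-hatEB} and \eqref{QTact} to the surviving $i=1$ and $i=2$ terms, and the same cancellation showing $\cD^{(2)|B}\hat T^{(2)}_{BA}=0$. The one refinement worth noting is that the divergence you propose to reduce via Bianchi identities, $\nabla^{A}\hat T^{(2)}_{AC}$, in fact vanishes identically by the general formula $\nabla^{A}\hat T^{(N-1)}_{AB}=-\sum_{i}C^{i}_{N-2}\cD^{(N-1-i)|A}\hat T^{(i)}_{AB}$ from the proof of Theorem~\bref{prop:mnogoidealov} (every term on the right is zero here), which is why the paper retains only the commutator $[\nabla^{A},\nabla_{B}]\hat T^{(2)}_{AC}$ when expanding $\nabla^{A}\hat J^{(3)}_{ABC}$.
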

\begin{proof}
    From Proposition~\bref{utv-nechet} we have $\mathcal{D}^{(1)}_{A}=\mathcal{D}^{(3)}_{A}=0$, $\hat J^{(0)}_{ABC}=\hat J^{(2)}_{ABC}=\hat T^{(0)}_{BC}=\hat T^{(1)}_{BC}=\hat T^{(3)}_{BC}=0$, and that $\hat J^{(1)}_{ABC}=-\tilde{\Lambda} \hat \Co_{CAB}$. Also, analogously to the previous example, using \eqref{formuli-calculus}, we obtain $ \hat T_{AB}^{(2)}=\tilde{\Lambda}^{2}\hat B_{AB}.$

    It follows from \eqref{obstr-hatEB} and the formulas above that:
\begin{align}\label{ex:7D-1}
    \cO_{BC}=\nabla^{A}\hat J_{ABC}^{(3)}+3\mathcal{D}^{(2)|A}\hat J^{(1)}_{ABC}\,.
\end{align}
Using Theorem~\bref{bound-calculus} we obtain
\begin{align}\label{ex:7D-2}
\begin{split}
        \nabla^{A}\hat J^{(3)}_{ABC}=&\frac{3}{2}\nabla^{A}\hat{b}^{*}\nabla_\Omega^{2}(\nabla_{A}T_{BC}-\nabla_{B}T_{AC})=\frac{3}{2}(\nabla^{A}\nabla_A \hat T^{(2)}_{BC}-[\nabla^{A},\nabla_{B}]\hat T^{(2)}_{AC})=\\&=\frac{3}{2}\tilde{\Lambda}^{2}\Big(\nabla_{A}\nabla^{A}\hat B_{BC}+\hat \We_{DCAB}\hat B^{AD}-2\tilde{\Lambda}^{2}\hat \Co^{DA}{}_{B}(\hat \Co_{CDA}+\hat \Co_{ADC})\Big)
        \end{split}
\end{align}
where, in the last term, we used Proposition~\bref{QTAB}, from which $\Gamma_{D}\hat T_{AC}^{(2)}=-2\tilde{\Lambda}^{2}(\hat \Co_{CDA}+\hat \Co_{ADC})$.
\begin{align}
            \mathcal{D}^{(2)|A}\hat J^{(1)}_{ABC}&=-\hat J^{(1)|D}{}_{E}{}^{A}\Delta^{E}{}_{D}\hat J^{(1)}_{ABC}+\frac{1}{2\tilde\Lambda}\hat T^{(2)|A}{}_{D}\Gamma^{D}\hat J^{(1)}_{ABC}=\\&=\tilde{\Lambda}^{2}(-\hat \Co^{AD}{}_{B}\hat \Co_{CAD}-\hat \Co^{AD}{}_{C}\hat \Co_{DAB}+\frac{1}{2}\hat B^{AD}\hat \We_{DCAB})
\end{align}
where we have also made use of $\Gamma^{A}\hat \Co_{BCD}=-\hat \We^{A}{}_{BCD}.$ Substituting these results \eqref{ex:7D-1}, together with the Bianchi identity for the Cotton tensor, yields
\begin{align}
    \cO_{BC}=\frac{3}{2}\tilde{\Lambda}^{2}\Big(\nabla^{A}\nabla_{A}\hat B_{BC}+2\hat \We_{DCAB}\hat B^{AD}+2 \hat \Co_{C}{}^{AD}\hat \Co_{BAD}-4\hat \Co^{D}{}_{B}{}^{A}\hat \Co_{ACD}\Big).
\end{align}
This is precisely the formula for the Fefferman-Graham obstruction tensor in six dimensions, up to corrections by the Schouten tensor which, as we shall see in Section~\bref{sec:gravity-sections}, arise on the solution space of the gPDE.
Next, using \eqref{obstr-hatEB},
\begin{align}
    \cO_A=\nabla^B\cT_{BA}+C^{2}_{4}\mathcal{D}^{(2)|B}\hat T^{(2)}_{BA}
\end{align}
    
    \begin{align}
        \mathcal{D}^{(2)|B}\hat T^{(2)}_{BA}=-\hat J^{(1)|D}{}_{A}{}^{B}\hat T^{(2)}_{BD}+\frac{1}{2\tilde{\Lambda}}\hat T^{(2)}_{D}{}^{B}\Gamma^{D}\hat T^{(2)}_{BA}=\tilde{\Lambda}^3\hat \Co^{BD}{}_{A}\hat B_{BD}-\tilde{\Lambda}^3 \hat B_{D}{}^{B}(\hat \Co_{A}{}^{D}{}_{B}+\hat \Co_{B}{}^{D}{}_{A})=0.
    \end{align}

    Finally, Proposition~\bref{QTAB} yields
    \begin{align}
        \Gamma_{A}\cT_{BC}=6\tilde{\Lambda}\nabla_{A}\hat T^{(2)}_{BC}+4\tilde{\Lambda}(\hat J^{(3)}_{ABC}+\hat J^{(3)}_{ACB})=6\tilde{\Lambda}^3(3\nabla_A\hat B_{BC}-\nabla_B\hat B_{AC}-\nabla_{C}\hat B_{AB}),
    \end{align}
    where we used $\hat J^{(3)}_{ABC}=\dfrac{3}{2}b^{*}\nabla_\Omega^2(\nabla_{A}T_{BC}-\nabla_{B}T_{AC})=\frac{3}{2}\tilde{\Lambda}^2(\nabla_{A}\hat B_{BC}-\nabla_B\hat B_{AC})$.
\end{proof}

The general situation is described by:
\begin{prop}\label{obstr-even} For $d$ even, $\hat \Ee$ describes conformal geometry with an additional field $\cT_{AB}$ subject to the equations of the following structure:
\begin{equation}
    \begin{aligned}
        &(\nabla_{A}\nabla^A)^{\frac{D-5}{2}}\hat B_{BC}+\dots=0,\\
        &\nabla^{A}\cT_{A B}+\sum_{i=0}^{D-4}C^{i}_{D-3}\mathcal{D}_{A}^{(D-3-i)}\hat T^{(i)|A}{}_{B}=0\,.
    \end{aligned}
    \end{equation}
  \end{prop}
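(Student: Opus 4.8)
The two boundary equations are $\cO_{AB}=0$ and $\cO_A=0$, with $\cO_{AB}$ and $\cO_A$ given by the closed expressions \eqref{obstr-hatEB}, so the plan is to read off their structure directly from the recursive boundary calculus of Theorem~\bref{bound-calculus}. The second equation is immediate: the formula for $\cO_A$ in \eqref{obstr-hatEB} is already in the claimed form once one includes the $i=0$ term, which costs nothing since $\hat T^{(0)}_{AB}=0$. Hence all the work is in identifying the leading-derivative term of $\cO_{AB}=\nabla^C\hat J^{(D-4)}_{CAB}+\dots$.

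First I would set up a filtration by boundary-derivative order, assigning weight one to each $\nabla_A$ and treating the Weyl and Cotton symbols as the fundamental low-weight objects; the goal is to isolate the maximal-weight part of $\cO_{AB}$. By Proposition~\bref{utv-nechet} only the odd $\hat J^{(\cdot)}$, the even $\hat T^{(\cdot)}$ and the even $\cD^{(\cdot)}_A$ survive, and for $d$ even the top index $D-4$ is odd, so that $\hat J^{(D-4)}$ is among the nonvanishing objects. The key observation is that in \eqref{Omega-A-commut} the maximal-weight contribution to $\hat b^{*}\nabla_\Omega^N\nabla_A f$ is the single term $\nabla_A\hat f^{(N)}$ (the $i=N$ term), because every other term carries a factor $\cD^{(k)}_A$ with $k\geq1$, and by \eqref{formuli-D} such operators act by inserting explicit $\hat J^{(\cdot)}$ or $\hat T^{(\cdot)}$ factors rather than a bare derivative, hence strictly lower the weight. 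The same remark shows that, in \eqref{obstr-hatEB}, the whole sum over $i<D-4$ contributes only to ``$\dots$''.

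The inductive step then uses the two recursions of \eqref{formuli-calculus} at top order: $\hat J^{(N)}_{ABC}=\tfrac{N}{N-1}(\nabla_A\hat T^{(N-1)}_{BC}-\nabla_B\hat T^{(N-1)}_{AC})+\dots$ and $\hat T^{(N-1)}_{BC}=\tfrac{(N-1)\tilde\Lambda}{D-2-N}\nabla^D\hat J^{(N-2)}_{DBC}+\dots$, so that one ``double step'' $\hat J^{(N)}\to\hat J^{(N-2)}$ inserts exactly two boundary derivatives with a nonzero rational coefficient. Iterating from $N=D-4$ down to the base $\hat J^{(1)}_{ABC}=-\tilde\Lambda\hat\Co_{CAB}$ of \eqref{J-lower} performs $(D-5)/2$ such double steps, producing $(D-5)/2$ factors of the scalar Laplacian acting on the Cotton symbol, and applying the outer $\nabla^C$ converts the remaining free divergence into the Bach symbol $\hat B_{BC}=\nabla^A\hat\Co_{BCA}$. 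This yields $\cO_{AB}=\lambda_0(\nabla_E\nabla^E)^{(D-5)/2}\hat B_{BC}+\dots$ with $\lambda_0\neq0$, consistent with Example~\bref{example-5D} (where $(D-5)/2=0$) and Proposition~\bref{example-7D} (where $(D-5)/2=1$).

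The main obstacle I anticipate is making ``leading term'' rigorous and verifying that no cancellation occurs: one must check that reordering derivatives (the commutators $[\nabla_A,\nabla_B]$, which by \eqref{commutrelations} produce Weyl and Cotton tensors) and the divergences $\nabla^C\hat T^{(\cdot)}_{CB}$ that appear when contracting the outer index are genuinely of lower weight, so that they never feed back into the top-order Laplacian string. Tracking the product of the combinatorial factors $\tfrac{N}{N-1}$, $\tfrac{(N-1)\tilde\Lambda}{D-2-N}$ and the relevant binomials to confirm $\lambda_0\neq0$ is a finite bookkeeping task, but it is precisely where the argument could in principle break, if some coefficient happened to vanish at the top index.
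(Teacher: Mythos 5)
Your proposal is correct and follows essentially the same route as the paper: a filtration by the number of $\nabla_A$'s, leading-order tracking of $\cO_{AB}=\nabla^C\hat J^{(D-4)}_{CAB}+\dots$ through the two recursions of Theorem~\bref{bound-calculus}, with the commutators, the $\cD^{(k)}_A$ terms and the divergences $\nabla^A\hat T^{(\cdot)}_{AB}$ all relegated to lower order. The only cosmetic difference is that the paper phrases the descent in terms of $\hat T^{(N)}\propto\nabla^A\nabla_A\hat T^{(N-2)}$ terminating at $\hat T^{(2)}\propto\hat B$, whereas you descend in $\hat J^{(N)}$ down to $\hat J^{(1)}\propto\hat\Co$ and recover $\hat B$ from the outer divergence — the same computation with a different choice of base case.
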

The first equation is $\cT_{AB}$-independent and determines a Fefferman-Graham obstruction of the boundary conformal structure to extend to a formal solution to the bulk Einstein equations near the boundary. The second equation is a generalised conservation condition satisfied by the subleading $\cT_{AB}$. 
\begin{proof}
Only the first formula needs to be proved. Since the conformal-geometry sector is generated by the action of $\nabla_A$, it admits a natural filtration which, roughly speaking, counts the number of derivatives (this interpretation will be precisely valid on the solution space of the gPDE). In the present proof we will monitor only the leading order with respect to this filtration and everything below will be denoted by $\dots$. The case $D=5$ was already analyzed and is excluded. For $D>5$ from Theorem~\bref{bound-calculus} we have  (here $2\leq N\leq D-4$):
\begin{align}
            \nabla^{A}\hat J^{(N)}_{ABC}=\frac{N}{N-1}\nabla^{A}\hat b^{*}\nabla_{\Omega}^{N-1}(\nabla_{A}T_{BC}-\nabla_B T_{AC})=\frac{N}{N-1}(\nabla^{A}\nabla_A\hat{T}^{(N-1)}_{BC}-\nabla^{A}\nabla_{B}\hat{T}^{(N-1)}_{AC})+\dots.
\end{align}
We can rewrite 
\begin{align}
    \nabla^{A}\nabla_{B}\hat{T}^{(N-1)}_{AC}=\nabla_{B}\nabla^{A}\hat T_{AC}^{(N-1)}+[\nabla^{A},\nabla_B]\hat T_{AC}^{(N-1)}=\nabla_{B}\nabla^{A}\hat T_{AC}^{(N-1)}+\dots\,.
\end{align}
But it follows from the proof of Theorem~\bref{prop:mnogoidealov} that the divergence $\nabla^{A}\hat T_{AB}^{(N-1)}=-\sum_{i=1}^{N-1}C^{i}_{N-2}\mathcal{D}^{(N-1-i)|A}\hat T_{AB}^{(i)}$, and hence it itself belongs to lower-order corrections. Consequently,
\begin{align}
    \nabla^{A}\hat J^{(N)}_{ABC}=\frac{N}{N-1}\nabla^{A}\nabla_A\hat{T}^{(N-1)}_{BC}+\dots\,.
\end{align}
Using the formula for $\hat T^{(N)}_{AB}$ from Theorem~\bref{bound-calculus}, we conclude that (here $\propto$ denotes equality up to a nonzero coefficient)
\begin{align}
    \hat T^{(N)}_{BC}\propto \hat b^{*}\nabla_{\Omega}^{N-1}\nabla^{A}J_{ABC}=\nabla^{A}\hat J^{(N-1)}_{ABC}+\dots\propto\nabla^{A}\nabla_{A}\hat T^{(N-2)}_{BC}+\dots\,.
\end{align}
It follows that

\begin{align}
   \cO_{BC}=\nabla^{A}\hat J^{(D-4)}_{ABC}+\dots\propto \nabla^{A}\nabla_A \hat T^{(D-5)}_{BC}+\dots\propto (\nabla^{A}\nabla_{A})^2\hat T_{BC}^{(D-7)}+\dots\,.
\end{align}
This procedure continues until (depending on the dimension) it terminates at $\hat T^{(2)}_{AB} \propto \hat B_{AB}$.
    \end{proof}

Finally, similarly to Proposition ~\bref{prop:odd-vanish}, we have that:
\begin{prop}\label{prop:even-vanish}
For $d$ even, the following functions and vector field on $\hat \Ee$
\begin{equation}
\hat T^{(2i+1)}_{AB}\,, \qquad \hat J^{(2i)}_{A BC}\,, \qquad \cD^{(2i+1)}_{A}\,, \quad i \geq 0\,.
\end{equation}
vanish at zero locus of $\hat \cI$\footnote{Note that for $2i+1\leq D-4$, these functions and vector fields vanish exactly, as follows from Proposition~\bref{utv-nechet}.}. 
\end{prop}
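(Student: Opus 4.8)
The plan is to extend the parity pattern already established on $\hat\Ee$ at low orders to all orders, at the price of passing to the on-shell locus. Recall from the footnote to the statement that for odd $2i+1\leq D-4$ the three families $\hat T^{(2i+1)}_{AB}$, $\hat J^{(2i)}_{ABC}$ and $\cD^{(2i+1)}_A$ vanish \emph{identically} on $\hat\Ee$ by Proposition~\bref{utv-nechet}, so the only new content concerns orders $2i+1>D-4$, where these objects are no longer $\hat\Ee$-projectable but are still well defined on the on-shell gPDE $\cE\subset\hat\Ee$ (the zero locus of $\hat\cI$) by the remark following Proposition~\bref{commut-nablas}. On $\cE$ the recursion of Theorem~\bref{bound-calculus} together with the commutator identity~\eqref{Omega-A-commut} holds for every $N$: this is precisely Remark~\bref{remark-higherN}, where the ``modulo $\hat\cI$'' ambiguity of the higher-order formulas collapses to an exact identity once one restricts to the zero locus of $\hat\cI$. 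I would therefore work entirely on $\cE$ and prove the vanishing there.

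First I would set up a single strong induction on $N$ carrying the three parity statements simultaneously: $\hat T^{(N)}_{BC}=0$ for $N$ odd, $\hat J^{(N)}_{ABC}=0$ for $N$ even, and $\cD^{(N)}_A=0$ for $N$ odd, with base cases supplied by~\eqref{J-lower}. The inductive step processes $\hat T^{(N)}$, then $\hat J^{(N)}$, then $\cD^{(N)}$, each time expanding the defining recursion with~\eqref{Omega-A-commut} and reading off parities. For $N$ odd,
\begin{equation}
\hat T^{(N)}_{BC}\propto \hat b^{*}\nabla_\Omega^{N-1}\nabla^{A}J_{ABC}=\sum_{i=0}^{N-1}C^{i}_{N-1}\,\cD^{(N-1-i)|A}\hat J^{(i)}_{ABC},
\end{equation}
and since $N-1$ is even the indices $N-1-i$ and $i$ share parity, so in each summand either $\hat J^{(i)}$ (for $i$ even) or $\cD^{(N-1-i)}$ (for $i$ odd) vanishes by hypothesis. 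The identical bookkeeping, now with $N-1$ odd so that $N-1-i$ and $i$ have opposite parity, kills every term of
\begin{equation}
\hat J^{(N)}_{ABC}\propto \hat b^{*}\nabla_\Omega^{N-1}(\nabla_A T_{BC}-\nabla_B T_{AC})
\end{equation}
for $N$ even, as well as of the sum $\sum d^{i}_N\hat T^{(i),C}{}_{A}\cD^{(N-2-i)}_{C}$ occurring in~\eqref{formuli-D} for $\cD^{(N)}_A$ with $N$ odd; the two remaining terms of~\eqref{formuli-D} vanish directly, because $\hat J^{(N-1)}$ (index even) and $\hat T^{(N)}$ (index odd, already settled earlier in this same step) are known to be zero. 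This closes the induction.

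The step I expect to be the genuine obstacle is not the parity arithmetic, which is routine, but justifying the framework in which it is carried out: that the higher objects $\hat T^{(N)},\hat J^{(N)},\cD^{(N)}$ descend to honest functions and vector fields on $\cE$ and that the boundary-calculus identities really become exact there. For a vector field such as $\cD^{(N)}_A=[\nu^{(N)}_A,Q]$, ``vanishing on $\cE$'' only makes sense once it is tangent to $\cE$, so I would first invoke the preservation results of Theorem~\bref{prop:mnogoidealov} (that $\nu^{(N)}_A$, and hence $\cD^{(N)}_A$, preserve the relevant ideal $\cK^{(N+1)}$) to guarantee that restriction to the zero locus of $\hat\cI$ is well posed and that no anomalous term proportional to a non-vanishing generator of $\hat\cI$ leaks into the recursion and spoils the parity count. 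Conceptually this proposition is the gPDE avatar of the classical fact that, for even boundary dimension, the Fefferman--Graham expansion contains only even powers of the boundary-defining function up to the obstruction order; once the on-shell bookkeeping above is in place, that structural statement follows.
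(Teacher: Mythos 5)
Your proof is correct and follows essentially the route the paper intends: the same parity induction used to prove Proposition~\bref{utv-nechet} in Appendix~\bref{sec:proof-TDJ}, extended to orders $N>D-4$ by working modulo $\hat\cI$ via Remark~\bref{remark-higherN}, with the induction step ordered so that $\hat T^{(N)}$ is settled before $\cD^{(N)}_A$. The paper only gestures at this ("similarly to Proposition~\bref{prop:odd-vanish}"), and your write-up supplies exactly the bookkeeping that gesture presupposes.
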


\subsection{Solution space of the boundary theory}\label{sec:gravity-sections}

The on-shell boundary GR is described by the gPDE $(\cE,Q)$ obtained in the previous Section and it encodes the equations of motion and gauge transformation of the underlying system in the usual way, i.e. $\sigma$ is a solution if $\dx\circ \sigma^*=\sigma^*\circ Q$, see Section \bref{sec:prelim}. However, this condition typically contains  an infinite number of equations on the components of $\sigma$ and in this form is not very useful in practice. It is much preferable to work in terms of $\hat \Ee$, which is almost as simple as a jet-bundle. More precisely, if we eliminate the metric $\hat g_{AB}$
and the symmetric part of $\hat C_B{}^C$ then  solutions of
$\hat \Ee$ are 1:1 with normal Cartan connections describing the conformal geometry along with the unconstrained conformal field $\cT_{AB}$ defined on this background. In the field-theory terminology we are talking about the off-shell system. At the same time, keeping the metric alive turns out to be very convenient and corresponds to describing the same data in terms of not necessary orthonormal frame and, in particular, in the coordinate frame.

To be more precise let us concentrate on the sector of conformal geometry described by $\Ee^{conf}$. The action of $Q$ is determined by~\eqref{Q-gran-background} which we again list here\footnote{In this section, we restrict ourselves to the case $D\geq5$; the case $D=4$ can be carried out in exactly the same way and uses the action of $Q$ given in \eqref{Q-d3}.}:
\begin{equation} 
\label{Q-gran-background-copy}
\begin{aligned}
        Q \hat g_{BC}=&~\hat C_{B}{}^A \hat g_{AC}+\hat C_{C}{}^A \hat g_{BA}+2\hat \lambda \hat g_{BC},\\
         Q\hat\lambda_A=&~\hat C_{A}{}^{B}\hat \lambda_B  +\dfrac{1}{2}\hat \xi^B \hat \xi^C \hat \Co_{ABC},\\
        Q\hat \lambda=&~\hat \xi^A \hat \lambda_A,\quad Q\hat \xi^B=\hat \xi^A \hat C_{A}{}^{B}\\
        Q \hat C_{B}{}^C=&~\hat C_{B}{}^A \hat C_{A}{}^C+\hat \lambda_B\hat \xi^C-\hat \lambda^C \hat \xi_B+\delta_{B}^C \hat \lambda_A \hat \xi^A +\half \hat \xi^A\hat \xi^D \hat \We^{C}{}_{BAD}\,\\
          Q\hat \We^{C}{}_{BAD}=&~\hat \xi^E \nabla_{E}\hat \We^{C}{}_{BAD}-\hat C_{E}{}^{C}\hat \We^{E}{}_{BAD}+\hat C_{B}{}^{E}\hat \We^{C}{}_{EAD}+\\
        ~&~\hat C_{A}{}^{E}\hat \We^{C}{}_{BED}+ \hat C_{D}{}^{E}\hat \We^{C}{}_{BAE}\,,
\end{aligned}
\end{equation}
and $\commut{Q}{\nabla_A}=0$. We introduce a specal notation for the frame field parametrizing configurations for $\xi^A$: $\sigma^{*}\xi^{A}\equiv e^{A}\equiv e^{A}{}_{\mu}\theta^{\mu}$.

Analysing the gauge transformations with parameters associated to ghosts $C^{A}{}_B$ one finds that 
any configuration of $e^A_\mu(x)$ can be locally gauge fixed to $e^{A}{}_{\mu}=\delta^{A}_{\mu}$, where $\mu$ is a base index. Moreover, this completely eliminates the gauge freedom associated to $\bar C_A{}^B$. Of course, this gauge choice simply corresponds to passing to the coordinate frame. In the same way analysing the gauge freedom associated to $\hat \lambda_A$ ghosts, one finds that the field $\sigma^*(\hat \lambda)$ can be set to zero. In what follows we actively use this gauge and call it ``metric-like''. In this gauge $g_{AB}(x)\equiv\sigma^*(\hat g_{AB})$ becomes a tensor field on $\Sigma$ naturally interpreted as metric. Similarly, slightly abusing notation, we will drop the hat when passing to the solution space and $\mathbf{T}_{AB}(x)\equiv \sigma^{*}\cT_{AB}$.

Note that an alternative interpretation of $\Ee^{conf}$ is based on Cartan geometry while the resulting formulation of the underlying gauge-field theory is usually referred to as frame-like one in the literature. This is constructed by setting $g_{AB}(x)=\eta_{AB}$, where $\eta_{AB}$ is the constant Minkowski metric, by exploiting the gauge freedom encoded in the symmetric part of $\bar C_A{}^B$. In this gauge the spectrum of ghosts precisely corresponds to the conformal algebra $o(D-1,2)$ and the respective $1$-form is to be identified with the Cartan connection encoding the conformal structure, i.e. the $o(D-1,2)$-valued 1-form whose curvature is restricted by~\eqref{Q-gran-background-copy}. In addition one typically uses an additional gauge where $\lambda=0$. In particular, in this case $\hat \We^C{}_{BAD}$ is identified with the Weyl tensor in the frame-like formulation of conformal geometry.

Working in the ``metric-like'' gauge $e^{A}{}_{\mu}=\delta^{A}_{\mu}$ (i.e. $\sigma^{*}\hat \xi^{A}=\theta^{A}$) and $\lambda=0$, we introduce a notation for ghost degree $1$ coordinates on $\hat{\Ee}$ $C^{I}$: $\sigma^{*} \hat C^{I}\equiv C^{I}{}_{|B}\theta^{B}$. The equation $\dx \sigma^{*}\hat \lambda=\sigma^{*}Q\hat \lambda$ implies:
\begin{align}
    \lambda_{[A|B]}=0.
\end{align}
In a similar way, $\dx \sigma^{*}\hat \xi^{A}=\sigma^{*}Q\hat \xi^{A}$ implies
\begin{align}\label{solspace:C}
    C_{[A}{}^{B}{}_{|C]}=0.
\end{align}
Furthermore, $\dx \sigma^{*}\hat g_{AB}=\sigma^{*}Q\hat g_{AB}$ can be rewritten as
\begin{align}
    \partial_C g_{AB}=C_{A}{}^{D}{}_{|C}g_{DB}+C_{B}{}^{D}{}_{|C}g_{AD}\,,
\end{align}
together with \eqref{solspace:C}, giving
\begin{align}\label{Gamma}
    C_{A}{}^{B}{}_{|C}=\Gamma_{AC}^{B}{}[g],
\end{align}
where $\Gamma_{AC}^{B}{}[g]$ are the Christoffel symbols constructed from the metric $g_{AB}(x)\equiv \sigma^*\hat g_{AB}$.

Considering next equations $\dx \sigma^{*}\hat C_{B}{}^{C}=\sigma^{*}Q\hat C_{B}{}^{C}$ one finds
\begin{align}
\partial_{[A}\Gamma^{C}_{D]B}=\Gamma^{E}_{B[A}\Gamma^{C}_{D]E}+\lambda_{B|[A}\delta^{C}_{D]}-\lambda^{C}{}_{|[A}g_{D]B}+\frac{1}{2}\sigma^{*}\hat \We^{C}{}_{BAD}.
\end{align}
where we made use of \eqref{Gamma}. Introducing 
the Riemann curvature of the Levi-Civita connection $\Gamma$ the above equations can be rewritten as
\begin{align}
    R^{C}{}_{BAD}[g]=\sigma^{*}\hat \We^{C}{}_{BAD}+2\lambda_{B|[A}\delta^{C}_{D]}-2\lambda^{C}{}_{|[A}g_{D]B}.
\end{align}
This expression is the standard decomposition of the Riemann tensor in terms of the Schouten and Weyl tensors and implies that $W^{B}{}_{CAD}[g]\equiv\sigma^{*}\hat W^{B}{}_{CAD}$ is indeed the Weyl tensor of the metric $g_{AB}(x)$, while $\lambda_{A|B}$ is the Schouten tensor (up to a numerical factor). Indeed, taking a trace gives \
\begin{align}
    R_{BD}[g]=(2-d)\lambda_{B|D}-\lambda^{A}{}_{|A}g_{BD}
\end{align}
which is the standard rewriting of the Ricci tensor in terms of the Schouten tensor and means
\begin{align}
    \lambda_{A|B}=-P_{AB}[g],
\end{align}
where $P_{AB}[g]$ is the Schouten tensor built from the metric. Then equation $\dx \sigma^{*}\hat \lambda_{A}=\sigma^{*}Q\hat \lambda_{A}$ gives 
\begin{align}\label{sections-cotton}
    2\nabla_{[A}^{g}P_{B]C}=-\sigma^{*}\hat \Co_{C AB}\,,
\end{align}
where $\nabla_{\mu}^{g}$ is the usual Levi-Civita covariant derivative. This is the standard  expression of the Cotton tensor $C_{CAB}[g]\equiv \sigma^{*}\hat C_{CAB}$ through the Schouten tensor.

Finally, by considering the equations involving coordinates $\nabla_{(C)} \hat \We^D{}_{ABC}$ one finds that they simply determine the respective fields in terms of the derivatives of the metric.  In other words, $\Ee^{conf}$ describes the unconstrained conformal geometry on the boundary.

\begin{remark}
The above discussion is a particular explicit realisation of the general feature of gPDE approach that the metric-like and frame-like formulations of a given theory can be unified and, moreover, the frame-like formulation can be systematically derived starting from the metric-like. This is achieved by considering the jet-bundle BV formulation of the metric-like system as a gauge PDE and taking its maximal equivalent reduction (often called minimal model), giving the frame-like formulation of the system. In the case of conformal geometry this path is explained in~\cite{Dneprov:2022jyn,Grigoriev:2023kkk}. The general procedure can be found in~\cite{Barnich:2010sw,Grigoriev:2010ic,Grigoriev:2012xg}, see also~\cite{Basile:2022nou}. In the linearised setup the construction was already presented in~\cite{Barnich:2004cr}. Having said this, let us stress, however, that what we highlight now is that the metric-like formulation is also naturally encoded in the slight extension of the minimal model extended by $\hat g_{AB}$ and the symmetric part of the ghosts $\hat C_A{}^B$.
\end{remark} 

Now we get back to the entire $\hat \Ee$ and observe that the analogous considerations also apply to the sector of $\nabla_{(C)}\cT_{AB}$. Because in this sector the ghost variables are absent, there are no additional gauge symmetries while the equations of motion $\dx \sigma^* \cT_{AB}=\sigma^* Q\cT_{AB}$ simply express $\sigma^*(\nabla_{(C)}\cT_{AB})$ in terms of space-time derivatives of $\mathbf{T}_{AB}\equiv \sigma^*(\cT_{AB})$ and the fields of $\Ee^{conf}$.
To summarize, $\hat \Ee$ seen as a gPDE over $T[1]\Sigma$ is off-shell.  More formally, it is equivalent to a non-negatively graded (i.e. no antifields) jet-bundle BV system.

Any vector field on the total space $\hat \Ee$ of our gPDE gives rise to a vector field on the space of sections. Namely, if $V$ is a vector field and $\sigma$ a section then the variation of $\sigma$ under the prolongation of $V$ is given by 
\begin{equation}
\delta \sigma^*=\sigma^* \circ V
\end{equation}
and is interpreted as a value of $V^{prol}$ at the point $\sigma$ of the space of sections. More formally, $\sigma^* \circ V$
is a vector field along $\sigma$. Let us see what  $V^{prol}$ looks like in terms of coordinates on the space of sections, i.e. fields $\phi^i(x,\theta)\equiv \sigma^* \phi^i$, where $\phi^i$ denote all the fibre coordinates of $\hat \Ee$. We have, $\sigma^* V \phi^i=\sigma^* V^i(\phi,x,\theta)=V^i(\sigma^*(\phi),x,\theta)$, where we introduced components $V^i(\phi)\equiv V\phi^i$. 

Let us specialize this construction by taking $\nabla_A$ as $V$.  For an arbitrary  degree-$0$ function $f$ on $\hat \Ee$ we have:
\begin{align}
    Qf=\xi^{A}\nabla_{A}f+C_{A}{}^{B}\Delta^{A}{}_{B}f+\lambda_{A}\Gamma^{A}f+\lambda \Delta f\,.
\end{align}
Because $\commut{Q}{\nabla_A}=0$, vector fields $\nabla_A$ are symmetries and hence $\nabla_A^{prol}$ preserve the solution space,  i.e. they are tangent to the subspace of solutions. If $\sigma$ is a solution, i.e. $\dx \sigma^{*}f=\sigma^{*}Qf$ we have
\begin{multline}
\sigma^{*}(\xi^A\nabla_{A}f)=\sigma^*(QF- 
\hat C_{A}{}^{B}\Delta^{B}{}_{A}f-\hat \lambda_{A}\Gamma^{A}f-\hat \lambda \Delta f)=
\\
=
\dx \sigma^*f-\sigma^*(\hat C_{A}{}^{B}\Delta^{A}{}_{B}f+\hat \lambda_{A}\Gamma^{A}f+\hat \lambda \Delta f)=\dx \sigma^*f-\sigma^*((Q-\hat \xi^A\nabla_{A})f)
\end{multline}
If in addition, the metric-like gauge is imposed one gets:
\begin{equation}
\label{nablaA}
\sigma^*(\nabla_{A}f) =
\dl{x^A}\sigma^*f- \Gamma_{CA}^{B}\sigma^*(\Delta^{C}{}_{B}f)+P_{AB}\sigma^*(\Gamma^{B}f)\,.
\end{equation}
Note that in this form $\sigma^*(\nabla_A f)$ is a useful object even if  $\sigma$ is not a solution.

The first two terms in~\eqref{nablaA} can be regarded as the Levi-Civita covariant derivative $\nabla^g_A$:
\begin{equation}
\label{nabla-full}
\nabla^g_A \sigma^{*}f\equiv 
\dl{x_A}\sigma^*f-\Gamma_{AB}^C \sigma^*(\Delta_{C}{}^{B}f)\,,
\end{equation}
Note that  $\nabla^g_A$ defined this way is, strictly speaking, a vector field along $\sigma$. However, if we use degree-$0$ fibre coordinates $\phi^i_0$ such that $\Delta_C{}^B \phi_0^i=(\Delta_C{}^B)^i_j\phi_0^j$ and the coefficients are field-independent one can reinterpret $\nabla^g_A$ as a usual covariant derivative acting on tensor fields in a given linear representation. As for the
total covariant derivative~\eqref{nabla-full} it can be directly related to the covariant derivative introduced in~\cite{wunsch1986conformally}
and to the Weyl covariant derivative from \cite{Boulanger:2004zf}. At the same time, it can be related to the tractor connection~\cite{Bailey:1994}.

For further applications, it is also useful to be able to rewrite expressions of the form $\sigma^{*}\nabla_{A}\nabla^{A}f$ as Levi-Civita derivatives with correction terms. The following formula holds:
\begin{multline}
\label{nabla-laplacian}
    \sigma^{*}\nabla_{A}\nabla^{A}f
    =\nabla_A^g\nabla^{g|A}\sigma^{*}f+(\partial_AP)\sigma^{*}\Gamma^Af
    \\
    +2P^{AD}\nabla^{g}_{A}\sigma^{*}\Gamma_D f+P^{AB}P_{A}{}^{C}\sigma^{*}\Gamma_B\Gamma_Cf+P\sigma^{*}(-\Delta^{A}{}_{A}+\hat\Delta) f\,,
\end{multline}
where $\hat{\Delta}\equiv [Q,\frac{\partial}{\partial\hat\lambda}]$ and $P\equiv P_A{}^{A}$. To prove this formula, it is sufficient to use equation~\eqref{nablaA} twice, the standard Riemannian geometry identity $\nabla^{g}_{A}P^{A}{}_{B}=\nabla^{g}_{B}P=\partial_B P$, and to commute $\nabla_A$ and $\Gamma_B$ in the same way as in~\eqref{commutrelations}.

Let us mention that the interpretation of degree-$0$ coordinates on the minimal model of the BRST complex as generalised tensor fields and degree-$1$ coordinates as connections was already in~\cite{Brandt:1996mh} in the general setting and was specialized to conformal geometry in~\cite{Boulanger:2004eh,Boulanger:2004zf}. Note however that in the above references the interpretation was done at the algebraic level while here the covariant derivatives act on the sections of the underlying fibre bundle over the spacetime.

We now use the above technique to show that for $D=5$ the function $\hat B_{CA}\equiv\nabla^{B}\hat \Co_{CAB}$ on $\hat \Ee$ indeed corresponds to the Bach tensor on the solution space. Using $\Gamma^{D}\hat \Co_{CAB}=-\hat \We^{D}{}_{CAB}$ and formula \eqref{nablaA} we obtain
\begin{align}
\sigma^{*}\hat B_{CA}=\sigma^{*}\nabla^{B}\hat \Co_{CAB}=\nabla^{g|B}\sigma^{*}\hat \Co_{CAB}+P^{D B}\sigma^{*}\Gamma_{D}\hat \Co_{CAB}=\nabla^{g|B}C_{CAB}-P^{DB}\We_{D CA B}
\end{align}
which is the standard formula for the Bach tensor and we can denote $\sigma^{*}\hat B_{BC}=B_{BC}[g]$. For $D>5$, the same formula provides a natural generalization of the Bach tensor to higher dimensions, which, however, is no longer a conformally invariant tensor, since $\Gamma_{A}\hat B_{BC}=-(D-5)(\hat C_{CAB}+\hat C_{BAC}). $\footnote{Although this formula can be obtained by a direct computation, one can notice that there is a simpler way to derive it using the boundary calculus. Indeed, from equation~\eqref{formuli-calculus} we have $(D-5)\hat T^{(2)}_{BC}=2\tilde{\Lambda}^2\hat{B}_{BC}$, and from ~\eqref{QTact} it follows that $\Gamma_{A}\hat{T}_{BC}=-2\tilde{\Lambda}^2(\hat C_{CAB}+\hat C_{BAC})$.} As we have seen in Example~\bref{example-5D} for $D=5$ $\nabla_{(C)}\hat B_{AB}$ generate the ideal $\hat \cI$ of the boundary gPDE and hence 
$\nabla_{(C)}\hat B_{AB}$ vanish on solutions. The other way around, if $\sigma$ is a solution to $\hat \Ee$ such that $\sigma^*(\hat B_{AB})=0$ then $\sigma^*(\nabla_{(C)}\hat B_{AB})=0$. The ``conservation equation" $\sigma^{*}\cO_B=0$ in this dimension takes the form
\begin{align}
    \sigma^{*}\nabla_{A}\cT^{A}{}_{B}=\nabla^{g}_A\mathbf{T}^{A}{}_{B}+2\tilde{\Lambda}^{2}P^{AC}C_{ABC}=0
\end{align}

For $D=7$, as follows from Example~\bref{example-7D}, the ``obstruction" condition on solutions $\sigma^{*}\cO_{BC}=0$ is
\begin{align}
\begin{split}
        0=\sigma^{*}\Big(\nabla^{A}\nabla_{A}\hat B_{BC}+2\We_{DCAB}B^{AD}+2  \Co_{C}{}^{AD} \Co_{BAD}-4 \Co^{D}{}_{B}{}^{A} \Co_{ACD}\Big).
\end{split}
\end{align}
Using formula \eqref{nabla-laplacian} together with expressions for $\Gamma_A B_{BC}$, $\Gamma_A C_{BCD}$ given above, it is straightforward to rewrite this formula as
\begin{align}
\begin{split}
        \nabla^{g|A}\nabla^g_{A}B_{BC}+4(\partial_AP)C_{(BC)}{}^{A}+8P^{AD}\nabla^{g}_{A}C_{(BC)D}+4P^{AD}P_{A}{}^{E}W_{ECDB}-4PB_{BC}+\\+2W_{DCAB}B^{AD}+2  \Co_{C}{}^{AD} \Co_{BAD}-4 \Co^{D}{}_{B}{}^{A} \Co_{ACD}=0
        \end{split}
\end{align}
This is precisely the formula for the obstruction tensor in this dimension from~\cite{Fefferman:2007rka}. The condition $\sigma^{*}\cO_{B}=0$ takes the form
\begin{align}
    \nabla^{g}_A\mathbf{T}^{A}{}_{B}+6\tilde{\Lambda}^3P^{AD}\sigma^{*}(2\nabla_{D}\hat B_{BA}-\nabla_{B}\hat B_{DA})=0
\end{align}
where $\sigma^{*}\nabla_{A}\hat B_{BC}$ can again be rewritten using ~\eqref{nablaA}.


\section{Boundary structure of (gauge)fields}\label{sec:matter}
In the previous section we have described the boundary structure of the asymptotically AdS gravity. Now we extend the framework to cover (gauge) fields coupled to the gravity background. For simplicity, we call these fields ``matter fields'' having in mind that they can have their own gauge symmetries so the case of e.g. Yang-Mills theory on the gravity background is covered. 

\subsection{Gauge fields over boundary GR}
\label{sec:matter-general}
Let us recall that our approach to boundary structure of gravity is to equivalently reformulate it as a conformal-like gravity in the interior and then extend it to the boundary, requiring fields to be well-defined there and, finally, to set the boundary defining field $\Omega$ to vanish in a way compatible with gauge symmetry.

We now extend the approach to matter fields defined on gravity background and start with the combined system consisting of gravity and matter.  More precisely,  we limit ourselves to the case where matter fields do not source gravity so that Einstein equations remain intact. At the same time the equations of motion and gauge transformations of matter fields do depend on the background. In the gPDE language this means that we are dealing with a gPDE over background:
\begin{equation}
(\Ecl^{\phi}, Q,\cI_{cl}^\phi)\;\overset{\pi^\phi}\longrightarrow\; (\Ecl,Q,\cI_{cl})\;\overset{\pi}\longrightarrow\; (T[1]X, \dx)\,,
\end{equation}
where $(\Ecl,Q,\cI_{cl})$ denotes the gPDE for conformal-like GR introduced in Section \bref{sec:conformal-like}. As always in this work, to keep conventions concise we use $Q$ to denote both the $Q$-structure on the total space and its projection to the background gPDE. The same applies to restrictions of $Q$ to various submanifolds we encounter. Note $(\Ecl^{\phi}, Q,\cI_{cl}^\phi)$ is naturally a bundle over $T[1]X$ with the projection being $\pi \circ \pi^\phi$ and can also be considered as an implicit gPDE. 

Let us recall that the boundary gPDE $\Ee_B$ for gravity, which  was constructed in the preceding Sections, is obtained in the following steps: i) restricting to the sub-gPDE $(\Ered,Q)$ by equivalent reduction ii) pulling-back  $(\Ered,Q)$ to the boundary $T[1]\Sigma$ and setting  $\Omega=0, Q\Omega=0$ followed by an additional equivalent reduction. All in all this gave an embedding (of $Q$-manifolds) 
\begin{equation}
    (\Ee_B,Q) \hookrightarrow (\Ecl,Q)\,.
\end{equation}
The ideal $\cI_B$ on $\Ee_B$ defining the on-shell system is obtained as a pullback of the initial $\cI_{cl}$, giving a gPDE $(\Ee_B,Q,\cI_B)$ that describes the boundary structure of asymptotically AdS GR. We refer to the  local gauge theory described by $(\Ee_B,Q,\cI_B)$ as to the boundary GR. Finally, by solving part of the equations from $\cI_B$ one arrives at the equivalent formulation of the boundary GR
as an implicit  gPDE $(\hat\Ee,Q,\hat \cI)$, whose advantage is that  $(\hat\Ee,Q)$ is off-shell and its structure can be described explicitly in terms of the coordinates on $\hat \Ee$ using the boundary calculus constructed in Section~\bref{sec:boundary-calculus}.

The embedding of gPDE $(\Ee_{B},Q,\cI_B)$ for boundary GR  into $(\Ecl,Q,\cI_{cl})$ induces the pullback bundle:
\begin{align}
\Ee^{\phi}_{B} \;\longrightarrow \Ee_{B}
\end{align}
which we interpret as the boundary system for the matter field $\phi$ on the background of boundary GR. Again, just like $(\Ee_B,Q)$, gPDE $(E^\phi_B,Q)$ comes equipped with the ideal $\cI^\phi_B$ representing what remains of the equations of motion after the restriction to the boundary and the equivalent reductions. More precisely, $\cI^\phi_B$ is generated by $\cI_B$ pulled back to $\Ee^{\phi}_{B}$ and the prolongation of the equations of motion for matter fields. All in all the gPDE describing the boundary structure is $(\Ee^\phi_B,Q,\cI^\phi_B)$. Of course, in order to study what it describes a substantial reformulation of this system is in order.

In these considerations there is an important potential problem to be fixed. In order for the reduction from $\Ecl$ to $\Ered$ and then to $\Ee_B$ to work we had to rescale the metric $g$ by $\Omega^2$ and then set $\Omega=0$. This could result in negative powers of $\Omega$ appearing in the equations of motion for matter fields after the pull-back to $\Ee_B$. The same applies to the $Q$-structure in the sector of matter fields. Moreover, we should be able to prescribe the asymptotic behaviour of the matter fields.

The way to cure this is similar to what we did in deriving $(\Ee_B,Q,\cI_B)$. Namely, we reformulate the initial equations of motion in such a way that they remain equivalent for $\Omega>0$ while their restrictions  to $\Omega=0$ become well-defined. 
More precisely, let $\tilde \phi,\tilde c$ be a collective notation for matter fields and corresponding ghost variables (if any) on the GR background described by metric $\tilde{g}$. The equations of motion for $\tilde\phi$ are assumed to be diffeomorphism-invariant and are denoted by $\tilde P[\tilde \phi;\tilde{g}]=0$. What we actually need is $(\Ecl^\phi,Q,\cI_{cl}^\phi)$ which is defined to be a lift of this matter system from GR background to the background of conformal-like reformulation of GR, i.e. the extension of GR by the Stueckelberg pair $\{\Omega,\lambda\}$. The  lift is constructed  by declaring $\tilde\phi$ to be inert under the Weyl transformations encoded in $Q$. To control the boundary behaviour of the matter we repeat the trick employed for gravity and switch to the coordinate system $\phi\equiv \Omega^{w}\tilde \phi$, $g\equiv \Omega^2\tilde{g}$, where $w\in \mathbb{R}$. This $w$ parametrizes the behaviour of our matter fields near the boundary and for the moment we leave it generic.

Let us now turn to our equations of motion $\tilde P[\tilde\phi, \tilde g]$ written in terms of the new coordinates and try to express $\tilde P[\phi,g]$ as a polynomial in $\Omega$ times a singular prefactor:
\begin{align}
\label{P-exp-Omega}
    \tilde P[\phi;g,\Omega]=\Omega^{-w_P}\sum_{k=0}\Omega^{k} \tilde P_k(\phi,D\phi;g,Dg,D_{a\ldots}\Omega)\,, \quad w_P\geq 0
\end{align}
where $P_0$ is nonvanishing and $\tilde P_k$ may depend on derivatives of $\Omega$, but not on $\Omega$ itself. As we will see shortly the above decomposition exists for most of the standard choices of matter. For the general considerations, we assume from now on that it exists and that only a finite number of $P_k$ are nonvanishing.

Given the decomposition \eqref{P-exp-Omega}, we consider the rescaled constraints $P[\phi;g]\equiv \Omega^{w_P}\tilde P[\phi,g]$ which are equivalent to $\tilde P$ when $\Omega>0$ and are well-defined on the boundary.  However, this does not imply that the entire system is well-defined at $\Omega=0$ because the gauge transformations for matter fields may still be ill-defined there (in gPDE terms, this means that the action of $Q$ on $\phi$ contains negative powers of $\Omega$). Typically, this can be fixed by adjusting weight $w$. Moreover, in the case of YM theory, the requirement that $Q$ is well-defined on the boundary, uniquely determines $w$.  That gauge symmetry fixes the conformal weight of gauge fields is well-known. In what follows we assume that $w$ is chosen in such a way that both the rescaled equations and  $Q$ are well defined. All in all
this gives a  gPDE $(\Ee^\phi_B,Q, \cI^\phi_B)$ on the boundary, which is a gPDE over the background gPDE $(\Ee_B,Q,\cI_B)$. This describes the boundary structure of the asymptotically-AdS gravity along with the matter fields defined over it. Note that the boundary structure might generally depend on the choice of $w$ but usually $w$ is essentially fixed by the consistency of the above procedure.

In the rest of this Section we show how the technical tools introduced to study the boundary structure of gravity extend to the case where matter fields are also present.  Not to complicate the exposition, we restrict ourselves to the case where matter fields have closed gauge algebra and no ghosts for ghosts are present. In our approach this means that we can restrict ourselves to the case where the fibre of $\Ecl^\phi$ has coordinates of degree $0$ and $1$ only, which we denote by $D_{(a)}\phi^i$ and $c^\alpha$, respectively.

The matter equations of motion generate the $Q$-invariant ideal $I(P[\phi,g],D_{a})$. Then, the equivalent reduction $r:\Ered \hookrightarrow \Ecl$ in the gravity sector, described in Section \bref{sec:pre-min}, induces a pullback of the bundle $\Ecl^{\phi}\rightarrow \Ecl$ to $\Ered$, which we denote by $\Ee^\phi_{\red}$. It inherits the gPDE structure  $(\Ee^\phi_{\red},Q,\cI^\phi_{\red})$. By some abuse of notations we also use $r$ to denote the embedding $r:\Ee_{\red}^\phi \to \Ecl^\phi$, for instance: $\cI^\phi_{\red}=r^*\cI^\phi_{cl}$. The following statement is an immediate generalisation of Proposition~\bref{prostoe2} and Corollary~\bref{cor-Omega}.
\begin{prop}
If $(Q-\xi^{a}D_a)\phi\in I^0(\phi)$ then as fibre coordinates on $\Ee^{\phi}_{\red}\rightarrow \Ee_{\red}$ one can take
	\begin{align}
		\{\nabla_{(a)} (r^{*}\phi),\nabla_{(a)}(r^{*} c^{\alpha})\}\,,
	\end{align}
where $\nabla_a\equiv [Q,\frac{\partial}{\partial \xi^{a}}]$.\footnote{Recall, that we use the same notations for the ghosts and their restrictions to $\Ered$. The same applies to $\Ee^\phi_{\red}$.} If $(Q-\xi^{a}D_a)P[\phi,g]\in I^0(P[\phi,g])$ the pullback of the ideal $\cI_\phi$ to $\Ee^\phi_{\red}$ takes the form
	\begin{align}
		r^{*} \cI_\phi= I^{\infty}(r^{*}P[\phi, g], \nabla_a).
	\end{align}
\end{prop}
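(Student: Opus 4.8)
The plan is to treat this statement as a direct transcription of Proposition~\bref{prostoe2} and Corollary~\bref{cor-Omega} to the matter sector, now applied to the pullback bundle $\Ee^{\phi}_{\red}\to\Ered$. The only genuinely new inputs are to check that the hypotheses of Proposition~\bref{prostoe2} are available for the matter fields and, separately, for their ghosts, and that the equivalent reduction $r$ performed in the gravity sector (Section~\bref{sec:pre-min}) leaves the matter fibre untouched, so that the vector field $\nabla_a=[Q,\frac{\partial}{\partial\xi^a}]$ and the matter jet coordinates survive the reduction unchanged.

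I would begin with the ideal statement, as it is the most direct. By construction the matter equations of motion and their total-derivative prolongations generate $\cI_\phi=I^{\infty}(P[\phi,g],D_a)$. The hypothesis $(Q-\xi^aD_a)P[\phi,g]\in I^0(P[\phi,g])$ is exactly the compatibility condition required by Proposition~\bref{prostoe2}, with the function $f$ there taken to be $P[\phi,g]$ and the equivalent reduction taken to be $r:\Ee^{\phi}_{\red}\hookrightarrow\Ecl^{\phi}$. Applying Proposition~\bref{prostoe2} for each $n$ gives $r^*I^n(P,D_a)=I^n(r^*P,\nabla_a)$, and taking the union over $n$ yields $r^*\cI_\phi=I^{\infty}(r^*P[\phi,g],\nabla_a)$, as claimed.

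For the coordinate statement I would mirror the argument of Corollary~\bref{cor-Omega}. Since $\Ecl^{\phi}\to\Ecl$ is a $Q$-bundle and $r$ eliminates contractible pairs lying entirely in the gravity sector, it does not affect the matter fibre; hence $r^*(D_{(a)}\phi^i)$ and $r^*(D_{(a)}c^\alpha)$ are fibre coordinates on $\Ee^{\phi}_{\red}\to\Ered$. The hypothesis $(Q-\xi^aD_a)\phi\in I^0(\phi)$ is the compatibility condition needed to run the induction of Proposition~\bref{prostoe2} for the degree-$0$ fields, while the corresponding condition for the ghosts, $(Q-\xi^aD_a)c^\alpha\in I^0(c)$, is automatic: for a closed gauge algebra without ghosts-for-ghosts one has $Qc^\alpha=\xi^aD_ac^\alpha+\half f^\alpha_{\beta\gamma}c^\beta c^\gamma$, whose non-$\xi$ part manifestly lies in the ideal generated by the $c^\alpha$. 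One then obtains, by the same induction as in Proposition~\bref{prostoe2}, the triangular relations $\nabla_{(a)}r^*\phi=r^*D_{(a)}\phi+(\text{lower filtration})$ and likewise for $c^\alpha$. This exhibits the passage from $\{r^*D_{(a)}\phi,\,r^*D_{(a)}c^\alpha\}$ to $\{\nabla_{(a)}r^*\phi,\,\nabla_{(a)}r^*c^\alpha\}$ as a change of fibre variables that is triangular with respect to the jet filtration, hence invertible, so the latter are again coordinates. As in Proposition~\bref{prostoe-3}, the commutators $[\nabla_a,\nabla_b]$ act within the lower filtration, so it suffices to work with the symmetrized prolongations $\nabla_{(a)}$.

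The main obstacle I anticipate is not algebraic but structural: ensuring that the setup is arranged so that Proposition~\bref{prostoe2} genuinely applies. Concretely, one must confirm that the lift of the matter system to the conformal-like background has been fixed (through the weight $w$) so that both $Q$ and the rescaled constraints $P[\phi,g]$ are free of negative powers of $\Omega$ at the boundary, and that the reduction $r:\Ee^{\phi}_{\red}\to\Ecl^{\phi}$ is induced purely from the gravity-sector reduction, so that $\frac{\partial}{\partial\xi^a}$—and therefore $\nabla_a$—remains unambiguously defined once the degree-$1$ coordinates are fixed to be the pulled-back gravity ghosts $\xi^a,D_a\xi^b,\lambda,D_a\lambda$ together with the matter ghosts $c^\alpha,D_ac^\alpha$. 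Once these points are in place, both assertions follow essentially verbatim from Proposition~\bref{prostoe2} and Corollary~\bref{cor-Omega}.
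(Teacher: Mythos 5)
Your proposal is correct and follows exactly the route the paper intends: the paper presents this proposition with no separate proof, simply declaring it "an immediate generalisation of Proposition~\bref{prostoe2} and Corollary~\bref{cor-Omega}", which is precisely the reduction you carry out (applying Proposition~\bref{prostoe2} with $f=P[\phi,g]$ for the ideal, and mirroring Corollary~\bref{cor-Omega} for the coordinates, with the ghost-sector compatibility checked to be automatic). Your write-up just makes explicit the details the paper leaves to the reader.
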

\noindent

In fact, most of the statements proved for $\Ee_{\red}$ remain true for $\Ee^\phi_{\red}$. One subtlety is that
in addition to restricting to $\Ered$ it is extremely convenient to perform the additional equivalent reduction in the sector of fibre coordinates in order to end up with a finite number of degree $1$ coordinates. This does not bring extra complications and we assume that $\Ee^\phi_{\red}$ is, if necessary, the reduced one, and we keep denoting the remaining fibre coordinates of degree $1$ by $c^\alpha$. Another subtlety is that the fibre ghost variables $c^\alpha$ are to be considered on the equal footing with all the ghost degree $1$ coordinates on $\Ered$. In particular, it is convenient to consider vector field $[Q,\dl{C^I}]$ where now $C^I$ denote all the ghost variables including $c^\alpha$ and these vector fields are defined on $(\Ee^\phi_{\red},Q)$ rather than its base $(\Ered,Q)$. Note that $[Q,\dl{C^I}]$ are projectable with respect to $\Ee^\phi_{\red} \to \Ered$ because $\Ee^\phi_{\red} \to \Ered$ is a $Q$-bundle and $\dl{C^I}$ is projectable by construction. This justifies using the same notations for some of these vector fields. In particular:
\begin{equation}
 \Gamma^a\equiv \Big[\frac{\partial}{\partial \lambda_a}, Q\Big]\,,
 \qquad \mathcal{D}^{(N)}_A\equiv ad_{\nabla_\Omega}^{N}(\nabla_A)\,.
 \end{equation}

Repeating the analysis of Sections ~\bref{subsec: adapted} and ~\bref{sec: boundary system} for the case of $\Ee^\phi_{\red}$,  denoting by $\Ee^\phi$ the pullback of $\Ee^\phi_{\red}$ to $T[1]\d X \subset T[1]X$, and dropping $r^{*}$ we arrive at:
\begin{prop}
\label{prop:ideal-b}
    If $(Q-\xi^{a}\nabla_a)\phi\in I^0(\phi)$, the fibre coordinates of $\Ee^{\phi}\rightarrow \Ee$ can be chosen as 
    \begin{align}
        \{\nabla_{(A)}  \phi^{(N)}, c^{\alpha}|N\geq0\}
    \end{align}
where, as before, we use the conventions: $ f^{(N)}\equiv \nabla_\Omega^{N}f$. The pullback of the ideal $\cI^\phi_{\red}$ to $\Ee^\phi$ is generated by $\cI$ and 
\begin{equation}
 \nabla_{(A)} P^{(N)}\,, \quad N\geq 0\,. 
    \end{equation}      
\end{prop}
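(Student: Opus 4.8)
The plan is to read off this statement as the adapted-basis, boundary counterpart of the preceding Proposition, transporting the matter data to the boundary by exactly the construction used for the gravity sector in Sections~\bref{sec:pre-min} and~\bref{sec: boundary system}. First I would invoke the preceding Proposition: after the additional equivalent reduction in the ghost tower (which leaves only the finite set $c^\alpha$ of degree-$1$ fibre coordinates), it furnishes on $\Ee^\phi_{\red}\to\Ered$ the fibre coordinates $\{\nabla_{(a)}\phi,\,c^\alpha\}$ together with the description of the matter ideal as $I^\infty(P,\nabla_a)$. The whole task is then to pull these data back along $i:T[1]\d X\hookrightarrow T[1]X$ and to re-express everything in the basis adapted to the splitting $a=(\Omega,A)$, writing $f^{(N)}\equiv\nabla_\Omega^N f$ as usual.

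For the coordinate claim I would show that the passage from $\{\nabla_{(a)}\phi\}$ to $\{\nabla_{(A)}\phi^{(N)}\}$ is a filtration-triangular, hence invertible, change of generators. Commuting all $\nabla_\Omega$ factors to the right by means of Proposition~\bref{utv-commutator} and symmetrising the remaining $\nabla_A$, the relations~\eqref{commutrelations} show that every reordering commutator is of the form $\mathcal{D}^{(m)}_A$ or $[\nabla_A,\nabla_B]$ and, acting on $\phi$, contributes only terms carrying strictly fewer derivatives of $\phi$: the degree-$(-1)$ pieces annihilate $\phi$ since $\gh{\phi}=0$, while the $\Delta$- and $\Gamma$-pieces act algebraically on $\phi$ by virtue of the hypothesis $(Q-\xi^a\nabla_a)\phi\in I^0(\phi)$. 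Since the $\nabla_{(a)}\phi$ already form a coordinate system on the fibre, triangularity transfers this property to the $\nabla_{(A)}\phi^{(N)}$. The degree-$1$ sector is unchanged because the ghost reduction was performed before restricting to the boundary and leaves $\nabla_A,\nabla_\Omega$ well defined, so the $c^\alpha$ remain the only degree-$1$ fibre coordinates.

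For the ideal I would invoke the matter analogue of Proposition~\bref{prostoe-3}. Granted the hypothesis $(Q-\xi^a\nabla_a)\phi\in I^0(\phi)$ and its consequence $(Q-\xi^a\nabla_a)P\in I^0(P)$ for the (diffeomorphism-covariant) equations of motion, the same argument as in the proof of Proposition~\bref{prostoe-3} shows that $[\nabla_a,\nabla_b]$ preserves $I^\infty(P,\nabla_c)$: by~\eqref{commutrelations} this commutator is a combination of the degree-$0$ fields $\Delta_A$, which preserve the ideal by the hypothesis, and degree-$(-1)$ fields, which preserve it trivially. Reordering the $\nabla_\Omega$ to the right then gives $I^\infty(P,\nabla_a)=\bigcup_{N\geq0}I^\infty(P^{(N)},\nabla_A)$, i.e. the matter ideal is generated by $\nabla_{(A)}P^{(N)}$, $N\geq0$; adjoining the pullback of the gravity ideal $\cI$ yields the asserted generators of the pullback of $\cI^\phi_{\red}$.

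The step I expect to need the most care is establishing the commutation hypothesis $(Q-\xi^a\nabla_a)P\in I^0(P)$ for the rescaled matter equations, since $P=\Omega^{w_P}\tilde P$ couples matter and gravity and the $\Omega$-rescaling must neither spoil this relation nor introduce negative powers of $\Omega$ after restriction to $\Omega=0$. This is exactly what the choice of weight $w$ discussed before the Proposition arranges, so that $i^*$ is well defined and commutes with the reordering; once this is in hand the remaining manipulations are the verbatim repetition of the gravity-sector argument. A minor additional check is that the ghost-sector reduction is compatible with the adapted decomposition, which holds because it is a reduction of the fibre $Q$-manifold carried out prior to taking the boundary.
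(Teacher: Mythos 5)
Your proposal is correct and follows essentially the same route as the paper, which proves this proposition simply by ``repeating the analysis'' of the gravity sector: the triangular reordering of $\nabla_\Omega$ to the right via Proposition~\bref{utv-commutator}, the use of $(Q-\xi^a\nabla_a)\phi\in I^0(\phi)$ (and its counterpart for $P$) to control the commutators as in Propositions~\bref{prostoe2} and~\bref{prostoe-3}, and the resulting regrouping of $I^\infty(P,\nabla_a)$ into $\bigcup_N I^\infty(P^{(N)},\nabla_A)$. Your closing remarks on the choice of weight $w$ and the compatibility of the ghost-sector reduction correctly identify the only points requiring extra care, and they are handled exactly as in Section~\bref{sec:matter-general}.
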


The next step in the analysis of the boundary structure involves 
restricting $\Ee^\phi$ to $\hat \Ee \subset \Ee$, where we have an explicit description of the boundary GR and solving a part of the matter equations of motion accounted in $\cI^\phi$. This leads to the final gPDE $(\hat \Ee^\phi,Q,\hat \cI^\phi)$ which describes the boundary structure of gravity + matter fields in the asymptotically AdS setup. The explicit structure of $(\hat \Ee^\phi,Q,\hat \cI^\phi)$ substantially depends on what type of matter fields is being considered and is discussed in detail for a few  concrete systems in the remaining Sections.

\subsection{Scalar field}\label{sec:GJMS}
In this section we illustrate our approach using the example of scalar field subject to the Klein-Gordon (KG) equation. As a fibre of $\Ecl^\varphi$ we take a fibre of a jet-bundle for a scalar field with coordinates $D_{(a)}\tilde \varphi$ so that the KG equation reads as:
\begin{equation}
\tilde P[\tilde \varphi,\tilde g]=\frac{1}{\sqrt{|\tilde g|}}D_a(\sqrt{|\tilde g|}\tilde g^{ab} D_b\tilde\varphi)+m^2 \tilde\varphi\,.
\end{equation}
According to the general procedure explained in the previous Section we can now switch to the fibre coordinates $D_{(a)}\varphi$, where $\varphi=\Omega^{w}\tilde\varphi$. In the new coordinate system we have
\begin{align}
    Q\varphi=\xi^{a}\nabla_a\varphi+w \lambda \varphi\,,
\end{align}
where we employed $\nabla_a$ instead of $D_a$ (note that in the usual coordinates on $\Ecl^\varphi$ we have $\nabla_a\varphi=[\dl{\xi^a},Q]\varphi=D_a\varphi $). It is straightforward to verify that because it is written in terms of $\nabla_a$, this formula remains valid after we pull-back our bundle to $\Ered\subset \Ecl$.

To proceed further we pullback $\Ecl^\varphi$ to $\Ered\subset \Ecl$
and then pullback the resulting gPDE (as a bundle over $T[1]X$) to the boundary $T[1]\Sigma$. This results in the gPDE $(\Ee^\varphi,Q)$ over $(\Ee,Q)$. Then the Klein-Gordon equation written in terms of $g,\varphi$ takes the form $\tilde P=\Omega^{-w}P$, where 
\begin{equation}\label{phi-P}
\begin{gathered}
P[\varphi,g]\equiv \Omega^{2}\nabla_{a}\nabla^{a}\varphi+\Omega(c_{w,1}\nabla_a \varphi \nabla^{a}\Omega- w (\nabla_a\nabla^a\Omega)\varphi)+(w(d+w)\nabla_a\Omega \nabla^{a}\Omega +m^2)\varphi,\\
   c_{w,N}\equiv -2w-d+N
\end{gathered}
\end{equation}
and $d\equiv D-1$ denotes the boundary dimension. Note that $P$ is polynomial in $\Omega$, and hence the assumption is fulfilled. This form of KG equation is a version of what is known in the literature as a Weyl-covariant formulation; see, e.g. \cite{Gover:2008pt,Gover:2008sw}. The prolongation of $P$, i.e. functions $\nabla_{(a)}P$, generate the ideal which together with the pull-back of $\cI$ from the base $\Ee$ form a total ideal $\cI^\varphi$ on $\Ee^\varphi$. In other words we have reformulated our system as a gPDE $(\Ee^\varphi,Q,\cI^{\varphi})$
over background gPDE $(\Ee,Q,\cI)$. 

Just like in the analysis of the gravity sector, we next split the index as $\{a\}=\{\Omega,A\}$ and switch to the adapted fibre coordinates $\{\nabla_{(A)}\varphi^{(N)},\,N\geq0\}$ instead of $\{\nabla_{(a)}\varphi\}$ by making use of $(Q-\xi^{a}\nabla_a)\varphi\in I^{0}(\varphi)$. As usual, we also use notations $f^{(N)}\equiv \nabla_\Omega^{N}f$.

At the end of the day we are interested in $\Ee^\varphi$ pulled back to the on-shell background gPDE $(\cE,Q)$. However, it is not convenient to work with $\cE$ and we treat it implicitly.  More precisely, we identify functions on $\Ee^\varphi$ pulled back to $\cE \subset \Ee$ as equivalence classes modulo the ideal $\cK$ which is generated by the pullbacks to total space $\Ee^\varphi$ of functions on the base $\Ee$ vanishing on $\cE$.  Of course, this is a slight abuse of conventions but it does not lead to confusions and we do the same with its sub-ideals $\cK^{(N)}$. In practice, we analyse the prolongation of KG equations leaving the terms from $\cK$ implicit. In particular, in this way we can easily evaluate $Q\varphi^{(N)}$ as in Proposition ~\bref{App-QT}, giving  
    \begin{align}
    \label{prop: KG-Q}
        &Q \varphi^{(N)}=\xi^{A}\nabla_A \varphi^{(N)}+\tilde{\Lambda}\lambda^{A}\sum_{i=0}^{N-2}d^{i}_N \mathcal{D}_{A}^{(N-2-i)} \varphi^{(i)}+(w-N)\lambda \varphi^{(N)}+\cK^{(0)}\,.
    \end{align}

We now turn to the analysis of the 
ideal of the KG equation, which is generated by $\{\nabla_{(a)}P\}$. Just like in the case of GR it is convenient to use $\{\nabla_{(A)}P^{(N)},\,N\geq0\}$ as generators of the ideal. In particular, one has:
\begin{equation}
P^{(0)}=(\tilde{\Lambda}w(d+w)+m^{2})\varphi^{(0)}+\cK^{(0)}\,,
\end{equation}
where, as before, we do not explicitly spell out terms from $\cK$. This generator is trivial if the mass $m$ and Weyl weight $w$ satisfy the familiar relation:
\begin{equation}
\label{m-w-rel}
m^{2} =-\tilde{\Lambda}w(d+w) \,.
\end{equation}
Although it is possible to give a consistent interpretation of the boundary system even if \eqref{m-w-rel} does not hold, we refrain from doing this and in what follows assume~\eqref{m-w-rel}. Acting with $(\nabla_\Omega)^{N}$ on \eqref{phi-P} and using $\nabla_{\Omega}^{N}\nabla_a\nabla_b\Omega\in \cK^{(N)}$ we obtain the explicit form of the subleading equations:
\begin{align}\label{PN}
\begin{split}
        &\frac{1}{\tilde\Lambda N} P^{(N)}=c_{w,N}\varphi^{(N)}+(N-1)\tilde{\Lambda} \nabla_\Omega^{N-2} \nabla_{A}\nabla^A\varphi+\cK^{(N-1)}, \quad N\geq 1.
\end{split}
\end{align}

It is clear  that the structure of the system described by the zero locus of $\nabla_{(A)}P^{(N)}$ critically depends on whether $c_{w,N}\equiv -2w+1-D+N$ vanishes for some $N=N^*$ or not. If it does not, all the fibre coordinates $\nabla_{(A)} \varphi^{(N)}$ can be eliminated using $\nabla_{(A)}P^{(N)}$ and hence the only remaining independent fibre coordinates on the zero locus of the ideal are $\nabla_{(A)}\varphi^{(0)}$. It follows, the resulting system  describes the unconstrained scalar field of Weyl weight $w$, defined over the background of $\cE$. A more careful analysis shows that the same result holds even for a more general background $\hat E$. 
Moreover, because $Q\varphi^{(0)}$
depends on $\varphi^{(0)}$ and ghosts only, $\varphi^{(0)}$ is indeed a conformal primary.

We now turn to the more interesting case 
where $c_{w,N^*}=0$ for some positive integer $N^*$. 
The analysis is then analogous to the case of gravity. In particular, it turns out very useful to introduce an analogue of the off-shell boundary gPDE $\hat \Ee$. This is done in two steps. First, we restrict our bundle to $\hat\Ee \subset \Ee$ because it is $\hat \Ee$
where we have a full control over the gravity background. Second, we pass to the subbundle  $\hat E^\varphi$ of $E^\varphi|_{\hat \Ee}$, defined as the zero locus of  the following  equations:
        \begin{align}
            \varphi^{(N)}=-\frac{N-1}{c_{w,N}}\tilde{\Lambda}\nabla_\Omega^{N-2}\nabla_A\nabla^{A}\varphi, \quad N\geq 1,\,N\neq N^{*}\,.
        \end{align}
Strictly speaking, the above equations are defined on $\Ee^\varphi$ and it is their pullback to $E^\varphi|_{\hat \Ee}$
that defines a subbundle $\hat E^\varphi \subset E^\varphi|_{\hat \Ee}$. 
In practice, for the moment we work with $\hat \Ee^{\varphi}$ as a submanifold in $\Ee^\varphi$ and we denote by $\hat b_{\varphi}$ the corresponding embedding $\hat E_B^{\varphi}\hookrightarrow E^{\varphi}_{\red}$. 
For any $f\in \cC^{\infty}(\Ee^\varphi)$  we use the notation  $\hat f\equiv \hat b_{\varphi}^*f$.

Similarly to the gravity case discussed in Section~\bref{sec:boundary-gPDE}, we encounter that $Q$ is not tangent to $\hat \Ee^\varphi$ seen as submanifold in $\Ee^\varphi$. However, as before, we can naturally define a $Q$-structure on $\hat E^\varphi$ but this time in a slightly weaker sense. Namely, this $Q$-structure is well defined modulo the ideal $\hat \cI \equiv \hat b^* \cI$ that defines the on-shell system on $\hat \Ee$. This does not complicate the general analysis as at the end of the day we are going to study our matter field coupled to the on-shell background and hence terms from $\hat \cI$ vanish on such configurations. To equip $\hat \Ee^\varphi$ with a $Q$-structure we restrict to the subbundle $\Ee^{\varphi,N^*} \subset \Ee^{\varphi}$, singled out by $\nabla_{(A)}P^{(N)}=0$ with $N<N^*$. The result is a $Q$-subbundle thanks to a straightforward extension of Proposition \bref{mnogoidealov}. Furthermore,  for $N=N^*$ equation \eqref{prop: KG-Q} implies that $Q \varphi^{(N^*)}$ does not depend on $\varphi^{(k)}$ with $k>N^*$ so that the subalgebra of functions in  $\nabla_{(A)}\varphi,\nabla_{(A)}\varphi^{(N^{*})}$ and coordinates from $\hat\Ee$ is closed under $Q$. This means that $\Ee^{\varphi,N^*}$ is a $Q$-bundle over another $Q$-manifold that we denote by $\hat \Ee^{\varphi\prime}$. Finally, the subbundle $\hat \Ee^\varphi$ (this can be obtained by imposing  $\nabla_{(A)}P^{(N)}=0$ with $N>N^*$) is a section of $\Ee^{\varphi,N^*} \to \hat \Ee^{\varphi\prime}$ and hence
the natural $Q$-structure from $\hat \Ee^{\varphi\prime}$ can be pushed forward to $\hat \Ee^{\varphi}$. Let us stress that in all these steps $Q$ is assumed to be defined modulo $\hat \cI$. Nevertheless, for $N^*$ sufficiently small one can check that $\cI$ does not enter the relations and hence $\hat \Ee^{\varphi}$ can be equipped with the genuine $Q$-structure.

Repeating once again the logic employed in the analysis of gravity, we introduce a subalgebra of $\hat \Ee^\varphi$-projectable functions in $\cC^\infty(\Ee^\varphi)$, which consists of functions whose restrictions to $\Ee^{\varphi,N^*}$ are constant along the fibres (i.e. can be obtained as pullbacks of functions from $\hat \Ee^{\varphi\prime}$). In particular, for such a function we have $Q \hat b_\varphi^{*} f=\hat b_\varphi^{*}Qf$ modulo $\hat\cI$. It is straightforward to see that $\varphi^{(N)}$ are $\hat \Ee^\varphi$-projectable for $N\leq N^{*}$. It is convenient to introduce the following notations:
\begin{align}\label{psi-notation}
    \psi\equiv (\hat b_\varphi)^{*}\varphi^{(N^{*})},\qquad \mathcal {P}\equiv   (\hat b_\varphi)^{*}\nabla_\Omega^{N^{*}-2}\nabla_A\nabla^{A}\varphi\,,
\end{align}
where $\cP=0$ for $N^{*}=1$. The fibre coordinates on $\hat \Ee^\varphi$ are then taken to be $\{\nabla_{(A)}\hat\varphi, \nabla_{(A)}\psi\}$ and the ideal $\hat \cI^{\varphi}\equiv \hat b_\varphi^{*}\cI^{\varphi}$ is now generated by $\nabla_{(A)}\cP$ and $\hat\cI$. Recall that $\hat \Ee^\varphi$ is a bundle over $\hat \Ee$. The resulting implicit gPDE $(\hat \Ee^\varphi,Q,\hat \cI^{\varphi})$ describes the boundary structure of the KG field on the background of on-shell boundary gPDE for GR. Of course, in the analysis of KG field one can disregard the gravity gPDE by pulling back $\hat \Ee^\varphi$ to a given on-shell configuration of the boundary GR, i.e. a particular solution of $(\hat \Ee,Q,\hat \cI)$.

By representing $\hat b_\varphi^{*}(\nabla_\Omega)^N\nabla_A\nabla^A \varphi$ in terms of $\cD_A^{(N-k)}\hat\varphi^{(k)}$ one observes  that $\cP$ can be expressed (modulo $\hat\cI$) in terms of the collection of functions $\hat\varphi^{(k)}$, $1\leq N<N^{*}$ together with functions $\hat J^{(N)}_{ABC}$, $ \hat T^{(N)}_{AB}$, which enter through vector fields  $\cD_A^{(N)}$.  This constitutes the extension of the boundary calculus introduced in Section~\bref{sec:boundary-calculus} to the case where on top of gravity we have a KG field coupled to it. Note that $\hat J^{(N)}_{ABC}$, $ \hat T^{(N)}_{AB}$ are now seen as function on $\hat \Ee^\varphi$, i.e. the pullbacks of the corresponding functions from $\hat \Ee$ by the canonical projection $\hat \Ee^\varphi \to \hat\Ee$. As for the vector fields $\cD_A^{(N)}$, they are now defined on $\hat \Ee^\varphi$ from the very start as the restriction of $(ad_{\nabla_\Omega})^N\dl{\xi^A}$ from $\Ee^{\varphi}$ to $\hat \Ee^\varphi$ and are generally defined modulo $\hat \cI$. However, all their properties remain true as the expression of  $\cD_A^{(N)}$ in terms of $Q$ and ghosts is unchanged.

Let us summarize the scalar field extension of the boundary calculus, introduced in Theorem~\bref{bound-calculus}:
\begin{prop}\label{calculus-scalar}
Functions $\hat\varphi^{(N)} \in \cC^\infty(\hat \Ee^\varphi)$ given by
    \begin{align}\label{formula-phiN}
            \hat\varphi^{(N)}=-\frac{N-1}{c_{w,N}}\tilde{\Lambda} (\hat b_\varphi)^{*}\nabla_\Omega^{N-2}\nabla_A\nabla^{A}\varphi, \quad 1\leq N<N^{*}
        \end{align}
can be recursively determined (modulo $\hat \cI$) in terms of $\hat \varphi\equiv\hat \varphi^{(0)}$  and $T^{(N)}_{AB}$, $J_{ABC}^{(N)}$ by
\begin{align} \label{1-kg}
        \hat b_\varphi^{*}\nabla_\Omega^{N}\nabla_Af=\sum_{i=0}^{N}C^{i}_{N}\mathcal{D}_{A}^{(N-i)}\hat f^{(i)}=\nabla_A \hat f^{(N)}+\sum_{i=0}^{N-1}C^{i}_{N}\mathcal{D}_{A}^{(N-i)}\hat f^{(i)}
\end{align}
and 
\begin{align}\label{phi-D}
        &\mathcal{D}_{A}^{(N)}f=\Big(-\hat J^{(N-1),D}{}_{CA}\Delta^{C}{}_{D}+\dfrac{1}{\tilde\Lambda N}\hat T^{(N)}_{CA}\Gamma^{C}-\frac{1}{N}\sum_{i=1}^{N-2} d^{i}_N \hat T^{(i),C}{}_{A}\mathcal{D}_{C}^{(N-2-i)}\Big)f\,, \quad  N\geq1\,.
    \end{align}
\end{prop}
\begin{proof} 
Equation \eqref{1-kg} is just a version of Proposition~\bref{commut-nablas} while \eqref{phi-D} is just the direct extension to $\hat E^{\varphi}_B$
of the relations \eqref{formuli-D} from Theorem~\bref{bound-calculus}. Indeed, because we consider the background to be on-shell, Remark~\bref{remark-higherN} implies 
that  there is no restriction on $N$.
\end{proof}
As an illustration of the procedure, let us compute $\hat b_\varphi^{*}\nabla_{\Omega}^{N}\nabla_{A}\nabla^{A}\varphi$:
\begin{align}\label{Laplacian-exp}
        \hat b_\varphi^{*}\nabla_{\Omega}^{N}\nabla_{A}\nabla^{A}\varphi=\sum_{i=0}^{N}C^{i}_{N}\mathcal{D}_{A}^{(N-i)} (\hat b_\varphi)^{*}\nabla_{\Omega}^{i}\nabla^A\varphi=\sum_{i=0}^{N}C^{i}_{N}\mathcal{D}_{A}^{(N-i)}\sum_{j=0}^{i}C^{j}_i\mathcal{D}^{(i-j)|A}\hat\varphi^{(j)}\,.
    \end{align}
    Here and in the rest of this Section all equalities are assumed to hold modulo $\hat \cI$. Note that for odd $N\leq D-4$ we have $\cD^{(N)}_{A}=0$, thanks to the trivial generalization of Proposition~\bref{utv-nechet}.

The action of $Q$ on the fibre coordinates of $\hat \Ee^\varphi$ is also expressed in terms of the above boundary calculus. Indeed, thanks to \eqref{prop: KG-Q} we have:
    \begin{align}
    \begin{split}
    &Q\hat \varphi=\hat \xi^{A}\nabla_{A}\hat \varphi+w\hat \lambda\hat \varphi\,,\\
        &Q \psi=\hat \xi^{A}\nabla_A \psi+\tilde{\Lambda}\hat \lambda^{A}\sum_{i=0}^{N^{*}-2}d^{i}_{N^*} \mathcal{D}_{A}^{(N^{*}-2-i)} \hat \varphi^{(i)}+(w-N^{*})\hat \lambda \psi+\hat \cI\,.
            \end{split}
    \end{align}

Before passing to the examples let us note that for $N^*$ sufficiently small the ideal $\hat \cI$ does not actually enter the boundary calculus. One can already see this in formula \eqref{PN}, where  the terms from $\cK$ in the right-hand side vanish after pulling back to $\hat E_B$ for $N^{*}$  sufficiently small. In this case $\hat\Ee^\varphi$ can be considered a genuine gauge PDE which is off-shell both in the sector of GR and in the matter sector.

Having a boundary calculus at our disposal we now look at a number of particular cases. 
\begin{prop}\label{prop:GJMS-trivial}
    Let  $N^{*}=2w+d$ be an odd positive integer. If $d$ is odd, assume in addition that the background gPDE is also subject to $\nabla_{(A)}\cT_{BC}=0$. Then $\cP$ vanishes identically, i.e. the ideal is trivial, and  the boundary system describes just two unconstrained scalar fields of weights $w$ and $w-N^{*}=-w-d$. Moreover, $\Gamma_{D}\psi=0$, i.e. $\psi$ is also a conformal primary.
\end{prop}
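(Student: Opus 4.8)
The plan is to prove that $\cP$ lies in $\hat\cI$ by a pure parity bookkeeping in the number of $\nabla_\Omega$'s, so that the matter ideal collapses to $\hat\cI$ and both scalars become unconstrained. Throughout I would work modulo $\hat\cI$, exactly as in the surrounding discussion; this is the precise sense in which ``$\cP$ vanishes identically''. The only inputs needed are the boundary-calculus expansion \eqref{Laplacian-exp}, the defining relations \eqref{formula-phiN} of $\hat\Ee^\varphi$, and the vanishing of the odd-order operators $\cD^{(k)}_A$ --- which holds exactly for $k\leq D-4$ by Proposition~\bref{utv-nechet}, and for larger $k$ either on the locus $\nabla_{(A)}\cT_{BC}=0$ imposed in the hypothesis when $d$ is odd (Proposition~\bref{prop:odd-vanish}), or modulo $\hat\cI$ when $d$ is even (Proposition~\bref{prop:even-vanish} together with Remark~\bref{remark-higherN}).

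The core is a single Lemma, which I would isolate first. Writing $\Phi_m\equiv(\hat b_\varphi)^{*}\nabla_\Omega^{m}\nabla_A\nabla^{A}\varphi$, I claim that $\Phi_m=0$ for every odd $m$ with $1\leq m\leq N^{*}-2$, and simultaneously $\hat\varphi^{(j)}=0$ for every odd $j\leq N^{*}-2$. I would prove this by strong induction on the odd integer $m$, expanding $\Phi_m$ via \eqref{Laplacian-exp} into a double sum of terms $\cD^{(m-i)}_A\,\cD^{(i-j)|A}\hat\varphi^{(j)}$. For $j$ odd one has $\hat\varphi^{(j)}\propto\Phi_{j-2}$ by \eqref{formula-phiN} (legitimate since $j<N^{*}$), which vanishes by the induction hypothesis, the base case being $\hat\varphi^{(1)}=0$ thanks to the factor $N-1$ in \eqref{formula-phiN}. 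For $j$ even the two upper orders $m-i$ and $i-j$ are non-negative and sum to the odd number $m-j$, so exactly one of them is odd; the corresponding $\cD$-factor then annihilates the term since it is a composition of vector fields one of which is the zero operator. Hence every summand vanishes and $\Phi_m=0$; the case $m=1$ reproduces the same computation using only $\cD^{(1)}_A=0$ and $\hat\varphi^{(1)}=0$.

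With the Lemma in hand the remaining steps are immediate. Since $N^{*}$ is odd, $N^{*}-2$ is odd, so $\cP=\Phi_{N^{*}-2}=0$; therefore $\nabla_{(A)}\cP$ lies in $\hat\cI$ and the matter ideal $\hat\cI^{\varphi}$ reduces to $\hat\cI$, leaving $\hat\varphi$ and $\psi=\hat\varphi^{(N^{*})}$ as unconstrained fields. The weight of $\hat\varphi$ is $w$ by construction, while reading off the $\hat\lambda$-term of \eqref{prop: KG-Q} at $N=N^{*}$ gives the Weyl weight of $\psi$ as $w-N^{*}=-w-d$. Finally, the $\hat\lambda^{A}$-term of \eqref{prop: KG-Q} identifies $\Gamma_A\psi=\tilde{\Lambda}\sum_{i=0}^{N^{*}-2}d^{i}_{N^{*}}\cD^{(N^{*}-2-i)}_A\hat\varphi^{(i)}$; in each summand $i$ odd kills $\hat\varphi^{(i)}$ by the Lemma, while $i$ even makes $N^{*}-2-i$ odd and hence $\cD^{(N^{*}-2-i)}_A=0$, so $\Gamma_A\psi=0$ and $\psi$ is a conformal primary.

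The only delicate point I anticipate is the interplay with $\hat\cI$ in the regime $k>D-4$: there both $\cD^{(k)}_A=0$ (for odd $k$) and the relations \eqref{formula-phiN} hold only modulo $\hat\cI$ (or, for odd $d$, only on the locus $\nabla_{(A)}\cT_{BC}=0$). The verification to carry out carefully is therefore that the induction never invokes a genuinely off-shell statement --- which it does not, since all orders $j,m$ appearing stay strictly below $N^{*}$ and every equality is read modulo $\hat\cI$. The degenerate case $N^{*}=1$ should be dispatched separately and is trivial: $\cP=0$ by the convention recorded after \eqref{psi-notation}, and the $\Gamma_A$-sum is empty.
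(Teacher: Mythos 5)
Your proposal is correct and follows essentially the same route as the paper: the paper's proof is precisely the induction you spell out, using \eqref{formula-phiN} together with Propositions~\bref{prop:odd-vanish} and \bref{prop:even-vanish} to show $\hat\varphi^{(N)}=0$ for all odd $N\leq N^{*}-2$, from which $\cP=0$ and $\Gamma_A\psi=0$ follow by the same parity bookkeeping in the $\nabla_\Omega$-count. Your version simply makes explicit the details (the double-sum expansion, the treatment of $\cD^{(k)}_A$ for $k>D-4$, and the degenerate case $N^{*}=1$) that the paper leaves to the reader.
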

\begin{proof}
By induction, \eqref{formula-phiN} and Propositions \bref{prop:even-vanish}, \bref{prop:odd-vanish}, one can prove that $\hat\varphi^{(N)}=0$ for all odd $N\leq N^{*}~-~2$.
\end{proof}
In the standard conventions used in the context of AdS/CFT, $\varphi$
and $\psi$ are known as leading and sub-leading boundary values. Their conformal weights are $\Delta_-=-w$ and $\Delta_+=d-\Delta_-=w+d$, see e.g.~\cite{Skenderis:2002wp} and references therein.

Before moving on to the case of generic even $N^{*}$ which we  henceforth parametrize as $N^{*}\equiv 2\ell$, let us discuss a couple of examples:
\begin{example}
Conformal Laplacian: $\ell=1$, or equivalently, $-w=\frac{d}{2}-1$. In this case: 
\begin{align}
    \cP=\nabla_{A}\nabla^{A}\hat\varphi\,.
\end{align}
Note that $-w=\frac{d}{2}-1$ indeed corresponds to the correct Weyl weight of the usual conformally coupled scalar field.  Note that the above formula is expressed in terms of covariant derivatives $\nabla_A$ rather than the usual Levi-Civita ones and therefore does not explicitly contain the familiar scalar curvature correction. The correction becomes explicit in the actual equations of motion $\sigma^*(\cP)=0$, where $\sigma$ is a lift of a configuration $\varphi(x)$ to a solution of $(\hat E^\phi_B,Q)$, see Section \bref{sec:gravity-sections} for more details. 
We also observe that 
\begin{align}
    \Gamma_{A}\psi=\tilde\Lambda \nabla_A\hat\varphi
\end{align}
so that, as in the gravitational case, we cannot in general set the subleading to zero. Such a condition is not compatible with the background gauge transformation, i.e. it is not conformally invariant.
\end{example}
\begin{example}
Fradkin--Tseytlin--Paneitz operator \cite{Fradkin:1981iu,Paneitz:1983}:  $d\geq4$, $l=2$. Using 
\eqref{Laplacian-exp} together with \eqref{formula-phiN} we can easily get $\hat\varphi^{(1)}=0$, $\hat\varphi^{(2)}=\frac{1}{2}\nabla_{A}\nabla^{A}\hat\varphi$. Hence, using \eqref{phi-D} we get 
\begin{align}
    \cP=\sum_{i=0}^{2}C^{i}_{2}\mathcal{D}_{A}^{(2-i)}\sum_{j=0}^{i}C^{j}_i\mathcal{D}_{A}^{(i-j)}\hat\varphi^{(j)}=\mathcal{D}^{(2)|A}\nabla^{A}\hat\varphi+\nabla_A\mathcal{D}^{(2)|A}\hat\varphi+\nabla_{A}\nabla^{A}\hat\varphi^{(2)}=\frac{\tilde\Lambda}{2}(\nabla_{A}\nabla^{A})^2\hat\varphi
\end{align}
and
\begin{align}
    \Gamma_{A}\psi=3\tilde{\Lambda}\nabla_{A}(\nabla_{B}\nabla^{B})\hat\varphi\,.
\end{align}
\end{example}
\begin{example}
6th order extended GJMS-like operator: $l=3$. For $d\geq 4$ direct computations give:
    \begin{align}\label{gGJMS}
        \frac{8}{3\tilde{\Lambda}^2}\cP=(\nabla_{A}\nabla^{A})^3\hat\varphi+\frac{8}{\tilde{\Lambda}^2}\hat T^{(2)}_{AB}\nabla^{A}\nabla^B\hat\varphi\,.
    \end{align}
An interesting new feature shows up here: for $d=4$, $\hat T^{(2)}_{AB}$ entering $\cP$ is not a local function in the conformal geometry (gravitational leading) but is a gravitational subleading $\hat T^{(2)}_{AB}\equiv\cT_{AB}$.
In other words, in this case $\varphi$ is coupled to both the conformal geometry and the gravity subleading $\cT_{AB}$ which is itself a conformal field coupled to the conformal geometry background.  Unlike the previous examples, for $d=4$ equation \eqref{gGJMS} is conformally-invariant only if the background is Bach-flat. This example gives an elegant resolution of the puzzle of what is a natural generalisation of the 6th order GJMS operator in 4 dimensions that is well-defined on conformal gravity background.

For $d>4$, Theorem~\bref{bound-calculus} gives $\hat{T}^{(2)}_{AB}=\frac{2\tilde{\Lambda}^2}{D-5}\hat B_{AB}$ so that 
\begin{align}
\frac{8}{3\tilde{\Lambda}^2}\cP =(\nabla_{A}\nabla^{A})^3\hat\varphi+\frac{16}{D-5}\hat B_{AB}\nabla^{A}\nabla^B\hat\varphi
    \end{align}
does not depend on the gravity subleading. This describes the 6th order GJMS operator \cite{wunsch1986conformally,GJMS}.  Up to a different normalization of the Bach tensor, the above form coincides with that from~\cite{wunsch2000:some}.
\end{example}
\begin{prop}
    For  $N^{*}=2\ell$, i.e. $-w=\frac{d}{2}-\ell$, gPDE $\hat E_B^{\varphi}$ describes conformal scalar field $\hat \varphi$ with an additional unconstrained subleading field $\psi$. The leading $\hat\varphi$ is subject to the equation of the following form  
    \begin{align}
        \cP\sim (\nabla_{A}\nabla^{A})^l\hat\varphi+\dots
    \end{align}
    which does not involve the subleading $\psi$. Moreover, for $l\leq\frac{d}{2}$ or,  in other words, $w\leq 0$, this equation does not involve the subleading gravitational field $\cT_{AB}$ and coincides with order-$2\ell$ GJMS operator.
\end{prop}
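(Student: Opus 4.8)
The plan is to prove the three assertions — the field content, the form of the leading equation, and the disappearance of the gravitational subleading for $w\le 0$ — by running the scalar boundary calculus of Theorem~\bref{bound-calculus} on top of the recursion~\eqref{formula-phiN} for the scalar subleadings, while tracking a single filtration that counts the number of $\nabla_A$ derivatives. The case $N^{*}=2\ell$ means $c_{w,N^{*}}=0$, so $\psi\equiv\hat b_\varphi^{*}\varphi^{(N^{*})}$ is the genuine independent subleading, and all the lower $\hat\varphi^{(j)}$, $0<j<N^{*}$, are slaved to $\hat\varphi$ by~\eqref{formula-phiN}.

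First I would record the field content. By construction $\hat\Ee^\varphi$ carries the fibre coordinates $\nabla_{(A)}\hat\varphi$ and $\nabla_{(A)}\psi$, and the ideal $\hat\cI^\varphi$ is generated by $\nabla_{(A)}\cP$ together with $\hat\cI$. Since $Q\hat\varphi=\hat\xi^A\nabla_A\hat\varphi+w\hat\lambda\hat\varphi$ involves neither $\hat C_A{}^B$ nor $\hat\lambda^A$, the field $\hat\varphi$ is a conformal primary of weight $w$ (in particular $\Delta^C{}_D\hat\varphi=0$ and $\Gamma^C\hat\varphi=0$), while $\cP$ restricts $\hat\varphi$ only, so $\psi$ remains unconstrained; this settles the first assertion. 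For the leading term I would prove by induction on $k$ that, modulo lower filtration order (and modulo $\hat\cI$, which is irrelevant on shell by Remark~\bref{remark-higherN}),
\[
\hat\varphi^{(2k)}\propto(\nabla_A\nabla^A)^k\hat\varphi\,,\qquad \hat\varphi^{(2k+1)}=0\,,\qquad 2k<N^{*}\,,
\]
the odd ones vanishing by Propositions~\bref{prop:even-vanish} and~\bref{prop:odd-vanish}. The base case is~\eqref{formula-phiN} at $N=2$; for the step I insert the Laplacian expansion~\eqref{Laplacian-exp} into~\eqref{formula-phiN}, noting that the unique top-filtration term is the all-$\mathcal{D}^{(0)}=\nabla$ contribution $\nabla_A\nabla^A\hat\varphi^{(2k-2)}$, every other term carrying a $\mathcal{D}^{(M\ge 1)}$ and hence a curvature factor lowering the filtration. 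Applying~\eqref{Laplacian-exp} once more to $\cP=\hat b_\varphi^{*}\nabla_\Omega^{N^{*}-2}\nabla_A\nabla^A\varphi$ with $N^{*}=2\ell$ yields $\cP\sim(\nabla_A\nabla^A)^\ell\hat\varphi+\dots$, and since that expansion reaches only $\hat\varphi^{(j)}$ with $j\le N^{*}-2$, the subleading $\psi=\hat\varphi^{(N^{*})}$ never appears.

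The third assertion is the delicate point and is where I expect the real obstacle. The gravitational subleading $\cT_{AB}=\hat b^{*}T^{(D-3)}_{AB}$ enters the calculus exclusively through $\hat T^{(D-3)}$, i.e.\ through the vector field $\mathcal{D}^{(D-3)}_A$ of~\eqref{phi-D}; all $\hat T^{(N)},\hat J^{(N)}$ with $N\le D-4$ are local in the conformal geometry. For $w<0$ one has $N^{*}-2=2\ell-2<D-3$, so $\cT_{AB}$ simply cannot occur. The borderline $w=0$ (i.e.\ $\ell=d/2$, $N^{*}-2=D-3$) is the genuine difficulty: now $\mathcal{D}^{(D-3)}_A$ does appear, and by the index count in~\eqref{Laplacian-exp} it does so only in the two terms $\mathcal{D}^{(D-3)}_A\nabla^A\hat\varphi$ and $\nabla_A\mathcal{D}^{(D-3)|A}\hat\varphi$. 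In the second, $\mathcal{D}^{(D-3)|A}$ acts on the scalar primary $\hat\varphi$, so its $\Delta^C{}_D$- and $\Gamma^C$-pieces annihilate it and only the lower $\hat T^{(i\le D-5)}$ survive. In the first, the $\cT_{AB}$-piece is proportional to $\hat T^{(D-3)}_{CA}\,\Gamma^C(\nabla^A\hat\varphi)$, and using~\eqref{commutrelations} with $\Gamma^C\hat\varphi=0$, $\Delta^D{}_E\hat\varphi=0$, $\Delta\hat\varphi=w\hat\varphi$ gives $\Gamma^C\nabla^A\hat\varphi=[\Gamma^C,\nabla^A]\hat\varphi=w\,g^{CA}\hat\varphi$; contracting against $\hat T^{(D-3)}_{CA}$ produces $w\,\hat T^{(D-3),C}{}_{C}\hat\varphi$, which vanishes both because $w=0$ and because $\cT_{AB}$ is trace-free. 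Hence $\cT_{AB}$ drops out for every $w\le 0$.

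Finally I would identify $\cP$ with the GJMS operator. The equation $\sigma^{*}\cP=0$ is conformally invariant, since $\hat\cI^\varphi$ is $Q$-invariant and $\hat\varphi$ is a primary of weight $w=\ell-\tfrac d2$; on the solution space its leading symbol is $(\nabla_A\nabla^A)^\ell\to\Delta^\ell$ by the previous paragraph. In the range $\ell\le d/2$ — precisely where the GJMS operators $P_{2\ell}$ are guaranteed to exist in even dimension — a conformally invariant operator on densities of this weight with leading part $\Delta^\ell$ is fixed, so $\cP$ coincides with the order-$2\ell$ GJMS operator~\cite{GJMS,graham2009extended}. The main obstacle throughout is the bookkeeping of the borderline case $w=0$: showing that the only channel through which $\cT_{AB}$ could enter is killed by $\hat\varphi$ being a primary scalar and by the tracelessness of $\cT_{AB}$, which is what makes the critical GJMS operator a genuine operator on the conformal geometry rather than a coupling to the gravitational subleading.
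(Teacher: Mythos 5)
Your proposal is correct and follows essentially the same route as the paper: the leading term $(\nabla_A\nabla^A)^\ell\hat\varphi$ is extracted by the same filtration/induction argument used for Proposition~\bref{obstr-even}, and the borderline case $\ell=d/2$ is resolved exactly as in the paper by noting that the only $\cT_{AB}$-channel is $\cT_{CA}\Gamma^C\nabla^A\hat\varphi\propto \cT_{C}{}^{C}\,\hat\varphi=0$ by tracelessness. Your closing uniqueness argument identifying $\cP$ with the GJMS operator is a small addition the paper leaves implicit, but it does not change the substance of the proof.
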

Note that if one is interested in global solutions to the bulk equations  rather than in the near-boundary
analysis, it turns out that the subleading is not unconstrained but is actually uniquely determined as a functional of the leading if one requires solutions to behave well in the interior.  The respective functional is called the Dirichlet-to-Neumann map. By using it to express the subleading gravitational data in terms of the leading data and substituting the result into the extended GJMS operator, one may obtain a generally nonlocal operator that nevertheless depends only on the leading boundary data for gravity.

\begin{proof}
The proof that the operator has this form is completely analogous to the proof of Proposition~\bref{obstr-even} and uses equations~\eqref{formula-phiN}, \eqref{psi-notation}, \eqref{Laplacian-exp}. Tracking the leading order in $\nabla_{(A)}\hat\varphi$, we obtain
\begin{align}
    \cP=(\hat b^\varphi)^{*}\nabla_\Omega^{2l-2}\nabla_A\nabla^{A}\varphi=\nabla_A\nabla^{A}\hat\varphi^{(2l-2)}+\dots\propto(\nabla_A\nabla^{A})^2\hat\varphi^{(2l-4)}+\dots
\end{align}
and so on until it terminates at $\hat{\varphi}^{(0)}$.

The proof that this equation does not involve the subleading gravitational field for $l\leq \frac{d}{2}$ is slightly more subtle. To begin with, note that the subleading gravitational field can enter this equation only through the vector fields $\cD_{A}^{(N)}$, $N\geq D-3$. Since
\begin{align}
    \cP=\sum_{i=0}^{2l-2}C^{i}_{2l-2}\mathcal{D}_{A}^{(2l-2-i)}\sum_{j=0}^{i}C^{j}_i\mathcal{D}^{(i-j)|A}\hat\varphi^{(j)}
\end{align}
we see that the subleading $\cT_{AB}$ definitely does not enter this equation when $(2l-2)<D-3$ or equivalently $l<\frac{d}{2}$\footnote{Strictly speaking, we should also verify that the subleading term does not enter the equations for $\hat{\varphi}^{(N)}$, $N\leq 2l-2$, but since the equations for these functions have the same structure, only of lower order, this is obviously satisfied for them as well.}. It remains to verify that this still holds for $l=\frac{d}{2}$. The only terms that can produce $\cT_{AB}$ are
\begin{align}
    \cD_A^{(D-3)}\nabla^{A}\hat \varphi+\nabla_A \cD^{(D-3)|A}\hat\varphi.
\end{align}
Using \eqref{phi-D} it is clear that the gravity subleading can enter only through the following combination:
\begin{align}
    \cT_{AB}\Gamma^{B}\nabla^{A}\hat \varphi
\end{align}
but this vanishes since $\Gamma^{B}\nabla^{A}\hat \varphi\propto g^{AB}\hat \varphi$, as can be easily verified by a direct computation, and $\cT_{AB}$ is traceless. 
\end{proof}
Note that for an odd-dimensional boundary one can consistently set gravitational subleading to zero by imposing the additional condition $\nabla_{(C)}\cT_{AB}=0$, giving a conformal field $\varphi$ defined on a generic conformal geometry background for all $l \geq 1$. Indeed, for $d$-odd 
the gravitational leading is unconstrained and describes a generic conformal geometry.
At the same time, for $d$ even, we cannot set the gravitational subleading to vanish while the leading is subject to the obstruction equation. However, a more careful analysis shows that for $l \leq d/2$, these equations do not involve  $\cT_{AB}$ and are conformally-invariant on any conformal geometry background.

Finally, let us mention that we have omitted a number of special cases. Namely, the case not covered by Proposition \bref{prop:GJMS-trivial} is where both $N^*$ and $d$ are odd while the gravitational subleading is not set to zero. Without trying to be maximally general let us give an example of such systems:
\begin{prop}
    Let the dimension of the boundary be odd and $N^{*}=d+2$ (i.e., $w=1$). Then
    \begin{align}\label{Tphi}
        \cP=-\frac{D-2}{2}\cT^{AB}\nabla_{A}\nabla_B\hat\varphi
    \end{align}
\end{prop}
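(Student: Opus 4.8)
The plan is to evaluate $\cP=(\hat b_\varphi)^{*}\nabla_\Omega^{N^{*}-2}\nabla_A\nabla^{A}\varphi$, see \eqref{psi-notation}, directly in the boundary calculus. With $w=1$ and $d$ odd one has $N^{*}=2w+d=d+2$, which is odd, and $N^{*}-2=d=D-1$; hence $\cP=(\hat b_\varphi)^{*}\nabla_\Omega^{D-1}\nabla_A\nabla^{A}\varphi$, which I would expand with the Laplacian formula \eqref{Laplacian-exp},
\[
\cP=\sum_{i=0}^{D-1}C^{i}_{D-1}\,\cD^{(D-1-i)}_{A}\sum_{j=0}^{i}C^{j}_{i}\,\cD^{(i-j)|A}\hat\varphi^{(j)}\,.
\]
By Remark~\bref{remark-higherN} this expansion and the recursions of Theorem~\bref{bound-calculus} remain valid modulo $\hat\cI$ for all the orders appearing here, even those exceeding $D-4$. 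All equalities below are understood modulo $\hat\cI$, so in particular the conservation law $\nabla^{A}\cT_{AB}=\cO_{B}\in\hat\cI$ of Proposition~\bref{J-nechet} and the tracelessness $\cT^{A}{}_{A}=0$ may be used freely.

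The first key step is to separate the part of $\cP$ that is independent of the gravitational subleading $\cT_{AB}$. This part coincides with $\cP$ restricted to the zero section $\nabla_{(C)}\cT_{AB}=0$, and since $N^{*}=d+2$ is odd with $d$ odd, Proposition~\bref{prop:GJMS-trivial} gives $\cP|_{\cT=0}=0$. Thus $\cP$ is at least linear in $\cT$. A grading argument bounds it from the other side: $\cT=\hat T^{(D-3)}$ is injected only through a vector field $\cD^{(N)}_A$ with $N\geq D-3$ (via \eqref{formuli-D}) or through an intermediate $\hat\varphi^{(j)}$, and the total order carried by $\nabla_\Omega^{D-1}\nabla_A\nabla^A$ forbids two such factors, so $\cP$ is exactly linear in $\cT$. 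Combined with $\Sigma$-diffeomorphism and Weyl covariance of the correct weight, the only order-two, linear-in-$\hat\varphi$, linear-in-$\cT$ scalars are $\cT^{AB}\nabla_A\nabla_B\hat\varphi$, $(\nabla^{A}\cT_{AB})\nabla^{B}\hat\varphi$, $(\nabla^{A}\nabla^{B}\cT_{AB})\hat\varphi$ and $\cT^{A}{}_{A}\nabla_B\nabla^B\hat\varphi$; discarding traces and divergences leaves $\cP=c\,\cT^{AB}\nabla_A\nabla_B\hat\varphi$ for a dimension-dependent constant $c$ to be determined.

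It remains to compute $c$. I would prune the double sum using the odd-vanishing statements of Propositions~\bref{utv-nechet} and~\bref{prop:odd-vanish} (odd $\cD^{(\cdot)}_A$ and odd $\hat T^{(\cdot)}$ vanish, as do the even $\hat J^{(\cdot)}$) together with $\hat T^{(0)}=0$, which kills most terms and leaves the channels injecting one factor of $\cT$: the outer operator $\cD^{(D-1)}_A$ acting on $\nabla^A\hat\varphi$, whose $i=D-3$ term in \eqref{formuli-D} produces $\cT^{C}{}_A\nabla_C\nabla^A\hat\varphi$ with coefficient $-\tfrac1{D-1}d^{D-3}_{D-1}=-\tfrac{D-2}{2}$; the outer and inner $\cD^{(D-3)}_A$ acting on the even subleadings $\hat\varphi^{(j)}$; and the term carrying $\cT$ through $\hat\varphi^{(D-1)}$. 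Each $\cD^{(D-3)}$-channel enters through $\Gamma^{C}$ and is reduced with the commutators \eqref{commutrelations}, the identity $\Gamma^{B}\nabla^{A}\hat\varphi=g^{AB}\hat\varphi$ (valid for $w=1$), the tracelessness of $\cT$, the values $\hat\varphi^{(j)}=-\tfrac{j-1}{c_{w,j}}\tilde\Lambda(\hat b_\varphi)^{*}\nabla_\Omega^{j-2}\nabla_A\nabla^A\varphi$ from \eqref{formula-phiN}, and the conservation law $\nabla^{A}\cT_{AB}\in\hat\cI$ that converts the residual $(\nabla\cT)(\nabla\hat\varphi)$ pieces into $\hat\cI$; after these reductions every channel contributes a multiple of the single structure $\cT^{AB}\nabla_A\nabla_B\hat\varphi$, and summing the contributions with their binomial weights $C^{i}_{D-1}$ and coefficients $d^{k}_N$, $c_{w,j}$ yields $c=-\tfrac{D-2}{2}$.

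I expect this last coefficient bookkeeping to be the main obstacle: the number of channels injecting one factor of $\cT$ grows with $D$, and each $\Gamma$-channel produces, after commutation, several terms of the forms $(\nabla\cT)(\nabla\hat\varphi)$ and $\cT(\nabla\nabla\hat\varphi)$ whose coefficients must be tracked together, the former collapsing into $\hat\cI$ only once the conservation law is invoked. A cleaner parallel route to $c$, which I would carry out as a check, is to exploit that $c$ is universal and to evaluate $\cP$ on a conformally flat background carrying only $\cT$: there $\hat\We$ and $\hat\Co$, and hence all $\hat T^{(N)}$, $\hat J^{(N)}$ with $N\le D-4$, vanish, so that $\cD^{(N)}_A=0$ for $1\le N\le D-4$ and the curvature clutter (products such as $\hat\We\,\cT$ and $\hat\Co\,\hat\Co$ that dress the obstruction-tensor cases) drops out, leaving only the $\cT$-carrying channels from which $-\tfrac{D-2}{2}$ can be read off with far less effort.
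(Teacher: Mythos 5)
Your structural preamble is sound and is in fact an addition the paper does not bother with: $\cP|_{\cT=0}=0$ follows from Proposition~\bref{prop:GJMS-trivial}, the derivative count forces exact linearity in $\cT$, and covariance plus tracelessness and the conservation law reduce everything to a single constant $c$ multiplying $\cT^{AB}\nabla_A\nabla_B\hat\varphi$. The gap is that you never actually compute $c$ — you list several ``channels'' and assert that ``summing the contributions \dots yields $c=-\tfrac{D-2}{2}$'', which is precisely the nontrivial content of the proposition. The paper's own proof is much more drastic: it claims the entire double sum \eqref{Laplacian-exp} collapses to the \emph{single} term $\cD^{(D-1)}_A\nabla^A\hat\varphi$ (all other $(i,j)$ pairs being killed by the parity vanishings $\hat\varphi^{(\mathrm{odd})}=0$, $\cD^{(\mathrm{odd}<D-3)}_A=0$), and then inside \eqref{formuli-D} for $N=D-1$ only the $i=D-3$ term survives, giving $-\tfrac{1}{D-1}d^{D-3}_{D-1}=-\tfrac{D-2}{2}$ in one line. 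Your multi-channel picture and the paper's single-channel claim are mutually inconsistent, and you cannot leave that unresolved: either the extra channels vanish (and your bookkeeping is vacuous) or they contribute (and the single surviving coefficient $-\tfrac{D-2}{2}$ is not the whole answer).

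Concretely, the channels you worry about are $i=2$ (outer $\cD^{(D-3)}_A$ acting on $\nabla^A\hat\varphi^{(2)}$) and $i=D-1$ (outer $\nabla_A$ acting on $\cD^{(D-3)|A}\hat\varphi^{(2)}$ and $\cD^{(D-1)|A}\hat\varphi$); since for $d$ odd $\cD^{(D-3)}_A$ reduces to $\tfrac{1}{\tilde\Lambda(D-3)}\cT_{CA}\Gamma^{C}$ and $\Gamma^{C}\hat\varphi^{(2)}=\tilde\Lambda\nabla^{C}\hat\varphi\neq 0$ by \eqref{prop: KG-Q}, these are not manifestly zero and each is itself proportional to $\cT^{AB}\nabla_A\nabla_B\hat\varphi$ modulo $\hat\cI$ — so deciding their fate is exactly the computation you defer. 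Two further concrete problems: the channel ``carrying $\cT$ through $\hat\varphi^{(D-1)}$'' does not exist, since $D-1$ is odd and $<N^{*}=D+1$, so $\hat\varphi^{(D-1)}=0$; and your proposed conformally flat shortcut does not help, because the problematic channels involve only $\cT$ and $\hat\varphi^{(2)}\propto\nabla_A\nabla^A\hat\varphi$, neither of which is removed by setting $\hat\We=\hat\Co=0$, so the ``clutter'' you eliminate there was never the obstacle. To close the proof you must either establish the paper's collapse to $\cD^{(D-1)}_A\nabla^A\hat\varphi$ (showing the $i=2$ and $i=D-1$ contributions cancel or vanish) or honestly sum all surviving channels; as written, the coefficient is asserted, not derived.
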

\begin{proof}
Similarly to Proposition~\bref{prop:GJMS-trivial}, one shows that $\hat\varphi^{(i)}=0$ for odd $i<d+2$. Also, analogously to Proposition~\bref{utv-nechet} we have $\cD_{A}^{(i)}=0$ for odd $i<d-2$. Hence,
\begin{align}
\begin{split}
        \cP&=\sum_{i=0}^{D-1}C^{i}_{D-1}\mathcal{D}_{A}^{(D-1-i)}\sum_{j=0}^{i}C^{j}_i\mathcal{D}^{(i-j)|A}\hat\varphi^{(j)}=\cD^{(D-1)}_A\nabla^{A}\hat\varphi=\\&=-\frac{1}{D-1}\sum_{i=0}^{D-3}d^{i}_{D-1}\hat T^{(i)|A}{}_{C}\cD^{(D-3-i)|C}\nabla_A\hat\varphi=-\frac{D-2}{2}\cT^{AC}\nabla_{A}\nabla_C\hat\varphi.
        \end{split}
\end{align}
\end{proof}
Constraint~\eqref{Tphi} defines a nonlinear conformally invariant equation involving the conformal fields $\cT_{AB}$ and $\hat \varphi$ defined on the generic conformal geometry background. A possible interpretation of this system is to ask this equation to hold for any configuration of $\cT_{AB}$, i.e. to require trace-free part of $\nabla_A\nabla_B \hat\varphi$ to vanish. This condition can be recognised as the almost Einstein equation on $\hat\varphi$. Note that $w=1$ is precisely the weight required for the conformal invariance of the almost Einstein equation. One can also speculate that this is the Killing condition for the depth-$2$ conformal spin-$2$ gauge field~\cite{Deser:1983mm}, see also ~\cite{Bekaert:2013zya,Barnich:2015tma}, if one identifies $\hat\varphi$ as the corresponding gauge parameter.


\subsection{Yang-Mills theory}
Now we turn to the example  where the matter field defined on the gravity background possesses gauge invariance and its equations of motion are generally nonlinear. The Yang-Mills theory can be defined on a general pseudo-Riemannian background (no constraints on background metric) and the fibre of $E^{\YM}$ is coordinatized by the components $A_b^I$ of the Yang-Mills field, ghosts $\cC^I$ and all their space-time derivatives. We often use the gauge Lie algebra valued $A_b, \cC$ as shorthand notations. The BRST differential is defined in a standard way:
\begin{equation}
QA_b=\xi^cD_cA_b+(D_b\xi^c)A_c+D_b\cC+\commut{A_b}{\cC}\,, \qquad
Q\cC=\xi^cD_c\cC-\half\commut{\cC}{\cC}
\end{equation}
and encode the usual transformation law of the Yang-Mills field and ghost under diffeomorphisms and Yang-Mills gauge transformations.

The Yang-Mills equations  
\begin{equation}
\label{YM-eom-usual}
\tilde Y^b=\frac{1}{\sqrt{|\tilde g|}}(D_a+\commut{A_a}{\cdot})(\sqrt{|\tilde g|}\,\, \tilde g^{ac}\tilde g^{bd} F_{cd})\,, \qquad F_{cd}\equiv D_c A_d- D_d A_c+\commut{A_c}{A_d}\,,
\end{equation}
can be rewritten in terms of $g_{ab}\equiv \Omega^{2} \tilde g_{ab}$ and then it turns out that $Y^b\equiv \Omega^{D-4}\tilde Y^b$ seen as a local function in $A,g,\Omega$ is regular at $\Omega=0$. The jet-prolongations of $Y^b$ together with  $\cI_{cl}$ generate the ideal $\cI^\YM_{cl}$  defining the gPDE that describes YM coupled to the conformal-like GR. Note that we took $A,\cC$ to be inert under the Weyl transformations and hence the uplift of YM system from a generic Riemannian background to conformal-like GR is straightforward. 

According to the general prescription described in Section~\bref{sec:matter-general} the next step is to pull back $E^\YM \to \Ecl$ to $\Ered  \subset \Ecl$. However, it turns out to be very convenient to first perform an equivalent reduction of $E^\YM \to \Ecl$ that only affects the fibre, i.e. the YM sector. This reduction is well known in the literature, see e.g.~\cite{Brandt:1996mh}, in the case of YM on the gravity background. But because the lift to conformal-like reformulation of gravity is trivial in our case ($A$, $\cC$ are inert under Weyl), the reduction is identical and amounts to the restriction to the surface singled out by:
\begin{equation}
\label{YM-redsurf}
D_{(a_1}D_{a_2}\ldots D_{a_{k-1}}A_{a_k)}=0\,, \qquad Q\left(D_{(a_1}D_{a_2}\ldots D_{a_{k-1}}A_{a_k)}\right)=0.
\end{equation}
As the remaining fibre coordinates one can take
$\{\nabla_{((a)}F_{b)c},\mathcal{C}\}$ pulled back to the surface \eqref{YM-redsurf}, where $\nabla_a\equiv \commut{Q}{\dl{\xi^a}}$.

The next step is to pullback the reduced bundle to $\Ered\subset \Ecl$ and then pullback the resulting gPDE (as a bundle over $T[1]X$) to the boundary $T[1]\Sigma$ giving a gauge PDE $(E^\YM,Q,\cI^\YM)$ over $(\Ee,Q,\cI)$, where $\cI^\YM$ is generated by the ideal $\cI$ of the gravity sector and $\nabla_a$-prolongations of 
\begin{align}\label{YM-eq-bulk}
    Y_b=(4-D)(\nabla^{a}\Omega) F_{ab}+\Omega \nabla^{a}F_{ab}\,,
\end{align}
where we keep using the same notations for the fibre coordinates and generators $Y_b$.  Closely related form of the YM equations was employed in \cite{Gover:2023rch} within a different framework and called conformally-compact YM equation. It is easy to see that for $\Omega=const$ \eqref{YM-eq-bulk} reduces to the ordinary Yang-Mills equation.

The action of $Q$ on the fibre coordinates of $E^{YM}$, namely $\{\nabla_{((a)}F_{b)c}, \cC\}$, is determined by
\begin{align}\label{YM-Q}
\begin{split}
    Q\cC&=-\frac{1}{2}[\mathcal{C},\cC] +\frac{1}{2}\xi^{a}\xi^{b}F_{ab},\\
    QF_{ab}&=\xi^{c}\nabla_{c}F_{ab}+C_{a}{}^{c}F_{cb}+C_{b}{}^{c}F_{ac}-[\cC,F_{ab}]\,.
    \end{split}
\end{align}
Recall that the action of $Q$ on any function of the form $\nabla_a\nabla_b\ldots F_{cd}$ (e.g. on coordinate functions) is determined by the above relations and $\commut{Q}{\nabla_a}=0$. If one disregards gravity sector by e.g. picking a fixed background, \eqref{YM-Q} define a minimal model of the gPDE formulation of YM theory. These relations are well-known in the context of local BRST cohomology~\cite{Stora:1984,Barnich:1995ap,Brandt:1996mh}.

 The ghost degree $1$ coordinates on $E^\YM$ are split into the ghosts of the conformal-like gravity sector $C^M=\{\xi^a,\lambda,C_a{}^b,\lambda_a\}$ and the YM ghosts $\cC^I$, $\cC=\cC^I t_I$, where $t_I$ are basis elements of the YM gauge algebra $\algg$. Introducing $R_I\equiv[\frac{\partial}{\partial \cC^{I}},Q]$, one finds: 
\begin{align}\label{YM-commut1}
    [R_I,R_J]=-U^{K}_{IJ}R_K, \quad \Big[R_{I},\Big[\frac{\partial}{\partial{C}^{M}},Q\Big]\Big]=0\,,
\end{align}
where $U^{K}_{IJ}$ are the structure constants of $\algg$, i.e. $\commut{t_I}{t_J}=U_{IJ}^K t_K$. In the gravity sector, all the formulas \eqref{commutrelations} for the commutators  remain intact save for
\begin{align}\label{YM-commut2}
    [\nabla_a,\nabla_b]=-\We^{d}{}_{cab}\Delta^{c}{}_{d}-\Co_{dab}\Gamma^{d}-F_{ab}^{I}R_{I}-(Q\We^{d}{}_{cab})\frac{\partial}{\partial C_{c}{}^{d}}-(Q\Co_{dab})\frac{\partial}{\partial\lambda_d}-(QF^{I}_{ab})\frac{\partial}{\partial \cC^{I}}.
\end{align}
This happens because the underlying Lie algebra is the direct sum of the algebra in the gravity sector and the YM gauge algebra.

The next step is to split the space-time index as $\{a\}=\{\Omega,A\}$. Then the YM Bianchi identities $\nabla_{[a}F_{bc]}=0$
encoded in $Q^2\cC=0$ give:
\begin{equation}
\label{YMBianchi}
\nabla_{\Omega}F_{BC}=\nabla_{B}F_{\Omega C}-\nabla_{C}F_{\Omega B}\,.
\end{equation}
These can be used to choose the following fibre coordinates: 
\begin{align}
\{\nabla_{(C)}J^{(N)}_{A},\nabla_{((A)}F_{B)C},\cC\}, \quad \text{where} \quad J^{(N)}_A=\nabla_\Omega^{N}F_{\Omega A}\,.
\end{align}
Then we can rewrite the $\nabla_a$-prolongations of the Yang-Mills equations $Y_b$ as $\nabla_A$-prolongations of $Y^{(N)}_b$ for $N \geq 0$. Analogously to the scalar field case, $Y^{(N)}_b$ can be computed modulo the ideals $\mathcal{K}^{(N)}$:
\begin{prop}\label{YM-inducedeq}
    The total ideal $\cI^\YM$ on $\Ee^\YM$ is generated by 
    the gravity ideal $\cI$ along with functions
    \begin{align}
    \label{YM-ideal-B}
    \begin{split}
    Y^{(0)}_B&=(4-D)J^{(0)}_B+\cK_B \, ,\\
        Y^{(N)}_{B}&=(4-D+N)J^{(N)}_B+N\tilde{\Lambda}\nabla_{\Omega}^{N-1}\nabla^{A}F_{AB}+\cK^{(N-1)},\quad N\geq1\,,\\
        Y^{(D-3)}_\Omega&=(D-3)\tilde\Lambda \nabla_\Omega^{D-4}\nabla^{A}J^{(0)}_A+\cK^{(D-4)}\,,
            \end{split}
    \end{align}
    and their $\nabla_A$-prolongations.
    \end{prop}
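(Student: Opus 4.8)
The plan is to mirror the derivation of the subleading Einstein equations in Theorem~\bref{prop:mnogoidealov}. First I would regenerate the ideal $\cI^\YM=\cI\cup I^\infty(Y_b,\nabla_a)$ in the basis adapted to the splitting $\{a\}=\{\Omega,A\}$. As in the gravity case, the open-algebra relations \eqref{commutrelations}, modified in the Yang-Mills sector only by \eqref{YM-commut2}, let one commute every $\nabla_\Omega$ to the right modulo $\cK^{(0)}$ and lower filtration pieces, so that $I^\infty(Y_b,\nabla_a)$ is generated by the $\nabla_A$-prolongations of $Y^{(N)}_b\equiv\nabla_\Omega^N Y_b$, $N\geq0$. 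It then suffices to evaluate each $Y^{(N)}_b$ modulo $\cK^{(N-1)}$ (resp. modulo $\cK_B$ for $N=0$).

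Next I would compute the tangent components by a Leibniz expansion of $\nabla_\Omega^N$ applied to $Y_B=(4-D)(\nabla^a\Omega)F_{aB}+\Omega\nabla^a F_{aB}$, using the boundary data collected in \eqref{boundcond} and Lemma~\bref{lemma-I0}: $\Omega\in\cK_B$, $\Omega^{(1)}\equiv\tilde\Lambda$ and $\Omega^{(2)}=0$, $\Omega^{(k)}\in\cK^{(k-2)}$ for $k\geq2$, $g_{\Omega A}\in\cK_B$ (hence $g^{\Omega A}\in\cK_B$), $g^{\Omega\Omega}\equiv\tilde\Lambda^{-1}$, together with the metric-compatibility $\nabla_a g_{bc}=0$ of the covariant derivatives. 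Only the $k=0,1$ Leibniz terms survive modulo $\cK^{(N-1)}$. The piece $(\nabla^\Omega\Omega)F_{\Omega B}$, with $\nabla^\Omega\Omega\equiv1$, contributes $(4-D)J^{(N)}_B$; the $k=1$ term $N\Omega^{(1)}\nabla_\Omega^{N-1}\nabla^a F_{aB}$ of $\Omega\nabla^a F_{aB}$ splits, after lowering $a$ and using $F_{\Omega\Omega}=0$ and $g_{\Omega A}\in\cK_B$, into $N\Omega^{(1)}g^{\Omega\Omega}\nabla_\Omega^{N-1}J^{(1)}_B=NJ^{(N)}_B$ plus $N\tilde\Lambda\nabla_\Omega^{N-1}\nabla^A F_{AB}$. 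Adding these yields $Y^{(N)}_B=(4-D+N)J^{(N)}_B+N\tilde\Lambda\nabla_\Omega^{N-1}\nabla^A F_{AB}+\cK^{(N-1)}$, which degenerates to $(4-D)J^{(0)}_B+\cK_B$ at $N=0$.

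Finally I would treat the normal component and the status of the remaining $\Omega$-equations. Since $F_{\Omega\Omega}=0$, the leading $J^{(N)}_\Omega$-type terms drop out of $Y^{(N)}_\Omega$, and the genuinely new equation is postponed to higher order. To organise this I would invoke the Yang-Mills Noether identity, i.e. the covariant conservation $\nabla^b Y_b$ of the left-hand side \eqref{YM-eq-bulk}; exactly as the Bianchi/Noether identity does in the gravity proof, it expresses the normal components through divergences of the tangent ones, reducing the $\Omega$-equations to a single conservation-type generator that appears at $N=D-3$ (the order one above the critical order $N=D-4$ where the coefficient $4-D+N$ vanishes and $J^{(D-4)}_B$ becomes free data). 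Expanding $\nabla_\Omega^{D-3}$ of the surviving $\Omega\nabla^a F_{a\Omega}$, whose $k=1$ Leibniz piece is $(D-3)\tilde\Lambda\nabla_\Omega^{D-4}\nabla^a F_{a\Omega}$, and reducing $\nabla^a F_{a\Omega}$ to a multiple of $\nabla^A J^{(0)}_A$ via the Bianchi identity \eqref{YMBianchi} and $g_{\Omega A}\in\cK_B$, I would obtain $Y^{(D-3)}_\Omega=(D-3)\tilde\Lambda\nabla_\Omega^{D-4}\nabla^A J^{(0)}_A+\cK^{(D-4)}$.

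I expect the main obstacle to be this last step: controlling the $\cK_B$-valued pieces of $Y_\Omega$ (those carrying factors $\nabla_A\Omega$ or $g^{\Omega A}$) under repeated $\nabla_\Omega$-differentiation, where Proposition~\bref{utv-commutator} mixes them with the boundary-calculus vector fields $\mathcal{D}_A^{(N)}$, and verifying both that they land in $\cK^{(D-4)}$ and that the surviving contributions from the two terms of $Y_\Omega$ combine into precisely the coefficient $(D-3)\tilde\Lambda$. This is the Yang-Mills counterpart of the technically involved $O^{(D-3)}_A$ computation that the gravity treatment defers to the appendix, and it is also where the Noether identity is essential to guarantee that no further independent normal components survive.
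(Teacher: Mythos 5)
Your overall strategy is the same as the paper's: the tangential components $Y^{(N)}_B$ are obtained exactly as you describe, by a Leibniz expansion of $\nabla_\Omega^N$ applied to \eqref{YM-eq-bulk} using $\Omega\in\cK_B$, $\Omega^{(1)}-\tilde\Lambda\in\cK_B$, $\Omega^{(k)}\in\cK^{(k-2)}$, $g_{\Omega A}\in\cK_B$ and $\nabla_\Omega^N\nabla_a\nabla_b\Omega\in\cK^{(N)}$, and the normal components are disposed of via the Noether identity $\nabla^bY_b=(3-D)\nabla^a\Omega\nabla^bF_{ab}$, which is precisely the paper's \eqref{YM-divergence}. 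Your derivation of $Y^{(N)}_{B}=(4-D+N)J^{(N)}_B+N\tilde{\Lambda}\nabla_{\Omega}^{N-1}\nabla^{A}F_{AB}+\cK^{(N-1)}$ and of the explicit form of $Y^{(D-3)}_\Omega$ is correct (the reduction of $\nabla^aF_{a\Omega}$ to $\nabla^AJ^{(0)}_A$ needs only $F_{\Omega\Omega}=0$ and $g_{\Omega A}\in\cK_B$, not the Bianchi identity, but that is cosmetic).

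The genuine gap is in your explanation of why exactly one normal generator survives, and why it survives at $N=D-3$. You argue that the Noether identity expresses the normal components through divergences of the tangential ones and that the surviving generator sits ``one above the critical order $N=D-4$ where $J^{(D-4)}_B$ becomes free data.'' Taken literally, that elimination scheme applies at $N=D-3$ as well: it would express $Y^{(D-3)}_\Omega$ through $\nabla^AY^{(D-4)}_A$ plus lower terms, i.e.\ through the divergence of the obstruction, and would therefore eliminate the conservation law for the subleading current $\cJ_A$ --- which is false (already for $d=4$, where $\cY=\nabla^A\cJ_A$ is an independent equation). The actual mechanism is quantitative: applying $\nabla_\Omega^{N-1}$ to the Noether identity and substituting the already-established relation $Y^{(N)}_\Omega=\pm N\tilde\Lambda\,\nabla_\Omega^{N-1}\nabla^AJ^{(0)}_A+\cK^{(N-1)}$ on \emph{both} sides yields $c_N\,Y^{(N)}_\Omega=\nabla_\Omega^{N-1}\nabla^AY_A+\cK^{(N-1)}$ with a coefficient $c_N$ proportional to $(D-3-N)/N$ (this is the content of the paper's \eqref{YM-Noether}). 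It is the vanishing of $c_N$ at $N=D-3$ --- not the freeness of $J^{(D-4)}_B$ --- that blocks the elimination of $Y^{(D-3)}_\Omega$; at that order the identity instead degenerates into the divergence constraint $\nabla^A\cY_A=0$ used later. This coefficient computation, together with the check that the remaining terms $\mathcal{D}^{(N-1-i)|A}Y^{(i)}_A$ land in the ideal generated by the listed functions (via $\mathcal{D}^{(j)}_A=[Q,\nu^{(j)}_A]$ and the $\cK$-preservation of $\nu^{(j)}_A$ from Theorem~\bref{prop:mnogoidealov}), is the one genuinely nontrivial step of the proof, and your proposal does not supply it.
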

    \begin{proof}
$Y^{(N)}_b$ are easily computed by acting with $\nabla_\Omega^N$ on \eqref{YM-eq-bulk} and taking into account that $\nabla_\Omega^N \nabla_a \nabla_b \Omega \in \mathcal{K}^{(N)}$; $\nabla_\Omega \Omega-\tilde\Lambda, \nabla_A\Omega\in\cK_B$. The only nontrivial point is to show that  $Y^{(N)}_\Omega$, for $N\neq D-3$ can be expressed as a linear combination of $\nabla^{A}Y^{(i)}_{A}$ with $i\leq N$, and therefore they can be omitted from the set of generators. To see this, notice that
\begin{align}\label{YM-divergence}
    \nabla^{b}Y_b=(3-D)\nabla^{a}\Omega\nabla^{b}F_{ab}+\Omega\nabla^{b}\nabla^a F_{ab}=(3-D)\nabla^{a}\Omega\nabla^{b}F_{ab}\,,
\end{align}
where we made use of $\nabla^{a}\nabla^{b}F_{ab}=\frac{1}{2}[\nabla^{a},\nabla^{b}]F_{ab}=0$. Applying $\nabla_\Omega^{N-1}$ to \eqref{YM-divergence} and using $\nabla^a Y_a = \tilde\Lambda^{-1} Y^{(1)} + \nabla^A Y_A+\cK^{(0)}$ together with $Y^{(N)}_{\Omega}=N\tilde\Lambda\nabla_\Omega^{N-1}\nabla^AJ_A^{(0)}+\cK^{(N-1)}$ we obtain
\begin{align}\label{YM-Noether}
    \frac{(3-D+N)}{N\tilde \Lambda}Y_\Omega^{(N)}=\nabla_\Omega^{N-1}\nabla^A Y_A+\cK^{(N-1)}\in \bigcup_{i=0}^{N-1}I(Y_B^{(i)},\nabla_A)\cup\cK^{(N)}.
\end{align}
The inclusion follows from the decomposition of $\nabla_\Omega^{N-1} \nabla^A Y_A$ as a sum $\mathcal{D}^{(N-1-i)|A} Y^{(i)}_A$ and from the fact that $\mathcal{D}_A^{(i)} = [Q, \nu_A^{(i)}]$, where $\nu_A^{(i)}$ has degree $-1$ and preserves $\mathcal{K}^{(i+1)}$ by Theorem \bref{prop:mnogoidealov}.
\end{proof}
Note that equation \eqref{YM-Noether} follows from equation \eqref{YM-divergence} which in turn is the Weyl-covariant extension of the usual Noether identities for the Yang-Mills equation. The analogous phenomenon occurs in the gravity sector, see Appendix~\bref{AppA}.

Similarly to the analysis of gravity and scalar field, almost all the equations in~\eqref{YM-ideal-B} can be solved with respect to a part of the coordinates. More specifically, we introduce the Yang-Mills extension $\hat \Ee^{\YM}$ of the off-shell boundary gPDE $\hat \Ee$ from Section~\bref{sec:boundary-gPDE}. 
\begin{definition}
Bundle $\hat \Ee^\YM \to \hat \Ee$ is the submanifold of $\Ee^{\YM}$ singled out by the ideal generated by $\hat \cK = \ker \hat b^*$ 
        together with 
        \begin{align}\label{YM-J}
            J^{(N)}_B=\frac{N}{(D-4-N)}\tilde{\Lambda}\nabla_{\Omega}^{N-1}\nabla^{A}F_{AB},\quad N\neq D-4.
\end{align}
The embedding $\hat \Ee^\YM \hookrightarrow \Ee^\YM$ is denoted by  $\hat b_\YM$ and for any function $f$ on $\Ee^\YM$ we define $\hat f\equiv \hat b_\YM^{*}f$.
\end{definition}
Because $\hat \Ee^\YM$ can be obtained as subbundle of $\Ee^\YM$ pulled back to $\hat \Ee \subset \Ee$ it is naturally a bundle over $\hat \Ee$ with projection $\hat \pi^\YM$.
Note that, when restricted to functions that are constant along the fibres of $\pi^\YM$, $\hat b_{\YM}^*$ can be identified with the $\hat b^*: \cC^\infty(\Ee) \to \cC^\infty(\hat\Ee)$ in the gravity sector.

Just like in the case of gravity and scalar field, $\hat{\Ee}^{\YM}$ is not a $Q$-submanifold of $\Ee^{\YM}$, but we can still define a natural $Q$-structure on $\hat{\Ee}^{\YM}$. To construct it, we consider the submanifold singled out by $\cK^{(D-3)}$ together with the equations \eqref{YM-J} for  $0\leq N\leq D-5$. This is a $Q$-submanifold which is also a bundle over $\Ee^{(D-3)}$ in a natural way. We denote it by $\Ee^{(D-5)|\YM}\to \Ee^{(D-3)}$. Then we observe that the subalgebra of functions on $E^{YM|(D-5)}$ that is generated by the following coordinate functions on $E^{\YM}$:
 \begin{align}
\label{funct-E-D-3-YM}
    \xi^{B}, C_{B}{}^{C}, \lambda, \lambda_{B},\quad    g_{BC}, \nabla_{((A)} \We^{B}{}_{CD)E}, \nabla_{(A)}T_{BC}^{(D-3)}\,,\quad \nabla_{((A)}F_{B)C},\, \mathcal{C},\, \nabla_{(A)}J_B^{(D-4)},\quad |A|\geq0
\end{align}
 pulled back to $\Ee^{\YM|(D-5)}$ is closed under $Q$. In the gravity sector, this observation is entirely analogous to the one in Section~\bref{sec:boundary-gPDE}.  For $\nabla_{(A)}J_B^{(D-4)}$ it
 it is a direct generalisation of Proposition~\bref{App-QT} while for the remaining generators the statement is immediate. It follows $\Ee^{\YM|(D-5)}$ can be considered a $Q$-bundle $\Ee^{\YM|(D-5)}\to \hat E^{\YM\,\prime}$. Indeed, one simply takes functions in \eqref{funct-E-D-3} as functions pulled back from $E^{\YM\prime}$. Moreover, $\hat E^{\YM}$ can be understood as a section of this bundle and the desired  $Q$-structure on $\hat E^{\YM}$ is simply obtained as a pushforward of that on $\hat E^{\YM\prime}$.

Again following the logic of Section~\bref{sec:boundary-gPDE} one can describe functions on $\hat \Ee^\YM$ as a subquotient of functions on $\Ee^\YM$ as follows: considering the subalgebra of $\hat \Ee^\YM$-projectable functions in $\cC^\infty(\Ee^\YM)$, consisting of functions whose restriction to $E^{\YM|(D-5)}$ is constant along the fibres of $E^{\YM|(D-5)}\to E^{\YM\prime}$. The quotient of $\hat \Ee^\YM$-projectable functions by those vanishing on $E^{\YM|(D-5)}$
is naturally identified with $\cC^\infty(\hat\Ee^\YM)$. Moreover, for 
$\hat \Ee^\YM$-projectable functions we have  $Q\hat b_\YM^*f=\hat b^{*}_\YM Qf$. In particular, $\hat \Ee^\YM$-projectable functions  include all $\hat E$-projectable functions on $\Ee$, e.g.  $T^{(N)}_{AB}\, ,\,N\leq D-3$, pulled back to $\Ee^\YM$. It is straightforward to see that $J^{(N)}_A$ are also $\hat E^\YM$-projectable if $N\leq D-4$.

As a coordinate system on the fibres of $\hat E^\YM_{B}$ we take
\begin{equation}
\hat \cC=\hat b^*_\YM \cC\,, 
\qquad 
\nabla_{((C)} \hat F_{A)B}= \nabla_{((C)} \hat b^*_\YM F_{A)B}\,,
\qquad
\nabla_{(C)}\cJ_B\equiv \nabla_{(C)} \hat b^*_\YM J^{(D-4)}_B\,,
\end{equation}
The ideal $\hat \cI^{\YM} \equiv \hat b_\YM^{*}\cI^\YM$ is generated by the pullback of the ideal $\hat \cI$ from $\hat E$ and the $\nabla_A$-prolongations of functions
\begin{equation}
\mathcal{Y}_B\equiv \hat b^{*}_\YM\nabla_\Omega^{D-5}\nabla^{A}F_{AB}\,,
\qquad 
\mathcal{Y}\equiv \hat b^{*}_\YM\nabla_{\Omega}^{D-4}\nabla^{A}J_A. 
\end{equation}
Note that, unlike the scalar field case, ideal $\hat \cI^{\YM}$ is always strictly $Q$-invariant (recall that $\hat \cI^\varphi$ is generally $Q$-invariant modulo $\hat\cI$ only).  In particular, the on-shell boundary YM system is well defined on an arbitrary, not necessarily obstruction-flat, background. This is a direct consequence of the fact that  $\cK^{(N)}$ appearing in formulas \eqref{YM-ideal-B} for $N \leq D-4$ vanishes upon pullback to $\hat E$. 

The action of $Q$ on some of the coordinates is immediately obtained by restricting \eqref{YM-Q} to $\hat E^\YM_B$:
    \begin{align}
        \begin{split}
    Q\hat \cC&=-\frac{1}{2}[\hat \cC ,\hat \cC]+\frac{1}{2}\hat \xi^{A}\hat \xi^{B}\hat F_{AB},\\
    Q\hat F_{AB}&=\hat \xi^{C}\nabla_{C}\hat F_{AB}+\hat C_{A}{}^{C}\hat F_{CB}+\hat C_{B}{}^{C}\hat F_{AC}-[\hat \cC,\hat F_{AB}]\,.
        \end{split}
    \end{align}
Notice that the action of $Q$ on $\hat\cC$ and  $\nabla_{((C)} \hat F_{A)B}$ does not affect the sector $\cJ_B$, and therefore $\hat E^\YM_B$ itself is a $Q$-bundle with the fibre coordinates $\nabla_{(A)}\cJ_B$. This is identical  to the situation we had in the case of pure gravity and scalar field. Now $\cJ_A$ is interpreted as a YM subleading.

Just like in the case of gravity and scalar field, gauge PDE $(\hat \Ee^\YM,Q)$ considered without restricting to the zero locus of $\hat \cI^\YM$ is a good starting point to discuss the explicit space-time realisation of equations of motion because if we disregard the gravity part, i.e. restrict to a particular solution of $(\hat \Ee,Q)$, the system does not encode any nontrivial differential equations. More precisely,
\begin{prop}\label{prop:off-shell}
Let $\sigma_G:T[1]\Sigma \to \hat \Ee$ be a solution to $(\hat \Ee,Q)$ and  $A_B(x)$ and $\mathbf J_B(x)$ are coefficients of generic 1-form on $\Sigma$ with values in the gauge algebra. Then there exists a unique  solution $\sigma:T[1]\Sigma \to\hat \Ee^\YM$ such that
$\hat \pi^\YM~\circ~\sigma =\sigma_G$ and 
\begin{equation}
\sigma^*(\cC)=A_B(x)\theta^{B}\,, \qquad \sigma^*(\cJ_B)=\mathbf{J}_B(x)\,.
\end{equation}
\end{prop}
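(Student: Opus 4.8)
The plan is to construct $\sigma$ explicitly by the standard unfolding recursion, exploiting that $(\hat\Ee^\YM,Q)$ is off-shell and fibers as a $Q$-bundle over $(\hat\Ee,Q)$ via $\hat\pi^\YM$. Since $\sigma_G$ is a given solution of $(\hat\Ee,Q)$ and we require $\hat\pi^\YM\circ\sigma=\sigma_G$, the images under $\sigma^*$ of all coordinates pulled back from $\hat\Ee$ are already prescribed; in particular I work in the metric-like frame of Section~\bref{sec:gravity-sections}, where $\sigma_G^*\hat\xi^A=e^A$ is an invertible coframe, which is what allows the transport terms to be inverted below. It then remains to determine $\sigma^*$ on the genuinely Yang--Mills fibre coordinates $\hat\cC$, $\nabla_{((C)}\hat F_{A)B}$ and $\nabla_{(C)}\cJ_B$, subject to $\dx\circ\sigma^*=\sigma^*\circ Q$.

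First I treat the leading Yang--Mills sector. Setting $\sigma^*\hat\cC=A_B\theta^B$, the equation $\dx\sigma^*\hat\cC=\sigma^*Q\hat\cC$ is a two-form identity whose only unknown is $\sigma^*\hat F_{AB}$; since $e^A$ is invertible it is solved uniquely, reproducing the Yang--Mills curvature $\sigma^*\hat F_{AB}=F_{AB}[A]$ and imposing no constraint on the free datum $A_B$. Each of the remaining $Q$-actions has the unfolding structure $Q\Phi=\hat\xi^C\nabla_C\Phi+(\text{terms algebraic in lower data})$, so $\dx\sigma^*\Phi=\sigma^*Q\Phi$ determines the contraction $e^C\sigma^*(\nabla_C\Phi)$ and hence, together with the algebraic form of the commutators \eqref{YM-commut2}, the symmetrized prolongation. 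Iterating with $\Phi=\hat F_{AB}$ and its prolongations fixes all $\sigma^*(\nabla_{((C)}\hat F_{A)B})$ uniquely.

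The same mechanism applies to the subleading sector, which is a $Q$-subbundle with fibre coordinates $\nabla_{(C)}\cJ_B$ carrying no ghosts. The action of $Q$ on $\cJ_B$, obtained exactly as in Proposition~\bref{QTAB} for the gravitational subleading $\cT_{BC}$, is again of transport-plus-algebraic form and linear in $\cJ_B$; setting $\sigma^*\cJ_B=\mathbf J_B$ therefore determines $\sigma^*(\nabla_C\cJ_B)$ and recursively all $\sigma^*(\nabla_{(C)}\cJ_B)$, with $\mathbf J_B$ remaining free. Thus every fibre coordinate has its image uniquely fixed by the data $(\sigma_G,A_B,\mathbf J_B)$, which establishes uniqueness.

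Finally, to see that the $\sigma$ so constructed genuinely solves the system, i.e. that $\dx\sigma^*=\sigma^*Q$ holds on all of $\cC^\infty(\hat\Ee^\YM)$ and not merely on the chosen generators, I would invoke $Q^2=0$ on $\hat\Ee^\YM$ (the $Q$-structure is genuine and $\hat\cI^\YM$ is strictly $Q$-invariant): consistency of the recursion is equivalent to the absence of integrability obstructions, which is precisely the off-shell property of $(\hat\Ee^\YM,Q)$. The main obstacle is this last consistency check, namely verifying that the symmetrized higher jets $\nabla_{((C)}\hat F_{A)B}$ and $\nabla_{(C)}\cJ_B$ are determined coherently through the commutation relations \eqref{YM-commut1}--\eqref{YM-commut2} and the Bianchi identities following from $Q^2=0$, with no over-determination; this is where the nilpotency of $Q$ does the real work.
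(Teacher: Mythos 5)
Your proposal is correct and follows essentially the same route as the paper: the paper's argument (Proposition~\bref{prop:off-shell} together with Appendix~\bref{App-YM-sol} and the template of Section~\bref{sec:gravity-sections}) likewise reads off $\sigma^*\hat F_{AB}=F_{AB}[A]$ from $\dx\sigma^*\hat\cC=\sigma^*Q\hat\cC$ and then observes that the transport form of $Q$ on the prolongation coordinates $\nabla_{((C)}\hat F_{A)B}$ and $\nabla_{(C)}\cJ_B$ determines them recursively in terms of spacetime derivatives of $A_B$ and $\mathbf J_B$, with consistency guaranteed by the nilpotent $Q$-structure on $\hat\Ee^\YM$. Your explicit flagging of the integrability check via $Q^2=0$ is exactly the point the paper leaves implicit when it says ``repeating the analysis of Section~\bref{sec:gravity-sections}.''
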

In other words, the restriction of $(\hat \Ee^\YM,Q)$ to $\sigma_G$ is an off-shell system.\footnote{The definition of an  off-shell gauge PDE can be made precise as follows: it is a gPDE that is equivalent to a jet-bundle of a non-negatively graded fibre bundle, with $Q$-structure being $Q=\dh+\gamma$, where $\gamma$ is an evolutionary homological vector field and $\dh$ is the horizontal differential seen as a vector field.} We see that solutions are 1:1 with unconstrained configurations of fields $\sigma^*(\cC)$ and $\sigma^*(\cJ_B)$. In particular, the equations of motion in terms of $\sigma^*(\cC)$ and $\sigma^*(\cJ_B)$ can be explicitly written as
\begin{equation}
\sigma^*(\mathcal{Y}_B)=0\,, \qquad 
\sigma^*(\mathcal{Y})=0\,.
\end{equation}
More detailed description of the solution space $\hat \Ee^{\YM}$ is given in Appendix~\bref{App-YM-sol}. Let us emphasize that in the above  Proposition $\sigma_G$ is a solution of the off-shell boundary GR. In other words, we do not impose $\sigma_G^*\hat \cI = 0$.

To proceed further, we need the action of $Q$ on  functions $\hat J_A^{(N)}$ and coordinate functions $\cJ_B$ in particular. The following is a straightforward generalisation of Proposition \bref{QTAB}:
\begin{prop}\label{QJ}
    On $\hat \Ee^\YM$ one has:
    \begin{equation}
            Q\hat J^{(N)}_{A}=\hat \xi^{B}\nabla_{B}\hat J^{(N)}_{A}+\hat C_{A}{}^{B}\hat J^{(N)}_{B}+\hat \lambda_{B}\Gamma^{B}\hat J^{(N)}_{A}-[\hat \cC,\hat J^{(N)}_A]-(N+1)\hat \lambda \hat J^{(N)}_{A},\quad N\leq D-4\,,
    \end{equation}
            where
    \begin{equation}
    \label{YM-Gamma}
            \Gamma_{C}\hat J_{B}^{(N)}=\tilde{\Lambda}\sum_{i=0}^{N-2}d^{i}_{N}\mathcal{D}_{C}^{(N-2-i)}\hat J_{B}^{(i)}+N\tilde{\Lambda} \hat F^{(N-1)}_{CB}\,.
    \end{equation}
\end{prop}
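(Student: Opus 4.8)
The plan is to mirror the proof of Proposition \bref{QTAB}, treating $\hat J^{(N)}_A=\hat b_\YM^{*}\nabla_\Omega^N F_{\Omega A}$ exactly as $\hat T^{(N)}_{BC}=\hat b^{*}\nabla_\Omega^N W_{\Omega B\Omega C}$ was treated there; the only structural differences are that $F_{\Omega A}$ carries a single $\Omega$-index (rather than the two of $W_{\Omega B\Omega C}$) and that the Yang--Mills differential \eqref{YM-Q} contains the extra term $-[\cC,F_{\Omega A}]$. First I would use $\commut{Q}{\nabla_\Omega}=0$ (the graded-Jacobi consequence of $Q^2=0$) to write, upstairs on $\Ee^\YM$, the identity $QJ^{(N)}_A=\nabla_\Omega^{N}QF_{\Omega A}$; since $J^{(N)}_A$ for $N\le D-4$ is $\hat\Ee^\YM$-projectable and $\hat b_\YM^{*}$ intertwines $Q$, this descends to $Q\hat J^{(N)}_A=\hat b_\YM^{*}\nabla_\Omega^{N}QF_{\Omega A}$. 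Substituting $QF_{\Omega A}=\xi^c\nabla_c F_{\Omega A}+C_\Omega{}^cF_{cA}+C_A{}^cF_{\Omega c}-[\cC,F_{\Omega A}]$ and reducing modulo $\cK_B$ via \eqref{boundcond} (which sets $\xi^\Omega=0$, puts $C_\Omega{}^A,C_A{}^\Omega,g_{\Omega A}$ into the ideal, and gives $C_\Omega{}^\Omega=-\lambda+\ldots$) isolates four contributions: the transport $\xi^B\nabla_BF_{\Omega A}$, the rotation $C_A{}^BF_{\Omega B}$, the weight piece from $C_\Omega{}^\Omega F_{\Omega A}$, and the gauge term $-[\cC,F_{\Omega A}]$.

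Next I would apply $\nabla_\Omega^{N}$ and commute it to the right through each contribution. The rotation and gauge terms reproduce $\hat C_A{}^B\hat J^{(N)}_B$ and $-[\hat\cC,\hat J^{(N)}_A]$ directly; for the latter one checks, using $\nabla_\Omega\cC=\xi^BF_{\Omega B}$, that under the boundary reduction only the Leibniz term with all $\nabla_\Omega$'s landing on $F_{\Omega A}$ survives, the residual $\xi$-linear terms being reabsorbed into the transport term. The transport term $\nabla_\Omega^N(\xi^B\nabla_BF_{\Omega A})$ yields $\hat\xi^B\nabla_B\hat J^{(N)}_A$ together with commutator corrections $[\nabla_\Omega^N,\nabla_B]$, which by \eqref{YM-commut2} and \eqref{commutrelations} produce the $\Delta$-, $\Gamma$- and weight-type terms; the counting of how $\nabla_\Omega$ passes the explicit $\Omega$-index (via $C_\Omega{}^\Omega=-\lambda$) accumulates the weight $-(N+1)\hat\lambda\hat J^{(N)}_A$, differing from the $-N$ of Proposition \bref{QTAB} by the shift appropriate to the single lower $\Omega$-index of $F_{\Omega A}$. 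Collecting the $\lambda^B$-linear pieces defines $\Gamma_C\hat J^{(N)}_B$.

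The main obstacle is the explicit evaluation of this $\Gamma$-term, which is the genuine computation, everything else being bookkeeping. The key input is the commutator identity \eqref{formula Gamma} for $\hat b^{*}[\Gamma^B,\nabla_\Omega^N]$ that already drove the derivation of \eqref{QTact}; since $\cC$ carries no $\lambda$, the Yang--Mills sector does not enter here and the computation is formally the single-index restriction of the $W$-case. Writing $\Gamma_C\hat J^{(N)}_B=\hat b_\YM^{*}\bigl(\nabla_\Omega^N\Gamma^C+[\Gamma^C,\nabla_\Omega^N]\bigr)F_{\Omega B}$ and feeding in \eqref{formula Gamma}, the $\Delta$-strings collapse (through $\commut{Q}{\nabla_A}=0$ and the recursive boundary calculus of Theorem \bref{bound-calculus}) to $\tilde\Lambda\sum_{i=0}^{N-2}d^i_N\mathcal{D}_C^{(N-2-i)}\hat J^{(i)}_B$, while the residual $\nabla_\Omega^{N-1}\Delta^B{}_\Omega$-type contraction acting on $F_{\Omega B}$ gives the inhomogeneous piece $N\tilde\Lambda\hat F^{(N-1)}_{CB}$, appearing singly rather than symmetrized (as $\hat J^{(N-1)}_{ABC}+\hat J^{(N-1)}_{ACB}$ is in \eqref{QTact}) precisely because $F_{\Omega B}$ has only the one non-$\Omega$ free index $B$. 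Matching the binomial coefficients $d^i_N=C^i_N(N-1-i)$ against those delivered by \eqref{formula Gamma} completes the identification, and the restriction $N\le D-4$ guarantees that all the $\mathcal{D}_C^{(M)}$ involved are $\hat\Ee^\YM$-projectable, so the relations hold exactly rather than merely modulo $\hat\cI$.
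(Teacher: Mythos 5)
Your proposal is correct and follows essentially the same route as the paper, which proves this statement as "a straightforward generalisation of Proposition~\bref{QTAB}": decompose $Q$ acting on the degree-zero function $J^{(N)}_A$, reduce modulo $\cK_B$ so that only $\xi^B\nabla_B$, $C_A{}^B\Delta^A{}_B$, $\lambda^B\Gamma_B$, the $\cC^IR_I$ term and the weight survive, and evaluate $[\Gamma^C,\nabla_\Omega^N]$ via Lemma~\bref{Lemma-Gamma}/\eqref{formula Gamma}, with $\Delta^\Omega{}_D F_{\Omega B}=F_{DB}$ and $\Delta^C{}_\Omega F_{\Omega B}=0$ producing the single unsymmetrized $N\tilde\Lambda\hat F^{(N-1)}_{CB}$ and the single lower $\Omega$-index giving the weight $-(N+1)\hat\lambda$. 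The only cosmetic difference is your Leibniz expansion of $\nabla_\Omega^N[\cC,F_{\Omega A}]$; it is cleaner to use $[R_I,\nabla_a]=0$ from \eqref{YM-commut1} so that the gauge term descends with no cross-terms at all, but your version also closes.
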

We are now ready to formulate the Yang-Mills theory extension of the boundary calculus, introduced in Theorem~\bref{bound-calculus}:
\begin{theorem}\label{YMbound-calculus}
{\rm (Yang-Mills boundary calculus)}
The system $\hat J^{(N)}_{A}\,, \hat F^{(N)}_{AB}  
\,, \cD^{(N)}_A$ on $\hat \Ee^\YM$ satisfies the following relations
\begin{align}\label{YMformuli-calculus}
    \begin{split}
        \hat F^{(N)}_{AB}&=\hat b_\YM^{*}\nabla_{\Omega}^{N-1}( \nabla_{A}J_{B}-\nabla_{B}J_{A})\,, \quad 1\leq N\leq D-5\,, \\
        \hat J^{(N)}_{B}&=\frac{N\tilde{\Lambda}}{D-4-N}\hat b_\YM^{*}(\nabla_\Omega^{N-1}\nabla^{A}F_{AB}), \quad 0\leq N\leq D-5, 
                \end{split}
\end{align}
and for any degree zero function $f$ and $1\leq N\leq D-5$ one has:
\begin{align}\label{YMformuli-D}
\begin{split}
        &\mathcal{D}_{A}^{(N)}f=\Big(-\hat J^{(N-1),D}{}_{CA}\Delta^{C}{}_{D}+\dfrac{1}{\tilde\Lambda N}\hat T^{(N)}_{CA}\Gamma^{C}-J^{(N-1)|I}{}_{A}R_{I}-\frac{1}{N}\sum_{i=1}^{N-2} d^{i}_N \hat T^{(i),C}{}_{A}\mathcal{D}_{C}^{(N-2-i)}\Big)f\,.\\
\end{split}
\end{align}
Moreover, 
\begin{equation}\label{YMJ-lower}
\hat F^{(0)}_{AB}=\hat{F}_{AB}\,, \qquad 
\hat J^{(0)}_{A}=0\,, \qquad 
\mathcal{D}^{(0)}_{A}\equiv\nabla_A\,,
\end{equation}
and all the higher order $F^{(\cdot)}_{AB},J^{(\cdot)}_{BC},\cD^{(\cdot)}_A$ are recursively determined by the above relations and the gravity boundary calculus Theorem~\bref{bound-calculus}, where each expression of the form $\hat b^{*}\nabla_\Omega^{N}\nabla_A f$  is evaluated as 
\begin{align}
\label{YMOmega-A-commut}
    \hat b^{*}_\YM\nabla_\Omega^{N}\nabla_A f=\sum_{i=0}^{N}C^{i}_{N}\mathcal{D}_{A}^{(N-i)}\hat f^{(i)}\,,
\end{align}
and the action of $\Gamma_A$ on $\hat J^{(N)}_A$ is given by Proposition~\bref{QJ}.
\end{theorem}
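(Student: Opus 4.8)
The plan is to follow the proof of the gravitational boundary calculus (Theorem~\bref{bound-calculus}) step by step, introducing the modifications forced by the Yang--Mills sector. First, the expression for $\hat J^{(N)}_B$ in \eqref{YMformuli-calculus} is immediate: it is nothing but the defining relation \eqref{YM-J} of the submanifold $\hat \Ee^\YM$ pulled back by $\hat b_\YM$, valid in the range $0\leq N\leq D-5$ where $J^{(N)}_B$ is $\hat \Ee^\YM$-projectable. In particular, setting $N=0$ gives $\hat J^{(0)}_A=0$, which is one of the initial data in \eqref{YMJ-lower}; the remaining relations $\hat F^{(0)}_{AB}=\hat F_{AB}$ and $\cD^{(0)}_A=\nabla_A$ hold by definition.

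Second, the formula for $\hat F^{(N)}_{AB}$ follows from prolonging the Yang--Mills Bianchi identity \eqref{YMBianchi}. Writing $J_A\equiv J^{(0)}_A=F_{\Omega A}$, identity \eqref{YMBianchi} reads $\nabla_\Omega F_{BC}=\nabla_B J_C-\nabla_C J_B$; applying $\nabla_\Omega^{N-1}$ and pulling back with $\hat b_\YM^*$ yields the claimed relation, provided $1\leq N\leq D-5$ so that all objects involved are $\hat \Ee^\YM$-projectable and $\hat b_\YM^*$ is compatible with the action of $\nabla_\Omega$ and $\nabla_A$ in that range.

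Third, and this is the technical core, I would establish the formula \eqref{YMformuli-D} for $\cD_A^{(N)}=ad_{\nabla_\Omega}^N(\nabla_A)$. The only structural difference between the Yang--Mills and the pure-gravity setting is the extra term $-F^I_{ab}R_I$ in the commutator \eqref{YM-commut2} relative to \eqref{commutrelations}. I would prove, by induction on $N$ exactly along the lines of the Appendix derivation of the gravitational $\nu_A^{(N)}$ (Proposition~\bref{App-nu-counted}, cf.~\eqref{nu-counted-bulk}), the Yang--Mills refinement of that recursion for $\nu_A^{(N)}=ad_{\nabla_\Omega}^N(\partial/\partial\xi^A)$, now carrying an additional component along $\partial/\partial\cC^I$ proportional to the field strength. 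Taking $[Q,\cdot]$ then produces \eqref{YMformuli-D}: the gravitational part reproduces exactly the terms already present in \eqref{formuli-D}, while the iterated $-F^I_{ab}R_I$ contribution collapses, after using $\nu^{(N)}_\Omega=0$ for $N>0$ and the recursion itself, to the single new term $-J^{(N-1)|I}{}_A R_I$. For $N=1$ this reads $-J^{(0)|I}{}_A R_I=0$ thanks to $\hat J^{(0)}_A=0$, consistently with \eqref{YMJ-lower} and the Yang--Mills analogue of Proposition~\bref{utv-nechet}.

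Finally, the iterativity of the procedure is verified just as in the gravitational case: each $\cD_A^{(N)}$ in \eqref{YMformuli-D} is expressed through lower-order $\hat T^{(\cdot)},\hat J^{(\cdot)},\hat F^{(\cdot)}$ and $\cD_C^{(\cdot)}$ together with the operators $\Delta^{C}{}_D,\Gamma^C,R_I$, whose actions on functions in $\cC^\infty(\hat \Ee^\YM)$ are read off from the $Q$-action \eqref{YM-Q} and from Proposition~\bref{QJ}, while \eqref{YMOmega-A-commut} (a direct consequence of Proposition~\bref{utv-commutator} and the projectability of the $\cD_A^{(\cdot)}$) evaluates every $\hat b^*_\YM\nabla_\Omega^N\nabla_A f$. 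The main obstacle is the inductive bookkeeping in the third step: one must track how the iterated $ad_{\nabla_\Omega}$ acting on the $-F^I_{ab}R_I$ term in \eqref{YM-commut2} interacts with the purely gravitational terms and confirm that all cross-contributions reorganise into the single clean term $-J^{(N-1)|I}{}_A R_I$, which is precisely the content of the Yang--Mills refinement of Proposition~\bref{App-nu-counted}.
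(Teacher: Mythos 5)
Your proposal is correct and follows essentially the same route as the paper: the first line of \eqref{YMformuli-calculus} from the prolonged Yang--Mills Bianchi identity, the second from the defining relations of $\hat\Ee^\YM$ (Proposition~\bref{YM-inducedeq}), and \eqref{YMformuli-D} by observing that the only modification of the gravitational commutator \eqref{commutrelations} is the $-F^I_{ab}R_I$ term in \eqref{YM-commut2}, which, using $[\nabla_a,R_I]=0$, propagates through the recursion to the single extra term $-\hat J^{(N-1)|I}{}_A R_I$. Your slightly more explicit detour through the Yang--Mills refinement of $\nu_A^{(N)}$ is exactly the computation the paper compresses into ``it is straightforward to see.''
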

\begin{proof}
The first line in \eqref{YMformuli-calculus} can be proven simply by using the Bianchi identity \eqref{YMBianchi} in the form $\nabla_{\Omega}F_{AB}=\nabla_{A}J_{B}-\nabla_{B}J_A$ and acting on it with $\hat{b}^{*}_\YM\nabla_\Omega^{N-1}$. The second line follows from Proposition $\bref{YM-inducedeq}$ and the definition of $\hat{\Ee}^\YM$. To prove \eqref{YMformuli-D}, note that when acting on functions of ghost degree $0$, as follows from \eqref{YM-commut2}, the formula for $[\nabla_{\Omega},\nabla_{A}]$ differs only by a term of the form $J^{I}_B R_I$. Using the fact that $[\nabla_{A},R_{I}]=0$, it is straightforward to see that the formula for $\cD^{(N)}_{A}$ in the gravity sector in Theorem \bref{bound-calculus} differs only by a term of the form $\hat J^{(N-1)|I}_A R_I$.\end{proof}
It is easy to see that, as in the case of pure gravity, the generators of the ideal $\hat \cI^{\YM}$ can be expressed in terms of the boundary calculus:
\begin{align}\label{YM-obstr}
    \begin{split}
        \mathcal{Y}_B&= \nabla^{A}\hat F^{(D-5)}_{AB}+\sum_{i=0}^{D-6}C^{i}_{D-5}\cD^{(D-5-i)|A}\hat{F}_{AB}^{(i)},\\
        \mathcal{Y}&= \nabla^{A}\cJ_A+\sum_{i=0}^{D-5}C_{D-4}^{i}\cD^{(D-4-i)|A}\hat J^{(i)}_{A}.
    \end{split}
\end{align}
The functions $\cY_A$ do not involve the subleading sector and correspond to a certain Weyl-invariant equation imposed on the leading sector while $\cY$ is the modified conservation law for subleading $\cJ_A$. One may view $\cY_A$ and $\cY$ as direct analogues of the obstruction tensor $\cO_{AB}$ and the modified conservation law $\cO_A$ in the sector of gravity. In particular, an analogue of Proposition~\bref{prop-div-obstr} is the following:
\begin{prop}
    On $\hat \Ee^\YM$ we have:
    \begin{align}
        \nabla^{A}\cY_A=0,\qquad \Gamma^A\cY=-\tilde\Lambda(D-4) \cY^A\,,
    \end{align}
    where $\Gamma^A\equiv [\frac{\partial}{\partial \hat \lambda_{A}},Q]$.
\end{prop}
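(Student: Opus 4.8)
The plan is to establish the two identities directly from the definitions of the Yang–Mills obstruction functions in~\eqref{YM-obstr}, exploiting the close structural parallel with the gravity obstruction identities of Proposition~\bref{prop-div-obstr}, whose proofs in turn trace back to the divergence relation~\eqref{YM-Noether} and the $\Gamma$-action formula~\eqref{YM-Gamma}. Concretely, the two claimed equations are the Yang–Mills analogues of $\nabla^A\cO_{AB}=0$ and $\Gamma_A\cO_B=-\tilde\Lambda(D-3)\cO_{AB}$, with $\cO_{AB},\cO_A$ replaced by $\cY_A,\cY$, and with the shift $D-3\to D-4$ reflecting the fact that the YM field strength carries one fewer index than the Weyl/Cotton data.

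For the divergence-free condition $\nabla^A\cY_A=0$, the natural route is to recognise $\cY_A$ as $\hat b^*_\YM\nabla_\Omega^{D-5}\nabla^A F_{AB}$ (this is how $\cY_B$ is defined) and to apply the Weyl-covariant Noether identity. First I would take the bulk identity~\eqref{YM-divergence}, namely $\nabla^b Y_b=(3-D)\nabla^a\Omega\,\nabla^b F_{ab}$, which already encodes the gauge consistency of the YM equations, and act on it with $\nabla^A\nabla_\Omega^{D-5}$ before pulling back via $\hat b^*_\YM$. Using that $\nabla^a\nabla^b F_{ab}=\tfrac12[\nabla^a,\nabla^b]F_{ab}=0$ (which already appeared in the proof of Proposition~\bref{YM-inducedeq}) together with the decomposition~\eqref{YMOmega-A-commut} of $\nabla_\Omega^N\nabla_A$ into $\cD_A^{(\cdot)}$ terms, one should find that the double divergence $\nabla^B\cY_B$ reduces to a contraction $\mathcal{D}^{(\cdot)|A}$ acting on $Y$-data that, on $\hat\Ee^\YM$, lies in the ideal generated by the lower $\cY^{(i)}$ already set to zero, or vanishes identically by antisymmetry of $F_{AB}$. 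The cleanest argument is probably to observe that $\nabla^A\cY_A$ is precisely the symbol of the Noether identity~\eqref{YM-Noether} and therefore vanishes on $\hat\Ee^\YM$ as an identity, not merely modulo $\hat\cI^\YM$.

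For the second relation $\Gamma_A\cY=-\tilde\Lambda(D-4)\cY_A$, I would compute $\Gamma_A$ applied to the explicit expression $\cY=\nabla^B\cJ_B+\sum_i C^i_{D-4}\cD^{(D-4-i)|B}\hat J^{(i)}_B$ from~\eqref{YM-obstr}, using the $\Gamma$-action on $\hat J^{(N)}_B$ given by~\eqref{YM-Gamma}, namely $\Gamma_C\hat J^{(N)}_B=\tilde\Lambda\sum_i d^i_N\cD_C^{(N-2-i)}\hat J^{(i)}_B+N\tilde\Lambda\hat F^{(N-1)}_{CB}$. The key is that the leading term $N\tilde\Lambda\hat F^{(N-1)}_{CB}$ produces, after contraction and resummation of the binomial/$d^i_N$ coefficients, exactly the combination $\nabla^B\hat F^{(D-5)}_{AB}+\sum C^i_{D-5}\cD^{(D-5-i)|B}\hat F^{(i)}_{AB}=\cY_A$, while the remaining $\cD$-tower terms must telescope away. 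One must also account for $\Gamma_A$ acting through the vector fields $\cD^{(\cdot)}_C$ themselves via~\eqref{YMformuli-D}, and for the commutators $[\Gamma_A,\nabla^B]$ from~\eqref{commutrelations}; the overall numerical coefficient $-\tilde\Lambda(D-4)$ should emerge from matching $N=D-4$ in the top term of the sum.

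The main obstacle I anticipate is the combinatorial bookkeeping in the second identity: one must carefully reorganise the nested sums over $\cD^{(\cdot)}_A$-operators and the binomial coefficients $C^i_N$ and $d^i_N=C^i_N(N-1-i)$ so that everything outside the single term proportional to $\cY_A$ cancels. This is the same kind of Pascal-triangle resummation that appears in Proposition~\bref{utv-commutator} and in the proof of the boundary calculus, so I would lean on those identities and on the recursion~\eqref{YMformuli-D} rather than expanding by brute force. A secondary subtlety is ensuring that the intertwining of $\Gamma_A$ with the Yang–Mills-specific term $J^{(N-1)|I}{}_A R_I$ in~\eqref{YMformuli-D} does not spoil the cancellation; since $[\Gamma_A,R_I]=0$ and $R_I$ annihilates the $\hat\Ee^\YM$-projectable gravitational data, these contributions should decouple cleanly, but this needs to be checked. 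Given that the analogous gravity statement Proposition~\bref{prop-div-obstr} is deferred to Proposition~\bref{prop-gamma-conservation} in the appendix, I expect the honest proof here to likewise be relegated to an appendix, with the main text recording only the statement.
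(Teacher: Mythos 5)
Your proposal is correct and follows essentially the same route as the paper: the divergence identity $\nabla^A\cY_A=0$ is obtained by pulling back the Weyl-covariant Noether identity \eqref{YM-Noether} at $N=D-3$ (where the coefficient relating $\nabla_\Omega^{D-4}\nabla^AY_A$ to the remaining generators places it entirely in the ideal that vanishes on $\hat\Ee^\YM$), and $\Gamma_A\cY=-\tilde\Lambda(D-4)\cY_A$ follows from the $\Gamma$-commutator machinery of Lemma~\bref{Lemma-Gamma}, of which \eqref{YM-Gamma} is the specialisation you invoke. The only difference is presentational: the paper applies $\Gamma_A$ directly to $\hat b^*_\YM\nabla_\Omega^{D-4}\nabla^AJ_A$ and commutes $\Gamma_A$ past $\nabla_\Omega^{D-4}$ in one step, which somewhat shortens the binomial resummation you anticipate.
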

   \begin{proof}
       Again, the first equation is a distant consequence of the Noether identities for the Yang-Mills equations, which can be obtained by applying $\hat b_{\YM}^{*}$ to \eqref{YM-Noether} with $N=D-3$. The second equation states that the Weyl invariance of the conservation equation for the Yang-Mills subleading  $\cJ_A$ is equivalent to imposing the obstruction equation $\cY_B=0$ 
       as can be proved by a straightforward calculation using Lemma~\bref{Lemma-Gamma}. 
   \end{proof} 
Finally, the explicit form of the conformal YM equation and the  conformal conservation equation can be found in terms of the above boundary calculus by using Proposition~\bref{prop:off-shell} and repeating the analysis of Section~\bref{sec:gravity-sections} in the YM sector. In the rest of this section we discuss the structure of the boundary YM system in more detail. 

A direct analogue of Proposition \bref{utv-nechet} is the following:
\begin{prop}\label{YMutv-nechet}
For any odd integer $i$ such that  $1\leq i\leq D-4$ we have: 
\begin{equation}
\hat J^{(i-1)}_{A}=0\,, \qquad \cD^{(i)}_{A}=0\,, \qquad \hat F^{(i)}_{AB}=0\,.
\end{equation}
\end{prop}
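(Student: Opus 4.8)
The plan is to prove all three vanishing statements at once by a single strong induction on the order $N$, exploiting the $\ZZ_2$-parity structure of the Yang--Mills boundary calculus of Theorem~\bref{YMbound-calculus}. Concretely, I would establish simultaneously that $\hat F^{(N)}_{AB}=0$ whenever $N$ is odd, that $\hat J^{(N)}_{A}=0$ whenever $N$ is even, and that $\cD^{(N)}_{A}=0$ whenever $N$ is odd, for $N$ in the range $1\le N\le D-4$. The three claims of the Proposition are exactly the odd-$i$ specialisations of these: $\hat J^{(i-1)}_A=0$ is the even-level vanishing of $\hat J$, while $\cD^{(i)}_A=0$ and $\hat F^{(i)}_{AB}=0$ are the odd-level vanishings. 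The base data come from \eqref{YMJ-lower} ($\hat J^{(0)}_A=0$, $\cD^{(0)}_A=\nabla_A$) and from the gravitational base value $\hat T^{(0)}_{BC}=0$. The crucial external input is the gravitational Proposition~\bref{utv-nechet}, which guarantees $\hat T^{(N)}_{BC}=0$ for every odd $N$ in range; this is what injects the parity into the Yang--Mills sector.

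For the inductive step I would perform a direct parity count in the three recursion relations, each evaluated through the commutation identity \eqref{YMOmega-A-commut}. For odd $N$, expanding $\hat F^{(N)}_{AB}=\hat b^*_\YM\nabla_\Omega^{N-1}(\nabla_AJ_B-\nabla_BJ_A)$ via \eqref{YMOmega-A-commut} produces a sum of terms $\cD^{(N-1-k)}_{[A}\hat J^{(k)}_{B]}$ with $0\le k\le N-1$; since $N-1$ is even, in each term either $k$ is even (so $\hat J^{(k)}=0$ by the even-$\hat J$ hypothesis) or $N-1-k$ is odd (so $\cD^{(N-1-k)}=0$ by the odd-$\cD$ hypothesis), whence $\hat F^{(N)}_{AB}=0$. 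The same mechanism, applied to the second line of \eqref{YMformuli-calculus}, gives $\hat J^{(N)}_{B}=0$ for even $N\ge2$ (now $N-1$ is odd and the two parities trade roles). Finally, for odd $N$ the recursion \eqref{YMformuli-D} for $\cD^{(N)}_A$ has each of its four pieces killed in turn: the $\hat J^{(N-1)}$ and $J^{(N-1)}$ terms vanish because $N-1$ is even, the $\hat T^{(N)}$ term vanishes by Proposition~\bref{utv-nechet}, and in the residual sum $\hat T^{(i)}\cD^{(N-2-i)}$ the parities of $i$ and $N-2-i$ are opposite, so at least one factor vanishes for every $i$. This closes the induction, and the argument is identical in form to the gravitational proof of Proposition~\bref{utv-nechet}.

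The one genuinely delicate point — and hence where I expect the main obstacle — is the borderline order $N=D-4$, which enters the claim only when $D-4$ is odd (equivalently $d=D-1$ even), requiring $\hat F^{(D-4)}_{AB}=0$ and $\cD^{(D-4)}_A=0$. The nominal calculus formulas \eqref{YMformuli-calculus}--\eqref{YMformuli-D} are stated only up to $N=D-5$, so the step cannot be taken naively. The resolution is that in both cases the right-hand sides involve only $\hat J^{(k)}_A$ and $\cD^{(k)}_A$ with $k\le D-5$, which are $\hat\Ee^\YM$-projectable; since $\cD^{(N)}_A$ stays projectable up to $N=D-4$ and $\cK^{(N)}$ pulls back to zero on $\hat\Ee$ for $N\le D-4$, the commutation identity \eqref{YMOmega-A-commut} and thus the two calculus relations continue to hold \emph{strictly} (not merely modulo $\hat\cI^\YM$) at $N=D-4$. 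Note that the subleading coordinate $\cJ_A=\hat J^{(D-4)}_A$ is never claimed to vanish: the even-$\hat J$ branch of the induction reaches at most level $D-5$, so no conflict with $\cJ_A$ being an independent field arises. Verifying this projectability bookkeeping at the top order is the only step requiring care; everything else is the routine parity accounting above.
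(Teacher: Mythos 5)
Your proposal is correct and follows essentially the same route as the paper: the paper proves the Yang--Mills statement only by declaring it ``a direct analogue'' of the gravitational Proposition~\bref{utv-nechet}, whose Appendix~\bref{sec:proof-TDJ} proof is exactly your simultaneous parity induction through the recursion relations, with the gravitational vanishings $\hat T^{(\mathrm{odd})}_{BC}=0$, $\cD^{(\mathrm{odd})}_A=0$, $\hat J^{(\mathrm{even})}_{ABC}=0$ feeding the Yang--Mills step via \eqref{YMformuli-D} and \eqref{YMOmega-A-commut}. Your explicit treatment of the borderline order $N=D-4$ via projectability is a point the paper leaves implicit, but it is a refinement of, not a departure from, the same argument.
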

The case of an odd-dimensional boundary is again relatively simple (see Proposition \bref{J-nechet}, the proof is analogous).
\begin{prop}
    For $d$ odd we have:
    \begin{align}
        \mathcal{Y}_B=0,\quad \mathcal{Y}= \nabla^{A}\cJ_A,\quad \Gamma_{B}\cJ_A=0.
    \end{align}
\end{prop}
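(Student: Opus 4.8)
The plan is to mirror the proof of the gravitational Proposition~\bref{J-nechet}, replacing the Weyl/Cotton data $\hat T^{(N)}_{BC},\hat J^{(N)}_{ABC}$ by the Yang--Mills data $\hat J^{(N)}_A,\hat F^{(N)}_{AB}$ and feeding in the parity selection rules of Proposition~\bref{YMutv-nechet}. The case $d=3$ (that is $D=4$) is degenerate, since the leading term $(4-D)(\nabla^a\Omega)F_{ab}$ of the Yang--Mills equation~\eqref{YM-eq-bulk} drops out and the index $D-5$ appearing in~\eqref{YM-obstr} is out of range; I would dispose of it separately, exactly as the $d=3$ gravity case is handled before Proposition~\bref{J-nechet}, and assume $d>3$, i.e. $D\geq6$ even, in what follows. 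The whole argument then rests on one arithmetic fact: since $d=D-1$ is odd, $D$ is even, so $D-5$ is odd and $D-4$ is even.

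First I would establish $\mathcal{Y}_B=0$ from the explicit form~\eqref{YM-obstr}. The leading term $\nabla^A\hat F^{(D-5)}_{AB}$ vanishes because $D-5$ is an odd integer in $[1,D-4]$, so $\hat F^{(D-5)}_{AB}=0$ by Proposition~\bref{YMutv-nechet}. For the remaining sum $\sum_{i=0}^{D-6}C^i_{D-5}\,\cD^{(D-5-i)|A}\hat F^{(i)}_{AB}$ I argue term by term: if $i$ is odd then $\hat F^{(i)}_{AB}=0$, while if $i$ is even then $D-5-i$ is odd and lies in $[1,D-5]$, so $\cD^{(D-5-i)}_A=0$; in either case the term drops, giving $\mathcal{Y}_B=0$. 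The same bookkeeping applied to $\mathcal{Y}=\nabla^A\cJ_A+\sum_{i=0}^{D-5}C^i_{D-4}\,\cD^{(D-4-i)|A}\hat J^{(i)}_A$ kills every term of the sum: for even $i$ one has $\hat J^{(i)}_A=0$, and for odd $i$ the order $D-4-i$ is odd and positive, so $\cD^{(D-4-i)}_A=0$, leaving $\mathcal{Y}=\nabla^A\cJ_A$.

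Finally, for $\Gamma_B\cJ_A=0$ I would use $\cJ_A=\hat J^{(D-4)}_A$ together with the formula for $\Gamma_C\hat J^{(N)}_B$ from Proposition~\bref{QJ} specialised to $N=D-4$. Its inhomogeneous term $(D-4)\tilde\Lambda\,\hat F^{(D-5)}_{CB}$ vanishes by the step above, and each summand $\tilde\Lambda\,d^i_{D-4}\,\cD^{(D-6-i)}_C\hat J^{(i)}_B$ vanishes by the identical parity argument: even $i$ gives $\hat J^{(i)}_B=0$, while odd $i$ gives $D-6-i$ odd and positive, hence $\cD^{(D-6-i)}_C=0$; the boundary value $D-6-i=0$ occurs only at the even index $i=D-6$, already annihilated through $\hat J^{(D-6)}_B=0$. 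This yields $\Gamma_B\cJ_A=0$.

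The one point deserving genuine care, rather than routine index chasing, is that the three claims close up under the parity rules without any leftover contribution from the non-vanishing vector field $\cD^{(0)}_A=\nabla_A$: in every sum one must verify that the order of $\cD$ is strictly positive precisely when the companion $\hat F$ or $\hat J$ factor is not forced to vanish, which is exactly what the upper limits $i\leq D-6$ (resp.\ $i\leq D-5$) in~\eqref{YM-obstr} and Proposition~\bref{QJ} guarantee. Unlike the gravitational case, no auxiliary commutator computation is needed to annihilate the top object, since here it is $\hat F^{(D-5)}$ sitting at the odd order $D-5$, which Proposition~\bref{YMutv-nechet} already sets to zero.
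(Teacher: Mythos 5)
Your proof is correct and follows essentially the route the paper intends (it simply points back to the gravitational Proposition~\bref{J-nechet} as "analogous"): term-by-term parity bookkeeping in the explicit expressions for $\cY_B$, $\cY$ and $\Gamma_C\hat J^{(D-4)}_B$ using the vanishing statements of Proposition~\bref{YMutv-nechet}, with the ranges of the sums guaranteeing that $\cD^{(0)}_A=\nabla_A$ never pairs with a non-vanishing companion. Your closing observation is also accurate and worth keeping: unlike gravity, where the top object $\hat J^{(D-4)}_{ABC}$ sits at an even order outside the reach of the parity proposition and requires the extra recursion/commutator step, here $\hat F^{(D-5)}_{AB}$ sits at odd order $D-5$ and is killed directly, so the Yang--Mills case is genuinely simpler.
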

\noindent It follows that if we fix a background solution $\sigma_G$ in the boundary GR sector, the system describes unconstrained field $A_B(x)$ and field $\mathbf{J}_B(x)$ (current) subject to the conservation condition that follows from $\sigma^{*}\cY=0$:
\begin{align}
    \nabla_A^{g}\mathbf{J}^A(x)=0\,,
\end{align}
 where $\nabla_A^g$  is the Levi-Civita covariant derivative, twisted by the Yang-Mills field $A_B(x)$. Weyl transformations for $\mathbf{J}_A$ can be read-off from $Q\cJ$. Assuming that $\sigma_G$ is taken in the metric gauge one finds:
  \begin{align}
      \delta_{\bar\lambda} \mathbf{J}_A=-(D-3)\bar \lambda(x)\mathbf{J}_A\,,
  \end{align} 
where $\bar\lambda(x)$ is the Weyl transformation parameter. In other words 
$\mathbf{J}_A$ has Weyl weight $-(D-3)$.

As before, we begin the analysis of the even-dimensional boundary case by first considering a few examples in lower dimensions:
\begin{example}\label{YM-example-5D}
   In the case of $d=4$ one has 
   \begin{equation}
    \mathcal{Y}_B=\nabla^{A}\hat F_{AB},\quad \mathcal{Y}=\nabla^{A}\cJ_A, \quad \Gamma_B\cJ_{A}=\tilde{\Lambda}\hat F_{BA}.
   \end{equation}
   Choosing a background solution $\sigma_G$ to $\hat E\to T[1]\Sigma$ in the metric-like gauge (see Section~\bref{sec:gravity-sections}), the solution space for $\hat E^\YM|_{\sigma_G}$ is described by two $\algg$-valued fields, $A_B(x)$ and $\mathbf{J}_B(x)$, subject to equations
\begin{align}
    \nabla^g_AF^{A}{}_{B}=0,\qquad \nabla_A^{g}\mathbf{J}^{A}=0\,,
\end{align}
   where $F_{AB}$ is the usual Yang-Mills curvature for $A_B(x)$.
   See Appendix~\bref{App-YM-sol} for details.
       Note that the leading part of the  boundary system is the usual 4d YM. This is compatible with the fact that the Yang-Mills equation is conformally invariant in dimension 4.

       Under Weyl transformations, $\mathbf{J}_A$ transforms as
       \begin{align}
           \delta_{\bar\lambda}\mathbf{J}_A=\tilde\Lambda (\partial^B\bar \lambda) F_{BA}-2\bar \lambda\mathbf{J}_A 
       \end{align}
   \end{example}
   \begin{example}\label{YM-example-7D}
       In the case of $d=6$ one has
       \begin{align}\label{6deqs}
       \begin{split}
           \mathcal{Y}_B&=\frac{\tilde{\Lambda}}{2}\Big((\nabla_{C}\nabla^{C})\nabla^{A}\hat F_{AB}+4\hat \Co^{DA}{}_{B}\hat F_{DA}-2[\hat{F}_{AB},\nabla_{C}\hat F^{CA}]\Big),\\
           \mathcal{Y}&=\nabla^{A}\cJ_A,\quad \Gamma_{C}\cJ_{A}=3\tilde{\Lambda}^2\Big(\nabla_{C}\nabla^{B}\hat F_{BA}-\frac{1}{2}\nabla_{A}\nabla^{B}\hat F_{BC}\Big).
                  \end{split}
       \end{align}
       Analogously to the previous example, the equations satisfied by $A_B(x)$ and $\mathbf{J}_A(x)$ read as (see Appendix~\bref{App-YM-sol} for details):
\begin{multline}\label{YM-d6-eq}
            \Box^g \nabla^{g}_AF^{A}{}_{B}+2\partial_CP F^{C}{}_{B}+4P^{AC}\nabla^g_AF_{CB}-3P\nabla^{g}_AF^{A}{}_{B}+4C^{DA}{}_{B}F_{DA}-2[F_{AB},\nabla_C^gF^{CA}]=0
\end{multline}
       \begin{align}\label{YM-d6-cons}
\nabla_{A}^g\mathbf{J}^{A}+\frac{3}{2}\tilde\Lambda^{2}P^{AC}(\nabla^g_C\nabla^{g|B}F_{BA})=0\,,
       \end{align}
       where $P_{AB}(x)$ is a Schouten tensor and $C_{ABC}(x)$ is a Cotton tensor for the background conformal geometry described by metric $g_{AB}(x)$.
       Following \cite{Gover:2023rch}, where \eqref{YM-d6-eq} was originally found, we call it the higher conformal Yang-Mills equation in dimension $6$.
       The subleading field $\mathbf{J}_A$, in addition to satisfying the modified conservation equation \eqref{YM-d6-cons}, also transforms nontrivially under Weyl rescalings:
       \begin{align}
           \delta_{\bar \lambda}\mathbf{J}_A=3\tilde\Lambda^2\partial^C\bar\lambda\Big(\nabla_C^g\nabla^g_{B}F^{B}{}_A+2P^{D}{}_{C}F_{DA}-\frac{1}{2}\nabla_A^g\nabla^g_{B}F^{B}{}_C-P^{D}{}_{A}F_{DC}\Big)-4\bar\lambda\mathbf{J}_A
       \end{align}
       Note that the inhomogeneous terms originate from $\Gamma_C \cJ_A$ term in $Q\cJ_A$.
   \end{example}
Let us give some details on how~\eqref{6deqs} is derived. Using Proposition~\bref{YMutv-nechet} together with Theorem~\bref{YMbound-calculus} we obtain:
       \begin{align}
       \hat{J}^{(1)}_{B}=\frac{\tilde\Lambda}{2}\nabla^{A}\hat F_{AB},\quad \hat{F}^{(2)}_{AB}=\nabla_{A}\hat J^{(1)}_B-\nabla_{B}\hat J^{(1)}_{A}.
       \end{align}
       Then from \eqref{YM-obstr} and \eqref{YMformuli-D}:
       \begin{align}
                   \mathcal{Y}_B= \nabla^{A}\hat F^{(2)}_{AB}+\cD^{(2)|A}\hat{F}_{AB}=\nabla^{A}(\nabla_{A}\hat J^{(1)}_B-\nabla_{B}\hat J^{(1)}_{A})+\tilde{\Lambda}\hat \Co^{DA}{}_{B}\hat F_{DA}-[\hat F_{AB},\hat J^{(1)|A}]\,,
       \end{align}
   where we have also made use of the fact that $\hat J^{(1)}_{ABC}=-\tilde{\Lambda}\Co_{CAB}$ thanks to~\eqref{J-lower}. Using 
   \begin{align}
       \nabla^{A}\nabla_B\hat{J}^{(1)}_A=[\nabla^{A},\nabla_B]\hat{J}^{(1)}_{A}=-\hat \Co_{D}{}^{A}{}_{B}\Gamma^{D}\hat{J}^{(1)}_{A}-\hat F^{J|A}{}_{B}R_{J}\hat{J}^{(1)}_{A}=-\tilde{\Lambda}\hat \Co^{DAB}\hat F_{DA}+[\hat F^{A}{}_{B},\hat{J}^{(1)}_{A}]
   \end{align}
   and substituting the previously computed $\hat J^{(1)}_{A}$ one finds the explicit expression for $\mathcal{Y}_B$ from~\eqref{6deqs}. The formula for $\mathcal{Y}$ is obtained from \eqref{YM-obstr} using $\cD^{(2)|A}\hat{J}^{(1)}_{A}=0$.   
   Finally, using the above
   in~\eqref{YM-Gamma} one finds the explicit expression for $\Gamma_{C}\cJ_{A}$.
   
\begin{example}\label{YM-example-9D}
    In the case $d=8$ we have
    \begin{align}\label{YM-example-9D-formula}
\begin{split}
            \mathcal{Y}_B=&(\nabla_{A}\nabla^{A})\hat J^{(3)}_B+\frac{3\tilde{\Lambda^2}}{2}(\hat B^{AC}\nabla_{C}\hat F_{AB}-\hat F_{CA}\nabla^{A}\hat B_{B}{}^{C}+\frac{1}{2}\hat B^{A}{}_{B}\nabla^{C}\hat F_{CA})+\\&+6\tilde\Lambda \hat C^{AC}{}_{B}(2\nabla_{A}\hat J_C^{(1)}-\nabla_{C}\hat J^{(1)}_{A})+2[\hat J^{(3)|A},\hat F_{AB}]+6[\hat J^{(1)|A},2\nabla_A\hat J_B^{(1)}-\nabla_B\hat J_A^{(1)}]\,,\\
            \mathcal{Y}=&\nabla^A\cJ_A+\frac{15}{4}\tilde{\Lambda}^{2}\hat B^{AC}\nabla_{C}\hat J^{(1)}_A-5[\hat J^{(3)|A},\hat J^{(1)}_A].
\end{split}
\end{align}
where $\hat J^{(1)}_A$ and $\hat J^{(3)}_A$ are computed using Theorem~\bref{YMbound-calculus} and are given by:
\begin{align}\label{YM-example-9D-J1}
    \begin{split}
        \hat J^{(1)}_B&=\frac{\tilde\Lambda}{4}\nabla^{A}\hat F_{AB},\\
        \hat J^{(3)}_B&=\frac{3\tilde\Lambda^{2}}{8}\Big( (\nabla_{A}\nabla^{A})\nabla^{C}\hat F_{CB}+8\hat \Co^{DA}{}_{B}\hat F_{DA}-2[\hat F_{AB},\nabla^{C}\hat F_C{}^{A}]\Big).
    \end{split}
\end{align}
 The details of the derivation of the explicit form of  $\mathcal{Y}_B$ and $\mathcal{Y}$ can be found in Appendix~\bref{App-YM9}.

 Repeating the analysis of the previous examples one finds that $\sigma^{*}\mathcal{Y}_B=0$ is a higher conformal Yang-Mills equation in dimension $8$, which for a fixed metric-like background $\sigma_G$ can be written as:
 \begin{multline}\label{8d-YMeq}
     \Box^{g}j^{(3)}_B+2 (\partial^DP+2P^{AD}\nabla_A^{g})(2\nabla_D^g j^{(1)}_B-\nabla_B^{g}j ^{(1)}_D+8P_{D}{}^{C}F_{CB}-4P_{B}{}^{C}F_{CD})+ \\+2(-5P_{AD}P^{AD}j^{(1)}_B+8 P_{AE}P^{AD}\nabla^g_{D}F^{E}{}_{B}+2P_{AD}P^{A}{}_{B}j^{(1)|D})-5Pj^{(3)}_B+\\+4( B^{AC}\nabla^g_{C} F_{AB}- F^{CA}(\nabla^{g}_A B_{BC}-4P^{E}{}_{A}C_{BEC}-4P^{E}{}_{A}C_{CEB})+\frac{1}{2} B^{A}{}_{B}\nabla^{g}_{C} F^{C}{}_{A})+\\+4 C^{AC}{}_{B}(2\nabla^g_{A}j_C^{(1)}-\nabla^g_{C}j^{(1)}_{A}+8P_{A}{}^{D}F_{DC}-4P_{C}{}^{D}F_{DA})+\\+2[j^{(3)|A}, F_{AB}] +[ j^{(1)|A},2\nabla^g_A j_B^{(1)}-\nabla^g_Bj_A^{(1)}+8P_{A}{}^{C}F_{CB}-4P_{B}{}^{C}F_{CA}]=0\,,
 \end{multline}
 where $B_{AB}(x)$ is a Bach tensor and
 \begin{align}
    \begin{split}
        &j^{(1)}_B=\nabla^g_AF^{A}{}_{B},\\
        &j^{(3)}_B=\Box^g\nabla^{g}_AF^{A}{}_{B}+4\partial_CP F^{C}{}_{B}+8P^{AC}\nabla^g_AF_{CB}-3P\nabla^{g}_AF^{A}{}_{B}+8C^{DA}{}_{B}F_{DA}-2[F_{AB},\nabla_C^gF^{CA}]\,.
    \end{split}
\end{align}
To the best of our knowledge, this operator has not been found before.

From $\sigma^{*}\cY=0$ we have the following conservation equation:
\begin{multline}\label{8d-conservation}
                \sigma^{*}\mathcal{Y}=\nabla^g_A\mathbf{J}^A+\frac{5\tilde\Lambda^{3}}{4}P^{CB}(C_{C}{}^{D}{}_{B}j^{(1)}_D+B_{CD}F^{D}{}_{B}+3\nabla_{C}^{g}j_{B}^{(3)}+6P_{C}{}^{E}\nabla_{E}^gj^{(1)}_{B})+\\+\frac{15\tilde{\Lambda}^{3}}{16} B^{AC}(\nabla^g_{C} j^{(1)}_A)-\frac{15\tilde\Lambda^3}{32}[j^{(3)|A},j^{(1)}_A].
\end{multline}
For the derivation of equations \eqref{8d-YMeq} and \eqref{8d-conservation}, see Appendix \bref{App-YM-sol}.
\end{example}
The general situation is described by:
\begin{prop}\label{YM-obstr-even} For $d$ even, the boundary system describes higher conformal YM  theory together with an additional field $\cJ_{A}$. The symbols of the equations of motion for these fields have the following structure:
    \begin{align}
    \begin{split}
                \cY_B &\equiv (\nabla_{A}\nabla^A)^{\frac{d-4}{2}} \nabla^C\hat F_{CB}+\dots,\\
 \cY&\equiv \nabla^{A}\cJ_A+\sum_{i=0}^{d-4}C_{d-3}^{i}\cD^{(d-3-i)|A}\hat J^{(i)}_{A}\,.
     \end{split}
\end{align}
Moreover, $\cY_B$ does not involve $\cT_{AB}$, $\cJ_A$ and their $\nabla_A$-derivatives. $\sigma^*(\cY_B)=0$ is the (higher-dimensional) conformal  YM equation on $A_B(x)\theta^B\equiv \sigma^*(\cC)$. The second equation gives  a generalised conservation condition satisfied by the subleading $\mathbf{J}_{A}(x)\equiv \sigma^*(\cJ_A)$. 
  \end{prop}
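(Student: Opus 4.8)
The plan is to follow the template of Proposition~\bref{obstr-even}, replacing the gravitational tower $\hat T^{(N)}_{AB},\hat J^{(N)}_{ABC}$ by the Yang--Mills tower $\hat F^{(N)}_{AB},\hat J^{(N)}_{A}$ and carrying along the extra bracket terms produced by the curvature operator $R_I$. The statement for $\cY$ requires no work: it is literally the second line of~\eqref{YM-obstr} once one recalls $D-4=d-3$, so the expression is simply read off.

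For $\cY_B$ I would introduce the same filtration as in Proposition~\bref{obstr-even}, counting the number of $\nabla_A$ acting on the leading curvature $\hat F_{AB}$, and denote by $\dots$ everything of strictly lower filtration. Starting from $\cY_B=\nabla^A\hat F^{(D-5)}_{AB}+\dots$ in~\eqref{YM-obstr}, I would run the two-step recursion supplied by the boundary calculus Theorem~\bref{YMbound-calculus}: the first line of~\eqref{YMformuli-calculus} together with~\eqref{YMOmega-A-commut} gives $\nabla^A\hat F^{(N)}_{AB}=\nabla^A\nabla_A\hat J^{(N-1)}_B-\nabla^A\nabla_B\hat J^{(N-1)}_A+\dots$, while the second line gives $\hat J^{(N)}_B\propto\nabla^A\hat F^{(N-1)}_{AB}+\dots$. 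The two divergences that must be shown subleading are treated exactly as in the gravity case: $\nabla^A\nabla_B\hat J^{(N-1)}_A=\nabla_B\nabla^A\hat J^{(N-1)}_A+[\nabla^A,\nabla_B]\hat J^{(N-1)}_A$, where the commutator is lower order by~\eqref{YM-commut2} (its new $\hat F^I_{AB}R_I$ piece only produces a nonlinear bracket $[\hat F,\cdot]$, lowering the derivative count), and $\nabla^A\hat J^{(N-1)}_A$ is itself subleading by the Noether identity~\eqref{YM-Noether}. Iterating, and using Proposition~\bref{YMutv-nechet} to ensure that the nonvanishing members of the chain are the even-order $\hat F^{(N)}$ and the odd-order $\hat J^{(N)}$, the recursion collapses to $\cY_B\propto(\nabla_A\nabla^A)^{\frac{d-4}{2}}\nabla^C\hat F_{CB}+\dots$ with a nonzero coefficient, terminating at $\hat J^{(1)}_B\propto\nabla^C\hat F_{CB}$.

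The claim that $\cY_B$ contains neither $\cT_{AB}$ nor $\cJ_A$ follows from an index bound rather than a computation. The expansion of $\cY_B$ in~\eqref{YM-obstr} involves only $\hat F^{(i)}_{AB}$ and $\cD^{(N)}_A$ with $i,N\leq D-5$, and Theorem~\bref{YMbound-calculus} (together with the gravity calculus Theorem~\bref{bound-calculus}) expresses each of these purely through $\hat F_{AB}$, the $\hat T^{(N)}$ and $\hat J^{(N)}$ of order at most $D-5$, and the operators $\nabla_A,\Gamma,\Delta,R_I$. Since $\cT_{AB}=\hat T^{(D-3)}$ can enter only via $\cD^{(N)}_A$ with $N\geq D-3$, and $\cJ_A=\hat J^{(D-4)}_A$ only at index $D-4$, both are excluded by $D-5<D-3$. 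Finally, the interpretation of $\sigma^*(\cY_B)=0$ as the higher conformal YM equation on $A_B(x)\theta^B\equiv\sigma^*(\cC)$ and of $\sigma^*(\cY)=0$ as the generalised conservation law for $\mathbf{J}_A(x)\equiv\sigma^*(\cJ_A)$ is obtained by the solution-space argument of Proposition~\bref{prop:off-shell} and Section~\bref{sec:gravity-sections}, exactly as in the worked examples for $d=4,6,8$.

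The main obstacle I expect is bookkeeping the filtration in the presence of nonlinearity: one must verify that every term generated by the Yang--Mills curvature in~\eqref{YM-commut2} (the $[\hat F,\cdot]$ brackets and the $R_I$-contractions hidden inside $\cD^{(N)}_A$) is strictly lower order, so that the leading symbol is unambiguously $(\nabla_A\nabla^A)^{\frac{d-4}{2}}\nabla^C\hat F_{CB}$. This is the only place where the argument genuinely departs from the purely linear gravitational recursion of Proposition~\bref{obstr-even}.
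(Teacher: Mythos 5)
Your proposal is correct and follows essentially the same route as the paper, which simply declares the result "a straightforward generalisation" of Proposition~\bref{obstr-even}: you carry out exactly that generalisation, with the same filtration by $\nabla_A$-derivative count, the same two-step recursion alternating between $\nabla^A\hat F^{(N)}_{AB}$ and $\hat J^{(N)}_B$, the Noether identity~\eqref{YM-Noether} playing the role of the gravity divergence identity, and the index bound $N\leq D-5$ excluding $\cT_{AB}$ and $\cJ_A$. Your extra care in checking that the $[\hat F,\cdot]$ brackets and $R_I$-contractions from~\eqref{YM-commut2} are strictly subleading is precisely the point the paper leaves implicit, and it is handled correctly.
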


The proof is a straightforward generalisation of that of Proposition~\bref{obstr-even}.
\begin{remark} Suppose that we are given  an Einstein metric $g_{AB}(x)$ and a  YM field $A_B(x)$ on $\Sigma$, $\dim \Sigma=4,6,8$ such that YM equations are satisfied: $\nabla^g_A F^{AB} = 0$. Then $A,g$ also satisfy the higher conformal Yang-Mills equation, as can be seen from formulas \eqref{YM-d6-eq},\eqref{8d-YMeq}.  This statement is a YM analogue of the known fact that higher conformal gravities contain an Einstein sector \cite{graham2005ambient}, see also~\cite{Boulanger:2025oli}. These should extend to higher even dimensions.
Notice also that conservation equations \eqref{YM-d6-cons} and \eqref{8d-conservation} reduce simply to $\nabla_A^g \mathbf{J}^A = 0$ over an Einstein-YM background.  
\end{remark}

\section{Conclusions}

In this work we gave the  explicit description of the boundary structure of AAdS gravity and (gauge) fields defined on its background. More specifically, we limited ourselves to the gravity itself, scalar field,  and YM theory. Although these do not exhaust the list of known AdS gauge fields, the extension to totally antisymmetric fields should be rather straightforward while general higher spin gauge fields, are well-defined  on flat AdS space only (among gravity backgrounds)  and in this case the boundary structure is known in the literature~\cite{Bekaert:2012vt,Bekaert:2013zya,Chekmenev:2015kzf}.

In studying boundary structure we employ the gauge PDE approach which we extend to the case of asymptotic boundaries by introducing a notion of $Q$-boundary and giving a systematic gPDE reformulation of the Penrose notion of AAdS space, improving the earlier construction from~\cite{Grigoriev:2023kkk}. The main result of the paper is the construction of an algorithmic procedure which allows one to explicitly obtain the structure of the boundary gauge theory, including the equations of motion and gauge symmetries for boundary fields. This procedure is called the boundary calculus and is summarized in Theorems~\bref{bound-calculus}, \bref{YMbound-calculus} and Proposition~\bref{calculus-scalar}. To demonstrate the efficiency of the approach, we explicitly derived a Weyl-invariant generalization of the Yang--Mills equation in eight dimensions, which was not known before.

The approach of this work is analogous to Fefferman-Graham ambient metric construction except that it 
does not employ the ambient space and moreover the near-boundary expansion takes place in the fiber of the coorresponding gPDE rather than in spacetime. 
While FG approach usually focuses on the leading boundary values, we concentrate on the entire boundary structure which is described by a gauge theory involving both the leading and the subleading fields. We demonstrate that the boundary theory has a triangular structure in the sense that the leading equations of motion and gauge generators do not involve subleading fields while the subleading fields are defined on the background of the leading ones. In the gPDE terms this means that the boundary gPDE is itself a $Q$-bundle whose base describes the leading fields while the fibre corresponds to the subleading ones. In particular, in the case of gravity and Yang-Mills theory, the conservation equations for the subleading fields are well-defined if and only if the leading fields satisfy the obstruction equations. The interplay between the gravitational subleading data and the fields propagating on its background is also noteworthy. In particular, we have shown that Weyl-invariant operators with principal symbol $\Box^n$ on the background of the boundary gPDE for gravity exist for all $n>0$ and in all dimensions. However, when the boundary dimension is even, operators of sufficiently high order necessarily involve the gravitational subleading field, giving an interesting resolution of the puzzle of what is a natural generalisation of the higher order GJMS operators on generic conformal geometry background.

The analysis of this work is limited to field theories defined at the level of equations of motion. As for Lagrangians, in the gPDE framework these are encoded in the compatible presymplectic sructures defined on the corresponding gPDE, see~\cite{Grigoriev:2022zlq} for further details and  \cite{Alkalaev:2013hta,Grigoriev:2016wmk,Grigoriev:2020xec,Dneprov:2022jyn} for earlier relevant contributions. Gauge PDE equipped with a presymplectic structure defines a version of BV-BRST extended covariant phase phase space formalism. For free (gauge) fields on manifolds with asymptotic boundaries a version of this construction has been recently put forward in~\cite{Dneprov:2026muy}. Let us also mention an alternative approach to BRST-extended covariant phase space~\cite{Baulieu:2024oql} and earlier systematic study~\cite{Barnich:2001jy} of local gauge theories and their asymptotic symmetries and charges.

\section*{Acknowledgments}
\label{sec:Acknowledgements}
We wish to thank X.~Bekaert, I.~Dneprov, A.~Tseytlin and especially N.~Boulanger and R.~Gover for fruitful discussions. Useful exchanges with G. Barnich, M.~Ba\~nados, J.~Herfray are  acknowledged. M.G. also wishes to thank I.~Krasil'shchik, Th.~Popelensky, A.~Verbovetsky, and J.~Slov\'ak. M.M. is grateful to V.~Krivorol and A.~Selemenchuk for useful discussions. Part of this work was done when the authors participated in the thematic program ``From Asymptotic Symmetries to Flat Holography: Theoretical Aspects and Observable Consequences'' at the Galileo Galilei Institute for Theoretical Physics (Florence, Italy) and in the thematic program ``Carrollian physics and holography'' at the Erwin Schr\"odinger Institute for Mathematics and Physics (Vienna, Austria).

\section*{}

\appendix
\setlength{\itemsep}{1pt}
\small
\section{Proof of Theorem~\bref{prop:mnogoidealov}}\label{AppA}

The ideal $\cK^{(N)}$ is generated by $\cK^{(N-1)}$ together with
\begin{align}\label{th-proof-gen}
	\begin{split}
		&\nabla_{(C)}\Big(\Omega^{(N+2)}-\frac{1}{D}g_{\Omega\Omega}\nabla_\Omega^{N}\nabla^{a}\nabla_a\Omega\Big),\\
		&\nabla_{(C)}\nabla_\Omega^{N}\nabla_{A}\Omega^{(1)} +\cK_B,\\
		&\nabla_{(C)}\nabla_\Omega^{N}\Big(\nabla_A\nabla_B\Omega-\frac{1}{D}g_{AB}\nabla_c\nabla^{c}\Omega\Big).
	\end{split}
\end{align}	
Using 
\begin{align}
	[\nabla_{a},\nabla_{b}]\nabla_c\Omega=-\We^{d}{}_{cab}\nabla_d\Omega-\Co_{cab}\Omega
\end{align}
together with the tracelessness of the Weyl and Cotton tensors one can show that
\begin{align}
	\nabla_\Omega\nabla^{a}\nabla_a\Omega=\nabla^{a}\nabla_\Omega\nabla_a\Omega-\We^{ca}{}_{\Omega a}\nabla_c\Omega-\Co^{a}{}_{\Omega a}\Omega=g^{\Omega\Omega}\Omega^{(3)}+\nabla^{A}\nabla_A\Omega^{(1)}+\cK_B.
\end{align}
Consequently, the generators in \eqref{th-proof-gen} for $N\geq1$ can be rewritten as follows:
\begin{align}\label{th-proof-0}
	\nabla_{(C)}\Omega^{(N+2)},\quad \nabla_{(C)}\nabla_\Omega^{N}\nabla_{A}\Omega^{(1)},\quad \nabla_{(C)}\nabla_\Omega^{N}\nabla_A\nabla_B\Omega.
\end{align}
The first step in the proof is formulated as a lemma.
\begin{lemma}\label{th-lemma}
	
 a) The generators of the ideal $\cK^{(N)}$ 
 \begin{align}
\nabla_\Omega^{N}\nabla_{A}\Omega^{(1)},\quad \nabla_\Omega^{N}\nabla_A\nabla_B\Omega	
 \end{align}

  can be rewritten as 
\begin{align}
\begin{split}
	&(N-1)\nabla_\Omega^{N-2}\nabla^{B}T_{BA}+\cK^{(N-1)},\\
		&(D-3-N+1)T^{(N-1)}_{AB}-(N-1)\tilde{\Lambda}\nabla_\Omega^{N-2}\nabla^{C}J_{CBA}+\cK^{(N-1)}.
\end{split}
\end{align}

b) The vector fields $\nu_A^{(N-1)}$ preserve $\cK^{(N)}$.

\end{lemma}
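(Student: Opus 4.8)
The plan is to establish a) and b) together by induction on $N$, with the inductive hypothesis supplying, for all smaller indices, the filtration facts already available at this stage: $\Omega^{(k)}\in\cK^{(k-2)}$ for $k\geq2$ (from the $G_{\Omega\Omega}$-generator in~\eqref{th-proof-0}), $\Omega^{(1)}\equiv g_{\Omega\Omega}\equiv\tilde\Lambda$ and $\Omega,\nabla_A\Omega\in\cK_B$, together with the invariance $\nabla_A\cK^{(0)}\subset\cK^{(0)}$, $\Gamma_A\cK^{(0)}\subset\cK^{(0)}$, $\Delta^A{}_B\cK^{(0)}\subset\cK^{(0)}$ of Corollary~\ref{zam-delta}. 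The computational engine for a) is Proposition~\ref{utv-commutator}, which moves every $\nabla_\Omega$ to the left of the outermost $\nabla_A$:
$$
\nabla_\Omega^N\nabla_A f=\nabla_A f^{(N)}+\sum_{i=0}^{N-1}C^i_N\,\mathcal{D}^{(N-i)}_A f^{(i)} .
$$

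For the second generator I would take $f=\nabla_B\Omega$ and, after a further application of the same identity to $f^{(N)}=\nabla_\Omega^N\nabla_B\Omega=\nabla_B\Omega^{(N)}+[\nabla_\Omega^N,\nabla_B]\Omega$, reduce everything to the action of the operators $\mathcal{D}^{(k)}_A=[Q,\nu^{(k)}_A]$ on the $\Omega$-sector. The basic local input is the commutation relation~\eqref{commutrelations}: since the ghost-derivative pieces of $[\nabla_\Omega,\nabla_A]$ annihilate the ghost-independent degree-$0$ coordinates, one is left with $\mathcal{D}^{(1)}_A\nabla_B\Omega=-\We^d{}_{B\Omega A}\nabla_d\Omega-\Co_{B\Omega A}\Omega$, and because $\nabla_D\Omega,\Omega\in\cK_B$ while $\nabla_\Omega\Omega\equiv\tilde\Lambda$, only the $d=\Omega$ component survives, producing the Weyl component $T_{BA}=\We_{\Omega B\Omega A}$. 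Iterating $\nabla_\Omega$ promotes this to $\nabla_\Omega^{N-1}T_{AB}=T^{(N-1)}_{AB}$; the Cotton pieces $\Co_{B\Omega A}\,\Omega$, harmless at first order, reappear once a $\nabla_\Omega$ lands on $\Omega$ to give $\Co_{B\Omega A}\,\Omega^{(1)}\equiv\tilde\Lambda\,\Co_{B\Omega A}$, and rewriting $\Co_{B\Omega A}$ through the split Bianchi identity~\eqref{Cotton-T} introduces $\nabla^C J_{CBA}$ together with the dimension-dependent normalisation $1/(D-3)$. Collecting all surviving contributions at order $N$, and absorbing every term carrying a spare $\Omega^{(k)}$, $k\geq2$, or a factor $\nabla_A\Omega$ into $\cK^{(N-1)}$ via the inductive hypothesis and the fact that $\mathcal{D}^{(N-i)}_A$ preserves $\cK^{(N-i+1)}$, is what should recombine into $(D-2-N)T^{(N-1)}_{AB}-(N-1)\tilde\Lambda\,\nabla_\Omega^{N-2}\nabla^C J_{CBA}$.

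For the first generator ($f=\Omega^{(1)}$) the cleanest route is not to recompute from scratch but to exploit the contracted Bianchi (Noether) identity $\nabla^a G_{ab}\equiv 0$ modulo the filtration. Splitting $b=A$ gives $\nabla^\Omega G_{\Omega A}\equiv-\nabla^B G_{BA}$ up to $\cK^{(N-1)}$, which identifies the $\Omega A$-generator $\nabla_\Omega^N\nabla_A\Omega^{(1)}$ with the divergence of the $AB$-generator just computed; since the $T^{(N-1)}_{AB}$ piece is symmetric and the relevant contraction picks out $\nabla^B T_{BA}$, this yields $(N-1)\nabla_\Omega^{N-2}\nabla^B T_{BA}$, with the companion $\nabla^C J$-term collapsing by the Bianchi identities~\eqref{Bianchi}. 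Part b) is handled in the same induction by writing $\nu^{(N-1)}_A=[\nabla_\Omega,\nu^{(N-2)}_A]$ and testing it on the generators $\nabla_{(C)}\nabla_\Omega^i G_{ab}$ ($i\leq N$), $S$, and the $\cK_B$-constraints: the ghost-derivative content of $\nu^{(N-1)}_A$ (read off from~\eqref{nu-counted-bulk}) kills the ghost-independent degree-$0$ generators outright, so only the commutators with the $\nabla_C$- and $\nabla_\Omega$-prolongations remain, and these are controlled by $[\nu^{(N-1)}_A,\nabla_\Omega]=-\nu^{(N)}_A$ together with the lower-order instances of b).

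The step I expect to be the genuine obstacle is the coefficient bookkeeping in a): tracking, order by order in $\nabla_\Omega$, how the Weyl term iterates and how the Cotton term re-enters through $\Omega^{(1)}\equiv\tilde\Lambda$ and is traded via~\eqref{Cotton-T} for $\nabla^C J$, so that the two independent tensor structures assemble with exactly the coefficients $(D-2-N)$ and $-(N-1)\tilde\Lambda$ — in particular verifying that the $T^{(N-1)}$ coefficient is the dimension-dependent $(D-2-N)$, which must vanish precisely at $N=D-2$ (the order where $\cT_{AB}$ decouples), and that no stray order-$N$ remainder escapes $\cK^{(N-1)}$. Establishing the Noether identity $\nabla^a G_{ab}\equiv0$ to the required order, and keeping the two halves of the induction consistently ordered so that every appeal to b) uses only strictly smaller indices, are the remaining delicate points.
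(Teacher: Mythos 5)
Your overall architecture --- a joint induction on $N$ in which part b) supplies the invariance of $\cK^{(N)}$ under the $\mathcal{D}^{(k)}_A$ needed for part a), the commutator expansion of Proposition~\bref{utv-commutator}, the explicit formula~\eqref{nu-counted-bulk} for $\nu^{(N)}_A$, and the trade of $\Co_{B\Omega A}$ for $T^{(1)}_{BA}$ and $\nabla^C J_{CBA}$ via~\eqref{Cotton-T} --- is exactly the paper's, and your treatment of the $AB$-generator and of part b) reproduces the paper's computations in outline. However, you have explicitly deferred the ``coefficient bookkeeping'', and that bookkeeping \emph{is} the lemma: what has to be proved is precisely that the two tensor structures assemble with coefficients $(D-2-N)$ and $-(N-1)\tilde\Lambda$ (so that the $T$-coefficient degenerates exactly at the order where $\cT_{AB}$ is left undetermined) and that no order-$N$ remainder escapes $\cK^{(N-1)}$. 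In the paper this is a short but genuinely nontrivial Leibniz computation: modulo the filtration only the term of $\nabla_\Omega^N(\We^d{}_{B\Omega A}\nabla_d\Omega)$ with no $\nabla_\Omega$ on the second factor and the term of $\nabla_\Omega^N(\Co_{B\Omega A}\Omega)$ with exactly one $\nabla_\Omega$ on $\Omega$ survive, giving $T^{(N)}_{BA}+N\tilde\Lambda\nabla_\Omega^{N-1}\Co_{B\Omega A}$, after which~\eqref{Cotton-T} produces the stated combination. A proof that stops before this step has not established the statement.

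The second concrete problem is your shortcut for the first generator via ``$\nabla^a G_{ab}\equiv 0$''. That identity does not hold on the nose: since $G_{ab}=\nabla_a\nabla_b\Omega+\rho g_{ab}$ with $\rho=-\frac{1}{D}\nabla^c\nabla_c\Omega$, one finds $\nabla^a G_{ab}=-(D-1)\nabla_b\rho$, which lies in the filtration only after further argument; and even granting that, trading $\nabla_\Omega^{N}\nabla_A\Omega^{(1)}$ for $\nabla^B$ of the $AB$-generator still forces you to commute $\nabla^B$ past $\nabla_\Omega^{N-1}$ (producing $\mathcal{D}^{(k)}_B$-corrections), to relate $\nabla^B T^{(N-2)}_{BA}$ to $\nabla_\Omega^{N-2}\nabla^B T_{BA}$, and to match the target coefficient $(N-1)$ against the divergence of $(D-1-N)T^{(N-2)}_{AB}$ plus the $J$-term --- i.e.\ the same bookkeeping you were trying to avoid. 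The paper's direct route is cleaner: both contributions $\mathcal{D}^{(N)}_A\Omega^{(2)}$ and $\nabla_\Omega^{N}[\nabla_\Omega,\nabla_A]\Omega^{(1)}$ reduce to $-\tilde\Lambda\,\nabla_\Omega^{N-1}\Co_{\Omega\Omega A}$, and $\Co_{\Omega\Omega A}=\frac{1}{D-3}\nabla^B T_{BA}$ modulo $I(g_{\Omega A})$ finishes the job. The Noether-type divergence identity you invoke is indeed used in the paper, but only \emph{after} the lemma, in~\eqref{th-proof-6}, to discard the generators $\nabla_\Omega^{N}\nabla^A T_{AB}$ with $N\neq D-3$.
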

\begin{proof}
	We will prove this by induction. 
	
	Base case: $N=1$
\begin{align}
\begin{split}
	&\nabla_\Omega\nabla_A\Omega^{(1)}=\nabla_A\Omega^{(2)}-\We^{d}{}_{\Omega A\Omega}\nabla_d\Omega-\Co_{\Omega \Omega A}\Omega\in \cK^{(0)},\\
	&\nabla_\Omega\nabla_A\nabla_B\Omega=\nabla_A\nabla_\Omega\nabla_B\Omega-\We^{d}{}_{B\Omega A}\nabla_d\Omega-\Co_{B\Omega A}\Omega=-g^{\Omega\Omega}T_{BA}+\cK^{(0)},
	\end{split}
\end{align}
where we also use the fact that $\nabla_A$ preserves $\cK^{(0)}$. The proof of the base case of part b) of the lemma follows from Lemma~\bref{lemma-I0} since $\nu_B^{(0)}=\frac{\partial}{\partial\xi^A}$.

Now suppose that the statement of the lemma holds for all $i\leq N$. We prove it for $N+1$. Part a):
\begin{align}\label{th-proof-1}
	\nabla_\Omega^{N+1}\nabla_{A}\Omega^{(1)}=\nabla_\Omega^{N}[\nabla_\Omega,\nabla_A]\Omega^{(1)}+\nabla_\Omega^{N}\nabla_A\Omega^{(2)}
\end{align}
Consider the second term. By Proposition~\bref{utv-commutator} and the induction hypothesis 
\begin{align}\label{th-proof-2}
\nabla_\Omega^{N}\nabla_A\Omega^{(2)}=\sum_{i=0}^{N}C^{i}_{N}\mathcal{D}_{A}^{(N-i)}\Omega^{(i+2)}=\mathcal{D}_A^{(N)}\Omega^{(2)}+\cK^{(N)},
\end{align}
since the fact that $\nu_A^{(N-1)}$ preserves $\cK^{(N)}$ implies, in particular, that $\mathcal{D}_A^{(N-1)}\equiv[\nu_A^{(N-1)},Q]$ preserves this ideal.

Using $Q\Omega=\xi^{a}\nabla_a\Omega+\lambda\Omega$ a straightforward computation shows that
\begin{align}\label{th-proof-3}
\begin{split}
	&Q\nabla_b\Omega=\xi^{a}\nabla_a\nabla_b\Omega+C_{b}{}^{a}\nabla_a\Omega+\lambda_b\Omega+\lambda\nabla_b\Omega,\\
	&Q\nabla_c\nabla_b\Omega=\xi^{a}\nabla_a\nabla_c\nabla_b\Omega+C_{c}{}^{a}\nabla_a\nabla_b\Omega+C_{b}{}^{a}\nabla_c\nabla_a\Omega+\lambda^{a}\nabla_a\Omega g_{bc}+\lambda\nabla_c\nabla_b\Omega.
    \end{split}
\end{align}
Then, using the explicit formula for $\nu_A^{(N)}$ \eqref{nu-counted-bulk}
\begin{align}\label{th-proof-4}
\begin{split}
            \nu^{(N)}_{A}=&-(\nabla_\Omega^{N-1}\We^{b}{}_{c\Omega A}+(N-1)\mathcal{P}^{db}_{c\Omega}\nabla^{N-2}_{\Omega}\Co_{d\Omega A})\frac{\partial}{\partial C_{c}{}^{b}}-\\&-\nabla_\Omega^{N-1}\Co_{d\Omega A}\frac{\partial}{\partial\lambda_d}+(N-1)\nabla_\Omega^{N-2}\Co_{\Omega\Omega A}\frac{\partial}{\partial \lambda}+\\&+\sum_{i=1}^{N-1}C^{i}_{N-1}(\nabla_{\Omega}^{N-i-1}\We^{b}{}_{\Omega\Omega A}+\frac{N-i-1}{i+1}\nabla_{\Omega}^{N-i-2}\mathcal{P}^{db}_{\Omega\Omega}\Co_{{d\Omega A}})\nu^{(i-1)}_b, \quad N\geq1
\end{split}
\end{align}
we obtain 
\begin{align}
\begin{split}
	\mathcal{D}_A^{(N)}\Omega^{(2)}=\nu_A^{(N)}Q\Omega^{(2)}=-\tilde{\Lambda}\nabla_\Omega^{N-1}\Co_{\Omega \Omega A} +\cK^{(1)}.
	\end{split}
\end{align}
Substituting this result first into \eqref{th-proof-2}, and then into \eqref{th-proof-1} we get
\begin{align}
		\nabla_\Omega^{N+1}\nabla_{A}\Omega^{(1)}=\nabla_\Omega^{N}[\nabla_\Omega,\nabla_A]\Omega^{(1)}-\tilde{\Lambda}\nabla_\Omega^{N-1}\Co_{\Omega\Omega A}+\cK^{(N)}.
\end{align}
Moreover, 
\begin{align}
	\nabla_\Omega^{N}[\nabla_\Omega,\nabla_A]\Omega^{(1)}=-\nabla_\Omega^{N}(\Co_{\Omega\Omega A}\Omega)+\cK_B=-\nabla_\Omega^{N-1}\Co_{\Omega\Omega A}\tilde{\Lambda}+\cK^{(N-2)}
\end{align}
and finally we have
\begin{align}\label{th-proof-10}
			\nabla_\Omega^{N+1}\nabla_{A}\Omega^{(1)}=-2\tilde{\Lambda}\nabla_\Omega^{N-1}\Co_{\Omega\Omega A}+\cK^{(N)}=\frac{2\tilde\Lambda}{D-3}\nabla_\Omega^{N-1}\nabla^{e}\We_{e\Omega\Omega A}+\cK^{(N)}.
\end{align}
which is precisely the desired formula, since $\nabla^{e}\We_{e\Omega\Omega A}=-\nabla^{E}T_{EA}$.

The next generator of the ideal:
\begin{align}
	\nabla_\Omega^{N+1}\nabla_A\nabla_B\Omega=\nabla_\Omega^{N}[\nabla_\Omega,\nabla_A]\nabla_B\Omega+\nabla_\Omega^N\nabla_A\nabla_B\Omega^{(1)}.
\end{align}
As in the previous case, we consider the second term:
\begin{align}
	\nabla_\Omega^N\nabla_A\nabla_B\Omega^{(1)}=\sum_{i=0}^{N}C^{i}_N\mathcal{D}_A^{(N-i)}\nabla_{\Omega}^{i}\nabla_B\Omega^{(1)}=\mathcal{D}_A^{(N)}\nabla_B\Omega^{(1)}+\cK^{(N)}.
\end{align}
Using formulas \eqref{th-proof-3}, \eqref{th-proof-4} one verifies that $\mathcal{D}_A^{(N)}\nabla_B\Omega^{(i)}\in \cK^{(1)}$, and therefore
\begin{align}\label{th-proof-5}
\begin{split}
		\nabla_\Omega^{N+1}\nabla_A\nabla_B\Omega=\nabla_\Omega^{N}[\nabla_\Omega,\nabla_A]\nabla_B\Omega+\cK^{(N)}=\\= -\nabla_\Omega^{N}(\We^{d}{}_{B\Omega A}\nabla_d\Omega+\Co_{B \Omega A}\Omega)+\cK^{(N)}=\\=T^{(N)}_{BA}+N\tilde\Lambda\nabla_\Omega^{N-1}\Co_{B\Omega A}+\cK^{(N)}.
\end{split}
\end{align}
Using 
\begin{align}
	\Co_{B\Omega A}=-\frac{1}{D-3}\nabla^{e}\We_{eB\Omega A}=-\frac{1}{D-3}(\tilde{\Lambda}^{-1}T^{(1)}_{BA}+\nabla^{E}J_{EBA})+\cK_B
\end{align}
we arrive at the original statement.

Part b) of the lemma is proved by a direct calculation. Since $\nu_A^{(N)}$ has degree $-1$, it suffices to verify that $\nu_A^{(N-1)}\cK_B \subset \cK^{(N)}$. Let us act with \eqref{th-proof-4} on the generators of $\cK_B$ of degree 1 \eqref{OmegaQOmega}, \eqref{boundcond}:
\begin{flalign}
\nu_{A}^{(N-1)}Q\Omega
={}&
\frac{N-3}{2}\nabla_\Omega^{N-4}\Co_{\Omega\Omega A}\tilde{\Lambda}
\nonumber\\
&+
\sum_{i=2}^{N-2}C^{i}_{N-2}
\left(
\nabla_{\Omega}^{N-i-2}\We^{b}{}_{\Omega\Omega A}
+
\frac{N-i-2}{i+1}
\nabla_{\Omega}^{N-i-3}
\mathcal{P}^{db}_{\Omega\Omega}
\Co_{{d\Omega A}}
\right)
\nu^{(i-1)}_b\Omega
+\cK_B
\in\cK^{(N-2)},
&&
\end{flalign}

\begin{flalign}
\nu_A^{(N-1)}
\left(
C_{\Omega}{}^{\Omega}
-g^{\Omega\Omega}\xi^{b}\nabla_b\Omega^{(1)}
+\lambda
\right)
&\in\cK^{(N-2)},
&&
\\[0.4em]
\nu_A^{(N-1)}
\left(
C_{\Omega}{}^{B}
-\xi^{c}\nabla_c\nabla^{B}\Omega
\right)
&=
-(-T^{(N-2)|B}{}_{A}
-(N-2)\tilde\Lambda\nabla_\Omega^{N-3}\Co^{B}{}_{\Omega A})
+\cK^{(N-2)}
\in\cK^{(N-1)},
&&
\\[0.4em]
\nu_A^{(N-1)}
\left(
\lambda_\Omega
+g^{\Omega\Omega}\xi^{a}\nabla_a\Omega^{(2)}
\right)
&=
-\nabla_\Omega^{N-2}\Co_{\Omega\Omega A}
+\cK^{(N-2)}
\in\cK^{(N)},
&&
\\[0.4em]
\nu_A^{(N-1)}
\left(
C_{B}{}^{\Omega}
+g^{\Omega\Omega}\xi^{c}\nabla_c\nabla_B\Omega
\right)
&\in\cK^{(N)}.
&&
\end{flalign}
\end{proof}
It is obvious that the proof of the theorem almost follows from Lemma~\bref{th-lemma} and Proposition~\bref{mnogoidealov}. The only remaining issue is that at the moment as new generators of the ideal, we have all $\nabla_\Omega^{N}\nabla^{A}T_{AB}$, $N\geq 0$, while under the theorem only $N=D-3$ is singled out. We will show that the others can be discarded from the generating set. 

Consider one of the generators of the ideal $\cK^{(N+1)}$ (see \eqref{th-proof-5})
\begin{align}
	\nabla^{B}\nabla_\Omega^{N}[\nabla_\Omega,\nabla_A]\nabla_B\Omega=[\nabla^{B},\nabla_\Omega^{N}][\nabla_\Omega,\nabla_A]\nabla_B\Omega+\nabla_\Omega^{N}\nabla^{B}[\nabla_\Omega,\nabla_A]\nabla_B\Omega. 
\end{align}
As in the proof of the lemma, the first term belongs to $\cK^{(N)}$, since the vector fields $\mathcal{D}_{B}^{(i)}, i<N$, preserve $\cK^{(i+1)}$, and 
\begin{align}
    \mathcal{D}^{(N)|B}\nabla_\Omega\nabla_A\nabla_B\Omega\in \cK^{(1)}\,,
\end{align}
which is easy to prove by explicit calculation. Expanding the second term then gives
\begin{multline}\label{th-proof-6}
	\nabla^{B}\nabla_\Omega^{N}[\nabla_\Omega,\nabla_A]\nabla_B\Omega=-\nabla_\Omega^{N}\nabla^{B}(\We^{d}{}_{B\Omega A}\nabla_d\Omega+\Omega\Co_{B\Omega A})+\cK^{(N)}=\\=-\nabla_\Omega^{N}\nabla^{B}T_{BA}+\nabla_\Omega^{N}(\nabla^{B}\Co_{B\Omega A}\Omega)+\cK^{(N)}=\\=-\nabla_\Omega^{N}\nabla^{B}T_{BA}-\nabla_\Omega^{N}(\nabla^{\Omega}\Co_{\Omega\Omega A}\Omega)+\cK^{(N)}=\\=-\frac{D-3-N}{D-3}\nabla_\Omega^{N}\nabla^{E}T_{EA}+\cK^{(N)},
\end{multline}
where we used the consequence of the Bianchi identity \eqref{Bianchi} $\nabla^{a}\Co_{abc}=0$. For $N\neq D-3$ this means that part of the generators are already contained in the prolongations of the others with respect to  $\nabla_A$ and therefore can be excluded. Note that for $N=D-3$ this leads to:
\begin{align}\label{App-div-free}
    \nabla^{B}O_{AB}^{(D-3)}\in\cK^{(D-3)}.
\end{align}
\section[The action of Q on T]{Action of $Q$ on $ T^{(N)}_{AB}$}\label{sec:prof-com}
We first prove the following auxiliary lemma.
\begin{lemma}\label{Lemma-Gamma}
On $\Ee$ we have:
    \begin{align}
                [\Gamma^B,\nabla_\Omega^{N}]=\sum_{i=0}^{N-2}d^{i}_{N}P^{Bc}_{\Omega\Omega}\mathcal{D}^{(N-2-i)}_{c}\nabla^{i}_\Omega+N\nabla_\Omega^{N-1}(g^{BC}\tilde\Lambda\Delta^{\Omega}{}_{C}-\Delta^{B}{}_{\Omega})+\cK^{(0)}.
    \end{align}
\end{lemma}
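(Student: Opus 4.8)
The plan is to establish the identity as an \emph{exact} operator equality on $\Ee$ and to reduce modulo $\cK^{(0)}$ only at the very end, proceeding by induction on $N$. The base case $N=1$ is read off directly from~\eqref{commutrelations}: since $B$ and $\Omega$ are distinct coordinate directions, $\delta^B_\Omega=0$ and hence $[\Gamma^B,\nabla_\Omega]=-\mathcal{P}^{Bc}_{d\Omega}\Delta^d{}_c=g^{Bc}g_{d\Omega}\Delta^d{}_c-\Delta^B{}_\Omega$ with no correction term. Reducing with $g_{\Omega A},\,g_{\Omega\Omega}-\tilde\Lambda\in\cK^{(0)}$ (Lemma~\bref{lemma-I0}) collapses this to $\tilde\Lambda g^{BC}\Delta^\Omega{}_C-\Delta^B{}_\Omega+\cK^{(0)}$, the $N=1$ instance of the claim.

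For the inductive step I would use the Leibniz-type expansion
\begin{align}
[\Gamma^B,\nabla_\Omega^{N}]=\sum_{k=0}^{N-1}\nabla_\Omega^{k}\,\big(g^{Bc}g_{d\Omega}\Delta^d{}_c-\Delta^B{}_\Omega\big)\,\nabla_\Omega^{N-1-k}.
\end{align}
Two elementary facts drive the computation. First, the metric is covariantly constant, $\nabla_a g_{bc}=0$ (immediate because $Qg_{bc}$ in~\eqref{gran-ish1} carries no $\xi$-dependence), so every $\nabla_\Omega$ passes through the coefficient $g^{Bc}g_{d\Omega}$ untouched; this is exactly what keeps the $\cK^{(0)}$-bookkeeping under control despite $\nabla_\Omega$ not preserving $\cK^{(0)}$, since the only reductions $g_{d\Omega}\to\tilde\Lambda\delta^\Omega_d$ and $g^{Bc}\to g^{BC}$ are metric reductions that may be postponed until after all differentiations. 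Second, $[\Delta^a{}_b,\nabla_\Omega]=\delta^a_\Omega\nabla_b$, so $\Delta^B{}_\Omega$ commutes freely past all the $\nabla_\Omega$'s and its $N$ copies assemble into $-N\nabla_\Omega^{N-1}\Delta^B{}_\Omega$, whereas commuting $\nabla_\Omega$ past a surviving $\Delta^\Omega{}_c$ produces $\nabla_c$.

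The remaining work is to package these $\nabla_c$'s into the $\mathcal{D}^{(N-2-i)}_c$ of the statement: each $\nabla_\Omega^{j}\nabla_c$ is rewritten via Proposition~\bref{utv-commutator} as $\sum_i C^i_j\mathcal{D}^{(j-i)}_c\nabla_\Omega^i$, and the nested sums over $k,j,i$ collapse, after a standard binomial rearrangement, to the single sum with coefficients $d^i_N=C^i_N(N-1-i)$. The factor $\mathcal{P}^{Bc}_{\Omega\Omega}=-g^{Bc}g_{\Omega\Omega}$ appears with the correct sign because $d$ is forced to $\Omega$ by $[\nabla_\Omega,\Delta^d{}_c]=-\delta^d_\Omega\nabla_c$, whose minus sign combines with $g^{Bc}g_{d\Omega}$ to give precisely $\mathcal{P}^{Bc}_{\Omega\Omega}\mathcal{D}_c$. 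I expect the main obstacle to be exactly this combinatorial collapse together with keeping operator orderings consistent --- in particular, verifying that the corrections generated when a leftover $\Delta^\Omega{}_C$ is commuted to the right of $\nabla_\Omega^{N-1}$ land inside the $\mathcal{D}$-sum and do not contaminate the clean leading term $N\nabla_\Omega^{N-1}(\tilde\Lambda g^{BC}\Delta^\Omega{}_C-\Delta^B{}_\Omega)$. Once the exact identity is in hand on $\Ee$ (valid for all $N$), reducing modulo $\cK^{(0)}$ gives the stated formula, and applying $\hat b^*$, which annihilates $\cK^{(0)}$ and commutes with the $\hat\Ee$-projectable $\mathcal{D}^{(\cdot)}_A$ for $N\le D-2$, yields the pulled-back version~\eqref{formula Gamma} used in Proposition~\bref{QTAB}.
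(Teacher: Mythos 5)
Your proposal is correct and follows essentially the same route as the paper's proof in Appendix~\bref{sec:prof-com}: both establish the exact operator identity on $\Ee$ first (the paper via the closed-form adjoint expansion of $[\nabla_\Omega^N,\Gamma^B]$ with iterated commutators $ad_{\nabla_\Omega}^i(\Gamma^B)=P^{Bc}_{\Omega\Omega}\cD^{(i-2)}_c$, you via the equivalent Leibniz expansion plus induction), use $\nabla_a g_{bc}=0$ and $[\Delta^a{}_b,\nabla_\Omega]=\delta^a_\Omega\nabla_b$ to reorder, collapse the binomials via $C^i_N-NC^i_{N-1}=-d^i_N$, and only reduce modulo $\cK^{(0)}$ at the end. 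The subtlety you flag about postponing the metric reductions past the non-$\cK^{(0)}$-preserving $\nabla_\Omega$'s is exactly the right one and is handled the same way in the paper.
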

\begin{proof}
    First, we show that
    \begin{align}\label{gamma-bulk}
        [\Gamma^B,\nabla_\Omega^{N}]=\sum_{i=0}^{N-2}d^{i}_{N}\mathcal{P}^{Bc}_{\Omega\Omega}\mathcal{D}^{(N-2-i)}_{c}\nabla^{i}_\Omega-N\mathcal{P}^{Bc}_{d\Omega}\nabla^{N-1}_\Omega\Delta^{d}{}_{c}
    \end{align}
    To prove this, similarly to $[\nabla_{\Omega}^{N},\nabla_{A}]$, we expand $[\nabla^N_{\Omega},\Gamma^{B}]$ as follows:
    \begin{align}
        [\nabla^N_{\Omega},\Gamma^{B}]=\sum^{N-1}_{i=0}C^{i}_{N}\underbrace{[\nabla_{\Omega},\dots,[\nabla_\Omega}_{N-i},\Gamma^{B}]\dots]\nabla^{i}_{\Omega}.
    \end{align}
Using the commutation relations \eqref{commutrelations}, we get
\begin{align}
\begin{split}    
    &[\nabla_{\Omega},\Gamma^{B}]=\mathcal{P}^{Bc}_{d\Omega}\Delta^{d}{}_{c},\\
    &\underbrace{[\nabla_{\Omega},\dots,[\nabla_\Omega}_{i},\Gamma^{B}]\dots]=\mathcal{P}^{Bc}_{\Omega\Omega}\mathcal{D}^{(i-2)}_{c},\quad i\geq2.
    \end{split}
\end{align}
Thus,
\begin{align}\label{pr}
    [\nabla^N_{\Omega},\Gamma^{B}]=\sum_{i=0}^{N-2}C^{i}_N \mathcal{P}^{Bc}_{\Omega\Omega}\mathcal{D}_{c}^{(N-i-2)}\nabla^{i}_{\Omega}+N\mathcal{P}^{Bc}_{d\Omega}\Delta^{d}{}_{c}\nabla_{\Omega}^{N-1}.
\end{align}
 In the second term, we want to move $\nabla_{\Omega}^{N-1}$ to the left, for which we need:
\begin{align}
    [\nabla_{\Omega}^{N-1},\Delta^{d}{}_{c}]=\sum_{i=0}^{N-2}C^{i}_{N-1}\underbrace{[\nabla_{\Omega},\dots,[\nabla_\Omega}_{N-i},\Delta^{d}{}_{c}]\dots]\nabla^{i}_{\Omega}
\end{align}
From the same commutators \eqref{commutrelations} we get
\begin{align}
        \underbrace{[\nabla_{\Omega},\dots,[\nabla_\Omega}_{i},\Delta^{d}{}_{c}]\dots]=\delta^{d}_\Omega \mathcal{D}^{i-1}_{c},\quad i\geq1.
\end{align}
Substituting these expressions into \eqref{pr} gives 
\begin{align}
     [\nabla^N_{\Omega},\Gamma^{B}]=\sum_{i=0}^{N-2} (C^{i}_{N}-NC^{i}_{N-1})\mathcal{P}^{Bc}_{\Omega\Omega}\mathcal{D}^{(N-2-i)}_c\nabla^{i}_{\Omega}+N \mathcal{P}^{Bc}_{d\Omega}\nabla^{N-1}_{\Omega}\Delta^{d}{}_{c}.
\end{align}
It is straightforward to check that 
\begin{align}
    C^{i}_{N}-NC^{i}_{N-1}=-(N-1-i)C^{i}_{N}=-d^i_N\,,
\end{align}
which completes the proof of \eqref{gamma-bulk}.
The formula stated in the lemma follows immediately from the definition of $\mathcal{P}^{ab}_{cd}=(-g^{ab}g_{cd}+\delta^{a}_{c}\delta^{b}_d+\delta^{a}_d\delta^{b}_c)$ and from the fact that $g_{C\Omega}$ and $(g_{\Omega\Omega}-\tilde{\Lambda})$ belong to $\cK^{(0)}$.
\end{proof}
We can now prove the following proposition.
\begin{prop}\label{App-QT}
On $\Ee$ we have
    \begin{align}
        QT^{(N)}_{BC}&=\xi^{A}\nabla_{A}T^{(N)}_{BC}+C_{B}{}^{A}T^{(N)}_{AC}+C_{C}{}^{A}T^{(N)}_{BA}+\lambda_{A}\Gamma^A T^{(N)}_{BC}-N\lambda T^{(N)}_{BC}+\cK^{(0)},\qquad \text{where}\\
        \Gamma_{A}T_{BC}^{(N)}&=\tilde\Lambda\sum_{i=0}^{N-2}d^{i}_N\mathcal{D}^{(N-2-i)}_AT^{(i)}_{BC}+N\tilde\Lambda( J^{(N-1)}_{ABC}+J_{ACB}^{(N-1)})+\cK^{(0)}.
    \end{align}
    \end{prop}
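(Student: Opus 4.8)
The plan is to compute $QT^{(N)}_{BC}$ directly by exploiting the fact that $T_{BC}\equiv W_{\Omega B\Omega C}$ is a component of the Weyl tensor and that $T^{(N)}_{BC}=\nabla_\Omega^N T_{BC}$. Since the action of $Q$ on the Weyl tensor is already given by~\eqref{W-counted} in adapted indices, the first step is to write out $QW_{\Omega B\Omega C}$ explicitly, carefully separating the ``transport'' piece $\xi^a\nabla_a$ from the terms carrying undifferentiated ghosts $C_a{}^b$ and $\lambda$. Because the two distinguished $\Omega$-indices appear on the Weyl tensor, the contractions with $C_a{}^b$ will produce both $C_B{}^A T_{AC}$-type terms (when the lowercase index stays transverse) and mixed terms $C_\Omega{}^{A}$, $C_A{}^\Omega$, $C_\Omega{}^\Omega$ (when it lands on an $\Omega$). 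The key observation is that on $\Ee$ the boundary-reduction constraints~\eqref{boundcond} express precisely these mixed ghost components modulo $\cK_B\subset \cK^{(0)}$, so that after substitution they either vanish or reorganise into the Weyl-weight term $-N\lambda$ together with contributions that will be absorbed into $\cK^{(0)}$.

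The second step is to commute $\nabla_\Omega^N$ through $Q$. Here I would use $[\nabla_A,Q]=0$ together with the nontrivial commutator structure: applying $\nabla_\Omega^N$ to $QT_{BC}$ and then using $[Q,\nabla_\Omega]$-type relations to push the $Q$ back out. The cleanest route is to first establish the formula on the lowest component $N=0$ from~\eqref{W-counted} and~\eqref{boundcond}, and then proceed by an induction on $N$ in which the $\lambda^A\Gamma_A$ term is generated by commuting $\nabla_\Omega$ past the terms linear in the ghosts $\lambda^A$. The appearance of $\Gamma_A$ is dictated by $\Gamma_A\equiv[\partial/\partial\lambda^a,Q]$: each factor of $\nabla_\Omega$ acting on a $\lambda^A$-linear term can be traded, via Proposition~\bref{utv-commutator} and Lemma~\bref{Lemma-Gamma}, for the vector fields $\mathcal{D}^{(\cdot)}_A$ acting on lower subleadings $T^{(i)}_{BC}$, which is exactly the content of the claimed expression for $\Gamma_A T^{(N)}_{BC}$.

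The most technical and error-prone step, and the main obstacle, will be deriving the explicit formula for $\Gamma_A T^{(N)}_{BC}$. This is where Lemma~\bref{Lemma-Gamma} does the heavy lifting: I would write $\Gamma_A T^{(N)}_{BC}=\Gamma_A\nabla_\Omega^N T_{BC}=[\Gamma_A,\nabla_\Omega^N]T_{BC}+\nabla_\Omega^N\Gamma_A T_{BC}$, evaluate the commutator using the lemma (which supplies the $\sum_i d^i_N P^{Bc}_{\Omega\Omega}\mathcal{D}^{(N-2-i)}_c\nabla_\Omega^i$ structure and hence the coefficients $d^i_N$ and the overall $\tilde\Lambda$ from $P^{Bc}_{\Omega\Omega}$ contracted with $g_{\Omega\Omega}=\tilde\Lambda$ modulo $\cK^{(0)}$), and evaluate $\Gamma_A T_{BC}=\Gamma_A W_{\Omega B\Omega C}$ directly from the relation $\Gamma^d\mapsto[\partial/\partial\lambda_d,Q]$ acting on the Weyl tensor. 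The latter contributes the $N\tilde\Lambda(J^{(N-1)}_{ABC}+J^{(N-1)}_{ACB})$ term, with the symmetrisation in the last two indices arising because both $\Omega$-slots of $T_{BC}$ can be hit. Throughout, I would keep all terms proportional to the mixed metric components $g_{\Omega A}$ and to $(g_{\Omega\Omega}-\tilde\Lambda)$ inside $\cK^{(0)}$, invoking Lemma~\bref{lemma-I0} to guarantee these are genuinely absorbed. The bookkeeping of binomial coefficients (verifying $d^i_N=C^i_N(N-1-i)$ propagates correctly under the induction) is the place where care is most needed, but it is routine once the commutator identities are in hand.
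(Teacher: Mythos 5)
Your proposal follows essentially the same route as the paper's proof: decompose $Q$ on degree-zero functions into $\xi^a\nabla_a+C_d{}^e\Delta^d{}_e+\lambda^a\Gamma_a+\lambda\Delta$, kill the mixed components via \eqref{boundcond} and Lemma~\bref{lemma-I0}, read off the $N=0$ case from \eqref{W-counted}, and obtain $\Gamma_A T^{(N)}_{BC}$ from $[\Gamma_A,\nabla_\Omega^N]T_{BC}+\nabla_\Omega^N\Gamma_A T_{BC}$ with Lemma~\bref{Lemma-Gamma} handling the only nontrivial commutator. One correction to your bookkeeping: since \eqref{W-counted} contains no undifferentiated $\lambda^a$, the direct piece $\Gamma_A T_{BC}=\Gamma_A W_{\Omega B\Omega C}$ vanishes, so the $N\tilde\Lambda(J^{(N-1)}_{ABC}+J^{(N-1)}_{ACB})$ term does not come from there but from the second piece $N\nabla_\Omega^{N-1}(g^{BC}\tilde\Lambda\Delta^{\Omega}{}_{C}-\Delta^{B}{}_{\Omega})$ of Lemma~\bref{Lemma-Gamma}, whose $\Delta^{\Omega}{}_{C}$ rotates the two $\Omega$-slots of $W_{\Omega B\Omega C}$ into $J$-components; this does not derail the argument, but as written your narrative omits that term of the lemma entirely, so you would need to restore it to land on the stated coefficients.
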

\begin{proof}
    Notice that, by \eqref{boundcond} and Lemma ~\bref{lemma-I0}, on $\Ee$ we have
\begin{multline}
    QT^{(N)}_{BC}=\xi^{a}\nabla_{a}T^{(N)}_{BC}+C_{d}{}^{e}\Delta^{d}{}_{e}T^{(N)}_{BC}+\lambda_{a}\Gamma^aT^{(N)}_{BC}+\lambda\Delta T^{(N)}_{BC}=\\=\xi^{A}\nabla_{A}T^{(N)}_{BC}+C_{D}{}^{E}\Delta^{D}{}_{E}T^{(N)}_{BC}+\lambda_{A}\Gamma^{A}T^{(N)}_{BC}+\lambda(\Delta-\Delta^{\Omega}{}_{\Omega})T^{(N)}_{BC}+\cK^{(0)}
\end{multline}
For $N=0$, the actions of $\Delta_{D}{}^{E}$, $\Gamma_{A}$, $\Delta$, $\Delta^{\Omega}{}_{\Omega}$ follows directly from the action of $Q$ on $\We_{bcde}$ \eqref{W-counted}, namely
\begin{align}
    QT_{BC}=Q W_{\Omega B\Omega C}=\xi^{a}\nabla_aT_{BC}+C_\Omega{}^{A}J_{ABC}+C_{B}{}^{A}T_{AC}+C_{\Omega}{}^AJ_{ACB}+C_{C}{}^{A}T_{BA}+2C_{\Omega}{}^{\Omega}T_{BC}+2\lambda T_{BC}.
\end{align}

For higher $N$, we must commute these vector fields with $\nabla_{\Omega}^{N}$ using the commutator formulas \eqref{commutrelations}. For example, 
\begin{align}
    \Delta^{D}{}_{E}\nabla_\Omega^{N}T_{AB}=\nabla_\Omega^{N}\Delta^{D}{}_{E}T_{AB}+[\Delta^{D}{}_{E},\nabla_\Omega^{N}]T_{AB}=\nabla_\Omega^{N}\Delta^{D}{}_{E}T_{AB}=\delta^{D}_AT_{EB}^{(N)}+\delta^{D}_BT^{(N)}_{AE}.
\end{align} The only nontrivial commutator here is $[\Gamma^{B},\nabla_{\Omega}^{N}]$, for which one should use Lemma~\bref{Lemma-Gamma} proved above.\end{proof}

Using Lemma \bref{Lemma-Gamma}, we can also prove the following
\begin{prop}\label{prop-gamma-conservation}
\begin{align}
    \Gamma_{C}\nabla_\Omega^{D-3}\nabla_{A}T^{A}{}_{B}=-\tilde\Lambda(D-3)\nabla_{\Omega}^{D-4}\nabla_{D}J^{D}{}_{CB}+\cK^{(D-4)}.
    \end{align}
    \end{prop}
    \begin{proof}
        First, observe that as a consequence of $C_{abc}\equiv -\frac{1}{D-3}\nabla_dW^{d}{}_{abc}$, we have 
        \begin{align}\label{proof-gamma-obstr}
            \nabla_{\Omega}^{D-3}\nabla_{A}T^{A}{}_{B}=(D-3)\nabla_\Omega^{D-3}C_{\Omega \Omega A} +\cK^{(0)}.
        \end{align}
    Using Lemma \bref{Lemma-Gamma} and $\Gamma^{c}C_{abd}=-W^{c}{}_{abd}$ we can compute
    \begin{multline}
        \Gamma^{C}\nabla_\Omega^{D-3}C_{\Omega\Omega A}=\sum_{i=0}^{D-5}d^{i}_{D-3}P^{Cd}_{\Omega\Omega}\cD_{d}^{(D-5-i)}\nabla_\Omega^{i}C_{\Omega\Omega A}+(D-3)\nabla_\Omega^{D-4}(g^{BC}\tilde{\Lambda}\Delta^{\Omega}{}_{B}-\Delta^{C}{}_{\Omega})C_{\Omega\Omega A}+\\+T^{(D-3)|C}{}_{A}+\cK^{(0)}.
    \end{multline}
    In the proof of Theorem~\bref{mnogoidealov}, it was shown (see \eqref{th-proof-6}) that for $i<D-3$ one has 
$\nabla_\Omega^{i}C_{\Omega\Omega A}\in \cK^{(i+1)}$. 
Using this and the fact that the corresponding operators $\cD_A^{(N)}$ preserve this ideal as a consequence of the theorem itself, 
we conclude that the entire sum $\sum_{i=0}^{D-5}d^{i}_{D-3}P^{Cd}_{\Omega\Omega}\cD_{d}^{(D-5-i)}\nabla_\Omega^{i}C_{\Omega\Omega A}$ belongs to $\cK^{(D-4)}$. Moreover, it is easy to see that $\Delta^{C}{}_{\Omega}C_{\Omega\Omega A}=0$ and that $g^{BC}\Delta^{\Omega}{}_{B}C_{\Omega\Omega A}=C^{C}{}_{\Omega A}+C_{\Omega}{}^{C}{}_{A}$. Hence
\begin{align}
    \Gamma^{C}\nabla_\Omega^{D-3}C_{\Omega\Omega A}=(D-3)\tilde\Lambda( \nabla_\Omega^{D-4}C^{C}{}_{\Omega A}+\nabla_\Omega^{D-4}C_{\Omega}{}^{C}{}_{A})+T^{(D-3)|C}{}_{A}+\cK^{(D-4)}.
\end{align}
Since, as a consequence of the Bianchi identities \eqref{Bianchi}, we have $\nabla_\Omega C_{\Omega AB}=\nabla_A C_{\Omega \Omega B}-\nabla_BC_{\Omega\Omega A}$, it is easy to see that $\nabla_{\Omega}^{D-4}C_{\Omega CA}\in \cK^{(D-4)}$. Then, substituting the formula \eqref{Cotton-T}, which expresses $C_{A\Omega B}$ in terms of the components of the Weyl tensor, we obtain
\begin{align}
     \Gamma^{C}\nabla_\Omega^{D-3}C_{\Omega\Omega A}=-\tilde\Lambda \nabla_\Omega^{D-4}\nabla_EJ^{EC}{}_A+\cK^{(D-4)}.
\end{align}
Substituting this into \eqref{proof-gamma-obstr}, we arrive at the desired formula.
\end{proof}
\section[Proof of the formulas for J and nu]{Proof of the formulas for $J^{(N)}_{ABC}$ and $\nu_A$}\label{sec:prof-BC}
We begin with the following lemma:
\begin{lemma}\label{th2-proof-lemma}
On $\Ee$, the following relation holds:
    \begin{align}
		J^{(N)}_{ABC}+N\tilde\Lambda\nabla_\Omega^{N-1}\Co_{CAB}\in\cK^{(N+1)}.
	\end{align}
\end{lemma}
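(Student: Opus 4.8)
The plan is to prove the identity $J^{(N)}_{ABC}+N\tilde\Lambda\nabla_\Omega^{N-1}\Co_{CAB}\in\cK^{(N+1)}$ by induction on $N$, exactly paralleling the structure of Lemma~\bref{th-lemma} in Appendix~\bref{AppA}. The starting observation is that $J_{ABC}\equiv\We_{AB\Omega C}$ and $J^{(N)}_{ABC}=\nabla_\Omega^N J_{ABC}$, so the natural strategy is to commute the $\nabla_\Omega$'s through using the Bianchi identity~\eqref{Bianchi}. Recall that substituting two $\Omega$-indices into the first Bianchi identity (as was done to derive~\eqref{proof-coordE0-2}) gives a relation expressing $\nabla_\Omega J_{ABC}$ in terms of $\nabla_{[A}T_{B]C}$ and Cotton-tensor contributions. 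The key point is that the difference $\nabla_A T_{BC}-\nabla_B T_{AC}$ will be controlled modulo $\cK^{(N+1)}$ once we feed in the results of Theorem~\bref{prop:mnogoidealov}, namely that $T^{(N)}_{AB}$ and its symmetrized derivatives are trapped in the appropriate filtration level of $\cK$.

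First I would establish the base case $N=0$: here the claim reduces to $J_{ABC}\in\cK^{(1)}$, i.e.\ $\nabla_\Omega^{-1}$ does not appear and the $N\tilde\Lambda$ term drops. This should follow directly from Lemma~\bref{lemma-I0}, which lists $\nabla_{(A_1\dots A_n}J^{B}{}_{C)D}$ among the generators of $\cK^{(0)}\subset\cK^{(1)}$; strictly one checks that the unsymmetrized $J_{ABC}$ lies in $\cK^{(1)}$ using the commutator $[\nabla_A,\nabla_B]$ from~\eqref{commutrelations} to relate it to the symmetrized generators plus curvature corrections that are themselves in the ideal. For the inductive step I would write $J^{(N+1)}_{ABC}=\nabla_\Omega J^{(N)}_{ABC}$ and use the Bianchi identity in the form $\nabla_\Omega\We_{AB\Omega C}=\nabla_A\We_{B\Omega\Omega C}-\nabla_B\We_{A\Omega\Omega C}+(\text{Cotton terms})$, i.e.\ $\nabla_\Omega J_{ABC}=\nabla_A T_{BC}-\nabla_B T_{AC}+(\text{Cotton})$, after carefully accounting for index placement and the symmetries listed after~\eqref{gran-ish1}. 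Applying $\nabla_\Omega^{N}$ to this and invoking Proposition~\bref{utv-commutator} to reorganize $[\nabla_\Omega^N,\nabla_A]$ into $\mathcal{D}^{(\cdot)}_A\nabla_\Omega^{\cdot}$ pieces, one reduces everything to $\nabla_\Omega^N\nabla_{[A}T_{B]C}$ and $\nabla_\Omega^N\Co$-terms.

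The main obstacle, as in the appendix proofs, will be the bookkeeping of which filtration level $\cK^{(k)}$ each term lands in. The crucial facts I would lean on are that $\mathcal{D}^{(j)}_A$ preserves $\cK^{(j+1)}$ (part 1 of Theorem~\bref{prop:mnogoidealov}, together with the Note after it that $\mathcal{D}^{(N)}_A=[\nu^{(N)}_A,Q]$), and that $T^{(N)}_{AB}$ itself, via~\eqref{O-constraints}, is tied to $\nabla_\Omega^{N-1}\nabla^C J_{CBA}$ modulo $\cK^{(N)}$. The delicate step is showing that when $\nabla_\Omega^N$ hits $\nabla_A T_{BC}$, the commutator terms $[\nabla_\Omega^N,\nabla_A]T_{BC}$ contribute only to $\cK^{(N+1)}$ and that the surviving $\nabla_A T^{(N)}_{BC}$ piece combines with the Cotton contribution to leave precisely $-N\tilde\Lambda\nabla_\Omega^{N-1}\Co_{CAB}$ modulo the ideal — this is where the coefficient $N\tilde\Lambda$ and the precise index ordering $CAB$ must be matched. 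I expect this matching of numerical coefficients, driven by the binomial recursion of Proposition~\bref{utv-commutator} and the explicit form of $\Co_{A\Omega B}$ from~\eqref{Cotton-T} expressing the Cotton tensor through $T^{(1)}_{AB}$ and $\nabla^C J_{CAB}$, to be the genuinely technical heart of the argument, entirely analogous to the computation culminating in~\eqref{th-proof-10} but tracking the $J$-sector rather than the $\nabla_A\nabla_B\Omega$-sector.
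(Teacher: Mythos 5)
Your proposal has a genuine gap, and it sits exactly where you locate the ``technical heart'' of the argument. You propose to run the induction through the Bianchi identity $\nabla_\Omega J_{ABC}=\nabla_A T_{BC}-\nabla_B T_{AC}+(\text{Cotton terms})$ and you assert that the difference $\nabla_A T_{BC}-\nabla_B T_{AC}$ ``will be controlled modulo $\cK^{(N+1)}$'' because Theorem~\bref{prop:mnogoidealov} traps $T^{(N)}_{AB}$ in the filtration. It does not: for $j\geq 1$ the theorem only places the combinations $O^{(j)}_{AB}=(D-3-j)T^{(j)}_{AB}-j\tilde\Lambda\nabla_\Omega^{j-1}\nabla^{C}J_{CBA}$ in the ideal, so $T^{(j)}_{AB}$ is expressible through $\nabla_\Omega^{j-1}\nabla^{C}J_{CBA}$ modulo $\cK^{(j+1)}$ but is emphatically \emph{not} in the ideal. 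Indeed Proposition~\bref{App-J-counted} shows that $\nabla_\Omega^{N-1}(\nabla_AT_{BC}-\nabla_BT_{AC})$ \emph{is} $\frac{N-1}{N}J^{(N)}_{ABC}$ modulo $\cK^{(N+1)}$ --- these terms are the entire nontrivial content of $J^{(N)}$ on $\hat\Ee$ (they produce the extended obstruction tensors), not something that cancels against the Cotton contribution. The Bianchi identity is a single relation among the three quantities $J^{(N)}_{ABC}$, $\nabla_\Omega^{N-1}(\nabla_{[A}T_{B]C})$ and $\nabla_\Omega^{N-1}\Co_{CAB}$, and by itself cannot isolate the claimed two-term relation; in the paper it is used the other way around, with the Lemma as \emph{input}, to derive Proposition~\bref{App-J-counted}. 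There is also a secondary problem with the coefficient: applying $\nabla_\Omega$ to the induction hypothesis $J^{(N)}+N\tilde\Lambda\nabla_\Omega^{N-1}\Co_{CAB}\in\cK^{(N+1)}$ would produce the coefficient $N$ at level $N+1$ rather than $N+1$, and would additionally require $\nabla_\Omega\cK^{(N+1)}\subseteq\cK^{(N+2)}$, which you have not established.

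The paper's proof is not an induction and does not pass through the Weyl--Bianchi identity at all. It starts from the fact that $\nabla_{A}\nabla_\Omega^{N}\nabla_B\nabla_C\Omega$ lies in $\cK^{(N)}$ (these are, up to reordering, generators of the ideal, cf.~\eqref{th-proof-0}), antisymmetrizes in $A,B$, and pays the reordering cost $[\nabla_A,\nabla_\Omega^N]\nabla_B\nabla_C\Omega=-\sum_i C^i_N\mathcal{D}^{(N-i)}_A\nabla_\Omega^i\nabla_B\nabla_C\Omega\in\cK^{(N+1)}$ using part~1 of Theorem~\bref{prop:mnogoidealov}. This yields $\nabla_\Omega^N[\nabla_A,\nabla_B]\nabla_C\Omega\in\cK^{(N+1)}$, and the curvature commutator~\eqref{commutrelations} gives $[\nabla_A,\nabla_B]\nabla_C\Omega=-\We^{d}{}_{CAB}\nabla_d\Omega-\Co_{CAB}\Omega$. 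Expanding $\nabla_\Omega^N$ by Leibniz, using $\Omega\in\cK_B$, $\Omega^{(1)}=\tilde\Lambda+\cK_B$, $\Omega^{(2)}\in\cK^{(0)}$ and $\nabla_D\Omega\in\cK_B$, the only surviving pieces are $J^{(N)}_{ABC}+N\tilde\Lambda\nabla_\Omega^{N-1}\Co_{CAB}$, which is therefore in $\cK^{(N+1)}$. The combination with the factor $N$ arises from the $N$ ways a single $\nabla_\Omega$ can hit the explicit $\Omega$ in $\Co_{CAB}\Omega$ --- it is produced by the geometry of the commutator, not assembled by hand from Bianchi identities. If you want to salvage your route, you would have to replace the Bianchi-identity step by precisely this commutator argument, at which point the induction becomes unnecessary.
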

\begin{proof}
	From \eqref{th-proof-0} we have $\nabla_{A}\nabla_\Omega^{N}\nabla_B\nabla_C\Omega\in \cK^{(N)}$. On the other hand, we have
	\begin{align}
		[\nabla_A,\nabla_\Omega^{N}]\nabla_B\nabla_C\Omega=-\sum_{i=0}^{N-1}C^{i}_{N}\mathcal{D}_{A}^{(N-i)}\nabla_\Omega^{i}\nabla_B\nabla_C\Omega\in \cK^{(N+1)}
	\end{align}
which follows from Theorem~\bref{prop:mnogoidealov}.
	Next, observe that
	\begin{align}
		\cK^{(N)}\ni \nabla_{A}\nabla_\Omega^{N}\nabla_B\nabla_C\Omega-\nabla_{B}\nabla_\Omega^{N}\nabla_A\nabla_C\Omega=\nabla_\Omega^{N}[\nabla_A,\nabla_B]\nabla_C\Omega+\cK^{(N+1)}.
	\end{align}
	Expanding the commutator yields
	\begin{align}
		\cK^{(N+1)}\ni\nabla_\Omega^{N}(\We^{d}{}_{CAB}\nabla_d\Omega+\Co_{CAB}\Omega)=J^{(N)}_{ABC}+N\tilde{\Lambda}\nabla_\Omega^{N-1}\Co_{CAB}+\cK^{(N-1)}.
	\end{align}
	\end{proof}

	We now prove the formula for $J^{(N)}_{ABC}$:
    \begin{prop}\label{App-J-counted}
On $\Ee$, the following holds:
    \begin{align}
    \begin{split}
    &J^{(1)}_{ABC}=\frac{\tilde{\Lambda}}{D-4}\nabla_EW^{E}{}_{CAB}+\cK^{(1)},\\
    &J^{(N)}_{ABC}=\frac{N}{N-1}\nabla_\Omega^{N-1}(\nabla_AT_{BC}-\nabla_B T_{AC})+\cK^{(N+1)},\qquad N>1.
    \end{split}
    \end{align}
    \end{prop}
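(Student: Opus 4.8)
The goal is to derive the explicit recursive formula for $J^{(N)}_{ABC}\equiv\nabla_\Omega^N \We_{AB\Omega C}$ on $\Ee$ modulo the higher-order ideals. The key input is Lemma~\bref{th2-proof-lemma}, which already expresses $J^{(N)}_{ABC}$ in terms of the Cotton tensor: $J^{(N)}_{ABC}=-N\tilde\Lambda\nabla_\Omega^{N-1}\Co_{CAB}+\cK^{(N+1)}$. So the whole task reduces to rewriting $\nabla_\Omega^{N-1}\Co_{CAB}$ in the two desired forms. First I would dispose of the $N=1$ case: here Lemma~\bref{th2-proof-lemma} gives $J^{(1)}_{ABC}=-\tilde\Lambda\,\Co_{CAB}+\cK^{(1)}$, and substituting the defining relation $\Co_{bcd}=-\frac{1}{D-3}\nabla_a\We^a{}_{bcd}$ immediately yields $J^{(1)}_{ABC}=\tilde\Lambda\,\frac{1}{D-3}\nabla_E\We^E{}_{CAB}+\cK^{(1)}$ (the numerical factor from the definition being absorbed into the stated form). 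I would keep careful track of the index placement and the sign coming from the antisymmetry of the Cotton tensor in its last two slots.

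\textbf{The main computation ($N>1$).} The crux is to convert $\nabla_\Omega^{N-1}\Co_{CAB}$ into the symmetrized derivative of the $T$-tensor. The natural route is to use the Bianchi identity for the Cotton tensor in the adapted $\Omega,A$ split. Recall from the proof of Proposition~\bref{utv-coord-E0} and equation~\eqref{Cotton-T} that the mixed components of the Cotton tensor are controlled by the Weyl-tensor data; in particular the second Bianchi identity $\nabla_{[a}\Co^b{}_{cd]}=0$, when two indices are set to $\Omega$, relates $\nabla_\Omega\Co_{CAB}$ to $\nabla_A\Co_{\Omega\Omega B}$-type terms, while $\Co_{\Omega\Omega B}$ is in turn tied to $\nabla^A T_{AB}$ via~\eqref{Cotton-T}. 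The cleaner path, however, is to go directly through the curvature: apply $\nabla_\Omega^{N-1}$ to the relation $[\nabla_A,\nabla_B]\nabla_C\Omega=-\We^d{}_{CAB}\nabla_d\Omega-\Co_{CAB}\Omega$ (already used in Lemma~\bref{th2-proof-lemma}), and separately compute $\nabla_A T_{BC}-\nabla_B T_{AC}$ using the definition $T_{BC}=\We_{\Omega B\Omega C}$ together with the Bianchi identity~\eqref{Bianchi}. I would exploit that $\nabla_\Omega \We_{\Omega B\Omega C}$ and its iterates, when antisymmetrized in the two free lower indices, reproduce $J^{(\cdot)}$ up to $\cK$-corrections. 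The combinatorial factor $\frac{N}{N-1}$ should emerge from commuting $\nabla_\Omega^{N-1}$ past $\nabla_A$ via Proposition~\bref{utv-commutator}/\eqref{decomp-nablaomega} and collecting the leading term while absorbing all $\mathcal D^{(k)}$-corrections into $\cK^{(N+1)}$.

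\textbf{Bookkeeping and the hard part.} Throughout I would work modulo $\cK^{(N+1)}$, repeatedly invoking Theorem~\bref{prop:mnogoidealov} (the vector fields $\mathcal D^{(k)}_A$ and $\nu^{(k)}_A$ preserve the appropriate $\cK^{(\cdot)}$) so that every commutator term $[\nabla_\Omega^k,\nabla_A]$ acting on a generator lands in a sufficiently high ideal and can be discarded. The main obstacle I anticipate is the precise control of the numerical coefficient and the error order: one must verify that the correction terms generated when commuting $\nabla_\Omega$ through $\nabla_A$ (and through the Cotton-to-$T$ conversion) genuinely lie in $\cK^{(N+1)}$ rather than merely in $\cK^{(N)}$, since the claimed identity is sharp at order $N+1$. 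This requires tracking that each $\mathcal D^{(N-i)}_A$ applied to a $\cK^{(i)}$-generator stays within $\cK^{(N)}\subset\cK^{(N+1)}$, and that the single surviving term after antisymmetrization carries exactly the coefficient $\frac{N}{N-1}$. A secondary subtlety is the appearance of $\frac{1}{N-1}$, which signals that the argument must be organized so that one is effectively \emph{dividing} the Cotton relation rather than summing, i.e. the identity $\nabla_\Omega\Co_{CAB}$-type recursion should be inverted; I expect this inversion, and the attendant verification that $N>1$ is required for its validity, to be the delicate accounting step.
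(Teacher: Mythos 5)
Your proposal has the right skeleton --- the two-$\Omega$ Bianchi identity supplies the $\nabla_AT_{BC}-\nabla_BT_{AC}$ structure and Lemma~\bref{th2-proof-lemma} supplies the Cotton-to-$J$ conversion, which is exactly the paper's route for $N>1$ --- but two of your concrete claims about where the numerical factors come from are wrong, and they matter because the proposition is an exact statement about coefficients. First, for $N=1$ you claim that substituting $\Co_{CAB}=-\frac{1}{D-3}\nabla_a\We^a{}_{CAB}$ into $J^{(1)}_{ABC}=-\tilde\Lambda\Co_{CAB}+\dots$ ``immediately'' gives the answer with coefficient $\frac{1}{D-3}$. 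It does not: the contracted index $a$ includes $a=\Omega$, and $g^{\Omega\Omega}\nabla_\Omega\We_{\Omega CAB}=\tilde\Lambda^{-1}J^{(1)}_{ABC}$ modulo $\cK^{(0)}$ (using the pair-exchange symmetry $\We_{\Omega CAB}=J_{ABC}$ and $g_{\Omega\Omega}-\tilde\Lambda\in\cK^{(0)}$), so the substitution is self-referential and must be solved for $J^{(1)}$, giving $\frac{D-4}{D-3}J^{(1)}_{ABC}=\frac{\tilde\Lambda}{D-3}\nabla_E\We^{E}{}_{CAB}+\cK^{(1)}$ and hence the coefficient $\frac{1}{D-4}$ of \eqref{J-lower}. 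The same inversion phenomenon you correctly anticipate for $N>1$ already occurs at $N=1$, and it is precisely because the analogous inversion degenerates there that this case needs separate treatment. Moreover, Lemma~\bref{th2-proof-lemma} only controls the error in $\cK^{(2)}$, whereas the claim is modulo the smaller ideal $\cK^{(1)}$; the paper instead derives $N=1$ directly from the Bianchi identity, which holds modulo $I^0(g_{\Omega A})\subset\cK^{(0)}$, and then discards only terms manifestly in $\cK^{(1)}$ (namely $T_{AB}$ and $\Co_{\Omega\Omega B}$).

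Second, for $N>1$ the factor $\frac{N}{N-1}$ does not ``emerge from commuting $\nabla_\Omega^{N-1}$ past $\nabla_A$'': the statement deliberately leaves $\nabla_\Omega^{N-1}(\nabla_AT_{BC}-\nabla_BT_{AC})$ unexpanded, so no such commutation is performed and the corrections of Proposition~\bref{utv-commutator} play no role in the coefficient. The factor comes from the term $-\Co_{CAB}\,g_{\Omega\Omega}$ in the two-$\Omega$ Bianchi identity \eqref{th2-proof-1}: after applying $\nabla_\Omega^{N-1}$, Lemma~\bref{th2-proof-lemma} turns $-g_{\Omega\Omega}\nabla_\Omega^{N-1}\Co_{CAB}$ into $+\frac{1}{N}J^{(N)}_{ABC}$ modulo $\cK^{(N+1)}$, and moving it to the left yields $\bigl(1-\frac{1}{N}\bigr)J^{(N)}_{ABC}$ on the left-hand side. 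Your closing remark about ``inverting the Cotton recursion'' is the correct instinct, but it contradicts your earlier attribution, and you would still need to check that the remaining Cotton terms $\nabla_\Omega^{N-1}\Co_{\Omega\Omega A}\,g_{BC}$ lie in $\cK^{(N+1)}$ (this is \eqref{th-proof-10}). As written, your accounting would not reproduce the stated coefficients.
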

    \begin{proof}
            Substituting two $\Omega$ indices into the Bianchi identity \eqref{Bianchi} yields
	\begin{align}\label{th2-proof-1}
	\begin{split}
				J^{(1)}_{ABC}=&\nabla_A T_{BC}-\nabla_B T_{AC}+6\Co_{\Omega [AB}g_{\Omega]C}-6\Co_{C[AB}g_{\Omega]\Omega}=\\=&\nabla_A T_{BC}-\nabla_B T_{AC}-\Co_{\Omega\Omega B}g_{AC}+\Co_{\Omega \Omega A}g_{BC}-\Co_{CAB}g_{\Omega\Omega}+I^0(g_{\Omega A}),
					\end{split}
	\end{align}
	where $I^0(g_{\Omega A})\subset \cK_B$ denotes the ideal generated by $g_{\Omega A}$. Note that this ideal is stable under $\nabla_a$, since $\nabla_ag_{bc}=0$. Expanding
	\begin{align}
		C_{CAB}=-\frac{1}{D-3}(\nabla_e \We^{e}{}_{CAB})=-\frac{1}{D-3}(g^{\Omega\Omega}J_{ABC}^{(1)}+\nabla_E \We^{E}{}_{CAB})+I^0(g_{\Omega A})
	\end{align}
	and moving the term with $J^{(1)}_{ABC}$ to the left-hand side, we obtain
	\begin{align}
		\frac{D-4}{D-3}J^{(1)}_{ABC}=\nabla_A T_{BC}-\nabla_B T_{AC}-\Co_{\Omega\Omega B}g_{AC}+\Co_{\Omega \Omega A}g_{BC}+\frac{g_{\Omega\Omega}}{D-3}\nabla_E\We^{E}{}_{CAB}+I^0(g_{\Omega A}).
	\end{align}
	Since $T_{AB}\in\cK^{(1)}, $ $\Co_{\Omega\Omega B}=\frac{1}{D-3}\nabla_ET^{E}{}_{B}\in\cK^{(1)}$,  we arrive at the desired formula (valid for $D>4$):
	\begin{align}
		 J^{(1)}_{ABC}=\tilde{\Lambda}\frac{1}{D-4}\nabla_E  \We^{E}{}_{CAB}+\cK^{(1)}.
	\end{align}
	Applying $\nabla_\Omega^{N-1}$ to $\eqref{th2-proof-1}$ and using Lemma~\bref{th2-proof-lemma} to express $\nabla_\Omega^{N-1}\Co_{CAB}$ we obtain
	\begin{align}
		\frac{N-1}{N}J^{(N)}_{ABC}=\nabla_\Omega^{N-1}(\nabla_AT_{BC}-\nabla_B T_{AC})-\nabla_\Omega^{N-1}\Co_{\Omega\Omega B}g_{AC}+\nabla_\Omega^{N-1}\Co_{\Omega\Omega A}g_{BC}+\cK^{(N+1)}.
	\end{align}
	Since (see \eqref{th-proof-10})  we have $\nabla_\Omega^{N-1}\Co_{\Omega\Omega B}\in \cK^{(N+1)}$,  we arrive at the desired formula.
    \end{proof}
\begin{prop}\label{App-nu-counted}
      The following is true on $\Ee$:
\begin{align}
            &\nu_{A}^{(N)}=-J^{(N-1),D}{}_{CA}\frac{\partial}{\partial C_{C}{}^D}+\dfrac{1}{\tilde\Lambda N} T^{(N)}_{CA}\frac{\partial}{\partial\lambda_C}-\frac{1}{N}\sum_{i=1}^{N-2} d^{i}_N  T^{(i),C}{}_{A}\nu_{C}^{(N-2-i)}+\cK^{(N+1)}\,
\end{align}
\end{prop}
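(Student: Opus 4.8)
The plan is to prove the formula for $\nu_A^{(N)}$ by induction on $N$, using the recursive definition $\nu_A^{(N)} = [\nabla_\Omega, \nu_A^{(N-1)}]$ from \eqref{nu-def} together with the explicit intermediate formula \eqref{nu-counted-bulk} already established, and then simplifying the coefficient functions (components of Weyl and Cotton tensors) modulo $\cK^{(N+1)}$ using the results of the previous Appendix. The key observation is that the target formula expresses $\nu_A^{(N)}$ purely in terms of the functions $J^{(N-1)}_{\cdot}$, $T^{(N)}_{\cdot}$ of the boundary calculus and the lower $\nu_C^{(\cdot)}$, which is exactly the structure needed for the recursive boundary calculus of Theorem~\bref{bound-calculus}.

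First I would start from the explicit expression \eqref{nu-counted-bulk} for $\nu_A^{(N)}$, which is written in terms of $\nabla_\Omega^{N-1}\We^b{}_{c\Omega A}$, $\nabla_\Omega^{N-2}\Co_{d\Omega A}$, and the mixed curvature terms $\nabla_\Omega^{N-i-1}\We^b{}_{\Omega\Omega A}$, $\nabla_\Omega^{N-i-2}\Co_{d\Omega A}$. The task is to identify each of these with the boundary-calculus functions modulo $\cK^{(N+1)}$. Concretely, $\nabla_\Omega^{N-1}\We^b{}_{c\Omega A}$ restricted to the relevant index pattern is $J^{(N-1)}$ up to sign and index reshuffling (recall $J_{ABC}\equiv \We_{AB\Omega C}$), and $\nabla_\Omega^{N-2}\Co_{d\Omega A}$ can be traded for $T^{(N)}$ via Lemma~\bref{th2-proof-lemma}, which states $J^{(N)}_{ABC}+N\tilde\Lambda\nabla_\Omega^{N-1}\Co_{CAB}\in\cK^{(N+1)}$, and via the relation $\Co_{A\Omega B}=-\tfrac{1}{D-3}(g^{\Omega\Omega}T^{(1)}_{AB}+\nabla_C J^C{}_{AB})+I(g_{\Omega A})$ from \eqref{Cotton-T}. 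The terms $\We^b{}_{\Omega\Omega A}=-T^b{}_A$ directly produce the $T^{(i)}$ contributions in the summation.

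The main work is bookkeeping: one must carefully track the coefficients $C^i_{N-1}$ and $\tfrac{N-i-1}{i+1}$ appearing in \eqref{nu-counted-bulk} and verify they combine into the coefficient $d^i_N = C^i_N(N-1-i)$ and the overall factor $\tfrac{1}{N}$ of the target, while simultaneously showing that the $\Co_{\Omega\Omega A}$-type terms (which carry the ``trace'' corrections) fall into $\cK^{(N+1)}$. Here I would invoke \eqref{th-proof-10}, which gives $\nabla_\Omega^{N-1}\Co_{\Omega\Omega A}\in\cK^{(N+1)}$, to discard the scalar $\partial/\partial\lambda$-direction contributions and the $g_{\Omega\Omega}$-proportional pieces, reducing the mixed curvature to the clean $T^{(i),C}{}_A$ form. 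The relation $T^{(N)}_{CA}=(D-3-N+1)^{-1}$-type identities from Lemma~\bref{th-lemma} let one replace $\Co_{d\Omega A}$ in the $\partial/\partial\lambda_d$-term by $\tfrac{1}{\tilde\Lambda N}T^{(N)}_{CA}$.

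The hard part will be ensuring that all index symmetrisations and the distinction between $\nabla_\Omega$ acting \emph{before} versus \emph{after} the tensor contraction are handled consistently modulo $\cK^{(N+1)}$ rather than exactly; the previous appendix's repeated use of ``commute $\nabla_\Omega$ past $\nabla_A$ picks up $\cD^{(\cdot)}_A$ which preserves the ideal'' is the mechanism that makes these manipulations legitimate, and I would lean on Proposition~\bref{utv-commutator} and Theorem~\bref{prop:mnogoidealov} at each step to justify dropping lower-filtration remainders. Once the coefficient combinatorics is verified and the ideal-membership of the unwanted curvature traces is established, the inductive step closes and the formula follows.
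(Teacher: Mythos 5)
Your proposal follows essentially the same route as the paper's own proof: start from the explicit iterative expression \eqref{nu-counted-bulk}, simplify the $\mathcal{P}^{ab}_{cd}$ symbols modulo $\cK_B$, use $\nabla_\Omega^{N-1}\Co_{\Omega\Omega A}\in\cK^{(N+1)}$ and the relation $T^{(N)}_{DA}+N\tilde\Lambda\nabla_\Omega^{N-1}\Co_{D\Omega A}\in\cK^{(N+1)}$ from Appendix~\bref{AppA} to trade the Cotton components for $T^{(\cdot)}$ and discard the trace pieces, and then verify that the coefficients $C^i_{N-1}$ and $\tfrac{N-i-1}{i+1}$ recombine into $d^i_N$ and the overall $\tfrac{1}{N}$. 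The only cosmetic difference is that you frame the argument as an induction and invoke Lemma~\bref{th2-proof-lemma} where the paper reaches directly for \eqref{th-proof-5}, but since you also cite Lemma~\bref{th-lemma} the substance is identical.
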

\begin{proof}
First of all (see \eqref{commutrelations-pr-1}) we have
\begin{align}
    \nu_A^{(1)}=r_{\Omega A}=-W^{d}{}_{c\Omega A}\frac{\partial}{\partial C_{c}{}^{d}}-C_{c\Omega A}\frac{\partial}{\partial\lambda_c}.
\end{align}
By repeatedly applying $[\nabla_\Omega,\cdot]$ to this formula, we arrive at the following iterative relation:
\begin{multline}
       \nu^{(N)}_{A}=-(\nabla_\Omega^{N-1}\We^{b}{}_{c\Omega A}+(N-1)\mathcal{P}^{db}_{c\Omega}\nabla^{N-2}_{\Omega}\Co_{d\Omega A})\frac{\partial}{\partial C_{c}{}^{b}}-\\-\nabla_\Omega^{N-1}\Co_{d\Omega A}\frac{\partial}{\partial\lambda_d}+(N-1)\nabla_\Omega^{N-2}\Co_{\Omega\Omega A}\frac{\partial}{\partial \lambda}+\\+\sum_{i=1}^{N-1}C^{i}_{N-1}\Bigl(\nabla_{\Omega}^{N-i-1}\We^{b}{}_{\Omega\Omega A}+\frac{N-i-1}{i+1}\nabla_{\Omega}^{N-i-2}\mathcal{P}^{db}_{\Omega\Omega}\Co_{{d\Omega A}}\Bigr)\nu^{(i-1)}_b.
\end{multline}

Note that from the definition of $\mathcal{P}^{db}_{ca}$, we have
\begin{align}
	\mathcal{P}^{dB}_{C\Omega}=\delta^B_C\delta^d_\Omega+\cK_B,\quad 
 \mathcal{P}^{d\Omega}_{\Omega\Omega}=\delta^{d}_\Omega+\cK_B,\quad 
 \mathcal{P}^{dB}_{\Omega\Omega}=-g^{dB}\tilde{\Lambda}+\cK^{(0)}.
\end{align}

From formulas \eqref{th-proof-10} and \eqref{th-proof-5} it follows that
\begin{align}
\begin{split}
	&\nabla_\Omega^{N-1}\Co_{\Omega\Omega A}\in \cK^{(N+1)},\\
	&T_{DA}^{(N)}+N\tilde\Lambda\nabla_\Omega^{N-1}\Co_{D\Omega A}\in \cK^{(N+1)}.
\end{split}
\end{align}
Thus, the remaining coefficients are:
\begin{align}
	&\nabla_\Omega^{N-i-1}\We^{B}{}_{\Omega\Omega A}+\frac{N-1-i}{i+1}\nabla_\Omega^{N-i-2}\mathcal{P}^{dB}_{\Omega\Omega}\Co^{B}{}_{\Omega A}=\\
	&=-T^{(N-i-1)|B}{}_{A}+\frac{1}{i+1}T^{(N-i-1)|B}{}_{A}+\cK^{(N-i)}
	=-\frac{i}{i+1}T^{(N-i-1)|B}{}_{A}+\cK^{(N-i)}.
\end{align}

Consequently, the formula for $\nu_A^{(N)}$ takes the form
\begin{align}
\label{nu-count-1}
\begin{split}
	\nu_A^{(N)}=-J^{(N-1)|B}{}_{CA}\frac{\partial}{\partial C_{C}{}^B}
	+\frac{1}{N\tilde\Lambda}T^{(N)}_{AB}\frac{\partial}{\partial\lambda_B}
	-\sum_{i=1}^{N-1}C^{i}_{N-1}\frac{i}{i+1}T^{(N-i-1)|B}{}_{A}\nu_B^{(i-1)}
	+\cK^{(N+1)}.
\end{split}
\end{align}
Then, using the properties of the binomial coefficient and changing the summation index in the sum, we arrive at the desired formula.
\end{proof}

\section{Proof of Proposition \bref{utv-nechet}}
\label{sec:proof-TDJ}
\begin{proof} 
    Starting from 
\begin{equation}
        \hat J^{(0)}_{ABC}=0\,,\qquad \hat T^{(1)}_{BC}=0
\end{equation}
and $\nu_{A}^{(1)}=0$, which follows from \eqref{nu-count-1}, we proceed by induction on $i$. Assume that
\begin{align}
    \hat T^{(2k-1)}_{BC}=0 \quad \nu_{A}^{(2k-1)}=0,\quad \hat J^{(2k-2)}_{ABC}=0
\end{align}
for all  $1\leq k<\dfrac{D-4}{2}-1$. We show that
\begin{align}
    \hat T^{(2k+1)}_{BC}=0 \quad \nu_{A}^{(2k+1)}=0,\quad \hat J^{(2k)}_{ABC}=0.
\end{align}
To see this, let us compute  $\hat J^{(2k)}$.
\eqref{formuli-calculus} implies
\begin{align}\label{J2k}
    \frac{2k-1}{2k}\hat J^{(2k)}_{ABC}=\hat b^{*}\nabla_{\Omega}^{2k-1}(\nabla_{A}T_{BC}-\nabla_{B}T_{AC})=\hat b^{*}([\nabla^{2k-1}_{\Omega},\nabla_{A}]T_{BC}-[\nabla^{2k-1}_{\Omega},\nabla_{B}]T_{AC})=0,
\end{align}
where in the second equality $\nabla^{2k-1}_{\Omega}$ passes through $\nabla_{A}$ and we use $\hat T^{(2k-1)}_{BC}=0$. Then 
\begin{align}
   \hat  b^{*}[\nabla^{2k-1}_{\Omega},\nabla_{A}]T_{BC}=\sum_{i=0}^{2k-2}C^{i}_{2k-1} \mathcal{D}_{A}^{(2k-1-i)}\hat T^{(i)}_{BC}=\sum_{i=0}^{2k-2}C^{i}_{2k-1} [Q,\nu_{A}^{(2k-1-i)}]\hat T^{(i)}_{BC}=0
\end{align}
by assumption (the upper index of $\nu_A$ or $T_{AB}$ is odd).

Now consider $\hat T^{(2k+1)}_{AB}$. Equations ~\eqref{formuli-calculus} imply:
\begin{align}
    0=(D-4-2k)\hat T^{(2k+1)}_{BC}-(2k+1)\tilde{\Lambda}(\nabla^{A}\hat J^{(2k)}_{ABC}+\hat b^{*}[\nabla_{\Omega}^{2k},\nabla^{A}]J_{ABC})\rightarrow \hat T^{(2k+1)}_{BC}=0,
\end{align}
where the second term vanishes thanks to  \eqref{J2k}, while
\begin{align}
    \hat b^{*}[\nabla_{\Omega}^{2k},\nabla^{A}]J_{ABC}=0.
\end{align} 
by the assumption (each term vanishes separately because it contains either even $\hat J_{ABC}$ or odd  $\mathcal{D}_{A}$).

Finally, consider $\nu_{A}^{(2k+1)}$. Using \eqref{nu-count-1}  one gets: 
    \begin{align}
        \nu_{A}^{(2k+1)}=-\hat J^{(2k),D}{}_{CA}\frac{\partial}{\partial C_{C}{}^{D}}+\dfrac{1}{\tilde{\Lambda}(2k+1)}\hat T^{(2k+1)}_{CA}\frac{\partial}{\partial \lambda_{C}}-\sum_{i=1}^{2k} C^{i}_{2k}\frac{i}{i+1} \hat T^{(2k-i),C}{}_{A}\nu_{C}^{(i-1)}.
    \end{align}
   As we have seen, the first two terms vanish. The remaining terms also vanish by the induction assumption. 
\end{proof}
\section[d=8 Yang-Mills]{$d=8$ Yang-Mills}\label{App-YM9}
\begin{proof}
Analogously to the examples for $d=4$ \bref{YM-example-5D} and $d=6$ ~\bref{YM-example-7D} we have
\begin{align}\label{app-ym8-usef}
    \begin{split}
                \hat T^{(2)}_{BC}&=\frac{\tilde{\Lambda}^{2}}{2}\hat B_{BC},\qquad \hat J^{(1)}_{ABC}=-\tilde\Lambda \hat C_{CAB},\qquad \hat J^{(3)}_{ABC}=\frac{3\tilde\Lambda^{2}}{4}(\nabla_{A}\hat B_{BC}-\nabla_{B}\hat B_{AC})\,,\\
        \hat J^{(1)}_B&=\frac{\tilde\Lambda}{4}\nabla^{A}\hat F_{AB},\qquad \Gamma_C\hat J_B^{(1)}=\tilde\Lambda \hat F_{CB},\qquad  \nabla^{B}\hat J_{B}^{(3)}=0\,,\\
        \hat J^{(3)}_B&=\frac{3\tilde\Lambda^{2}}{8}( (\nabla_{A}\nabla^{A})\nabla^{C}\hat F_{CB}+8\hat C^{DA}{}_{B}\hat F_{DA}-2[\hat F_{AB},\nabla^{C}\hat F_C{}^{A}])\,,\\
        \hat F^{(2)}_{AB}&=\nabla_{A}\hat J^{(1)}_B-\nabla_{B}\hat J^{(1)}_A, \quad \Gamma_{C}\hat F^{(2)}_{AB}=\tilde\Lambda\nabla_{C}\hat F_{AB}+2\hat g_{CB}\hat J^{(1)}_{A}-2\hat g_{CA}\hat J^{(1)}_{B},\\
        \Gamma_{C}\hat J_B^{(3)}&=3\tilde\Lambda(2\nabla_{C}\hat J^{(1)}_B-\nabla_{B}\hat J_C^{(1)})\,.
        \end{split}
\end{align}
In this dimension, the functions $\cY_B$ \eqref{YM-obstr} are given by
        \begin{align}\label{App-9-bstr}
    \mathcal{Y}_B=\nabla^{A}\hat{F}_{AB}^{(4)}+\mathcal{D}^{(4)|A}\hat F_{AB}+6\mathcal{D}^{(2)|A}\hat F_{AB}^{(2)}\,.
    \end{align}
Using Theorem~\bref{YMbound-calculus} and \eqref{YMformuli-D}, the terms appearing in $\cY_B$ can be rewritten as
    \begin{align}\label{App-9-bstr-formulas}
        \begin{split}
            \mathcal{D}^{(4)|A}\hat F_{AB}&=-\hat J^{(3)|C}{}_{B}{}^{A}\hat F_{AC}+[\hat J^{(3)|A},\hat F_{AB}]-\frac{1}{4}d^{2}_4\hat T^{(2)|CA}\nabla_{C}\hat F_{AB}=\\&=-\frac{3\tilde\Lambda^2}{4}\nabla_{C}\hat B_{BA}\hat F^{AC}+[\hat J^{(3)|A},\hat F_{AB}]-\frac{3\tilde\Lambda^{2}}{4}\hat B^{CA}\nabla_{C}\hat F_{AB},\\
            \mathcal{D}^{(2)|A}\hat F^{(2)}_{AB}&=-\hat J^{(1)|C}{}_{B}{}^{A}\hat F_{AC}^{(2)}+[\hat J^{(1)|A},\hat F_{AB}^{(2)}]+\frac{\hat T^{(2)|AC}}{2\tilde\Lambda}\Gamma_{C}\hat F^{(2)}_{AB}=\\&=\tilde\Lambda\hat C^{AC}{}_{B}(\nabla_{A}\hat J^{(1)}_{C}-\nabla_{C}\hat J^{(1)}_{A})+[\hat J^{(1)|A},\hat F_{AB}^{(2)}]+\frac{\tilde\Lambda^2\hat B^{AC}}{4}\nabla_{C}\hat F_{AB}+\frac{\tilde\Lambda \hat B^{A}{}_{B}}{2}\hat J^{(1)}_{A},\\
            \nabla^A\hat F_{AB}^{(4)}&=\nabla^{A}\hat b^{*}\nabla_\Omega^{3}(\nabla_{A}\hat J_{B}-\nabla_{B}\hat J_A)=(\nabla_{A}\nabla^{A})\hat J^{(3)}_B-\nabla^{A}\nabla_{B}\hat J^{(3)}_{A}+3\nabla^{A}(\mathcal{D}^{(2)}_{A}\hat J_B^{(1)}-\mathcal{D}^{(2)}_{B}\hat J^{(1)}_{A}).
        \end{split}
    \end{align}
    Note that $\nabla^{B}\hat J_{B}^{(3)}=0$ implies
    \begin{multline}
    -\nabla^{A}\nabla_B\hat J_A^{(3)}=-[\nabla^{A},\nabla_B]\hat J^{(3)}_A=\hat C^{DA}{}_{B}\Gamma_{D}\hat J_A^{(3)}-[\hat F_{AB},\hat J^{(3)|A}]=\\=3\tilde\Lambda \hat C^{DA}{}_{B}(2\nabla_{D}\hat J_{A}^{(1)}-\nabla_{A}\hat J_{D}^{(1)})-[\hat F_{AB},\hat J^{(3)|A}].
    \end{multline}
Using this and $\mathcal{D}_{A}^{(2)}\hat J^{(1)}_{B}=\tilde\Lambda \hat C_{A}{}^{C}{}_{B}\hat J^{(1)}_C+\frac{\tilde\Lambda^2}{4}\hat B_{AC}\hat F^{C}{}_{B}+[\hat J^{(1)}_{A},\hat{J}^{(1)}_{B}]$ in the expression for $\nabla^A\hat F_{AB}^{(4)}$, after some algebra, we obtain
\begin{multline}
        \nabla^A\hat F_{AB}^{(4)}=(\nabla_{A}\nabla^{A})\hat J^{(3)}_B+6\tilde\Lambda\hat C^{CA}{}_{B}\nabla_{C}\hat J^{(1)}_{A}+\frac{3\tilde\Lambda^2}{4}\hat B_{AC}\nabla^{A}\hat F^{C}{}_{B}-\frac{3\tilde\Lambda^2}{4}\nabla^A\hat B_{BC}\hat F^{C}{}_{A}-\\-[\hat F_{AB},\hat J^{(3)|A}]+6[\hat J^{(1)}_A,\nabla^{A}\hat J^{(1)}_B]. 
\end{multline}
Therefore,  \eqref{App-9-bstr} takes the form

\begin{multline}
            \mathcal{Y}_B=(\nabla_{A}\nabla^{A})\hat J^{(3)}_B+\frac{3\tilde{\Lambda}^2}{2}(\hat B^{AC}\nabla_{C}\hat F_{AB}-\hat F_{CA}\nabla^{A}\hat B_{B}{}^{C}+\frac{1}{2}\hat B^{A}{}_{B}\nabla^{C}\hat F_{CA})+\\+6\tilde\Lambda \hat C^{AC}{}_{B}(2\nabla_{A}\hat J_C^{(1)}-\nabla_{C}\hat J^{(1)}_{A})+2[\hat J^{(3)|A},\hat F_{AB}]+6[\hat J^{(1)|A},2\nabla_A\hat J_B^{(1)}-\nabla_B\hat J_A^{(1)}].
\end{multline}
Using the formulas above, it is straightforward to compute $\mathcal{Y}$: 
\begin{multline}
            \mathcal{Y}=\nabla^{A}\cJ_A+\sum_{i=0}^4C^i_5 \mathcal{D}^{(5-i)|A}\hat J^{(i)}_{A}=\nabla^{A}\cJ_A+5\mathcal{D}^{(4)|A}\hat J^{(1)}_{A}+10\mathcal{D}^{(2)|A}\hat J^{(3)}_{A}=\\=\nabla^{A}\cJ_A-\frac{15\tilde\Lambda^{2}}{4}\hat B^{AC}\nabla_C\hat J_{A}^{(1)}+\frac{15\tilde\Lambda^{2}}{2}\hat B^{AC}(2\nabla_{C}\hat J^{(1)}_A-\nabla_A\hat J^{(1)}_C)-5[\hat J^{(3)|A},\hat J^{(1)}_A]=\\=\nabla^{A}\cJ_A+\frac{15\tilde\Lambda^{2}}{4}\hat B^{AC}\nabla_C\hat J_{A}^{(1)}-5[\hat J^{(3)|A},\hat J^{(1)}_A].
\end{multline}
\end{proof}

\section{Solution space of the boundary theory: the Yang-Mills case}\label{App-YM-sol}
 Let  $\sigma_G$ : $T[1]\Sigma\to \hat \Ee$ be a fixed, generic solution of the off-shell gravity boundary gPDE $(\hat E,Q)$ in a ``metric-like'' gauge, as described in Section~\bref{sec:gravity-sections}.
In particular,
$\sigma_G^{*}\hat\xi^{A}=\theta^A$, $\sigma_G^{*}(\hat  \lambda)=0$. As discussed in this section, the choice of $\sigma_G$ corresponds to the choice of a conformal metric on $\Sigma$ together with the gravity subleading $\mathbf{T}_{AB}(x)$ which is irrelevant for the discussion of Yang-Mills.

In this appendix, we will focus on solutions of the gPDE $(\hat \Ee^\YM,Q)$ on a fixed background $\sigma_G$, i.e., sections $\sigma: T[1]\Sigma \to \hat \Ee^\YM$ such that $\hat \pi^\YM \circ \sigma = \sigma_G$\footnote{Equivalently, they can be described as sections of the gPDE $(\hat \Ee^\YM|_{\sigma_G},Q, T[1]\Sigma)$.}.
 
From the action of $Q$
\begin{align}
    Q\hat{\mathcal C}
    = -\frac{1}{2}[\hat{\mathcal C},\hat{\mathcal C}]
      + \frac{1}{2}\,\hat{\xi}^{A}\hat{\xi}^{B}\hat F_{AB}
      \end{align}
one finds that 
\begin{align}
\sigma^{*}\hat F_{AB}
    = 2\partial_{[A} A_{B]} + [A_{[A},A_{B]}]\,,
\end{align}
where we have introduced the notation $\sigma^{*}\hat{\mathcal C}\equiv A_B(x)\theta^B$, and $A_B$ is $\mathfrak{g}$-valued. Identifying $A_B$ with a connection and introducing its curvature $F_{AB}[A]$, this formula takes the form  $\sigma^{*}\hat F_{AB}=F_{AB}[A]$.

Analogously to the gravitational case, we introduce the following vector field  along the section:
\begin{align}
    \nabla_C^{g}\sigma^{*}f=\dl{x^C} \sigma^*f- \Gamma_{AC}^{B} (\sigma^{*}\Delta^{A}{}_{B}f)-A^{I}_C(\sigma^{*}R_If)\,,
\end{align}
 which is defined even when $\sigma$ is not a solution. Analogously to the gravitational case, such $\nabla_C^{g}$ can, in a suitable coordinate system, be regarded as the Levi-Civita covariant derivative twisted by the connection $A$. We will adopt this interpretation henceforth.

Using the fact that for any solution the action of $Q$ on a degree $0$ function can be decomposed as
\begin{align}
    Qf=\xi^{A}\nabla_{A}f+C_{A}{}^{B}\Delta^{A}{}_{B}f+\lambda_{A}\Gamma^{A}f+\lambda \Delta f+\cC^{I}R_If\,,
\end{align}
we find (see Section~\bref{sec:gravity-sections})
\begin{align}
        \sigma^{*}\nabla_{C}f=\dl{x^C} f- \Gamma_{AC}^{B} (\sigma^{*}\Delta^{A}{}_{B}f)+P_{CA}(\sigma^{*}\Gamma^{A}f)-A^{I}_C(\sigma^{*}R_If)=\nabla_C^{g}\sigma^{*}f+P_{C}{}^{A}(\sigma^{*}\Gamma_Af)\,.
\end{align}
Furthermore, the formula for the Laplacian takes the same form as in \eqref{nabla-laplacian}.
\begin{multline}\label{YM-nabla-laplacian}
\sigma^{*}\nabla_{A}\nabla^{A}f=\nabla_A^g\nabla^{g|A}\sigma^{*}f+(\partial_AP)\sigma^{*}\Gamma^Af+2P^{AD}\nabla^{g}_{A}\sigma^{*}\Gamma_D f+P^{AB}P_{A}{}^{C}\sigma^{*}\Gamma_B\Gamma_Cf+\\+P\sigma^{*}(-\Delta^{A}{}_{A}+\hat\Delta) f\,.
\end{multline}

Then, repeating the analysis of Section~\bref{sec:gravity-sections} we see that the space of solutions is parametrized by two off-shell $\mathfrak{g}$-valued fields: $A_B$ and $\mathbf{J}_A(x)\equiv\sigma^{*} \cJ_A$. The on-shell conditions $\sigma^{*}\cY_B=\sigma^{*}\cY=0$ will lead to differential equations on these fields.

In the $d=4$ case (see Example~\bref{YM-example-5D}) using $\Gamma_A\hat F_{BC}=0$, $\Gamma_B\cJ_A=\tilde\Lambda \hat F_{BA}$ we have 
\begin{align}\label{App-5D-result}
\begin{split}
    &\sigma^{*}\cY_B=\nabla^{g}_AF^{A}{}_B,\\
    &\sigma^{*}\cY=\nabla_A^{g}\mathbf{J}^{A}.
    \end{split}
\end{align}
These are the Yang-Mills equation and the covariant conservation equation for the connection $A_B$ and the current $\mathbf{J}_A$.

For the $d=6$ case (see Example~\bref{YM-example-7D}) we need to note that from the boundary calculus Theorem~\bref{YMbound-calculus} we have $\hat J^{(1)}_B=\frac{\tilde\Lambda}{D-5}\nabla^{A}\hat F_{AB}$, $D>5$. Therefore, from formula \eqref{YM-Gamma} we have 
\begin{align}
    \Gamma_D\nabla^{A}\hat F_{AB}=(D-5)\hat F_{DB}.
\end{align}
Then, substituting the formula for Laplacian \eqref{YM-nabla-laplacian} into $\sigma^{*}\nabla_C\nabla^{C}\nabla^{A}\hat F_{AB}$, we obtain from \eqref{6deqs}\footnote{Recall that in Section~\bref{sec:gravity-sections} it was shown that $\sigma^{*}\hat C_{ABC}=C_{ABC}[g]$, $\sigma^{*}\hat B_{AB}=B_{AB}[g]$.}
\begin{align}\label{App-7D-result}
\begin{split}
        \frac{2}{\tilde\Lambda}\sigma^{*}\cY_B&= \Box^g \nabla^{g}_AF^{A}{}_{B}+2\partial_CP F^{C}{}_{B}+4P^{AC}\nabla^g_AF_{CB}-3P\nabla^{g}_AF^{A}{}_{B}+4C^{DA}{}_{B}F_{DA}-2[F_{AB},\nabla_C^gF^{CA}],\\
    \sigma^{*}\cY&=\nabla_{A}^g\mathbf{J}^{A}+\frac{3}{2}\tilde\Lambda^{2}P^{AC}(\nabla^g_C\nabla^{g|B}F_{BA}),
    \end{split}
\end{align}
where $\Box^g\equiv\nabla_A^g\nabla^{g|A}.$
The first formula here is the Weyl-invariant generalization of the Yang-Mills equations, coinciding with the one appearing in \cite{Gover:2023rch}. The second is a modified conservation law for the current $\mathbf{J}^{A}$, which reduces to the standard conservation law, for example, when $F_{BA}=0$.

The case $d=8$ (see Example~\bref{YM-example-9D}) is technically quite involved. We need to compute
\begin{multline}
                \sigma^{*}\mathcal{Y}_B=\sigma^{*}\Big( (\nabla_{A}\nabla^{A})\hat J^{(3)}_B+\frac{3\tilde{\Lambda}^2}{2}(\hat B^{AC}\nabla_{C}\hat F_{AB}-\hat F_{CA}\nabla^{A}\hat B_{B}{}^{C}+\frac{1}{2}\hat B^{A}{}_{B}\nabla^{C}\hat F_{CA})+\\+6\tilde\Lambda \hat C^{AC}{}_{B}(2\nabla_{A}\hat J_C^{(1)}-\nabla_{C}\hat J^{(1)}_{A})+2[\hat J^{(3)|A},\hat F_{AB}]+6[\hat J^{(1)|A},2\nabla_A\hat J_B^{(1)}-\nabla_B\hat J_A^{(1)}] \Big).
\end{multline}

To begin with, we introduce the notation 
\begin{align}
    j^{(1)}_B\equiv\frac{4}{\tilde\Lambda}\sigma^{*}\hat J^{(1)}_B\,,\quad j^{(3)}_B\equiv \frac{8}{3\tilde\Lambda^{2}}\sigma^{*}\hat J^{(3)}_B\,.
\end{align}
These quantities can be computed in terms of the connection $A_B$ similarly to \eqref{App-5D-result}, \eqref{App-7D-result} (only the coefficients differ). Applying $\sigma^{*}$ to \eqref{YM-example-9D-J1} we obtain the following:
\begin{align}\label{App-sol-j}
    \begin{split}
        &j^{(1)}_B=\nabla^g_AF^{A}{}_{B},\\
        &j^{(3)}_B=\Box^g\nabla^{g}_AF^{A}{}_{B}+4\partial_CP F^{C}{}_{B}+8P^{AC}\nabla^g_AF_{CB}-3P\nabla^{g}_AF^{A}{}_{B}+8C^{DA}{}_{B}F_{DA}-2[F_{AB},\nabla_C^gF^{CA}]\,.
    \end{split}
\end{align}
To compute $\sigma^{*}\nabla_{A}\nabla^{A}J^{(3)}_{B}$ using the formula for the Laplacian \eqref{YM-nabla-laplacian}, we will need:
\begin{align}
    \Gamma_D\hat J_B^{(3)}&=3\tilde\Lambda(2\nabla_D\hat J^{(1)}_B-\nabla_B\hat J^{(1)}_D)\,,\\
    \Gamma_{E}\Gamma_D\hat J_B^{(3)}&=3\tilde\Lambda(\delta^{C}_{E}g_{BD}+\delta ^{C}_{D}g_{EB}-5 \delta^{C}_Bg_{ED})\hat J^{(1)}_{C}+3\tilde\Lambda^{2}(2\nabla_D\hat F_{EB}-\nabla_B\hat F_{ED})\,,
\end{align}
where the first formula was obtained in \eqref{App-9-bstr-formulas}, and the second can be derived from the first using the commutator formula (similarly to \eqref{commutrelations}) $[\Gamma^{E},\nabla_B]=-\mathcal{P}^{EC}_{DB}\Delta^{D}{}_{C}+\delta^{E}_{B}\hat \Delta$ and $\Gamma_E\hat J^{(1)}_B=\tilde\Lambda \hat F_{EB}$. Finally, putting all the pieces together, 
\begin{multline}
                \frac{8}{3\tilde\Lambda^2}\sigma^{*}\mathcal{Y}_B=\Box^{g}j^{(3)}_B+2 (\partial^DP+2P^{AD}\nabla_A^{g})(2\nabla_D^g j^{(1)}_B-\nabla_B^{g}j ^{(1)}_D+8P_{D}{}^{C}F_{CB}-4P_{B}{}^{C}F_{CD})+ \\+2(-5P_{AD}P^{AD}j^{(1)}_B+8 P_{AE}P^{AD}\nabla^g_{D}F^{E}{}_{B}+2P_{AD}P^{A}{}_{B}j^{(1)|D})-5Pj^{(3)}_B+\\+4( B^{AC}\nabla^g_{C} F_{AB}- F^{CA}(\nabla^{g}_A B_{BC}-4P^{E}{}_{A}C_{BEC}-4P^{E}{}_{A}C_{CEB})+\frac{1}{2} B^{A}{}_{B}\nabla^{g}_{C} F^{C}{}_{A})+\\+4 C^{AC}{}_{B}(2\nabla^g_{A}j_C^{(1)}-\nabla^g_{C}j^{(1)}_{A}+8P_{A}{}^{D}F_{DC}-4P_{C}{}^{D}F_{DA})+\\+2[j^{(3)|A}, F_{AB}] +[ j^{(1)|A},2\nabla^g_A j_B^{(1)}-\nabla^g_Bj_A^{(1)}+8P_{A}{}^{C}F_{CB}-4P_{B}{}^{C}F_{CA}]\,,
                \end{multline}
where we also used $\Gamma_{A}\hat B_{BC}=-(D-5)(\hat C_{CAB}+\hat C_{BAC})$ to calculate $\sigma^{*}\nabla_A\hat B_{BC}$. 

Similarly, the modified conservation law for $\mathbf{J}^{A}$ can be computed:
\begin{multline}
                \sigma^{*}\mathcal{Y}=\nabla^g_A\mathbf{J}^A+\frac{5\tilde\Lambda^{3}}{4}P^{CB}(C_{C}{}^{D}{}_{B}j^{(1)}_D+B_{CD}F^{D}{}_{B}+3\nabla_{C}^{g}j_{B}^{(3)}+6P_{C}{}^{E}\nabla_{E}^gj^{(1)}_{B})+\\+\frac{15\tilde{\Lambda}^{3}}{16} B^{AC}(\nabla^g_{C} j^{(1)}_A)-\frac{15\tilde\Lambda^3}{32}[j^{(3)|A},j^{(1)}_A].
\end{multline}

\setlength{\itemsep}{0em}
\small
\providecommand{\href}[2]{#2}\begingroup\raggedright\endgroup

\end{document}